\documentclass[11pt]{article}
\usepackage{amsmath,amsfonts,amsthm,amssymb,color}
\usepackage{fancybox}
\usepackage{multirow}

\usepackage{ifthen}

\newboolean{@full}
\setboolean{@full}{true}  

\newcommand{\iffull}{\ifthenelse{\boolean{@full}}}

\newtheorem{theorem}{Theorem}[section]
\newtheorem{corollary}[theorem]{Corollary}
\newtheorem{lemma}[theorem]{Lemma}

\newtheorem{proposition}[theorem]{Proposition}

\newtheorem{fact}[theorem]{Fact}

\theoremstyle{definition}
\newtheorem{definition}[theorem]{Definition}
\newtheorem{remark}[theorem]{Remark}

\newenvironment{fminipage}%
  {\begin{Sbox}\begin{minipage}}%
  {\end{minipage}\end{Sbox}\fbox{\TheSbox}}

\iffull{
	\textheight 8.5in
	\topmargin -0.2in
	\oddsidemargin 0.20in
	\textwidth 6.3in
	
	\newenvironment{algbox}[0]{\vskip 0.2in
	\noindent 
	\begin{fminipage}{6.3in}
	}{
	\end{fminipage}
	\vskip 0.2in
	}
}{
	\usepackage[margin=1in]{geometry}
	\usepackage{paralist}
	\renewenvironment{itemize}[1]{\begin{compactitem}#1}{\end{compactitem}}
	\renewenvironment{enumerate}[1]{\begin{compactenum}#1}{\end{compactenum}}
	
	\setlength{\abovecaptionskip}{5pt}
	\setlength{\belowcaptionskip}{-15pt}
	
	\expandafter\def\expandafter\normalsize\expandafter{%
	    \normalsize
	    \setlength\abovedisplayskip{5pt}
	    \setlength\belowdisplayskip{5pt}
	    \setlength\abovedisplayshortskip{5pt}
	    \setlength\belowdisplayshortskip{5pt}
	}
	
	\newenvironment{algbox}[0]{
	\noindent 
	\begin{fminipage}{6.3in}
	}{
	\end{fminipage}
	}
	\usepackage{comment}
	\excludecomment{proof}
	
	\usepackage{titlesec}
	\titlespacing*{\section}{0pt}{0.2\baselineskip}{0.2\baselineskip}
	\titlespacing*{\subsection}{0pt}{0.2\baselineskip}{0.2\baselineskip}
	\titlespacing*{\theorem}{0pt}{0\baselineskip}{0\baselineskip}
	\titlespacing*{\lemma}{0pt}{0\baselineskip}{0\baselineskip}
}

\def\trace#1{\mathrm{Tr} \left(#1 \right)}

\def\pleq{\preccurlyeq}
\def\pgeq{\succcurlyeq}

\def\poly{\text{poly}}

\def\bvec#1{{\mbox{\boldmath $#1$}}}

\def\prob#1#2{\mbox{Pr}_{#1}\left[ #2 \right]}

\def\expec#1#2{{\mathbb{E}}_{#1}\left[ #2 \right]}

\def\defeq{\stackrel{\mathrm{def}}{=}}
\def\setof#1{\left\{#1  \right\}}
\def\sizeof#1{\left|#1  \right|}

\def\floor#1{\left\lfloor #1 \right\rfloor}
\def\ceil#1{\left\lceil #1 \right\rceil}

\def\union{\cup}

\def\abs#1{\left|#1  \right|}

\def\norm#1{\left\| #1 \right\|}

\newcommand\bb{\boldsymbol{\mathit{b}}}
\newcommand\bbbar{\overline{\boldsymbol{\mathit{x}}}}
\newcommand\cc{\boldsymbol{\mathit{c}}}
\newcommand\dd{\boldsymbol{\mathit{d}}}

\newcommand\ww{\boldsymbol{\mathit{w}}}

\newcommand\xx{\boldsymbol{\mathit{x}}}
\newcommand\xxbar{\overline{\boldsymbol{\mathit{x}}}}

\renewcommand\AA{\boldsymbol{\mathit{A}}}
\newcommand\BB{\boldsymbol{\mathit{B}}}
\newcommand\CC{\boldsymbol{\mathit{C}}}
\newcommand\DD{\boldsymbol{\mathit{D}}}

\newcommand\II{\boldsymbol{\mathit{I}}}
\newcommand\JJ{\boldsymbol{\mathit{J}}}

\newcommand\MM{\boldsymbol{\mathit{M}}}
\newcommand\Mtil{\boldsymbol{\widetilde{\mathit{M}}}}
\newcommand\PP{\boldsymbol{\mathit{P}}}

\newcommand\LL{\boldsymbol{\mathit{L}}}

\newcommand\UU{\boldsymbol{\mathit{U}}}
\newcommand\WW{\boldsymbol{\mathit{W}}}

\newcommand\XX{\boldsymbol{\mathit{X}}}
\newcommand\YY{\boldsymbol{\mathit{Y}}}

\newcommand\ZZ{\boldsymbol{\mathit{Z}}}

\newcommand\DDhat{\boldsymbol{\widehat{\mathit{D}}}}

\newcommand\MMhat{\boldsymbol{\widehat{\mathit{M}}}}

\newcommand\UUhat{\boldsymbol{\widehat{\mathit{U}}}}

\newcommand\xxhat{\boldsymbol{\widehat{\mathit{x}}}}
\newcommand\bbhat{\boldsymbol{\widehat{\mathit{b}}}}

\newcommand\xxtil{\boldsymbol{\tilde{\mathit{x}}}}

\newcommand\Ghat{{\widehat{{G}}}}
\newcommand\Vhat{{\widehat{{V}}}}
\newcommand\Ehat{{\widehat{{E}}}}
\newcommand\ddhat{{\hat{\dd}}}
\newcommand\dhat{{\hat{{d}}}}
\newcommand\nhat{{\hat{{n}}}}
\def\Gtil{\widetilde{G}}

\newcommand{\schur}[2]{Sc \left(#1,  #2\right) }

\newcommand{\todo}[1]{}
\newcommand{\yintat}[1]{}
\newcommand{\dan}[1]{}
\newcommand{\richard}[1]{}

\newcommand{\one}{\mathbf{1}}
\newcommand{\diag}{\textsc{diag}}

\usepackage{thm-restate}

\begin{document}

\title{
Sparsified Cholesky Solvers for SDD linear systems}
\author{
Yin Tat Lee
\thanks{Supported in part by NSF awards 0843915 and 1111109.  Part of this work was done while visiting the Simons Institute for the Theory of Computing, UC Berkeley.}
\\
M.I.T.\\
yintat@mit.edu
\and
Richard Peng\\
M.I.T.\\
rpeng@mit.edu
\and
Daniel A. Spielman\thanks{Supported by AFOSR Award FA9550-12-1-0175,
   NSF grant CCF-1111257, a Simons Investigator Award, and a MacArthur Fellowship.}
\\ 
Yale University\\
spielman@cs.yale.edu
}

\maketitle

\begin{abstract}
We show that Laplacian and 
  symmetric diagonally dominant (SDD) matrices can be well 
  approximated by linear-sized sparse Cholesky factorizations.
Specifically, $n \times n$ matrices of these types have constant-factor
  approximations of the form $\LL  \LL^{T}$, where $\LL$ is a
  lower-triangular matrix with $O(n)$ non-zero entries.
This factorization allows us to solve linear systems in
  such matrices in $O(n)$ work and $O(\log{n}\log^2\log{n})$ depth.

We also present nearly linear time algorithms that construct solvers that
  are almost this efficient.
In doing so, we give the first nearly-linear work routine for
constructing spectral vertex sparsifiers---that is, spectral
approximations of Schur complements of Laplacian matrices.


\end{abstract}

\iffull{}{
\pagenumbering{gobble}

\newpage

\pagenumbering{arabic}
}

\section{Introduction}

There have been incredible advances in the design of algorithms for 
  solving systems of linear equations in Laplacian and symmetric, diagonally dominant (SDD) matrices.
Cohen \textit{et. al.} \cite{CohenKMPPRX} have recently designed algorithms that find $\epsilon$-approximate solutions to such
  systems of equations in time $O (m \log^{1/2} n \log \epsilon^{-1})$,
  where $n$ is the dimension of the matrix and $m$ is its number of nonzero entries.
Peng and Spielman \cite{PengS14}
  recently discovered the first parallel algorithms that require only poly-logarithmic time and nearly-linear work.
In this paper, we prove that for every such matrix
  there is an operator that approximately solves equations in this matrix
  and that can be evaluated in linear work and depth $O(\log n (\log \log n)^{2})$.
These operators are analogous to the LU decompositions produced by Gaussian elimination:
  they take longer to compute than to apply.

We present two fast parallel algorithms for finding solvers that are almost as fast.
One runs in nearly linear time and polylogarithmic depth (Theorem \ref{thm:black_box}).
The algorithm presented in Theorem \ref{thm:non_combin_result}
 has preprocessing depth $n^{o (1)}$, but is more efficient in terms of work 
and produces a solver whose work and depth are
 within a logarithmic factor of the best one we can show exists.
   
A matrix $\AA$ is diagonally dominant if each of its diagonal entries is
  at least the sum of the absolute values of the off-diagonal entries in
  its row. 
The most famous symmetric, diagonally dominant matrices
  are the Laplacian matrices of graphs: those with non-positive off-diagonal
  such that every diagonal is exactly equal to the sum of the absolute values
  of the off-diagonal entries in its row.
Laplacian and SDD matrices arise in many applications, including the solution
  of optimization problems such as maximum flow \cite{ChristianoEtAl,KelnerMillerPeng,LeeRS13,Madry13},
  minimium cost flow \cite{daitch2008faster,lsMaxflow},
  semi-supervised learning \cite{Zhu03},
  and the solution of elliptic PDEs \cite{BomanHV04}.

Building on the work of  Vaidya \cite{Vaidya},
  Spielman and Teng \cite{SpielmanTengLinsolve} discovered that through
  the use of two constructions in graph theory---sparsifiers and low stretch spanning trees---one
  could design algorithms for solving such linear equations that run
  in nearly-linear time.
Kelner \textit{et. al.} \cite{KOSZ} construct an elementary algorithm
  for solving SDD systems in nearly linear time that only makes use of
  low stretch spanning trees.
Conversely, Peng and Spielman \cite{PengS14} design an algorithm that
  only uses sparsifiers.
The present paper builds on their approach.

The parallel algorithm of Peng and Spielman \cite{PengS14}
  approximates the inverse of a matrix by the sum and product
  of a small number of sparse matrices.
The main bottleneck in their algorithm is that all of the matrices it produces
  have the same dimension, and that the number of these matrices
  depends on the condition number of the system to be solved.
This leads to each matrix having an average number of nonzero entries
  per column that is proportional to the square of the logarithmic of
  the condition number, leading to work $O ((m + n \log^{3}\kappa) \log \epsilon^{-1})$.


Our result improves on the construction of Peng and
  Spielman~\cite{PengS14} in a number of ways.
First, the depth and work of our new algorithms are independent
  of the condition number of the matrix.
Second, the matrices in the product that approximates the inverse are
  of geometrically decreasing sizes.
This leads to much faster algorithms.
That said, 
  our efficient algorithms for constructing solvers and spectral vertex sparsifiers
  critically relies on their work.

We introduce sparsified Cholesky factorization
  in 
  in Section \ref{sec:vertexReduce}, where we prove
  that the inverse of every SDD matrix $\AA $ 
  can be approximated by an operator that can be evaluated
  in linear work and depth $O (\log^{2} n \log \log n)$.
By using this operator as a preconditioner, or by applying iterative refinement,
  this leads to a solver that produces $\epsilon$-approximate solutions to systems
  in $\AA$ in work $O (m \log \epsilon^{-1})$ and depth $O (\log^{2} n \log \log n \log \epsilon^{-1})$,
  where $m$ is the number of nonzeros in $\AA$.
We begin by eliminating a block consisting of a constant fraction of the vertices.
The elimination of these vertices adds edges to the subgraph induced on the remaining vertices.
We use the work of \cite{BSS} to sparsify the modified subgraph
  (Figure \ref{fig:applyChain}, Lemma \ref{lem:apply_chain} and Theorem \ref{thm:result_BSS}).
The choice of which vertices we eliminate is important.
We use subset of vertices whose degrees in their induced subgraph are substantially
  smaller than in the original graph (see Definition \ref{def:strongDD} and Lemma \ref{lem:subsetSimple}).

In Section \ref{sec:existence} we show how to convert this solver into
  a sparse approximate inverse.
That is, we show that $\AA$ can be approximated by a product of the form $\UU^T \DD \UU$ where $\UU$
  an upper-triangular matrix with $O (n)$ nonzero entries and $\DD$ is 
  diagonal.
While we can construct this $\UU$ and $\DD$ in polynomial time,
  we do not yet have a nearly linear time or low depth efficient parallel algorithm that does so.

\iffull{We obtain our best existence result in Section \ref{sec:depth}
  by reducing the depth of the parallel solvers by a logarthmic
  factor.}{In the full version, we obtain our best existence result by reducing the depth of the parallel solvers by a logarithmic
  factor. }The reduction comes from observing that the construction of Section \ref{sec:existence}
  would have the desired depth if every vertex in $\AA$ and in the smaller
  graphs produced had bounded degree.
While we can use sparsification to approximate an arbitrary graph by a sparse one,
  the sparse one need not have bounded degree.
We overcome this problem by proving that the Laplacian of every graph
  can be approximated by a Schur complement
  of the Laplacian of a larger graph of bounded degree \iffull{(Theorem \ref{thm:degreeReduction})}{}.


We then
  turn to the problem of computing our solvers efficiently in parallel.
The first obstacle is that we must quickly compute an approximation
  of a Schur complement of a set of vertices without actually constructing
  the Schur complement, as it could be too large.
This is the problem we call \textit{Spectral Vertex Sparsification}.
It is analogous to the problem of vertex sparsfication for cut and combinatorial flow problems
  \cite{LeightonM10,Moitra13}:
  given a subset of the vertices we must compute a graph on those vertices
  that allows us to compute
  approximations of electrical flows in the original graph between vertices in that subset.
In contrast with cut and combinatorial flow problems,
  there is a graph that allows for this computation exactly on the subset of vertices,
  and it is the Schur complement in the graph Laplacian.
In Section~\ref{sec:vertexSparsify}, we build on the techniques of \cite{PengS14} to give an efficient algorithm for
  spectrally approximating Schur complements.

The other obstacle is that we need to compute sparsifications of graphs efficiently
  in parallel.
We examine two ways of doing this in Section~\ref{sec:algo}.
The first, examined in Section \ref{ssec:blackBox}, is to use a black-box parallel algorithm for graph sparsification,
  such as that of Koutis \cite{Koutis14}.
This gives us our algorithm of best total depth.
The second, examined in Section \ref{ssec:recursive}, employs a recursive
  scheme in which we solve smaller linear systems to compute
  probabilities with which we sample the edges, as in \cite{SpielmanS08:journal}.
Following \cite{cohen2014uniform}, these smaller linear systems
  are obtained by crudely sub-sampling the original graph.
The resulting algorithm runs in depth $n^{o (1)}$, but produces
  a faster solver.
We expect that further advances in graph sparsification
such as~\cite{Allen-ZhuLL15} will result in even better algorithms.


\iffull{\section{Some Related Work}
\label{sec:related}

Gaussian elimination solves systems of equations in a matrix $\AA$ by computing
  lower and upper triangular matrices $\LL$ and $\UU$ so that $\AA = \LL \UU$.
Equations in $\AA$ may then be solved by solving equations in $\LL$ and $\UU$,
  which takes time proportional to the number of nonzero entries in those matrices.
This becomes slow if $\LL$ or $\UU$ has many nonzero entries, with is often the case.

Cholesky factorization is the natural symmetrization of this process: it writes symmetric
  matrices $\AA$ as a product $\LL \LL^{T}$.
Incomplete Cholesky factorizations \cite{ICC} instead approximate $\AA$ by a product of sparse matrices
  $\LL \LL^{T}$ by strategically dropping some entries in the computation of Cholesky factors.
One can then use these approximations as preconditioners to compute highly accurate solutions to systems in $\AA$.
While this is a commonly used heuristic, there have been few general theoretical analyses of the performance of the resulting algorithms.
Interestingly, Meijerink and van der Vorst \cite{ICC} analyze the performance of this algorithm on SDD matrices
  whose underlying graph is a regular grid. 

SDD linear systems have been extensively studied in scientific computing
  as they arise when solving elliptic partial differential equations.
Multigrid methods have proved very effective at solving the resulting systems.
Fedorenko~\cite{fedorenko1964speed} gave the first multigrid method for SDD systems
  on regular square grids and  proved that it is an nearly-linear time algorithm.
Multigrid methods have since been used to solve many types of linear systems
  \cite{brandt1977multi,hackbusch1985multi},
  and have been shown to solve special systems in 
  linear work and logarithmic depth \cite{nicolaides1978multigrid,hackbusch1982multi} under some smoothness assumptions.
Recently, Artem and Yvan~\cite{napov2012algebraic} gave the first algebraic multigrid method
  with a guranteed convergence rate.
However, to the best of our knowledge, a worst-case nearly-linear work bound
  has not been proved for any of these algorithms.

Our algorithm is motivated both by multigrid methods and incomplete Choleksy factorizations.
Both exploit the fact that elimination operations in SDD matrices result in SDD matrices.
That is, Schur complements of SDD matrices result in SDD matrices with fewer vertices.
However, where multigrid methods eliminate a large fraction of vertices at each level,
  our algorithms eliminate a small but constant fraction.
The main novelty of our approach is that we sparsify the resulting Schur complement.
A heuristic approach to doing this was recently studied by 
  Krishnan, Fattal, and Szeliski \cite{krishnan2013efficient}.

\dan{I cut a lot from this.}

}{}

\section{Background}

We will show that diagonally dominant matrix $A$ can be
  well-approximated by a product $\UU^{T} \DD \UU$ where $\UU$ is upper-triangular and sparse
  and $\DD$ is diagonal.
By solving linear equations in each of these matrices, we can quickly solve a
  system of linear equations in $A$.
We now  review the notion of approximation that we require
  along with some of its standard properties.

For symmetric matrices $A$ and $B$, we write
  $A \pgeq B$ if $A - B$ is positive semidefinite.
The ordering given by $\pgeq$ is called the ``Loewner partial order''.
\iffull{\begin{fact}\label{fact:orderInverse}
For $A$ and $B$ positive definite,
$A \pgeq B$ if and only if $B^{-1} \pgeq A^{-1}$.
\end{fact}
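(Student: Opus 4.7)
The plan is to reduce the statement to a purely spectral fact about a single symmetric matrix by a congruence transformation. Since $B$ is positive definite, the principal square root $B^{1/2}$ is well defined, symmetric, and invertible, with inverse $B^{-1/2}$. Conjugation by an invertible matrix preserves the Loewner order in both directions: for any invertible $X$, $Y \pgeq Z$ if and only if $X^T Y X \pgeq X^T Z X$, since $X^T (Y-Z) X$ is PSD exactly when $Y - Z$ is PSD. Applying this with $X = B^{-1/2}$ to $A \pgeq B$ yields the equivalent statement $B^{-1/2} A B^{-1/2} \pgeq I$.

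Next, I would invoke the spectral theorem on the symmetric positive definite matrix $M := B^{-1/2} A B^{-1/2}$. The inequality $M \pgeq I$ is equivalent to every eigenvalue of $M$ being at least $1$. Diagonalizing $M = Q \Lambda Q^T$ with $Q$ orthogonal and $\Lambda$ a diagonal matrix of strictly positive eigenvalues, we have $M^{-1} = Q \Lambda^{-1} Q^T$, whose eigenvalues are the reciprocals and hence all lie in $(0,1]$. Therefore $M^{-1} \pleq I$, that is, $B^{1/2} A^{-1} B^{1/2} \pleq I$.

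Finally, I would conjugate back by $B^{-1/2}$ using the same order-preserving congruence fact to obtain $A^{-1} \pleq B^{-1}$, which is exactly $B^{-1} \pgeq A^{-1}$. The reverse implication follows immediately by applying what we have just proved to the pair $(B^{-1}, A^{-1})$ and using $(A^{-1})^{-1} = A$, $(B^{-1})^{-1} = B$. There is no real obstacle here: the only thing to verify carefully is that congruence by the invertible matrix $B^{-1/2}$ is a bijection on symmetric matrices that preserves positive semidefiniteness in both directions, which relies entirely on $B$ being strictly positive definite so that $B^{-1/2}$ exists.
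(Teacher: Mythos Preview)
Your argument is correct and is the standard proof of this well-known fact. Note, however, that the paper does not actually give a proof of this statement: it is stated as a \textbf{Fact} without proof, as is common for basic properties of the Loewner order. So there is no ``paper's own proof'' to compare against; your congruence-by-$B^{-1/2}$ reduction to the scalar eigenvalue statement is exactly the kind of elementary justification one would supply if asked.
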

\begin{fact}\label{fact:orderCAC}
If $A \pgeq B$ and $C$ is any matrix of compatible dimension,
  then $C A C^{T} \pgeq C B C^{T}$.
\end{fact}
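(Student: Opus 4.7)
The plan is to unfold the definition of the Loewner order on both sides and reduce the claim to a direct quadratic-form computation. By definition, $\AA \pgeq \BB$ means the matrix $\AA - \BB$ is positive semidefinite, i.e.\ $\yy^{T}(\AA - \BB)\yy \geq 0$ for every vector $\yy$ of compatible dimension. The goal is to show that $\CC\AA\CC^{T} - \CC\BB\CC^{T}$ is positive semidefinite.

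First I would factor out $\CC$ on the left and $\CC^{T}$ on the right, writing $\CC\AA\CC^{T} - \CC\BB\CC^{T} = \CC(\AA-\BB)\CC^{T}$. Then for an arbitrary test vector $\xx$, I would compute the quadratic form
\[
\xx^{T}\CC(\AA-\BB)\CC^{T}\xx \;=\; (\CC^{T}\xx)^{T}(\AA-\BB)(\CC^{T}\xx).
\]
Setting $\yy \defeq \CC^{T}\xx$, this equals $\yy^{T}(\AA-\BB)\yy$, which is nonnegative by the hypothesis $\AA \pgeq \BB$. Since $\xx$ was arbitrary, $\CC\AA\CC^{T} - \CC\BB\CC^{T}$ is positive semidefinite, which is exactly $\CC\AA\CC^{T} \pgeq \CC\BB\CC^{T}$.

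There is no real obstacle here; the statement is a one-line consequence of the definition. The only minor subtlety is verifying that the dimensions match so that the expressions $\CC\AA\CC^{T}$ and $\CC\BB\CC^{T}$ make sense, but this is handled by the hypothesis that $\CC$ has compatible dimension with $\AA$ and $\BB$ (both of the same size, by virtue of $\AA \pgeq \BB$ being meaningful).
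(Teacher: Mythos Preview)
Your proof is correct. The paper states this as a standard fact without proof, and your argument via the quadratic form $\xx^{T}\CC(\AA-\BB)\CC^{T}\xx = (\CC^{T}\xx)^{T}(\AA-\BB)(\CC^{T}\xx) \geq 0$ is precisely the usual one-line justification.
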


}{}We say that $\AA$ is an $\epsilon$-approximation of $\BB$, written
  $\AA  \approx_{\epsilon} \BB $,
if
\[
  e^{\epsilon} \BB \pgeq \AA \pgeq e^{-\epsilon} \BB.
\]
Observe that this relation is symmetric.
\iffull{
Simple arithmetic yields the following fact about compositions of approximations.
\begin{fact}\label{frac:orderComposition}
If $\AA \approx_{\epsilon} \BB$
  and $\BB \approx_{\delta } \CC$, then
  $\AA \approx_{\epsilon + \delta} \CC$.
\end{fact}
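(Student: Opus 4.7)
The plan is to unfold the definition of $\approx_{\epsilon}$ into a pair of Loewner inequalities and then chain them, using nothing beyond the fact that scaling a PSD inequality by a positive constant preserves it.

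First, from $\AA \approx_{\epsilon} \BB$ I would record the bracketing $e^{-\epsilon} \BB \pleq \AA \pleq e^{\epsilon} \BB$, and similarly $e^{-\delta} \CC \pleq \BB \pleq e^{\delta} \CC$ from $\BB \approx_{\delta} \CC$. Next I would multiply the second chain by the positive scalars $e^{\epsilon}$ on the upper side and $e^{-\epsilon}$ on the lower side, obtaining $e^{\epsilon} \BB \pleq e^{\epsilon + \delta} \CC$ and $e^{-\epsilon} \BB \pgeq e^{-(\epsilon + \delta)} \CC$. Splicing these into the bracketing for $\AA$ yields $e^{-(\epsilon + \delta)} \CC \pleq \AA \pleq e^{\epsilon + \delta} \CC$, which is exactly the definition of $\AA \approx_{\epsilon + \delta} \CC$.

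There is no substantive obstacle here; the only step worth pausing on is that multiplication of a matrix inequality by a positive scalar preserves the Loewner order, which is immediate from the definition of positive semidefiniteness applied to $e^{\pm \epsilon}(\BB - X) = (e^{\pm \epsilon} \BB - e^{\pm \epsilon} X)$. It is worth noting that the exponential parameterization in the definition of $\approx_{\epsilon}$ was chosen precisely so that composition becomes additive in the approximation parameter, rather than the slightly messier multiplicative behavior one would get under a $(1 \pm \epsilon)$ parameterization; this is why the proof reduces to a one-line arithmetic check.
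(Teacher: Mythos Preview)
Your proof is correct and is exactly the ``simple arithmetic'' the paper alludes to; the paper does not spell out a proof at all, merely stating that the fact follows from simple arithmetic, which is precisely the Loewner-inequality chaining you wrote down.
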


}{}

\iffull{
We say that $\xxtil$ is an $\epsilon$-approximate solution to the
  system $\AA \xx = \bb$ if
\[
  \norm{\xxtil - \AA^{-1} \bb}_{\AA} \leq \epsilon \norm{\xx}_{\AA},
\]
 where
\[
  \norm{\xx}_{\AA} =  (\xx^{T} \AA \xx)^{1/2}.
\]}
{
We say that $\xxtil$ is an $\epsilon$-approximate solution to the
  system $\AA \xx = \bb$ if
  $\norm{\xxtil - \AA^{-1} \bb}_{\AA} \leq \epsilon \norm{\xx}_{\AA}$,
 where $\norm{\xx}_{\AA} =  (\xx^{T} \AA \xx)^{1/2}$.
}
This is the notion of approximate solution typically used when
  analyzing preconditioned linear system solvers, and it is the notion assumed in the
  works we reference that use these solvers as subroutines.
\iffull{
\begin{fact}\label{fact:approxSolve}
If $\epsilon<1/2$, $\AA \approx_{\epsilon} \BB$ and 
  $\BB \xxtil = \bb$, then
  $\xxtil$ is a $2 \sqrt{\epsilon}$ approximate solution to
  $\AA \xx = \bb$.
\end{fact}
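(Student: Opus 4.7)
The plan is to reduce the bound to a simple operator-norm inequality on an $\AA^{1/2}$-conjugated version of $\BB^{-1} - \AA^{-1}$, then convert the resulting exponential factor $e^\epsilon - 1$ into the stated $2\sqrt{\epsilon}$.

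First I would write the error as
\[
  \xxtil - \AA^{-1}\bb \;=\; \BB^{-1}\bb - \AA^{-1}\bb \;=\; (\BB^{-1} - \AA^{-1})\bb,
\]
so that
\[
  \norm{\xxtil - \AA^{-1}\bb}_{\AA}^{2}
  \;=\; \bb^{T}(\BB^{-1} - \AA^{-1})\,\AA\,(\BB^{-1} - \AA^{-1})\,\bb.
\]
Next I would invoke Fact~\ref{fact:orderInverse} to pass the approximation $\AA \approx_{\epsilon} \BB$ to the inverses: $\AA^{-1} \approx_{\epsilon} \BB^{-1}$. Conjugating by $\AA^{1/2}$ and letting $\MM = \AA^{1/2}\BB^{-1}\AA^{1/2} - \II$, the approximation statement becomes $(e^{-\epsilon}-1)\II \pleq \MM \pleq (e^{\epsilon}-1)\II$, so $\|\MM\| \le e^{\epsilon} - 1$ (since $e^{\epsilon}-1 \ge 1 - e^{-\epsilon}$).

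Substituting this back, and using $\bb = \AA \xx$,
\[
  \norm{\xxtil - \AA^{-1}\bb}_{\AA}^{2}
  \;=\; \bb^{T}\AA^{-1/2}\MM^{2}\AA^{-1/2}\bb
  \;=\; \xx^{T}\AA^{1/2}\MM^{2}\AA^{1/2}\xx
  \;\le\; \|\MM\|^{2}\,\xx^{T}\AA\,\xx
  \;\le\; (e^{\epsilon}-1)^{2}\,\norm{\xx}_{\AA}^{2}.
\]
Finally I would convert the constant: for $\epsilon < 1/2$ one has $e^{\epsilon}-1 \le 2\epsilon \le 2\sqrt{\epsilon}$ (the first inequality from the Taylor estimate $e^{\epsilon}\le 1+2\epsilon$ valid on this range, the second since $\epsilon \le \sqrt{\epsilon}$ for $\epsilon \le 1$), giving $\norm{\xxtil - \AA^{-1}\bb}_{\AA} \le 2\sqrt{\epsilon}\,\norm{\xx}_{\AA}$ as required.

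There is essentially no obstacle here beyond bookkeeping; the only subtlety is that the stated bound $2\sqrt{\epsilon}$ is far from tight (the same argument yields $2\epsilon$), and the square root appears presumably because downstream applications cite a uniform approximation guarantee in that form. The crucial conceptual step is the $\AA^{1/2}$-conjugation trick that turns spectral inequalities on $\BB^{-1}$ relative to $\AA^{-1}$ into a bound on the operator norm of $\MM$, allowing the $\AA$-norm of the error to be controlled by $\|\MM\|$ times the $\AA$-norm of the true solution.
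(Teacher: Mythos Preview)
Your proof is correct. The paper states this as a ``Fact'' without proof, treating it as standard background, so there is no argument to compare against; your $\AA^{1/2}$-conjugation approach is exactly the canonical one. Your side remark is also right: the argument actually yields the tighter bound $e^{\epsilon}-1 \le 2\epsilon$, and the $2\sqrt{\epsilon}$ in the statement is simply slack.
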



}{}So, if one can find a matrix $\BB$ that is a good approximation of
  $\AA$ and such that one can quickly solve linear equations in $\BB$, then
  one can quickly compute approximate solutions to systems of linear
  equations in $\AA$.
Using methods such as \textit{iterative refinement}, one can use
  multiple solves in $\BB$ and multiplies by $\AA$ to obtain
  arbitrarily good approximations.
For example, if $\BB$ is a constant approximation of $\AA$, then
  for every $\epsilon < 1$, one can obtain an $\epsilon$ approximate
  solution of a linear system in $\AA$ by performing 
  $O (\log (\epsilon^{-1}))$ solves in $\BB$ and multiplies by $\AA$
  (see, for example, \cite[Lemma 4.4]{PengS14}).

It is known that one can reduce the problem of solving systems of equations in SDD matrices
  to either the special case of Laplacian matrices or SDDM matrices---the family of SDD matrices that
  are nonsingular and have non-positive off diagonal entries (see, e.g. \cite{SpielmanTengLinsolve,CohenKMPPRX}).
We will usually consider SDDM matrices.
Every SDDM matrix $\AA$ can be uniquely written as a sum $\LL + \XX $ where $\LL$ is a Laplacian matrix
  and $\XX$ is a nonnegative diagonal matrix.
  
The main properties of SDDM matrices that we exploit are that they are closed under Schur complements
  and that they can be \textit{sparsified}.
The stongest known sparsifications come from the main result of \cite{BSS}, which implies the following.

\begin{theorem}\label{thm:BSS}
For every $n$-dimensional SDDM matrix $\AA $
  and every $\epsilon \leq 1$, there is a SDDM
   matrix $\BB $ having at most $10 n / \epsilon^{2}$ nonzero entries that is
  an $\epsilon$-approximation of $\AA$. In particular, the number of non-zero
  entries in $\BB$ above the diagonal is at most $4.1 n / \epsilon^{2}$.
\end{theorem}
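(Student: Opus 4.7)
My plan is to derive this theorem by applying the Batson--Spielman--Srivastava sparsification result to the natural rank-one decomposition of an SDDM matrix.

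First, I would expand $\AA$ into rank-one PSD summands. Writing the SDDM matrix uniquely as $\AA = \LL + \XX$ with $\LL$ the Laplacian of a weighted graph and $\XX$ a nonnegative diagonal matrix, the fact that off-diagonal entries $a_{ij} \leq 0$ gives the decomposition
\[
\AA \;=\; \sum_{i<j} (-a_{ij})\,(e_i-e_j)(e_i-e_j)^{T} \;+\; \sum_{i} x_i\, e_i e_i^{T},
\]
a nonnegative combination of at most $\binom{n}{2} + n$ rank-one PSD pieces of two specific geometric types.

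Second, I would invoke BSS on this collection. After whitening by $\AA^{-1/2}$ to make the pieces sum to the identity, BSS provides, for every $d > 1$, a nonnegative reweighting that zeroes out all but $\lceil d n \rceil$ of the original rank-one pieces and whose weighted sum $\BB$ is spectrally sandwiched as $\AA \pleq \BB \pleq \kappa \AA$ with $\kappa = ((\sqrt{d}+1)/(\sqrt{d}-1))^{2}$. A single overall multiplicative rescaling balances the two bounds into $e^{-\epsilon}\AA \pleq \BB \pleq e^{\epsilon}\AA$ as soon as $\kappa \leq e^{2\epsilon}$. Solving for $d$ gives $\sqrt{d} = \coth(\epsilon/2) = 2/\epsilon + O(\epsilon)$, so a short calculation shows $d \leq 4.1/\epsilon^{2}$ for all $\epsilon \leq 1$.

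Third, I would read off the nonzero count from the structure of $\BB$. Because BSS keeps a subset of the same rank-one pieces with nonnegative weights, no sign flips occur and $\BB$ is again SDDM (off-diagonals stay non-positive, and the diagonal excess of the $e_i e_i^{T}$ terms keeps it diagonally dominant). Each kept edge piece $(e_i-e_j)(e_i-e_j)^{T}$ contributes exactly one nonzero above the diagonal, and each kept $e_i e_i^{T}$ piece contributes none, giving at most $\lceil d n \rceil \leq 4.1 n/\epsilon^{2}$ nonzero entries above the diagonal. By symmetry the below-diagonal count matches, and the diagonal contributes at most $n$ entries, for a total of at most $8.2 n/\epsilon^{2} + n \leq 10 n/\epsilon^{2}$ whenever $\epsilon \leq 1$.

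The only point that requires any care is confirming that the rescaled output is still SDDM and tracking the constant in BSS through the $e^{\pm\epsilon}$ balancing; the structural part is automatic, and the constant calculation is routine. Everything else is packaged inside the black-box BSS theorem, so the entire argument is essentially a bookkeeping exercise on top of that result.
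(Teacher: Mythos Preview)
The paper does not supply its own proof of this theorem; it simply states it as a direct consequence of the main result of \cite{BSS}. Your derivation---decomposing $\AA$ into edge pieces $(e_i-e_j)(e_i-e_j)^T$ and diagonal pieces $e_ie_i^T$, applying BSS after whitening, rescaling to balance the two-sided bound, and then counting nonzeros by type---is exactly the intended route and is structurally correct.

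One minor arithmetic point: solving $\kappa \le e^{2\epsilon}$ gives $\sqrt{d} = \coth(\epsilon/2)$, and at $\epsilon = 1$ this yields $d \approx 4.68$, so the inequality $d \le 4.1/\epsilon^{2}$ does not quite hold at the endpoint (the function $\epsilon^{2}\coth^{2}(\epsilon/2)$ increases from $4$ to about $4.68$ on $(0,1]$). This is a constant that the paper itself appears to have rounded somewhat optimistically; it does not affect your method, and a constant like $4.7$ (and $10.4$ for the total) would go through cleanly.
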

\iffull{
While the matrix $\BB$ guaranteed to exist by this theorem may be
  found in polynomials time, this is not fast enough for the
  algorithms we desire.
So, we only use Theorem~\ref{thm:BSS} to prove existence results.
We
  later show how to replace it with faster algorithms, at some expense
  in the quality of the sparsifiers we produce.

}{}

\section{Block Cholesky Factorization}\label{sec:cholesky}

Our algorithm uses block-Cholesky factorization to eliminate
  a block of vertices all at once.
We now review how block-Cholesky factorization works.

To begin, we remind the reader that Cholesky factorization is the natural
  way of performing Gaussian elimination
  on a symmetric matrix:
  by performing eliminations on rows and columns simultaneously, one preserves the   
  symmetry of the matrix.
The result of Cholesky factorization is a representation of a matrix $\MM$
  in the form $\UU^{T} \UU$,
  where $\UU$ is an upper-triangular matrix.
We remark that this is usually written as $\LL \LL^{T}$ where $\LL$ is lower-triangular.
We have chosen to write it in terms of upper-triangular matrices so as to avoid confusion with the
  use of the letter $\LL$ for Laplacian matrices.

To produce matrices $\UU$ with 1s on their diagonals, and 
  to avoid the computation of square roots, one often instead forms a factorization
  of the form $\UU^{T} \DD \UU$, where $\DD$ is a diagonal matrix.
Block-Cholesky factorization forms a factorization of this form, but with $\DD$ 
  being a block-diagonal matrix.

To begin, we must choose a set of rows to be eliminated.
We will eliminate the same set of columns.
For consistency with the notation used in the description of multigrid algorithms,
  we will let $F$ (for finer) be the set of rows to be eliminated.
We then let $C$ (for coarse) be the remaining set of rows.
In contrast with multigrid methods, we will have $\sizeof{F} < \sizeof{C}$.
By re-arranging rows and colums, we can write $\MM$
  in block form:
\[
\MM
=
\left[
\begin{array}{cc}
\MM_{FF} &\MM_{FC}\\
\MM_{CF}  & \MM_{CC}
\end{array}
\right].
\]
Elimination of the rows and columns in $F$ corresponds to writing
\begin{equation} \label{eqn:blockformula}
\MM
= 
\begin{bmatrix}
\II  & 0\\
\MM_{CF} \MM_{FF}^{-1}   & \II
\end{bmatrix}
\begin{bmatrix}
\MM_{FF} & 0\\
0  & \MM_{CC} - \MM_{CF} \MM_{FF}^{-1} \MM_{FC}
\end{bmatrix}
\begin{bmatrix}
\II  &  \MM_{FF}^{-1} \MM_{FC } \\
0   & \II
\end{bmatrix}.
\end{equation}
Note that the left and right matrices are lower and upper triangular.
The matrix in the lower-right block of the middle matrix is the Schur
  complement of $F$ in $\MM$.
We will refer to it often by the notation
\[
\schur{\MM}{F} \defeq \MM_{CC} - \MM_{CF} \MM_{FF}^{-1} \MM_{FC}.
\]
We remark that one can solve a linear system in $\schur{\MM}{F}$
  by solving a system in $\MM$: one just needs to put zeros coordinates corresponding
  to $F$ in the right-hand-side vector.


Recall that
\begin{equation}\label{eqn:blockLinverse}
\begin{bmatrix}
\II  & 0\\
\MM_{CF} \MM_{FF}^{-1}   & \II
\end{bmatrix}^{-1}
=
\begin{bmatrix}
\II  & 0\\
-\MM_{CF} \MM_{FF}^{-1}   & \II
\end{bmatrix}.
\end{equation}
So, if we can quickly multiply by this last matrix,
  and if we can quickly solve linear systems in $\MM_{FF}$
  and in the Schur complement, then we can quickly solve systems in $\MM$.
Algebraically, we exploit the following identity:
\begin{fact}\label{fact:blockInverse}
\begin{equation}\label{eqn:blockInverse}
\MM^{-1}
=
\left[
\begin{array}{cc}
\II & -\MM_{FF}^{-1} \MM_{FC}\\
0 & \II
\end{array}
\right]
\left[
\begin{array}{cc}
\MM_{FF}^{-1} & 0 \\
0 & \schur{\MM}{F}^{-1}
\end{array}
\right]
\left[
\begin{array}{cc}
\II & 0\\
-\MM_{CF} \MM_{FF}^{-1} & \II
\end{array}
\right].
\end{equation}
\end{fact}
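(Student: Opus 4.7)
The plan is to derive Fact \ref{fact:blockInverse} directly from the block Cholesky factorization already given in equation (\ref{eqn:blockformula}). That factorization expresses $\MM$ as a product of three matrices---a block lower unit-triangular matrix, a block diagonal matrix, and a block upper unit-triangular matrix---so inverting both sides and using $(XYZ)^{-1} = Z^{-1} Y^{-1} X^{-1}$ reduces the claim to computing the inverses of these three factors.

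First I would invert the left (block lower unit-triangular) factor using equation (\ref{eqn:blockLinverse}), which gives
\[
\begin{bmatrix} \II & 0 \\ \MM_{CF}\MM_{FF}^{-1} & \II \end{bmatrix}^{-1}
=
\begin{bmatrix} \II & 0 \\ -\MM_{CF}\MM_{FF}^{-1} & \II \end{bmatrix}.
\]
By the analogous identity for block upper unit-triangular matrices (which follows from the lower case by transposition, or by direct multiplication), the right factor inverts to the matrix with $-\MM_{FF}^{-1}\MM_{FC}$ in its off-diagonal block. The middle (block diagonal) factor inverts blockwise to $\mathrm{diag}(\MM_{FF}^{-1}, \schur{\MM}{F}^{-1})$, using the definition $\schur{\MM}{F} = \MM_{CC} - \MM_{CF}\MM_{FF}^{-1}\MM_{FC}$. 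Reversing the order of the three inverted factors and substituting yields exactly the right-hand side of equation (\ref{eqn:blockInverse}).

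Two small things to justify: that $\MM_{FF}$ is invertible (so that $\MM_{FF}^{-1}$ and the Schur complement make sense) and that $\schur{\MM}{F}$ is invertible. In our setting $\MM$ is an SDDM matrix, any principal submatrix $\MM_{FF}$ of an SDDM matrix is SDDM and hence invertible, and the Schur complement of an SDDM matrix with respect to any subset is again SDDM (a fact we use throughout the paper), so both inverses exist.

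There is no real obstacle here: the identity is a purely algebraic consequence of the factorization (\ref{eqn:blockformula}) together with (\ref{eqn:blockLinverse}). As a sanity check, one could alternatively verify (\ref{eqn:blockInverse}) by multiplying the right-hand side by $\MM$ in block form and checking that the result is the block identity; the $F$-row and $C$-row computations each collapse after one cancellation involving the definition of $\schur{\MM}{F}$. Either route is routine, so I would present the factorization-and-invert argument as it is shorter and makes the origin of each factor transparent.
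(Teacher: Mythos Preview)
Your proposal is correct. The paper does not actually prove this statement: it is recorded as a ``Fact'' with no accompanying proof, treated as a standard identity. Your derivation from the factorization \eqref{eqn:blockformula} via \eqref{eqn:blockLinverse} is exactly the intended justification, as those two equations are placed immediately before the Fact precisely to make \eqref{eqn:blockInverse} evident.
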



Our algorithms depend upon the following 
important property of Schur complements of SDDM matrices.

\begin{fact}
If $\MM$ is a SDDM matrix and $F$ is a subset of its columns,
the Schur complement
  $\schur{\MM}{F}$ is also a SDDM matrix.
\end{fact}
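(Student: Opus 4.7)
The plan is to reduce to the single-vertex case and then iterate. Schur complements compose: for disjoint $F_1,F_2$, one has $\schur{\MM}{F_1\cup F_2} = \schur{\schur{\MM}{F_1}}{F_2}$, as can be checked directly from the definition (or by applying the block factorization \eqref{eqn:blockformula} twice). Since an SDDM matrix is symmetric, has positive diagonal, and is nonsingular and diagonally dominant, it is positive definite; every principal submatrix of a positive definite matrix is positive definite, so $\MM_{FF}$ is invertible no matter what $F$ is chosen. Hence vertices can be eliminated one at a time, and it suffices to handle $|F|=1$.

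Fix $v$ and set $\MM' = \schur{\MM}{\{v\}}$, so $\MM'_{ij} = \MM_{ij} - \MM_{iv}\MM_{vj}/\MM_{vv}$ for $i,j\neq v$. Symmetry of $\MM'$ is inherited from $\MM$. For the non-positive off-diagonal property, note that $\MM_{vv}>0$ and $\MM_{iv},\MM_{vj}\leq 0$ whenever $i,j\neq v$, so the correction $-\MM_{iv}\MM_{vj}/\MM_{vv}$ is non-positive, and together with $\MM_{ij}\leq 0$ this gives $\MM'_{ij}\leq 0$ for $i\neq j$.

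Next I would verify diagonal dominance row by row. Since both $\MM_{ij}$ and the correction are non-positive for $j\neq i,v$, we get $|\MM'_{ij}| = |\MM_{ij}| + |\MM_{iv}||\MM_{vj}|/\MM_{vv}$. Summing over $j\neq i,v$ and then using diagonal dominance of row $v$ in $\MM$, namely $\sum_{j\neq v,i}|\MM_{vj}| \leq \MM_{vv} - |\MM_{vi}|$, we find
$$\sum_{j\neq i,v} |\MM'_{ij}| \;\leq\; \sum_{j\neq i,v} |\MM_{ij}| + |\MM_{iv}| - \frac{\MM_{iv}^2}{\MM_{vv}} \;=\; \sum_{j\neq i}|\MM_{ij}| - \frac{\MM_{iv}^2}{\MM_{vv}} \;\leq\; \MM_{ii} - \frac{\MM_{iv}^2}{\MM_{vv}} \;=\; \MM'_{ii},$$
where the last inequality uses diagonal dominance of row $i$ in $\MM$.

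Finally, for nonsingularity I would invoke the block factorization \eqref{eqn:blockformula}, which presents $\MM$ as a congruence (via invertible upper- and lower-triangular factors with unit diagonal) of the block-diagonal matrix whose blocks are $\MM_{FF}$ and $\schur{\MM}{F}$. Since $\MM$ is positive definite and congruence preserves signature, both diagonal blocks are positive definite and hence nonsingular. Combined with the three preceding properties, this shows $\schur{\MM}{F}$ is SDDM. The only obstacle worth naming is the bookkeeping in the diagonal dominance step; the rest is essentially mechanical.
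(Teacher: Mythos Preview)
The paper states this as a Fact without proof, treating it as folklore. Your argument is correct and is the standard one: reduce to a single pivot via the composition law for Schur complements, then verify symmetry, non-positive off-diagonals, diagonal dominance, and nonsingularity directly. The diagonal-dominance computation is the only place where any care is needed, and your inequality chain is right (the key is that row $v$'s slack bounds $\sum_{j\neq v,i}|\MM_{vj}|\leq \MM_{vv}-|\MM_{vi}|$, after which the correction term telescopes nicely). One small remark: you could also get nonsingularity directly from the single-vertex diagonal-dominance calculation together with a strictness argument, but invoking positive definiteness via the congruence \eqref{eqn:blockformula} is cleaner and entirely legitimate here.
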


We now mention two other facts that we will use about the
  $\pleq $ order and Schur complements.

\begin{fact}\label{fact:blockSubstitute}
If $\MM_{FF} \pleq \Mtil_{FF}$, then
\[
\begin{pmatrix}
\MM_{FF} &\MM_{FC}\\
\MM_{CF}  & \MM_{CC}
\end{pmatrix}
\pleq 
\begin{pmatrix}
\Mtil_{FF} &\MM_{FC}\\
\MM_{CF}  & \MM_{CC}
\end{pmatrix}.
\]
\end{fact}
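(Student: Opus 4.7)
The plan is to prove this by directly computing the difference of the two block matrices and showing it is positive semidefinite under the given hypothesis. Recall that by definition of the Loewner order, the claim that the left-hand matrix is $\pleq$ the right-hand matrix is equivalent to the difference (right minus left) being positive semidefinite.

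First, I would subtract the two block matrices. Since the off-diagonal blocks $\MM_{FC}, \MM_{CF}$ and the lower-right block $\MM_{CC}$ agree between the two matrices, the difference is
\[
\begin{pmatrix}
\Mtil_{FF} - \MM_{FF} & 0 \\
0 & 0
\end{pmatrix}.
\]
This is a block-diagonal matrix. Its positive semidefiniteness is equivalent to positive semidefiniteness of each diagonal block: the bottom-right block is the zero matrix, which is trivially PSD, and the top-left block is $\Mtil_{FF} - \MM_{FF}$, which is PSD precisely by the hypothesis $\MM_{FF} \pleq \Mtil_{FF}$.

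To make this even more transparent, one can verify the PSD property by a direct quadratic-form argument: for any vector $\binom{\xx_F}{\xx_C}$ partitioned according to the $F$/$C$ split, the quadratic form against the difference matrix is $\xx_F^T(\Mtil_{FF}-\MM_{FF})\xx_F \geq 0$ by the hypothesis. There is no real obstacle here; the only thing to be careful about is noting that $\MM_{FC}$ and $\MM_{CF}$ are the \emph{same} in both matrices (the tilde is placed only on the $FF$ block in the statement), so they cancel in the subtraction. Once that is observed, the fact is immediate.
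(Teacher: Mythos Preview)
Your argument is correct. The paper states this as a Fact without proof; your direct computation of the difference and the quadratic-form verification is exactly the standard justification, and there is nothing to add.
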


\begin{fact}[Lemma B.1. from~\cite{MillerP13}]
\label{fact:schurLoewner}
If $\MM$ and $\Mtil$ are positive semidefinite
matrices satisfying $\MM \preceq \Mtil$,
then
\[
\schur{\MM}{F} \preceq \schur{\Mtil}{F}.
\]
\end{fact}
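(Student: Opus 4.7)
The plan is to prove the statement through the variational characterization of the Schur complement. Specifically, I would establish the identity
\[
\xx^T \schur{\MM}{F} \xx \;=\; \min_{\yy} \begin{pmatrix} \yy \\ \xx \end{pmatrix}^T \MM \begin{pmatrix} \yy \\ \xx \end{pmatrix}
\]
for every vector $\xx$ indexed by $C$, and then argue that the Loewner inequality between $\MM$ and $\Mtil$ is preserved under this pointwise minimization.

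First I would verify the variational formula. Expanding the quadratic form gives
\[
\yy^T \MM_{FF} \yy + 2\yy^T \MM_{FC} \xx + \xx^T \MM_{CC} \xx,
\]
which is convex in $\yy$. Setting the gradient to zero yields the optimizer $\yy^{\star} = -\MM_{FF}^{-1} \MM_{FC}\xx$ (using the Moore-Penrose pseudoinverse if $\MM_{FF}$ is singular; since $\MM$ is PSD, $\MM_{FC}\xx$ always lies in the range of $\MM_{FF}$, so the minimum is finite). Substituting back collapses the expression to exactly $\xx^T (\MM_{CC} - \MM_{CF} \MM_{FF}^{-1} \MM_{FC}) \xx = \xx^T \schur{\MM}{F} \xx$.

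Next I would use monotonicity of pointwise minimization. Since $\MM \pleq \Mtil$, for every choice of $\yy$ and every $\xx$ we have the scalar inequality
\[
\begin{pmatrix} \yy \\ \xx \end{pmatrix}^T \MM \begin{pmatrix} \yy \\ \xx \end{pmatrix} \;\leq\; \begin{pmatrix} \yy \\ \xx \end{pmatrix}^T \Mtil \begin{pmatrix} \yy \\ \xx \end{pmatrix}.
\]
If $\yy^{\star}_{\Mtil}$ achieves the minimum on the right, then the left-hand minimum over $\yy$ is at most its value at $\yy^{\star}_{\Mtil}$, which in turn is at most the right-hand minimum. Combining this with the variational formula applied to both $\MM$ and $\Mtil$ gives $\xx^T \schur{\MM}{F} \xx \leq \xx^T \schur{\Mtil}{F} \xx$ for all $\xx$, which is exactly $\schur{\MM}{F} \pleq \schur{\Mtil}{F}$.

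The main obstacle is the case where $\MM_{FF}$ (or $\Mtil_{FF}$) is singular, in which the inverse in the definition of Schur complement must be read as a pseudoinverse and one must check that $\MM_{FC}\xx$ lies in the range of $\MM_{FF}$; this follows from positive semidefiniteness of $\MM$ via the standard argument that any $\vv$ in the kernel of $\MM_{FF}$ satisfies $\vv^T \MM_{FC} = 0$ (otherwise $\MM$ would not be PSD, as one can drive the quadratic form $t \mapsto \MM$ evaluated on $(t\vv, \xx)$ to $-\infty$). Once this is handled, the argument above applies uniformly.
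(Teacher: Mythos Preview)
Your argument is correct. The variational characterization of the Schur complement and the ensuing monotonicity-under-minimization step are exactly the standard way to prove this, and your treatment of the singular case is also sound.

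As for comparison with the paper: there is nothing to compare, because the paper does not supply a proof of this fact at all. It is stated as Fact~\ref{fact:schurLoewner} and attributed to Lemma~B.1 of \cite{MillerP13}; the authors simply cite it and use it. Your proof is precisely the kind of argument that reference would contain, so there is no meaningful divergence in approach.
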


The first idea that motivates our algorithms is that we can sparsify $\MM$
  and $\schur{\MM}{F}$.
If $\MM$ is sparse, then we can quickly multiply vectors by $\MM_{FC}$.
However, to be able to quickly apply the factorization of $\MM^{-1}$ given
  in Fact~\ref{fact:blockInverse}, we also need to be able to quickly
  apply $\MM_{FF}^{-1}$.
If we can do that, then we can 
  quickly solve systems in $\MM$ by recursively solving 
  systems in $\schur{\MM}{F}$.

The easiest way to find an $F$ for which we could quickly apply
  $\MM_{FF}^{-1}$ would be to choose $F$ to be a large independent set,
  in which case $\MM_{FF}$ would be diagonal.
Such a set $F$ must exist as we can assume $\MM$ is sparse.
However, the independent set we are guaranteed to find by the sparsity of $\MM$ is not big enough:
  if we repeatedly find large independent sets and then sparsify the resulting Schur complements,
  the error that accumulates could become too big.
The second idea behind our algorithms is that we can find a large set $F$ for which
  $\MM_{FF}$ is well-approximated by a diagonal matrix.
This will allow us to apply $\MM_{FF}^{-1}$ quickly.
In the next section, we show that a very good choice of $F$ always exists, and that
  the use of such sets $F$ yields nearly-optimal algorithms for solving linear systems in $\MM$.

In order to make the entire algorithm efficient, we are still left with the problem of quickly
  computing a sparsifier of the Schur complement.
In Section~\ref{sec:vertexSparsify}, we show how to quickly compute and use \textit{Spectral Vertex Sparsifiers},
  which are sparsifiers of the Schur complement.
In particular, we do this by expressing the Schur complement as the sum of the Schur complements of two simpler matrices:
  one with a diagonal $FF$ block, and the other with a better conditioned $FF$ block.
We handle the matrix with the diagonal block directly, and the matrix with the better conditioned block recursively.




\section{A Polynomial Time Algorithm for Optimal Solver Chains}
\label{sec:vertexReduce}


Our algorithms will begin by eliminating a set of vertices $F$
  that is $\alpha$-strongly diagonally dominant, a concept that we now define.

\begin{definition}
\label{def:strongDD}
A symmetric matrix $\MM$ is $\alpha$-strongly diagonally dominant
  if for all $i$ 
\[
\MM_{ii} \geq (1+\alpha) \sum_{j \neq i} \left| \MM_{ij} \right|.
\]
We say that a subset $F$ of the rows of a matrix $\MM$
  is $\alpha$-strongly diagonally dominant if $\MM_{FF}$
  is an $\alpha$-strongly diagonally dominant matrix.
\end{definition}

We remark that $0$-strongly diagonal dominance coincides with the
  standard notion of weak diagonal dominance.
In particular, Laplacian matrices are $0$-strongly diagonally dominant.

It is easy to find an $\alpha$-strongly diagonally dominant subset
  containing at least an $1/8 (1+\alpha)$ fraction of the rows of an SDD
  matrix: one need merely pick a random subset and then discard the rows
  that do not satisfy the condition.

\iffull{
Pseudocode for computing such a subset is given in Figure~\ref{fig:randF}.

\begin{figure}[h]
\begin{algbox} $F=\textsc{SDDSubset}(\MM , \alpha)$, where $\MM$
is an $n$-dimensional SDD matrix.
\begin{enumerate}
\item Let $F'$ be a uniform random subset of $\setof{1, \dots , n}$ of size
$\frac{n}{4(1+\alpha)}$.
\label{ln:randSubset}
\item Set \[
F=\left\{ i\in F'\text{ such that }\sum_{j \in F', j \neq i}\left|\MM_{ij}\right|\leq\frac{1}{1 + \alpha} \left|\MM_{ii}\right|\right\} .
\]
\item If $|F| < \frac{n}{8(1 + \alpha)}$, goto Step~\ref{ln:randSubset}.
\item Return $F$
\end{enumerate}
\end{algbox}

\caption{Routine for Generating an $\alpha$-strongly diagonally dominant
subset $F$}

\label{fig:randF}
\end{figure}
}{}

\begin{lemma}
\label{lem:subsetSimple}
For every $n$-dimensional SDD matrix $\MM$ and every $\alpha \geq 0$, 
  \textsc{SDDSubset} computes an $\alpha$-strongly diagonally dominant subset $F$
  of size at least $n/(8 (1+\alpha))$ in  $O(m)$ expected work and $O(\log{n})$ expected depth,
  where $m$ is the number of nonzero entries in $\MM$.
\end{lemma}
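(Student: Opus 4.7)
The plan is to verify three things: (i) the returned set $F$ is $\alpha$-strongly diagonally dominant whenever the algorithm halts, (ii) each single iteration of the loop succeeds in producing a set of size at least $n/(8(1+\alpha))$ with probability at least $1/2$, and (iii) each iteration runs in $O(m)$ work and $O(\log n)$ depth. Together these give the stated expected bounds, since the number of iterations is stochastically dominated by a geometric random variable of mean at most $2$.

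For (i), note that for every $i\in F$ we have by construction $\sum_{j\in F',\,j\neq i}|\MM_{ij}|\leq \frac{1}{1+\alpha}|\MM_{ii}|$. Since $F\subseteq F'$, the sum over $F\setminus\{i\}$ is only smaller, so $(\MM_{FF})_{ii}=\MM_{ii}\geq(1+\alpha)\sum_{j\in F,\,j\neq i}|(\MM_{FF})_{ij}|$, which is exactly the definition of $\alpha$-strong diagonal dominance for $\MM_{FF}$.

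For (ii), condition on a fixed $i\in F'$. For each $j\neq i$, $\Pr[j\in F'\mid i\in F']=(|F'|-1)/(n-1)\leq 1/(4(1+\alpha))$, so
\[
\expec{}{\sum_{j\in F',\,j\neq i}|\MM_{ij}|\;\Big|\; i\in F'}\;\leq\;\frac{1}{4(1+\alpha)}\sum_{j\neq i}|\MM_{ij}|\;\leq\;\frac{|\MM_{ii}|}{4(1+\alpha)},
\]
using weak diagonal dominance of $\MM$ in the last step. By Markov's inequality, the probability that $i$ fails the threshold test $\frac{1}{1+\alpha}|\MM_{ii}|$ is at most $1/4$. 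Hence $\expec{}{|F'|\setminus|F|}\leq |F'|/4$, and a second application of Markov to the nonnegative random variable $|F'|-|F|$ yields $\Pr[|F|\geq|F'|/2]\geq 1/2$. Since $|F'|=n/(4(1+\alpha))$, each iteration produces a valid $F$ with probability at least $1/2$, so the expected number of iterations is at most $2$.

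For (iii), sampling a uniform random subset $F'$ of fixed size takes $O(n)$ work and $O(\log n)$ depth. To test the threshold for each $i\in F'$ in parallel, scan the nonzero entries of $\MM$: for each entry $(i,j)$ with $i,j\in F'$ and $i\neq j$, contribute $|\MM_{ij}|$ to a running sum for row $i$; these aggregations and the $n$ threshold comparisons can be done with $O(m)$ work and $O(\log n)$ depth using standard parallel prefix-sum or segmented-sum primitives. Multiplying by the expected constant number of iterations gives $O(m)$ expected work and $O(\log n)$ expected depth. The main (very mild) subtlety is simply verifying that the two applications of Markov compose correctly to give a constant success probability per iteration; the rest is bookkeeping.
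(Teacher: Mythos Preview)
Your proof is correct and follows essentially the same approach as the paper: both verify correctness of the returned set via $F\subseteq F'$, bound the conditional failure probability of each index by $1/4$ via Markov, and then apply Markov a second time to the count of failed indices to get success probability at least $1/2$ per iteration. One notational quibble: you write $\expec{}{|F'|\setminus|F|}$ where you clearly mean $\expec{}{|F'\setminus F|}$ (equivalently $\expec{}{|F'|-|F|}$); the set-difference symbol should be inside the cardinality, not between two cardinalities.
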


\begin{proof}
As $F$ is a subset of $F'$,
\[
  \sum_{j \in F, j \neq i}\left|\MM_{ij}\right| \leq   \sum_{j \in F', j \neq i}\left|\MM_{ij}\right|.
\]
So, when the algorithm does return a set $F$, it is guaranteed to be $\alpha$-strongly diagonally dominant.  

We now show that the probability that the algorithm finishes in each iteration is at least $1/2$.
Let $A_{i}$ be the event that $i \in F'$ and that $i \not \in F$.
This only happens if $i \in F'$ and
\begin{equation}\label{eqn:subsetSimple}
  \sum_{j \in F', j \not = i} \abs{\MM_{ij}}
>
  \frac{1}{1+\alpha } \abs{\MM_{ii}}.
\end{equation}
The set $F$ is exactly the set of $i \in F'$ for which $A_{i}$ does not hold.

Given that $i \in F'$, the probability that each other $j \not = i$ is in $F'$ is
\[
  \frac{1}{n-1}  \left(\frac{n}{4 (1+\alpha)}-1 \right) .
\]
So,
\[
  \expec{}{\sum_{j \in F', j \not = i } \abs{\MM_{ij}} \Big| i \in F'}
\leq 
\frac{1}{n-1}
  \left(\frac{n}{4 (1+\alpha)}-1 \right)  \sum_{j \not = i} \abs{\MM_{ij}}
<
  \frac{1}{4 (1+\alpha)}  \sum_{j \not = i} \abs{\MM_{ij}}
\leq 
  \frac{1}{4 (1+\alpha)}  \abs{\MM_{ii}},
\]
as $\MM$ is strongly diagonally dominant.
So, Markov's inequality tells us that
\[
  \prob{}{
\sum_{j \in F', j \not = i } \abs{\MM_{ij}} 
>
  \frac{1}{1+\alpha}  \abs{\MM_{ii}}
\Big| i \in F'
  }
< 1/4,
\]
and thus
\[
  \prob{}{A_{i}} = \prob{}{i \in F'} \prob{}{i \not \in F | i \in F'}
<
  \frac{1}{4 (1+\alpha)} \frac{1}{4}
=
  \frac{1}{16 (1+\alpha)}.
\]
Again applying Markov's inequality allows us to conclude
\[
  \prob{}{\sizeof{\setof{i : A_{i}}} \geq \frac{n}{8 (1+\alpha)}} < 1/2.
\]

So, with probability at least $1/2$, $\sizeof{F} \geq n / 8 (1+\alpha)$,
  and the algorithm will pass the test in line 3.
Thus, the expected number of iterations made by the algorithm is at most $2$.
The claimed bounds on the expected work and depth of the algorithm follow.
\end{proof}


Strongly diagonally dominant subsets are useful because linear systems
involving them can be solved rapidly.
Given such a set $F$, we will construct 
  an operator $\ZZ_{FF}^{(k)}$ that approximates
  $\MM_{FF}^{-1}$
  and that can be applied quickly.
To motivate our construction, 
  observe that if $\MM_{FF} = \XX_{FF} + \LL_{FF}$ where $\XX_{FF}$ is a nonnegative diagonal matrix and
  $\LL_{FF}$ is a Laplacian, then
\[
  \MM_{FF}^{-1} = \XX_{FF}^{-1}
  - \XX_{FF}^{-1} \LL_{FF} \XX_{FF}^{-1} + 
  \sum_{i \geq 2} (-1)^{i} \XX_{FF}^{-1} (\LL_{FF} \XX_{FF}^{-1})^{i}.
\]
We will approximate this series by its first few terms:
\begin{equation}
\ZZ_{FF}^{(k)} \defeq  \sum_{i = 0}^{k} \XX_{FF}^{-1} \left(-\LL_{FF} \XX_{FF}^{-1} \right)^{i}.
\label{eqn:defZ}
\end{equation}

\iffull{
In the following lemmas, we show that using $\ZZ_{FF}$
  in place of $\MM_{FF}^{-1}$ in \eqref{eqn:blockInverse} 
   provides a good approximation
  of $\MM^{-1}$.
We begin by pointing out that 
  $\XX_{FF}$ is much greater than $\LL_{FF}$.
In particular, this implies that 
  all diagonal entries of $\XX_{FF}$ are positive, so that
  $\XX_{FF}^{-1}$ actually exists.

\begin{lemma}\label{lem:Xdom}
Let $\MM$ be a SDDM matrix that is $\alpha$-strongly diagonally dominant.
Write $\MM = \XX + \LL$ where $\XX$ is a nonnegative diagonal matrix and
  $\LL$ is a Laplacian.
Then,
\[
  \XX \pgeq \frac{\alpha}{2} \LL.
\]
\end{lemma}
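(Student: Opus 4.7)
The plan is to reduce the claim to a straightforward comparison of diagonals, and then to invoke the standard ``Laplacian is bounded by twice its diagonal'' bound to convert a diagonal dominance into a Loewner inequality against the whole Laplacian.

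First I would unpack the decomposition $\MM = \XX + \LL$. Since $\XX$ is diagonal, the off-diagonal entries of $\MM$ coincide with those of $\LL$; since $\LL$ is a Laplacian, these are non-positive and $\LL_{ii} = \sum_{j\neq i} |\LL_{ij}| = \sum_{j\neq i}|\MM_{ij}|$. Plugging this into the $\alpha$-strong diagonal dominance assumption
\[
\MM_{ii} \geq (1+\alpha)\sum_{j\neq i}|\MM_{ij}| = (1+\alpha)\LL_{ii}
\]
and using $\MM_{ii} = \XX_{ii} + \LL_{ii}$ gives $\XX_{ii} \geq \alpha \LL_{ii}$ for every $i$. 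In particular every diagonal entry of $\XX$ is nonnegative, so $\XX^{-1}$ exists wherever the corresponding $\LL_{ii} > 0$, and as diagonal matrices we obtain $\XX \pgeq \alpha\,\diag(\LL)$.

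The remaining step is the matrix inequality $2\,\diag(\LL) \pgeq \LL$, which holds for any Laplacian. I would prove it edge-by-edge: writing $\LL = \sum_{(i,j)} w_{ij}(e_i-e_j)(e_i-e_j)^{T}$ with $w_{ij}\geq 0$, the elementary bound
\[
(e_i-e_j)(e_i-e_j)^{T} \pleq 2\bigl(e_i e_i^{T} + e_j e_j^{T}\bigr)
\]
(verified by checking $\xx^{T}(2(e_ie_i^T + e_je_j^T) - (e_i-e_j)(e_i-e_j)^T)\xx = (\xx_i+\xx_j)^2 \geq 0$) summed with weights $w_{ij}$ yields $\LL \pleq 2\,\diag(\LL)$. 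Combining this with $\XX \pgeq \alpha\,\diag(\LL)$ gives $\XX \pgeq \tfrac{\alpha}{2}\LL$, as desired.

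There is no real obstacle here; the only thing worth being slightly careful about is ensuring the decomposition $\MM=\XX+\LL$ is well-defined and that $\XX$ is genuinely nonnegative (which follows immediately from $\XX_{ii} \geq \alpha\LL_{ii} \geq 0$). The lemma is essentially two lines after the standard Laplacian-vs-diagonal comparison is in hand.
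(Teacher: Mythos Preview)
Your proof is correct and follows essentially the same approach as the paper: both first establish $\XX \pgeq \alpha\,\diag(\LL)$ from the strong diagonal dominance, then combine it with $2\,\diag(\LL) \pgeq \LL$. The only cosmetic difference is that the paper obtains $2\,\diag(\LL) \pgeq \LL$ by noting $\diag(\LL) + \AA$ is diagonally dominant (hence PSD) rather than via the edge-by-edge bound you use.
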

\begin{proof}
Write $\LL = \YY - \AA$ where $\YY$ is diagonal and $\AA$ has zero diagonal.
As $\LL$ is diagonally dominant, so is $\YY + \AA$.
This implies that $\YY \pgeq -\AA$, and so
  $2 \YY \pgeq \LL$.

As $\MM$ is $\alpha$-strongly diagonally dominant and the diagonal of $\MM$
  is $\XX + \YY$,
\[
  ((\XX + \YY) \bvec{1})_{i} \geq (\alpha +1) (\AA \bvec{1})_{i}.
\]
As $\LL$ is a Laplacian, $\LL \bvec{1} = \bvec{0}$, which implies $\YY \bvec{1} = \AA \bvec{1}$ and
\[
  (\XX \bvec{1})_{i} \geq \alpha (\AA \bvec{1})_{i} = \alpha (\YY \bvec{{1}})_{i}.
\]
As both $\XX$ and $\YY$ are diagonal,
  this implies that 
\[
\XX \pgeq \alpha \YY \pgeq \frac{\alpha}{2} \LL .
\]
\end{proof}

We now bound the quality of approximation of the power series \eqref{eqn:defZ}.

\begin{lemma}\label{lem:lightBlock2}
Let $\MM$ be a SDDM matrix and let $F$ be a set of columns
  so that when we write $\MM_{FF} = \XX_{FF} + \LL_{FF}$
  with $\XX_{FF}$ nonnegative diagonal and $\LL_{FF}$ a Laplacian,
  we have
  $\LL_{FF} \pleq \beta \XX_{FF}$.
Then, for odd $k$ and for $\ZZ_{FF}^{(k)}$ as defined in \eqref{eqn:defZ} we have:
\begin{equation}\label{eqn:lightBlock2}
\XX_{FF} + \LL_{FF} 
\preceq (\ZZ_{FF}^{(k)})^{-1} \preceq
\XX_{FF} + \left( 1 + \delta  \right) \LL_{FF},
\end{equation}
where
\[
  \delta = \beta^{k} \frac{1+\beta}{1-\beta^{k+1}}.
\]
\end{lemma}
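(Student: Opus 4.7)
The plan is to reduce everything to a one-variable scalar inequality by diagonalizing. First I would drop the $FF$ subscripts and conjugate the whole identity by $\XX_{FF}^{-1/2}$, setting $\NN = \XX_{FF}^{-1/2}\LL_{FF}\XX_{FF}^{-1/2}$. The hypothesis $\LL_{FF} \preceq \beta \XX_{FF}$ together with $\LL_{FF} \succeq 0$ becomes $0 \preceq \NN \preceq \beta \II$, and the quantities appearing in the lemma turn into $\XX_{FF}+\LL_{FF} = \XX_{FF}^{1/2}(\II+\NN)\XX_{FF}^{1/2}$ and $\ZZ_{FF}^{(k)} = \XX_{FF}^{-1/2}\left(\sum_{i=0}^{k}(-\NN)^{i}\right)\XX_{FF}^{-1/2}$.

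Next I would apply the finite geometric series identity $\left(\sum_{i=0}^{k}(-\NN)^{i}\right)(\II+\NN) = \II - (-\NN)^{k+1}$, which for odd $k$ becomes $\II - \NN^{k+1}$. Since $k+1$ is even, $\NN^{k+1}\succeq 0$, and since $\beta<1$ is implicitly assumed (otherwise $\delta$ is vacuous), we have $\II - \NN^{k+1} \succ 0$ with eigenvalues in $[1-\beta^{k+1},1]$. Inverting then gives
\[
(\ZZ_{FF}^{(k)})^{-1} = \XX_{FF}^{1/2}(\II+\NN)(\II-\NN^{k+1})^{-1}\XX_{FF}^{1/2},
\]
where the two middle factors commute because they are polynomials in $\NN$.

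For the lower bound $\XX_{FF}+\LL_{FF}\preceq (\ZZ_{FF}^{(k)})^{-1}$, after conjugation this reduces to $\II+\NN \preceq (\II+\NN)(\II-\NN^{k+1})^{-1}$, which holds because $(\II-\NN^{k+1})^{-1}\succeq \II$ and the two commuting PSD factors $\II+\NN$ and $(\II-\NN^{k+1})^{-1}-\II$ have PSD product. For the upper bound, conjugation reduces it to $(\II+\NN)(\II-\NN^{k+1})^{-1}\preceq \II+(1+\delta)\NN$, and since everything is a polynomial in $\NN$ with spectrum in $[0,\beta]$, this is equivalent to the scalar inequality
\[
\frac{1+\lambda}{1-\lambda^{k+1}} \leq 1+(1+\delta)\lambda \qquad \text{for all } \lambda\in[0,\beta].
\]
Clearing denominators and simplifying leaves $\delta\,\lambda(1-\lambda^{k+1}) \geq \lambda^{k+1}(1+\lambda)$, i.e., $\delta \geq \lambda^{k}(1+\lambda)/(1-\lambda^{k+1})$ (trivial at $\lambda=0$). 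The right-hand side is monotone increasing in $\lambda$ on $[0,\beta]$, so its maximum is at $\lambda=\beta$, giving exactly the required $\delta = \beta^{k}(1+\beta)/(1-\beta^{k+1})$.

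The only slightly non-routine step is recognizing that, for odd $k$, the partial geometric sum identity produces $\II-\NN^{k+1}$ rather than $\II+\NN^{k+1}$, which is exactly what makes the lower bound immediate and lets the upper-bound gap be controlled by $\beta^{k}/(1-\beta^{k+1})$. Once the problem is reduced to the scalar inequality on eigenvalues, everything else is bookkeeping; the parity of $k$ is the essential ingredient.
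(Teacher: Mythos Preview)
Your proof is correct and follows essentially the same approach as the paper: both reduce the Loewner inequalities to scalar inequalities in the eigenvalues of (a normalized version of) $\LL_{FF}$ relative to $\XX_{FF}$, using the telescoping identity $\left(\sum_{i=0}^{k}(-\NN)^{i}\right)(\II+\NN)=\II-\NN^{k+1}$ for odd $k$. The paper works with the nonsymmetric product $\XX_{FF}^{-1}\LL_{FF}$ and cites an external lemma to pass from eigenvalue bounds on $\ZZ_{FF}^{(k)}\MM_{FF}$ to Loewner order, whereas your symmetric conjugation by $\XX_{FF}^{-1/2}$ is more self-contained; and you make explicit the monotonicity argument showing $\lambda\mapsto \lambda^{k}(1+\lambda)/(1-\lambda^{k+1})$ is increasing, which the paper leaves implicit.
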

\begin{proof}
The left-hand inequlity is equivalent to the statement that
 all the eigenvalues of
 $\ZZ^{(k)}_{FF} (\XX_{FF} + \LL_{FF})$ are at most $1$
   (see \cite[Lemma 2.2]{SupportGraph} or
  \cite[Proposition 3.3]{SpielmanTengLinsolve}).
To see that this is the case, expand
\begin{align*}
\ZZ_{FF}^{(k)} (\XX_{FF} + \LL_{FF})
& = 
\left(\sum_{i=0}^{k} \XX_{FF}^{-1} (-\LL_{FF} \XX_{FF}^{-1})^{i} \right)
(\XX_{FF} + \LL_{FF})
\\
& = 
\sum_{i=0}^{k} (-\XX_{FF}^{-1} \LL_{FF})^{i}
-
\sum_{i=1}^{k+1} (\XX_{FF}^{-1} \LL_{FF})^{i}
\\
& =
\II_{FF} 
- (\XX_{FF}^{-1} \LL_{FF})^{k+1} .
\end{align*}
As all the eigenvalues of an even power of a matrix are nonnegative,
 all of the eigenvalues of this last matrix are at most $1$.

Similarly, the other inequality is equivalent to the assertion
 that all of the eigenvalues of 
 $\ZZ^{(k)}_{FF} (\XX_{FF} + (1+\delta ) \LL_{FF})$
 are at least one.
Expanding this product yields
\begin{multline*}
\left(\sum_{i=0}^{k} \XX_{FF}^{-1} (-\LL_{FF} \XX_{FF}^{-1})^{i} \right)
(\XX_{FF} + (1+\delta ) \LL_{FF})
\\
= 
\II_{FF} 
- (\XX_{FF}^{-1} \LL_{FF})^{k+1} 
+
\delta 
\sum_{i=0}^{k} (-1)^{i} (\XX_{FF}^{-1} \LL_{FF})^{i+1} 
\end{multline*}
The eigenvalues of this matrix are precisely the numbers
\begin{equation}\label{eqn:inLightBlock2}
1 - \lambda^{k+1} + \delta  \sum_{i=0}^{k} (-1)^{i} \lambda^{i+1},
\end{equation}
where $\lambda$ ranges over the eigenvalues of 
 $\XX_{FF}^{-1} \LL_{FF}$.
The assumption $\LL_{FF} \pleq \beta \XX_{FF}$ implies
 that the eigenvalues of  $\XX_{FF}^{-1} \LL_{FF}$
  are at most $\beta$, so
  $0 \leq \lambda \leq \beta$.
We have chosen the value of $\delta$ precisely to guarantee that,
 under this condition on $\lambda$, the value of \eqref{eqn:inLightBlock2}
 is at least $1$.
\end{proof}

We remark that this power series is identical to
the Jacobi iteration for solving linear systems.

The following lemma allows us to extend the approximation of $\MM_{FF}$
  by the inverse of $\ZZ_{FF}^{(k)}$ to the entire matrix $\MM$.
\begin{lemma}\label{lem:lightBlock3}
Under the conditions of Lemma \ref{lem:lightBlock2} and assuming that
  $0 \leq \beta  \leq 1/2$,
\[
\MM \pleq 
\begin{pmatrix}
\left(\ZZ_{FF}^{(k)} \right)^{-1} & \MM_{FC}\\
\MM_{CF} & \MM_{CC}
\end{pmatrix}
\pleq 
(1 + 2 \beta^{k}) \MM .
\]
\end{lemma}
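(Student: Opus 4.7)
The plan is to prove the two inequalities separately, and to move all the work in Lemma 4.2 into the middle block in the natural way. The left inequality is essentially free: Lemma 4.2 gives $\MM_{FF} = \XX_{FF} + \LL_{FF} \pleq (\ZZ_{FF}^{(k)})^{-1}$, so Fact \ref{fact:blockSubstitute} (substituting the $FF$ block) immediately yields
\[
\MM = \begin{pmatrix} \MM_{FF} & \MM_{FC}\\ \MM_{CF} & \MM_{CC}\end{pmatrix}
\pleq
\begin{pmatrix} (\ZZ_{FF}^{(k)})^{-1} & \MM_{FC}\\ \MM_{CF} & \MM_{CC}\end{pmatrix}.
\]

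For the right inequality I would rewrite
\[
(1+2\beta^k)\,\MM - \begin{pmatrix} (\ZZ_{FF}^{(k)})^{-1} & \MM_{FC}\\ \MM_{CF} & \MM_{CC}\end{pmatrix}
= 2\beta^k\,\MM - \begin{pmatrix} (\ZZ_{FF}^{(k)})^{-1} - \MM_{FF} & 0\\ 0 & 0\end{pmatrix},
\]
and then use the upper bound $(\ZZ_{FF}^{(k)})^{-1} - \MM_{FF} \pleq \delta \LL_{FF}$ from Lemma \ref{lem:lightBlock2}. It then suffices to show
\[
\delta \begin{pmatrix} \LL_{FF} & 0\\ 0 & 0\end{pmatrix} \pleq 2\beta^k\,\MM.
\]
This reduces to two clean facts. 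First, a numerical bound on $\delta$: since $k$ is odd so $k+1\ge 2$, and $\beta\le 1/2$, we have $1-\beta^{k+1} \ge 1-\beta^{2} = (1-\beta)(1+\beta)$, hence $\delta \le \beta^k/(1-\beta) \le 2\beta^k$. Second, the Loewner inequality $\begin{pmatrix} \LL_{FF} & 0\\ 0 & 0\end{pmatrix} \pleq \MM$.

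For the second fact I would decompose $\MM = \LL_{\MM} + \XX_{\MM}$ where $\LL_{\MM}$ is the Laplacian of the weighted graph underlying $\MM$ and $\XX_{\MM} \pgeq 0$ is diagonal. Since $\XX_{\MM}$ is PSD it is enough to show $\begin{pmatrix} \LL_{FF} & 0\\ 0 & 0\end{pmatrix} \pleq \LL_{\MM}$. The key observation is that $\LL_{FF}$ is exactly the Laplacian of the subgraph of $\LL_{\MM}$ induced on $F$: its off-diagonals agree with those of $\MM$ on $F\times F$, and the diagonal of a Laplacian is determined by the off-diagonals. Consequently, the difference $\LL_{\MM} - \begin{pmatrix} \LL_{FF} & 0\\ 0 & 0\end{pmatrix}$ is the Laplacian of precisely those edges of $\MM$ that have at least one endpoint in $C$; as a weighted Laplacian, it is PSD. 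Chaining gives $\begin{pmatrix} \LL_{FF} & 0\\ 0 & 0\end{pmatrix} \pleq \LL_{\MM} \pleq \MM$, and multiplying by $2\beta^k \ge \delta$ finishes the right inequality.

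The main obstacle, or really the only non-mechanical step, is recognizing the geometric content of $\begin{pmatrix} \LL_{FF} & 0\\ 0 & 0\end{pmatrix} \pleq \MM$: one must see $\LL_{FF}$ not as some abstract Laplacian part of $\MM_{FF}$ but as the induced-subgraph Laplacian sitting inside $\LL_{\MM}$, so that the difference is itself a Laplacian. Everything else is bookkeeping on the power-series remainder in Lemma \ref{lem:lightBlock2} plus the block-substitution fact.
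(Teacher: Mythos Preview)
Your proof is correct and follows essentially the same route as the paper's. Both arguments use Fact~\ref{fact:blockSubstitute} with Lemma~\ref{lem:lightBlock2} for the left inequality, reduce the right inequality to $\delta\begin{pmatrix}\LL_{FF}&0\\0&0\end{pmatrix}\pleq \delta\,\MM$ via the induced-subgraph observation $\LL_{FF}\pleq \LL_{\MM}\pleq \MM$, and finish with the numerical bound $\delta\le 2\beta^k$; the only cosmetic difference is that you write out the difference $(1+2\beta^k)\MM-(\cdot)$ directly, whereas the paper chains two block substitutions.
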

\begin{proof}
The left-hand inequality follows immediately from
  Fact \ref{fact:blockSubstitute} and
  the left-hand side of \eqref{eqn:lightBlock2}.
To prove the right-hand inequality we apply
  Fact \ref{fact:blockSubstitute}
  and the right-hand side of \eqref{eqn:lightBlock2}
  to conclude
\[
\begin{pmatrix}
\left(\ZZ_{FF}^{(k)} \right)^{-1} & \MM_{FC}\\
\MM_{CF} & \MM_{CC}
\end{pmatrix}
\pleq 
\begin{pmatrix}
\MM_{FF} + \delta \LL_{FF}  & \MM_{FC}\\
\MM_{CF} & \MM_{CC}
\end{pmatrix}
=
\MM + 
\delta \begin{pmatrix}
 \LL_{FF}  & 0\\
0 & 0
\end{pmatrix}.
\]

Consider the (unique) decomposition of $\MM$ into
$\LL + \XX$ where $\LL$ is a graph Laplacian.
When viewed as graphs, $\LL_{FF}$ is a subgraph $\LL$,
which means:
\[
\begin{pmatrix}
 \LL_{FF}  & 0\\
0 & 0
\end{pmatrix}
\preceq \LL \preceq \MM,
\]
by which we may conclude that
\[
\MM + 
\delta \begin{pmatrix}
 \LL_{FF}  & 0\\
0 & 0
\end{pmatrix}
\pleq 
\MM + \delta \MM .
\]
To finish the proof, recall that $\delta = \beta^{k} (1+\beta) / (1- \beta^{k+1})$
 and observe that for $k \geq 1$ and $\beta \leq 1/2$, $ \delta \leq 2 \beta^{k}$.
\end{proof}

}{}


We now show that we can obtain a good approximation of $\MM^{-1}$
  by replacing $\MM_{FF}^{-1}$ by $\ZZ_{FF}^{(k)}$ in the three places
  in which it explicitly appears in \eqref{eqn:blockInverse},
  but not in the Schur complement.

\begin{lemma}\label{lem:subZFF}
Let $\MM$ be a SDDM matrix and let $F$ be an $\alpha$-diagonally dominant
  set of columns for some $\alpha \geq  4$.
Then, for $k$ odd and $\ZZ^{(k)}$ as defined in \eqref{eqn:defZ},
\[
\left[
\begin{array}{cc}
\II & -\ZZ^{(k)}_{FF} \MM_{FC}\\
0 & \II
\end{array}
\right]
\left[
\begin{array}{cc}
\ZZ^{(k)}_{FF} & 0 \\
0 & \schur{\MM}{F}^{-1}
\end{array}
\right]
\left[
\begin{array}{cc}
\II & 0\\
-\MM_{CF} \ZZ^{(k)}_{FF} & \II
\end{array}
\right]
\approx_{\gamma} \MM^{-1},
\]
for $\gamma = 2 (2/\alpha)^{k}$.
\end{lemma}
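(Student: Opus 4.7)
The approach is to reduce the claim to a Loewner bound on non-inverted matrices. Let $\tilde{\ZZ} = \ZZ_{FF}^{(k)}$; inverting each of the three block factors in the statement gives the matrix
\[
\NN \defeq \begin{bmatrix} \II & 0 \\ \MM_{CF}\tilde{\ZZ} & \II \end{bmatrix} \begin{bmatrix} \tilde{\ZZ}^{-1} & 0 \\ 0 & \schur{\MM}{F} \end{bmatrix} \begin{bmatrix} \II & \tilde{\ZZ}\MM_{FC} \\ 0 & \II \end{bmatrix},
\]
using that $\tilde{\ZZ}$ is symmetric (each term of \eqref{eqn:defZ} is symmetric, since $\XX_{FF}$ and $\LL_{FF}$ are). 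By Fact~\ref{fact:orderInverse} it then suffices to show $\NN \approx_{\gamma} \MM$.

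The crux is an auxiliary sandwich matrix
\[
\MM' \defeq \begin{bmatrix} \tilde{\ZZ}^{-1} & \MM_{FC} \\ \MM_{CF} & \MM_{CC} \end{bmatrix}.
\]
Since $F$ is $\alpha$-strongly diagonally dominant with $\alpha \geq 4$, Lemma~\ref{lem:Xdom} applied to $\MM_{FF}$ gives $\LL_{FF} \pleq \beta \XX_{FF}$ with $\beta = 2/\alpha \leq 1/2$, so Lemma~\ref{lem:lightBlock3} applies and yields
\[
\MM \pleq \MM' \pleq (1 + \delta)\MM, \qquad \delta \defeq 2\beta^{k} \leq 2(2/\alpha)^{k} = \gamma.
\]

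The key observation is that $\MM'_{FF} = \tilde{\ZZ}^{-1}$, so applying the block Cholesky identity \eqref{eqn:blockformula} to $\MM'$ produces precisely the triangular factors appearing in $\NN$, with $\schur{\MM'}{F} = \MM_{CC} - \MM_{CF}\tilde{\ZZ}\MM_{FC}$ in place of $\schur{\MM}{F}$. Applying Fact~\ref{fact:schurLoewner} to both halves of the chain above gives
\[
\schur{\MM}{F} \pleq \schur{\MM'}{F} \pleq (1+\delta)\schur{\MM}{F},
\]
which lifts to the block-diagonal middle factor of the factorization and, after conjugation by the common triangular factors (Fact~\ref{fact:orderCAC}), to $\NN \pleq \MM' \pleq (1+\delta)\NN$.

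Chaining the two sandwiches $\MM \pleq \MM' \pleq (1+\delta)\MM$ and $\NN \pleq \MM' \pleq (1+\delta)\NN$ yields $\MM \pleq (1+\delta)\NN$ and $\NN \pleq (1+\delta)\MM$. Since $\log(1+\delta) \leq \delta \leq \gamma$, this is the desired $\NN \approx_{\gamma} \MM$, and hence $\NN^{-1} \approx_{\gamma} \MM^{-1}$. The only mildly subtle step is spotting the auxiliary $\MM'$; once it is in play everything aligns, because its $FF$-block is exactly $\tilde{\ZZ}^{-1}$, the very matrix the sandwich expression for $\NN$ is built around.
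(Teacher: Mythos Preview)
Your proof is correct and follows essentially the same approach as the paper: your auxiliary matrix $\MM'$ is exactly the paper's $\MMhat$, and you invoke the same chain of lemmas (Lemma~\ref{lem:Xdom}, Lemma~\ref{lem:lightBlock3}, Fact~\ref{fact:schurLoewner}, Fact~\ref{fact:orderCAC}). The only cosmetic difference is that you chain the Loewner sandwiches on the non-inverted side and invert at the end, whereas the paper inverts first and works with $\MM^{-1}$ and $\schur{\MMhat}{F}^{-1}$ throughout; the logic is otherwise identical.
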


\begin{proof}
Define
\[
  \MMhat = 
\left[
\begin{array}{cc}
(\ZZ^{(k)}_{FF})^{-1} &\MM_{FC}\\
\MM_{CF}  & \MM_{CC}
\end{array}
\right].
\]

Lemma~\ref{lem:Xdom} tells us that $\MM$ satisfies the conditions of Lemma~\ref{lem:lightBlock2}
  with $\beta = 2/\alpha$.
So, Lemma~\ref{lem:lightBlock3} implies 
\[
\MM \pleq \MMhat
\pleq \left(1+ \gamma\right) \MM.
\]
By facts~\ref{fact:blockInverse} and \ref{fact:orderInverse}, this implies
\[
\MM^{-1}
\pgeq 
\left[
\begin{array}{cc}
\II & -\ZZ^{(k)}_{FF} \MM_{FC}\\
0 & \II
\end{array}
\right]
\left[
\begin{array}{cc}
\ZZ^{(k)}_{FF} & 0 \\
0 & \schur{\MMhat}{F}^{-1}
\end{array}
\right]
\left[
\begin{array}{cc}
\II & 0\\
-\MM_{CF} \ZZ^{(k)}_{FF} & \II
\end{array}
\right]
\pgeq 
(1+\gamma)^{-1} \MM^{-1}.
\]
From Facts \ref{fact:schurLoewner} and \ref{fact:orderInverse}, we know that
\[
\schur{\MM}{F}^{-1}
\pgeq 
\schur{\MMhat}{F}^{-1}
\pgeq 
(1+\gamma)^{-1} \schur{{\MM}}{F}^{-1}.
\]
When we use Fact \ref{fact:orderCAC}  to substitute this inequality into the one above,
  we obtain
\[
(1+\gamma )\MM^{-1}
\pgeq 
\left[
\begin{array}{cc}
\II & -\ZZ^{(k)}_{FF} \MM_{FC}\\
0 & \II
\end{array}
\right]
\left[
\begin{array}{cc}
\ZZ^{(k)}_{FF} & 0 \\
0 & \schur{\MM}{F}^{-1}
\end{array}
\right]
\left[
\begin{array}{cc}
\II & 0\\
-\MM_{CF} \ZZ^{(k)}_{FF} & \II
\end{array}
\right]
\pgeq 
(1+\gamma)^{-1} \MM^{-1},
\]
which implies the lemma.
\end{proof}

We now use Lemma~\ref{lem:subZFF} to analyze a solver obtained by iteratively sparsifying Schur complements
  of strongly diagonally dominant subsets.
We refer to the sequence of subsets and matrices obtained as a
  \textit{vertex sparsifer chain},
  as an approximation of a Schur complement is a spectral vertex sparsifier.
In the following definition, $\MM^{(1)}$ is intended to be a sparse approximation of
  $\MM^{(0)}$.
The sparsity of the matrices will show up in the analysis of the runtime, but not in the
  definition of the chain.


\begin{definition}[Vertex Sparsifier chain]
\label{def:vertexSparsifierChain}
For any SDDM matrix $\MM^{(0)}$, a vertex sparsifier chain of $\MM^{(0)}$ 
  with parameters $\alpha_i \geq 4$ and $ 1/2\geq\epsilon_i>0$
  is a sequence of matrices and subsets $(\MM^{(1)}, \ldots, \MM^{(d)};F_1, \ldots, F_{d-1})$
 such that:
\begin{enumerate}
\item $\MM^{(1)} \approx_{\epsilon_0} \MM^{(0)}$,
\item $\MM^{(i + 1)} \approx_{\epsilon_i} \schur{\MM^{(i)}}{F_i}$,
\item $\MM^{(i)}_{F_i F_i}$ is $\alpha_i$-strongly diagonally dominant and
\item $\MM^{(d)}$ has size $O(1)$.
\end{enumerate}
\end{definition}

We present pseudocode that uses a vertex sparsifier chain to approximately solve
  a system of equations in $\MM^{(0)}$ in Figure~\ref{fig:applyChain}.
We analyze the running time and accuracy of this algorithm
  in Lemma~\ref{lem:apply_chain}.

\begin{figure}[h]
\begin{algbox} $\xx^{(1)} = \textsc{ApplyChain}(\MM^{(1)}, \ldots, \MM^{(d)}, F_1, \ldots, F_{d-1}, \alpha_{1} \ldots \alpha_{d - 1}, \epsilon_0 \ldots \epsilon_{d-1}, \bb^{(1)})$

\begin{enumerate}
	\item For i = $1, \ldots, d-1$
		\begin{enumerate}
			\item let $k_{i}$ be the smallest odd integer greater than or equal to $\log_{\alpha_{i}/2} (2/\epsilon_i)$.
			\item $\xx^{(i)}_{F_i} \leftarrow \ZZ_{F_{i} F_{i}}^{(k_i)} \bb^{(i)}_{F_i}$, 
   where $\ZZ_{F_{i} F_{i}}^{(k_i)}$ is obtained from $\MM_{F_{i} F_{i}}^{(i)}$ as in \eqref{eqn:defZ}.
			\item $\bb^{(i+1)} \leftarrow \bb^{(i)}_{C_i} - \MM^{(i)}_{C_i F_i} \xx^{(i)}_{F_i}$.
		\end{enumerate}
	\item $\xx^{(d)} \leftarrow \left( \MM^{(d)} \right)^{-1} \bb^{(d)}$.
	\item For i = $d-1, \ldots, 1$
		\begin{enumerate}
			\item $\xx^{(i)}_{C_i} \leftarrow \xx^{(i+1)}$.
			\item $\xx^{(i)}_{F_i} \leftarrow \xx^{(i)}_{F_i} -  \ZZ_{F_{i} F_{i}}^{(k_i)} \MM^{(i)}_{F_i C_i} \xx^{(i+1)}$.
		\end{enumerate}
	\end{enumerate}
\end{algbox}

\caption{Solver Algorithm using Vertex Sparsifier Chain}

\label{fig:applyChain}
\end{figure}

\begin{lemma}
\label{lem:apply_chain}
Given a vertex sparsifier chain where $\MM^{(i)}$ has $m_i$ non-zero entries,
the algorithm $\textsc{ApplyChain}(\MM^{(1)}, \ldots, \MM^{(d)}, F_1, \ldots, F_{d-1}, \alpha_{1} \ldots \alpha_{d - 1}, \epsilon_0 \ldots \epsilon_{d-1}, \bb )$ corresponds to a linear operator
$\WW$ acting on $\bb$ such that
\begin{enumerate}
\item \[
\WW^{-1} \approx_{\sum_{i = 0}^{d-1} 2\epsilon_i} \MM^{(0)},
\] and
\item for any vector $\bb$, $\textsc{ApplyChain}(\MM^{(1)}, \ldots, \MM^{(d)}, F_1, \ldots, F_{d-1}, \alpha_{1} \ldots \alpha_{d - 1}, \epsilon_0 \ldots \epsilon_{d-1}, \bb )$ runs in
$O\left(\sum_{i = 1}^{d-1} \left( \log_{\alpha_i} \left( \epsilon_i^{-1}  \right) \log{n} \right) \right)$ depth
and $O\left(\sum_{i = 1}^{d-1} \left( \log_{\alpha_i}\left( \epsilon_i^{-1} \right) \right) m_i \right)$ work.
\end{enumerate}
\end{lemma}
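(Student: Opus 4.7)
The plan is to show by induction on the chain index (from the innermost level $d$ back to level $1$) that each pass of \textsc{ApplyChain} starting at level $i$ implements a linear operator $\WW^{(i)}$ applied to the input $\bb^{(i)}$, and to bound how well $(\WW^{(i)})^{-1}$ approximates $\MM^{(i)}$. Since every step of the pseudocode -- the truncated Neumann series $\ZZ^{(k_i)}_{F_iF_i}$, the block multiplications $\MM^{(i)}_{C_iF_i}$ and $\MM^{(i)}_{F_iC_i}$, and the base-case exact solve $(\MM^{(d)})^{-1}$ -- is linear in the right-hand side, the fact that \textsc{ApplyChain} defines some linear operator $\WW = \WW^{(1)}$ is immediate. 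The content lies in identifying $\WW^{(i)}$ and bounding its error.

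The key observation is that the forward pass, the base-case solve, and the backward pass at level $i$ together implement exactly the block-triangular factorization of Lemma~\ref{lem:subZFF} applied to $\MM^{(i)}$, with the middle block $\schur{\MM^{(i)}}{F_i}^{-1}$ replaced by the operator $\WW^{(i+1)}$ produced by the rest of the recursion on the input $\bb^{(i+1)}$. By the choice $k_i = \lceil \log_{\alpha_i/2}(2/\epsilon_i) \rceil$ (smallest odd integer), Lemma~\ref{lem:subZFF} with $\gamma_i = 2(2/\alpha_i)^{k_i} \leq \epsilon_i$ tells us that if $\WW^{(i+1)}$ were exactly $\schur{\MM^{(i)}}{F_i}^{-1}$, then $(\WW^{(i)})^{-1} \approx_{\epsilon_i} \MM^{(i)}$. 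To handle the approximation of the inner solve, I would first invoke the inductive hypothesis $(\WW^{(i+1)})^{-1} \approx_{E_{i+1}} \MM^{(i+1)}$, combine it with hypothesis (2) of Definition~\ref{def:vertexSparsifierChain}, $\MM^{(i+1)} \approx_{\epsilon_i} \schur{\MM^{(i)}}{F_i}$, to conclude by composition of approximations that $(\WW^{(i+1)})^{-1} \approx_{E_{i+1}+\epsilon_i} \schur{\MM^{(i)}}{F_i}$. Substituting this into the block factorization (using Facts~\ref{fact:blockSubstitute}, \ref{fact:orderCAC}, and \ref{fact:orderInverse}) yields $(\WW^{(i)})^{-1} \approx_{E_{i+1}+2\epsilon_i} \MM^{(i)}$, so $E_i = E_{i+1} + 2\epsilon_i$; combined with condition (1), $\MM^{(1)} \approx_{\epsilon_0} \MM^{(0)}$, and unrolling the recursion gives the claimed total error $\sum_{i=0}^{d-1} 2\epsilon_i$. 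The main subtlety here -- and the step most prone to slips -- is keeping the direction of each $\pgeq$ straight as we substitute an approximation of one block into a block factorization and then invert, which is why I would lean on the named facts rather than re-derive them.

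For the runtime, note that applying $\ZZ_{F_iF_i}^{(k_i)}$ via its Horner-style expansion requires $k_i$ multiplications by $\LL_{F_iF_i}\XX_{F_iF_i}^{-1}$ interleaved with additions of diagonal terms; each such multiplication costs $O(m_i)$ work and $O(\log n)$ depth in the standard parallel model for sparse matrix--vector products. The multiplications $\MM^{(i)}_{C_iF_i}\xx^{(i)}_{F_i}$ in the forward pass and $\MM^{(i)}_{F_iC_i}\xx^{(i+1)}$ in the backward pass likewise cost $O(m_i)$ work and $O(\log n)$ depth. Since $\alpha_i\ge 4$ we have $\log_{\alpha_i/2}(\epsilon_i^{-1})=\Theta(\log_{\alpha_i}(\epsilon_i^{-1}))$, so $k_i = O(\log_{\alpha_i}(\epsilon_i^{-1}))$, giving per-level depth $O(\log_{\alpha_i}(\epsilon_i^{-1})\log n)$ and work $O(\log_{\alpha_i}(\epsilon_i^{-1})\, m_i)$. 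Summing across levels (the base-case solve at level $d$ is $O(1)$ since $\MM^{(d)}$ has size $O(1)$) yields the stated bounds.
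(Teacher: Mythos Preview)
Your proposal is correct and follows essentially the same approach as the paper: both define the operators $\WW^{(i)}$ via the block factorization with $\WW^{(i+1)}$ substituted for the Schur-complement inverse, prove by backward induction (using Lemma~\ref{lem:subZFF} with $k_i$ chosen so that $2(2/\alpha_i)^{k_i}\le\epsilon_i$, and condition~(2) of the chain) that $(\WW^{(i)})^{-1}\approx_{\sum_{j\ge i}2\epsilon_j}\MM^{(i)}$, and then append the $\epsilon_0$ from condition~(1). Your runtime analysis likewise matches the paper's, computing $k_i=O(\log_{\alpha_i}(\epsilon_i^{-1}))$ matrix--vector products per level at $O(m_i)$ work and $O(\log n)$ depth each.
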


\begin{proof}
We begin by observing that the output vector $\xx^{(1)}$ is a linear transformation
  of the input vector $\bb^{(1)}$.
Let $\WW^{(1)}$ be the matrix that realizes this transformation.
Similarly, for $2 \leq i \leq d$, define $\WW^{(i)}$  to be the matrix so that
\[
  \xx^{(i)} = \WW^{(i)} \bb^{(i)}.
\]
An examination of the algorithm reveals that
\begin{equation}\label{eqn:apply_chain1}
  \WW^{(d)} = \left(\MM^{(d)} \right)^{-1},
\end{equation}
and
\begin{equation}\label{eqn:apply_chaini}
  \WW^{(i)}
 =
  \left[
\begin{array}{cc}
\II & -\ZZ^{(k_{i})}_{F_{i}F_{i}} \MM_{F_{i}C_{i}}\\
0 & \II
\end{array}
\right]
\left[
\begin{array}{cc}
\ZZ^{(k_{i})}_{F_{i}F_{i}} & 0 \\
0 & \WW^{(i+1)}
\end{array}
\right]
\left[
\begin{array}{cc}
\II & 0\\
-\MM_{C_{i}F_{i}} \ZZ^{(k_{i})}_{F_{i}F_{i}}  & \II
\end{array}
\right].
\end{equation}

We will now prove by backwards induction on $i$ that
\[
\left(\WW^{(i)} \right)^{-1} \approx_{\sum_{j = i}^{d-1} 2\epsilon_j} \MM^{(i)}.
\]
The base case of $i = d$ follows from \eqref{eqn:apply_chain1}.
When we substitute our choice of $k_{i}$
  from line $1a$ of \textsc{ApplyChain}
  into Lemma~\ref{lem:subZFF}, we find that
\[
  \left[
\begin{array}{cc}
\II & -\ZZ^{(k_{i})}_{F_{i}F_{i}} \MM^{(i)}_{F_{i}C_{i}}\\
0 & \II
\end{array}
\right]
\left[
\begin{array}{cc}
\ZZ^{(k_{i})}_{F_{i}F_{i}} & 0 \\
0 & \schur{\MM^{(i)}}{F_{i}}^{-1}
\end{array}
\right]
  \left[
\begin{array}{cc}
\II & 0\\
- \MM^{(i)}_{C_{i}F_{i}} \ZZ^{(k_{i})}_{F_{i}F_{i}} & \II
\end{array}
\right]
\approx_{\epsilon_{i}}
\left(\MM^{(i)} \right)^{-1}.
\]
As $\MM^{(i+1)} \approx_{\epsilon_{i}} \schur{\MM^{(i)}}{F_{i}}$,
\[
  \left[
\begin{array}{cc}
\II & -\ZZ^{(k_{i})}_{F_{i}F_{i}} \MM^{(i)}_{F_{i}C_{i}}\\
0 & \II
\end{array}
\right]
\left[
\begin{array}{cc}
\ZZ^{(k_{i})}_{F_{i}F_{i}} & 0 \\
0 & \left(\MM^{(i+1)} \right)^{-1}
\end{array}
\right]
  \left[
\begin{array}{cc}
\II & 0\\
- \MM^{(i)}_{C_{i}F_{i}} \ZZ^{(k_{i})}_{F_{i}F_{i}} & \II
\end{array}
\right]
\approx_{2 \epsilon_{i}}
\left(\MM^{(i)} \right)^{-1}.
\]
By combining this identity with \eqref{eqn:apply_chaini}
  and our inductive hypothesis, we obtain
\[
  \WW^{(i)}
\approx_{\sum_{j = i}^{d-1} 2 \epsilon_{j}}
\left(\MM^{(i)} \right)^{-1}.
\]

Finally, as $\MM^{(0)} \approx_{\epsilon_{0}} \MM^{(1)}$,
\[
  \WW^{(1)}
\approx_{\sum_{j = 0}^{d-1} 2 \epsilon_{j}}
\left(\MM^{(0)} \right)^{-1}.
\]

To bound the work and depth of the algorithm, we observe that we do not need to construct
  the matrices $\ZZ^{(k_{i})}_{F_{i}F_{i}}$ explicitly.
Rather, we multiply vectors by the matrices by performing
$k_{i} = O(\log_{\alpha_i} ( \epsilon_i^{-1}  ) )$ matrix-vector products
  by the submatrices of $\MM^{(i)}$ that appear in the expression \eqref{eqn:defZ}.
As each matrix-vector product can be performed in depth $O (\log n)$,
  the depth of the whole algorithm is bounded by
  $O ((\log n) \sum_{i} k_{i})$.
As each matrix $\MM^{(i)}$ has $m_{i}$ non-zero entries,
  and the work of the $i$th iteration is dominated by the cost of multiplying
  by submatrices of $\MM^{(i)}$ $O (k_{i})$ times, the total work of the algorithm
  is $O (\sum_{i=1}^{d-1} m_{i} k_{i} )$.
\end{proof}

\iffull{
\begin{definition}[Work and Depth of a Vertex Sparsifier chain]
An $\epsilon$-vertex sparsifier chain of an SDDM matrix $\MM^{(0)}$
  of depth $D$ and work $W$ is a vertex sparsifer chain of $\MM^{(0)}$
  with parameters $\alpha_i \geq 4$ and $ 1/2\geq\epsilon>0$
that satisfies
\begin{enumerate}
\item $2 \sum_{i=0}^{d-1} \epsilon_{i} \leq \epsilon$,
\item $\sum_{i = 1}^{d-1} m_{i} \log_{\alpha_i} \epsilon_i^{-1}   \leq W$,
  where $m_{i}$ is the number of nonzeros in $\MM^{(i)}$, and

\item $\sum_{i = 1}^{d-1} (\log n) \log_{\alpha_i}  \epsilon_i^{-1} \leq D$,
  where $n$ is the dimension of $\MM^{(0)}$.
\end{enumerate}
\end{definition}

\begin{theorem}
\label{thm:result_BSS}
Every SDDM matrix $\MM$ of dimension $n$
  has a $1$-vertex sparsifier chain
  of depth $O(\log^{2}n\log\log n)$
  and work $O (n)$.
Given such vertex sparsifier chain, for any vector $b$,
we can compute an $\epsilon$ approximate solution to $\MM^{-1}b$
in $O(m\log(1/\epsilon))$ work and $O(\log^{2}n\log\log n\log(1/\epsilon))$
depth.
\end{theorem}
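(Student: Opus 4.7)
My plan is to construct the vertex sparsifier chain greedily, level by level: starting from $\MM^{(0)} = \MM$, first apply Theorem~\ref{thm:BSS} to obtain a sparse $\epsilon_0$-approximation $\MM^{(1)}$, and then at each subsequent level use Lemma~\ref{lem:subsetSimple} to extract an $\alpha_i$-strongly diagonally dominant subset $F_i$ (with $\alpha_i = 4$), form the Schur complement $\schur{\MM^{(i)}}{F_i}$ exactly, and finally apply Theorem~\ref{thm:BSS} with parameter $\epsilon_i$ to obtain the sparsified $\MM^{(i+1)}$. We stop once the dimension drops to $O(1)$. By Lemma~\ref{lem:subsetSimple} with $\alpha = 4$, each $F_i$ accounts for at least a $1/40$ fraction of the vertices, so if $n_i$ denotes the dimension of $\MM^{(i)}$ we have $n_{i+1} \leq (39/40) n_i$, giving a geometric decrease and a chain depth of $d = O(\log n)$.

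Next I will choose the accuracy parameters $\epsilon_i$. Set $\epsilon_i = 1/(C i^2)$ for a sufficiently large absolute constant $C$, chosen so that $2\sum_{i \ge 0} \epsilon_i \leq 1$; this handles requirement (1) of a $1$-vertex sparsifier chain. By Theorem~\ref{thm:BSS}, the number of nonzero entries of $\MM^{(i+1)}$ is bounded by $m_{i+1} = O(n_{i+1}/\epsilon_i^2) = O(n_{i+1} \cdot i^4)$. With $\alpha_i = 4$ constant, we also have $\log_{\alpha_i}(1/\epsilon_i) = O(\log i)$.

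With these choices in hand, the main computation is to verify the work and depth bounds of the chain. For the work,
\[
\sum_{i=1}^{d-1} m_i \log_{\alpha_i}(1/\epsilon_i)
= O\!\left( n \sum_{i=1}^{d-1} (39/40)^i \, i^4 \log i \right)
= O(n),
\]
where the geometric factor $(39/40)^i$ dominates any polynomial-in-$i$ growth, so the series converges to a constant. For the depth,
\[
\sum_{i=1}^{d-1} (\log n)\,\log_{\alpha_i}(1/\epsilon_i)
= O\!\left( \log n \sum_{i=1}^{d-1} \log i \right)
= O(\log n \cdot d \log d)
= O(\log^2 n \log\log n),
\]
matching the claimed bounds. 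The one mild subtlety to check carefully is the base level: we need $m_1 = O(n)$, which holds since $\epsilon_0$ is a constant and Theorem~\ref{thm:BSS} then gives $m_1 = O(n)$.

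Finally, to obtain an $\epsilon$-approximate solution to $\MM \xx = \bb$, I would use the vertex sparsifier chain as a constant-factor preconditioner (by Lemma~\ref{lem:apply_chain}, $\WW^{-1} \approx_1 \MM$) inside a standard preconditioned iterative scheme such as preconditioned Chebyshev iteration or iterative refinement (cf.\ \cite[Lemma 4.4]{PengS14}). Each iteration performs one call to \textsc{ApplyChain}, costing $O(n)$ work and $O(\log^2 n \log\log n)$ depth, and one multiplication by $\MM$, costing $O(m)$ work and $O(\log n)$ depth. Since a constant-factor preconditioner yields $O(\log(1/\epsilon))$ iterations, the totals are $O(m \log(1/\epsilon))$ work and $O(\log^2 n \log\log n \log(1/\epsilon))$ depth. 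The main obstacle in the argument is really just the bookkeeping in balancing $\epsilon_i$ against the geometric shrinkage of $n_i$; the choice $\epsilon_i = \Theta(1/i^2)$ is what simultaneously keeps the total error and the total work/depth within the claimed bounds.
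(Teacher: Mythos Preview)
Your proposal is correct and follows essentially the same approach as the paper: choose $\alpha_i = 4$, take $\epsilon_i = \Theta(1/i^2)$ so that the errors sum to at most $1$, use Theorem~\ref{thm:BSS} to sparsify each Schur complement, and then verify the work and depth bounds by balancing the geometric decay $(39/40)^i$ in $n_i$ against the polynomial growth $i^4\log i$ coming from $1/\epsilon_i^2$ and $k_i$. The only cosmetic difference is that the paper uses $\epsilon_i = 1/(2(i+2)^2)$ (which avoids the undefined value at $i=0$ that your formula $1/(Ci^2)$ has), but you already patch this by treating $\epsilon_0$ as a separate constant.
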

}
{
\begin{theorem}
\label{thm:result_BSS}
Every SDDM matrix $\MM$ of dimension $n$
  has a $1$-vertex sparsifier chain.
Given such vertex sparsifier chain, for any vector $b$,
we can compute an $\epsilon$ approximate solution to $\MM^{-1}b$
in $O(m\log(1/\epsilon))$ work and $O(\log^{2}n\log\log n\log(1/\epsilon))$
depth.
\end{theorem}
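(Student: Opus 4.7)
The plan is to construct the chain greedily, level by level, and then apply Lemma~\ref{lem:apply_chain} together with a standard preconditioned outer iteration. Starting from $\MM^{(1)}$ obtained by applying Theorem~\ref{thm:BSS} to $\MM^{(0)}$ with error $\epsilon_0$, I would iterate the following step: given $\MM^{(i)}$ of dimension $n_i$, use the subroutine of Lemma~\ref{lem:subsetSimple} with $\alpha_i = 4$ to find a $4$-strongly diagonally dominant subset $F_i$ of size at least $n_i/40$; form the Schur complement $\schur{\MM^{(i)}}{F_i}$; and sparsify it via Theorem~\ref{thm:BSS} with error parameter $\epsilon_i$ to obtain $\MM^{(i+1)}$. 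Stop when $n_d = O(1)$.

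For the parameters I would take $\alpha_i = 4$ fixed (so $k_i = O(\log \epsilon_i^{-1})$) and $\epsilon_i = c/(i+1)^2$ for a small enough constant $c$ so that $\sum_{i \geq 0} 2\epsilon_i \leq 1$. The geometric shrinkage $n_{i+1} \leq (39/40) n_i$ makes $d = O(\log n)$, so condition~$4$ of Definition~\ref{def:vertexSparsifierChain} holds, and the size control of Theorem~\ref{thm:BSS} yields $m_i = O(n_i/\epsilon_{i-1}^2)$. With these choices, the work contribution at level $i$ is
\[
  m_i \log_{\alpha_i}\epsilon_i^{-1} = O\!\left(n_i \cdot \frac{\log i}{\epsilon_{i-1}^2}\right) = O\!\left(n \cdot (39/40)^{i} \cdot i^{4}\log i\right),
\]
which sums to $O(n)$ because the geometric factor beats any polynomial in $i$. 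The depth contribution at level $i$ is $O(\log n \cdot \log \epsilon_i^{-1}) = O(\log n \cdot \log i)$, and summing over $i = 1,\ldots,d = O(\log n)$ gives $O(\log^2 n \log\log n)$. This establishes the existence of the claimed $1$-vertex sparsifier chain of work $O(n)$ and depth $O(\log^2 n \log\log n)$.

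For the second half, Lemma~\ref{lem:apply_chain} gives a linear operator $\WW$, evaluable in the above work and depth, satisfying $\WW^{-1} \approx_1 \MM^{(0)}$; that is, $\WW$ is a constant-factor spectral preconditioner for $\MM$. I would plug $\WW$ into a standard preconditioned Chebyshev (or Richardson) iteration: each iteration performs one application of $\WW$ (cost $O(n)$ work and $O(\log^2 n \log\log n)$ depth) and one multiplication by $\MM$ (cost $O(m)$ work and $O(\log n)$ depth), and the constant condition number ensures convergence to relative error $\epsilon$ in $O(\log \epsilon^{-1})$ iterations. This yields an $\epsilon$-approximate solution to $\MM \xx = \bb$ in $O(m \log \epsilon^{-1})$ work and $O(\log^2 n \log\log n \log \epsilon^{-1})$ depth, matching the theorem.

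The main obstacle is the joint optimization of $\alpha_i$ and $\epsilon_i$. The BSS sparsifier adds a $1/\epsilon_i^2$ blow-up to the per-level size, so $\epsilon_i$ cannot shrink too fast without destroying the $O(n)$ work bound, yet the errors must remain summable. The $1/i^2$ schedule is the sweet spot: it is summable, yet $\log \epsilon_i^{-1} = O(\log i)$ only loses a $\log\log n$ factor in depth after summation over $O(\log n)$ levels, and $1/\epsilon_i^2 = O(i^4)$ is tamed by the geometric decrease of $n_i$. A secondary subtlety is that the $\alpha_i$-strong diagonal dominance must be preserved after re-sparsification; but since we only require $\alpha_i \geq 4$ on $F_i$ in the \emph{pre-sparsification} matrix and Lemma~\ref{lem:subsetSimple} is applied afresh to each $\MM^{(i)}$, this is automatic.
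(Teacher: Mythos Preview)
Your proposal is correct and matches the paper's proof essentially line for line: the same choice $\alpha_i=4$, the same $\epsilon_i \asymp 1/i^{2}$ schedule, the same use of Lemma~\ref{lem:subsetSimple} for the $1/40$ shrinkage and Theorem~\ref{thm:BSS} for the $O(n_i/\epsilon_{i-1}^2)$ size bound, and the same work/depth arithmetic. The only cosmetic difference is that the paper uses $\epsilon_i = 1/(2(i+2)^2)$ and leaves the outer iterative-refinement loop implicit (it was set up in the Background section), whereas you spell it out.
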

}
\begin{proof}
We will show the existence of such a vertex sparsifier chain with
  $\alpha_i = 4$ for all $i$ and $\epsilon_i = \frac{1}{2(i+2)^2}$.
Lemma~\ref{lem:subsetSimple} tells us that every SDDM matrix
  has a $4$-strongly diagonally dominant subset consisting of at least
  a $1/ 8 (1+4) = 1/40$ fraction of its columns.
By taking such a subset, we ensure that the number of vertices of $\MM^{(i)}$,
  which we define to be $n_{i}$, satisfies
\[
n_i \leq \left( \frac{39}{40} \right)^{i-1} n.
\]
In particular, this means that $d$, the number of matrices in the chain,
  will be logarithmic in $n$.

If we use Theorem~\ref{thm:BSS} to find a matrix $\MM^{(1)}$
  that is an $\epsilon_{0}$ approximation of $\MM^{(0)} = \MM$,
  and to find a matrix $\MM^{(i+1)}$ that is an $\epsilon_{i}$
  approximation of $\schur{\MM^{(i)}}{F_{i}}$, then
  each matrix $\MM^{(i)}$ will have a number of nonzero entries satisfying
\[
m_i \leq 
O (n_{i} / \epsilon_{i-1}^{2})
\leq 
O\left( \left( \frac{39}{40} \right)^{i-1} (i+1)^4 n \right).
\]

Lemma~\ref{lem:apply_chain} tell us that the vertex sparsifier chain
  induces a linear operator that is an $\epsilon$-approximation of the inverse of $\MM$,
  where
\[
  \epsilon \leq 2 \sum_{i=0}^{d-1} \epsilon_{i}
  \leq 2 \sum_{i=0}^{d-1} \frac{1}{2(i+2)^2}
  \leq \sum_{i \geq 2} \frac{1}{i^{2}}
  \leq 1.
\]

To compute the work and depth of the chain, 
  recall that we set $k_{i}$ to be the 
  smallest odd integer that is at least 
  $\log_{\alpha_{i}/2} \epsilon_i^{-1}$,
  so $k_{i} \leq O (\log i)$.
Thus, the work of the chain is at most
\[
  \sum_{i = 1}^{d} k_{i} m_{i}  
\leq 
  O \left(\sum_{i=1}^{d} \log (i) \left( \frac{39}{40} \right)^{i-1} (i+1)^4 n \right)
\leq 
  O \left(\sum_{i=1}^{d} \left( \frac{39}{40} \right)^{i-1} i^5 n \right)
\leq 
  O (n).
\]
Similarly, the depth of the chain is at most
\[
  \sum_{i = 1}^{d} (\log n) k_{i}
\leq 
  O\left(  \sum_{i = 1}^{d} (\log n) \log d \right)
\leq 
  O (\log^{2} n \  \log \log n)  .
\]
\end{proof}

\dan{This proof should tie in better with the definition.}

\section{Linear sized $U^{T} D U$ approximations}\label{sec:existence}

We now show that the vertex sparsifier chains of $\MM$ from the previous section
  can be used to construct Cholesky factorizations of matrices that 
  are 2-approximations of $\MM$.
In particular, we prove that for every SDDM matrix $\MM$ of dimension $n$ there
  exists a diagonal matrix $\DD$ and an upper-triangular matrix $\UU$
  having $O (n)$ nonzero entries such that $\UU^{T} \DD \UU$
  is a 2-approximation of $\MM$.

The obstacle to obtaining such a factorization is that it does not allow
  us to multiply a vector by $\ZZ^{(k_{i})}_{F_{i}F_{i}}$ in many steps.
Rather, we must explicitly construct the matrices 
  $\ZZ^{(k_{i})}_{F_{i}F_{i}}$.
If we directly apply the construction suggested in the previous section,   
  these matrices could be dense and thereby result in a matrix $\UU$
  with too many nonzero entries.
To get around this problem, we show that we can always find strongly
  diagonally dominant subsets in which all the vertices have low degree.
This will ensure that all of the matrices $\ZZ^{(k_{i})}_{F_{i}F_{i}}$
  are sparse.

\begin{lemma}
\label{lem:subsetLowDeg}
For every $n$-dimensional SDD matrix $\MM$ and every $\alpha \geq 0$, 
  there is an $\alpha$-strongly diagonally dominant subset of columns $F$
  of size at least $\frac{n}{16(1+\alpha)}$
  such that the number of nonzeros in every column $F$ is at most twice the average
  number of nonzeros in columns of $\MM$.
\end{lemma}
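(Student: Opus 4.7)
The plan is to reduce to Lemma \ref{lem:subsetSimple} by first discarding the high-degree columns, then running a sampling argument entirely within the low-degree part.

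Let $\bar{d} = \mathrm{nnz}(\MM)/n$ be the average number of nonzeros per column, and let $V' = \{i : \mathrm{nnz}(\MM_{\cdot i}) \leq 2\bar{d}\}$. By Markov's inequality applied to the column degrees, $|V'| \geq n/2$. Restricting attention to $V'$ automatically takes care of the degree constraint: any subset $F \subseteq V'$ will consist only of columns of $\MM$ with nonzero count at most $2\bar{d}$.

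The next step is to imitate the proof of Lemma \ref{lem:subsetSimple}, but sampling only from $V'$ rather than from $\{1,\ldots,n\}$. Specifically, I would let $F'$ be a uniformly random subset of $V'$ of size $|V'|/(4(1+\alpha))$, then let
\[
F = \Bigl\{ i \in F' : \sum_{j \in F', j \neq i} |\MM_{ij}| \leq \tfrac{1}{1+\alpha}|\MM_{ii}| \Bigr\}.
\]
For any fixed $i \in V'$, conditional on $i \in F'$, each other $j \in V'$ lies in $F'$ with probability $(|F'|-1)/(|V'|-1) \leq 1/(4(1+\alpha))$. Since $\MM$ is diagonally dominant, $\sum_{j \neq i}|\MM_{ij}| \leq |\MM_{ii}|$ (summing over all $j$, not just those in $V'$, only enlarges the bound), so the conditional expectation of $\sum_{j \in F', j \neq i} |\MM_{ij}|$ is at most $|\MM_{ii}|/(4(1+\alpha))$. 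Markov's inequality then gives a conditional failure probability of at most $1/4$, and hence
\[
\Pr[i \in F' \text{ and } i \notin F] \leq \frac{|F'|}{|V'|} \cdot \frac{1}{4} = \frac{1}{16(1+\alpha)}.
\]
Summing over $i \in V'$, the expected number of such ``bad'' indices is at most $|V'|/(16(1+\alpha))$. A second application of Markov's inequality shows that with probability at least $1/2$ this count is at most $|V'|/(8(1+\alpha))$, in which case
\[
|F| \geq |F'| - \frac{|V'|}{8(1+\alpha)} = \frac{|V'|}{4(1+\alpha)} - \frac{|V'|}{8(1+\alpha)} = \frac{|V'|}{8(1+\alpha)} \geq \frac{n}{16(1+\alpha)}.
\]
Repeating the sampling until this bound is met yields the desired $F$ in expected constant trials.

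I do not anticipate a major obstacle; the main thing to be careful about is handling the conditioning correctly when sampling a fixed-size subset of $V'$ (as opposed to independent Bernoulli draws) so that the bound $(|F'|-1)/(|V'|-1) \leq 1/(4(1+\alpha))$ really does hold, and confirming that the size loss from Markov gives the stated $n/(16(1+\alpha))$ rather than a larger factor. Everything else is a direct parallel of the argument already used in Lemma \ref{lem:subsetSimple}.
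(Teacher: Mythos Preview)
Your proposal is correct and follows essentially the same approach as the paper: restrict to the set $V'$ of columns with at most twice the average number of nonzeros (which has size at least $n/2$), then apply the sampling argument of Lemma~\ref{lem:subsetSimple} inside $V'$. The only cosmetic difference is that the paper simply invokes Lemma~\ref{lem:subsetSimple} as a black box on the principal submatrix $\MM[V',V']$ (which is still SDD, and for which $\MM_{FF}$ coincides with the corresponding block of $\MM$), whereas you rerun its proof explicitly.
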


\begin{proof}
Discard every column of $\MM$ that has more than twice the average number of nonzeros
  per column.
Then remove the corresponding rows.
The remaining matrix has dimension at least $n/2$.
Use Lemma~\ref{lem:subsetSimple}
  to find an $\alpha$-strongly diagonally subset of the columns of 
  this matrix.
\end{proof}

To obtain a $\UU^{T} \DD \UU$ factorization from a vertex sparsifier chain,
  we employ the procedure in Figure~\ref{fig:UDU}.

\begin{figure}[h]
\begin{algbox}
$(\DD, \UU) = \textsc{Decompose}\left(\MM^{(1)}, \dots , \MM^{(d)}, F_{1}, \dots , F_{d-1}\right)$,
 where each $\MM^{(i)}$ is a SDDM matrix.
\begin{enumerate}
\item let $k_{i}$ be the smallest odd integer greater than or equal to $\log_{\alpha_{i}/2} \epsilon_i^{-1}$.

\item For each $i < d$, write $\MM^{(i)} = \XX^{(i)} + \LL^{(i)}$
  where $\XX^{(i)}$ is a positive diagonal matrix and $\LL^{(i)}$ is a Laplacian.

\item Let $\XX^{(d)} = \II_{C_{d-1}}$
  and let $\UUhat$ be the upper-triangular Cholesky factor of $\MM^{(d)}$.

\item Let $\DD$ be the diagonal matrix with $\DD_{F_{i} F_{i}} = \XX_{i}$,
  for $1 \leq i < d$, and $\DD_{C_{d-1} C_{d-1}} = \II_{C_{d-1}}$.

\item Let $\UU$ be the upper-triangular matrix with $1s$ on the diagonal,
  $\UU_{C_{d-1} C_{d-1}} =  \UUhat$, and
  $\UU_{F_{i} C_{i}} = \ZZ_{F_{i} F_{i}}^{(k_{i})} \MM^{(i)}_{F_{i} C_{i}}$,
  for $1 \leq i < d$.
\end{enumerate}
\end{algbox}

\caption{Converting a vertex sparsifer chain into $\UU$ and $\DD$.}
\label{fig:UDU}
\end{figure}

\begin{lemma}\label{lem:uduApprox}
On input a vertex sparsifier chain of $\MM$
  with parameters $\alpha_{i} \geq 4$ and $\epsilon_{i}>0$,
  the algorithm \textsc{Decompose} produces matrices $\DD$ and $\UU$
  such that
\[
  \UU^{T} \DD \UU \approx_{\gamma} \MM ,
\]
where
\[
  \gamma \leq 2 \sum_{i=0}^{d-1} \epsilon_{i} + 4 / \min_{i} \alpha_{i}.
\]
\end{lemma}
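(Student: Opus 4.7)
The plan is to view the inverse of $\UU^T \DD \UU$ as a small perturbation of the \textsc{ApplyChain} operator $\WW^{(1)}$ from Lemma~\ref{lem:apply_chain}, thereby reducing the bound to that lemma plus the cost of the perturbation.

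First, letting $\UU^{(i)}$ and $\DD^{(i)}$ denote the natural restrictions of $\UU$ and $\DD$ to the index set $C_{i-1}$, I would unwind the block structure of $\UU$ and $\DD$ and apply Fact~\ref{fact:blockInverse} to derive
\[
\bigl((\UU^{(i)})^T \DD^{(i)} \UU^{(i)}\bigr)^{-1}
=
\AA_i
\begin{bmatrix} (\XX^{(i)}_{F_i F_i})^{-1} & 0 \\ 0 & \bigl((\UU^{(i+1)})^T \DD^{(i+1)} \UU^{(i+1)}\bigr)^{-1} \end{bmatrix}
\AA_i^T,
\]
where $\AA_i$ is the unit upper-triangular block matrix carrying $-\ZZ^{(k_i)}_{F_i F_i} \MM^{(i)}_{F_i C_i}$ in its off-diagonal block. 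Comparing with \eqref{eqn:apply_chaini}, this is identical to the \textsc{ApplyChain} operator $\WW^{(i)}$ except that the middle $FF$-slot holds $(\XX^{(i)}_{F_i F_i})^{-1}$ instead of $\ZZ^{(k_i)}_{F_i F_i}$.

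Next, I would establish the pointwise approximation $(\XX^{(i)}_{F_i F_i})^{-1} \approx_{4/\alpha_i} \ZZ^{(k_i)}_{F_i F_i}$. The upper bound $\ZZ^{(k_i)}_{F_i F_i} \preceq (\XX^{(i)}_{F_i F_i})^{-1}$ is the left side of Lemma~\ref{lem:lightBlock2} (since $\LL^{(i)}_{F_i F_i} \succeq 0$). For the other direction, Lemma~\ref{lem:Xdom} yields $\LL^{(i)}_{F_i F_i} \preceq (2/\alpha_i) \XX^{(i)}_{F_i F_i}$, and substituting into the right side of Lemma~\ref{lem:lightBlock2} gives $(\ZZ^{(k_i)}_{F_i F_i})^{-1} \preceq (1 + (1+\delta_i)(2/\alpha_i)) \XX^{(i)}_{F_i F_i}$. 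The algorithm's choice of $k_i$ together with $\alpha_i \geq 4$ forces $\delta_i \leq 1$, so the multiplicative gap is at most $1+4/\alpha_i$, translating to at most $\ln(1+4/\alpha_i) \leq 4/\alpha_i$ in the $\approx$-notation.

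A backwards induction on $i$ finishes the argument. The base case $i=d$ is exact since $\UUhat$ is the true Cholesky factor of $\MM^{(d)}$. For the inductive step, the two middle block-diagonal matrices differ only in the $FF$ slot (by factor $4/\alpha_i$) and in the $CC$ slot (by the prior error from the induction hypothesis); since block-diagonal approximations compose by \emph{maximum} rather than sum, and Fact~\ref{fact:orderCAC} preserves approximations under conjugation by $\AA_i$, we obtain $\bigl((\UU^{(i)})^T \DD^{(i)} \UU^{(i)}\bigr)^{-1} \approx_{\max_{j \geq i} 4/\alpha_j} \WW^{(i)}$. Composing with the guarantee $\WW^{(1)} \approx_{2\sum_i \epsilon_i} (\MM^{(0)})^{-1}$ from Lemma~\ref{lem:apply_chain} and inverting via Fact~\ref{fact:orderInverse} yields $\gamma \leq 2\sum_i \epsilon_i + 4/\min_i \alpha_i$.

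The main subtlety is recognizing that the $\XX^{-1}$-versus-$\ZZ^{(k_i)}$ discrepancy sits in a block-diagonal slot untouched by the recursion, so it combines across levels by a maximum rather than a sum; a naive summing argument would inflate the $4/\alpha$ contribution by a factor of $d$ and miss the level-free bound.
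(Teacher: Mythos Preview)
Your proof is correct and follows essentially the same approach as the paper. The paper expands $(\WW^{(1)})^{-1}$ in one shot as $\UU^T \widehat{\DD}\, \UU$ with $\widehat{\DD}$ carrying the blocks $(\ZZ^{(k_i)}_{F_iF_i})^{-1}$, then replaces $\widehat{\DD}$ by $\DD$ globally via Fact~\ref{fact:orderCAC}; you achieve the same comparison level-by-level by backwards induction, and both arguments hinge on the same two ingredients: the bound $\XX^{(i)} \approx_{4/\alpha_i} (\ZZ^{(k_i)}_{F_iF_i})^{-1}$ and the observation that block-diagonal replacements compose by $\max$ rather than by sum.
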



\begin{proof}
Consider the inverse of the operator $\WW = \WW^{(1)}$ realized by the algorithm
  \textsc{ApplyChain},
  and the operators $\WW^{(i)}$ that appear in the proof of Lemma~\ref{lem:apply_chain}.

We have
\[
 \left(  \WW^{(i)} \right)^{-1}
 =
\left[
\begin{array}{cc}
\II & 0\\
\MM_{C_{i}F_{i}} \ZZ^{(k_{i})}_{F_{i}F_{i}}  & \II
\end{array}
\right]
\left[
\begin{array}{cc}
\left(\ZZ^{(k_{i})}_{F_{i}F_{i}} \right)^{-1} & 0 \\
0 & \left(\WW^{(i+1)} \right)^{-1}
\end{array}
\right]
  \left[
\begin{array}{cc}
\II & \ZZ^{(k_{i})}_{F_{i}F_{i}} \MM_{F_{i}C_{i}}\\
0 & \II
\end{array}
\right],
\]
and
\[
 \left(  \WW^{(d)} \right)^{-1}
  =
 \MM^{(d)}
 = \UUhat^{T}  \UUhat.
\]
After expanding and multiplying the matrices in this recursive factorization,
  we obtain
\[
 \left( \WW^{(1)} \right)^{-1}
= 
  \UU^{T}
\begin{bmatrix}
\left(\ZZ^{(k_{1})}_{F_{1}F_{1}} \right)^{-1} &  \dots & 0 & 0\\
0 & \ddots   & 0 & 0\\
0 &  \ldots & \left(\ZZ^{(k_{d-1})}_{F_{d-1}F_{d-1}} \right)^{-1} & 0 \\
0 &  \ldots & 0 & \II_{C_{d-1} C_{d-1}} 
\end{bmatrix}
  \UU.
\]
Moreover, we know that this latter matrix is a $2 \sum_{i=0}^{d-1} \epsilon_{i}$
  approximation of $\MM$.
It remains to determine the impact of replacing the matrix in the middle of
  this expression with $\DD$.

It suffices to examine how well each matrix 
 $\left(\ZZ^{(k_{i})}_{F_{i}F_{i}} \right)^{-1}$
  is approximated by $\XX^{(i)}$.
From Lemma~\ref{lem:Xdom} we know that 
\[
\XX^{(i)} \pgeq (\alpha_{i} /2) \LL^{(i)}.
\]
Thus, we may use Lemma~\ref{lem:lightBlock2} with $\beta = \alpha_{i} /2$
  to conclude that
\[
\XX^{(i)}
\approx_{4 / \alpha_{i}}
\left(\ZZ^{(k_{i})}_{F_{i}F_{i}} \right)^{-1}.
\]
This implies that replacing each of the matrices
  $\left(\ZZ^{(k_{i})}_{F_{i}F_{i}} \right)^{-1}$
  by $\XX^{(i)}$
  increases the approximation factor by at most
  $4 / \min_{i} \alpha_{i}$.
\end{proof}

Using this decomposition procedure in a way similar to
Theorem~\ref{thm:result_BSS}, but with subsets chosen
using Lemma~\ref{lem:subsetLowDeg} gives the linear
sized decomposition.

\begin{theorem}\label{thm:UDU}
For every $n$-dimensional SDDM matrix $\MM$ there exists a diagonal matrix $\DD$ 
  and an upper triangular matrix $\UU$ with $O (n)$ nonzero entries so that
\[
  \UU^{T} \DD \UU \approx_{2} \MM .
\]
Moreover, back and forward solves in $\UU$ can be performed 
  with linear work in depth $O (\log^{2} n)$.
\end{theorem}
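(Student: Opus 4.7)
The plan is to apply the \textsc{Decompose} procedure of Figure~\ref{fig:UDU} to a vertex sparsifier chain of $\MM$ constructed exactly as in the proof of Theorem~\ref{thm:result_BSS}, except that each subset $F_i$ is chosen via Lemma~\ref{lem:subsetLowDeg} rather than Lemma~\ref{lem:subsetSimple}. Concretely I would fix $\alpha_i = 4$ and $\epsilon_i = 1/(2(i+2)^2)$ for all $i$, use Theorem~\ref{thm:BSS} to form $\MM^{(i+1)}$ as an $\epsilon_i$-sparsifier of $\schur{\MM^{(i)}}{F_i}$, and feed the resulting chain into \textsc{Decompose}. The approximation guarantee is then immediate from Lemma~\ref{lem:uduApprox}:
\[
\UU^T \DD \UU \approx_{\gamma} \MM, \qquad \gamma \leq 2 \sum_{i=0}^{d-1} \epsilon_i + \frac{4}{\min_i \alpha_i} \leq 1 + 1 = 2.
\]

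The heart of the argument is bounding the number of nonzeros in $\UU$. The only nontrivial blocks are the $\UU_{F_i C_i} = \ZZ^{(k_i)}_{F_i F_i} \MM^{(i)}_{F_i C_i}$. By Lemma~\ref{lem:subsetLowDeg}, every column of $\MM^{(i)}$ incident to $F_i$ has at most $d^{(i)} := O(m_i / n_i) = O((i+1)^4)$ nonzeros, using the sparsification bound $m_i = O(n_i/\epsilon_{i-1}^2)$ from Theorem~\ref{thm:BSS}. At the same time, $\ZZ^{(k_i)}_{F_i F_i}$ is a polynomial in $\LL^{(i)}_{F_i F_i}$ of degree $k_i = O(\log(1/\epsilon_i)) = O(\log i)$, so the nonzero pattern of each row of $\UU_{F_i C_i}$ lies within the $(k_i + 1)$-step neighborhood of the corresponding $v \in F_i$ in the low-degree induced subgraph, giving at most $(d^{(i)})^{k_i+1}$ nonzeros per row. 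Summing over $v \in F_i$ and $i$ gives
\[
\mathrm{nnz}(\UU) \leq n + \sum_{i=1}^{d} |F_i| \cdot (d^{(i)})^{k_i+1} \leq n + \sum_{i \geq 1} n \left(\tfrac{39}{40}\right)^{i-1} e^{O((\log i)^2)} = O(n),
\]
since the geometric decay of $|F_i|$ from Lemma~\ref{lem:subsetLowDeg} beats the quasi-polynomial blowup from the Neumann truncation.

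I expect the main obstacle is precisely this tension between the Neumann truncation degree $k_i$ (which must grow with $1/\epsilon_i$ to keep the approximation error summable) and the per-level column density $d^{(i)}$ (which grows with $1/\epsilon_{i-1}^2$ from sparsification); the choice $\epsilon_i = 1/(2(i+2)^2)$ is the compromise under which the exponential factor $(39/40)^i$ still dominates the blowup above. The final claim on back and forward solves is then routine: $\UU$ is upper triangular with $O(n)$ nonzeros and $\UU_{F_i F_i} = \II$ for every level, so the unknowns in each $F_i$ can be recovered in parallel, each level contributes $O(\log n)$ depth from a sparse matrix-vector product against $\xx_{C_i}$, and with $d = O(\log n)$ levels the total depth is $O(\log^2 n)$.
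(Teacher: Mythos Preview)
Your proposal is correct and follows essentially the same approach as the paper's proof: the same parameter choices $\alpha_i=4$, $\epsilon_i=1/(2(i+2)^2)$, the same use of Lemma~\ref{lem:subsetLowDeg} to bound per-row degree by $O((i+1)^4)$, the same $(d^{(i)})^{k_i+1}$ bound on row nonzeros of $\UU_{F_iC_i}$, and the same $e^{O(\log^2 i)}$-versus-geometric-decay estimate. One small slip: since you are invoking Lemma~\ref{lem:subsetLowDeg} rather than Lemma~\ref{lem:subsetSimple}, the eliminated fraction is $1/80$, so the correct decay factor is $(79/80)^{i-1}$, not $(39/40)^{i-1}$; this does not affect the argument.
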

\begin{proof}
We choose the same parameters as were used in the proof of
  Theorem \ref{thm:result_BSS}: $\alpha_{i} = 4$ for all $i$
  and $\epsilon_{i} = 1/2 (i+2)^{2}$.
Theorem~\ref{thm:BSS} then guarantees that the average number of
  nonzero entries in each column of $\MM^{(i)}$ is at most
  $10 / \epsilon_{i}^{2} = 40 (i+1)^{4}$.
If we now apply Lemma~\ref{lem:subsetLowDeg} to find $4$-diagonally dominant
  subsets $F_{i}$ of each $\MM^{(i)}$,
  we find that each such subset contains at least a $1/80$ fraction of the columns
  of its matrix and that each
  column and row of $\MM^{(i)}$ indexed by $F$ has at most $80 (i+1)^{4}$
  nonzero entries.
This implies that each row of
  $\ZZ^{(k_{i})}_{F_{i}F_{i}} \MM_{F_{i}C_{i}}$
  has at most
  $(80 (i+1)^{4})^{k_{i}+1}$ nonzero entries.

Let $n_{i}$ denote the dimension of $\MM^{(i)}$.
By induction, we know that
\[
  n_{i} \leq n \left(1 - \frac{1}{80} \right)^{i-1}.
\]
So, the total number of nonzero entries in $\UU$ is at most
\[
  \sum_{i=1}^{d} n_{i} (80 (i+1)^{4})^{k_{i}+1}
\leq 
  n \sum_{i=1}^{d} \left(1 - \frac{1}{80} \right)^{i-1} (80 (i+1)^{4})^{k_{i}+1}.
\]
We will show that the term multiplying $n$ in this later expression is upper bounded by
  a constant.
To see this, note that $k_{i} \leq 1 + \log_{\alpha_i / 2} (2 \epsilon_{i}^{-1} )\leq \nu \log (i+1)$
  for some constant $\nu$.
So, there is some other constant $\mu$ for which
\[
(80 (i+1)^{4})^{k_{i}+1}
\leq 
\exp (\mu \log^{2} (i+1)).
\]
This implies that the sum is at most
\[
  \sum_{i \geq 1} \exp (\mu \log^{2} (i+1) - i / 80),
\]
which is bounded by a constant.

The claimed bound on the work to perform backwards and forwards substitution with $\UU$
  is standard: these operations require work linear in the number of nonzero entries of $\UU$.
The bound on the depth follows from the fact that the substitions can be performed blockwise,
  take depth $O (\log n)$ for each block, and the number of blocks, $d$, is logarithmic in $n$.
\end{proof}

\dan{There are a few issues in the writing below this point that we should eventually fix, but maybe not for this version:
\begin{itemize}
\item [1.] Sometimes we talk about Laplacians of graphs, sometimes SDDM matrices, and sometimes SDD.  We should just do SDDM.
\item [2.] We sometimes refer to vertices of a graph, columns and rows of a matrix, or variables.  We should fix one (or one and a half).
  I personally dislike ``variables'', as it can refer to many other things.  Or, we could use ``coordinates''.
\end{itemize}}

\iffull{

\section{Existence of Linear Work and $O(\log n\log^2\log n)$ depth Solvers}\label{sec:depth}

The factorizations constructed in the previous section
can be evaluated in $O(\log^2{n})$ depth  and $O(n)$ work.
One $O(\log n)$ factor comes from the
depth of the recursion and another $O(\log n)$ factor comes from
  the depth of matrix vector multiplication.
The reason that matrix-vector multiplication can take logarithmic depth
  is that computing the sum of $k$ numbers takes $O(\log{k})$ depth.
Thus, if we can instead multiply by matrices with $k^{O (1)}$ nonzeros in each
  row and column, for some small $k$,
  we can reduce the depth of each matrix-vector multiplication to $O (\log k)$.

Although the number of non-zeros in each row of $\ZZ^{(k_{i})}_{F_{i}F_{i}} \MM_{F_{i}C_{i}}$
is bounded by $(80 (i+1)^{4})^{k_{i}+1}$, the number of non-zeros per column
can be high.
This is because although we picked $F_i$ to be of bounded degree, many of those vertices
can be adjacent to a few vertices in $C_i$.
For the factorization constructed in Section~\ref{sec:existence},
$k$ can be as large as $n$.
In this section, we reduce this degree to $\log^{O(1)}{n}$
by splitting high degree vertices.
This leads a factorization that can be evaluated in linear work
and $O(\log n\log^{2} \log n)$ depth.

\subsection{Splitting High Degree Vertices}

While sparsification produces graphs with few edges, it does not guarantee
  that every vertex has low degree.
We will approximate an arbitrary graph by one of bounded degree
  by splitting each high degree vertex into many vertices.
The edges that were originally attached to that vertex will be partitioned
  among the vertices into which it is split.
The vertices into which it is split will then be connected by a complete graph,
  or an expander if the complete graph would have too high degree.
The resulting bounded-degree graph has more vertices.
To approximate the original graph, we take a Schur complement of the bounded-degree graph
   with respect to the extra vertices.
We recall that one can solve a system of equations in a Schur complement of a matrix
  by solving one equation in the original matrix\footnote{%
To solve a system  $\schur{\MM}{S} \xx = \bb$, where $S$ is the last set of rows of $\MM$,
  one need merely solve the system $\MM \xxhat = \bbhat$, where $\bbhat$ is the same as $\bb$
  but has zeros appended for the coordinates in $S$.
The vector $\xx$ is then obtained by simply ignoring the coordinates of $\xxhat$ in $S$.
}

We begin our analysis by examining what happens when we split an individual vertex.

\begin{lemma}\label{lem:splitStar}
Let $G$ be a weighted star graph with vertex set
  $\setof{v_{1}, \dots , v_{n}, u}$
  and edges connecting $u$ to each $v_{i}$  with weight $w_{i}$.
Let $\Ghat$ be a graph with vertex set
  $\setof{v_{1}, \dots , v_{n}, u_{1}, \dots , u_{k}}$
  in which the vertices $\setof{u_{1}, \dots , u_{k}}$
  are connected by a complete graph of edges of weight
  $W = \delta^{-1} \sum_{i} w_{i}$,
  and each vertex $v_{i}$ is connected to exactly one vertex $u_{j}$,
  again by an edge of weight $w_{i}$.
Let $U = \setof{u_{2}, \dots , u_{k}}$.
Then, $\schur{\Ghat}{U} \pleq  G$,
  and in $\schur{\Ghat}{U}$ the edge 
  between $u_{1}$ and $v_{i}$
  has weight at least $w_{i} (1-2 \delta)$, for every $i$.
\end{lemma}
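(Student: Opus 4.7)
The plan is to treat the two assertions separately. For the Loewner inequality $\schur{\Ghat}{U} \pleq G$, I would use the variational characterization of the Schur complement: for $x$ indexed by $C = \{u_1, v_1, \ldots, v_n\}$, the quantity $x^T \schur{\Ghat}{U} x$ equals the minimum over all extensions to $U$ of the $\Ghat$-Laplacian energy of the combined vector. Given $x$, I take the specific extension $y(u_j) = x(u_1)$ for every $j \in \{2, \ldots, k\}$. This forces every edge of the complete graph on $\{u_1, \ldots, u_k\}$ to contribute zero to the energy, leaving only the edges of weight $w_i$ from each $v_i$ to its unique $u$-neighbor, which contribute $\sum_i w_i (x(v_i) - x(u_1))^2 = x^T G x$ once we identify $u$ in $G$ with $u_1$. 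Hence $x^T \schur{\Ghat}{U} x \leq x^T G x$, which gives the Loewner bound.

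For the edge-weight lower bound, I would split on whether $v_i$ is attached to $u_1$ or to some $u_{j^*}$ with $j^* \geq 2$. Let $F = U$. In the former case $v_i$ has no neighbors in $F$, so the correction $(\Ghat_{CF} \Ghat_{FF}^{-1} \Ghat_{FC})[u_1, v_i]$ vanishes and the Schur complement formula gives $\schur{\Ghat}{U}[u_1, v_i] = \Ghat_{CC}[u_1, v_i] = -w_i$; the edge weight is exactly $w_i$. In the latter case, only the single term $j' = j^*$ survives in the double sum, giving edge weight $w_i W (\Ghat_{FF}^{-1}\mathbf{1})_{j^*}$, using that $\Ghat_{CF}[u_1, u_j] = -W$ for every $j \in F$ and $\Ghat_{FC}[u_{j'}, v_i] = -w_i \cdot [j' = j^*]$.

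The key identity I would exploit is $\Ghat_{FF}\mathbf{1} = W\mathbf{1} + s$, where $s_j := S_j = \sum_{i : v_i \sim u_j} w_i$. This holds because the $u_j$-row sum of $\Ghat_{FF}$ equals the total weight of edges from $u_j$ leaving $F$: one edge of weight $W$ to $u_1$, plus edges of total weight $S_j$ to the attached $v_i$'s. Rearranging gives $W \Ghat_{FF}^{-1}\mathbf{1} = \mathbf{1} - \Ghat_{FF}^{-1} s$, so the edge weight in the nontrivial case equals $w_i (1 - (\Ghat_{FF}^{-1} s)_{j^*})$, and the target reduces to showing $(\Ghat_{FF}^{-1} s)_{j^*} \leq 2\delta$.

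I expect this last estimate to be the main technical step. The plan is to exploit the explicit structure $\Ghat_{FF} = \diag(kW + S_j) - W\mathbf{1}\mathbf{1}^T$ and apply Sherman--Morrison to obtain a closed-form expression for $\Ghat_{FF}^{-1}$. The global budget $\sum_j S_j \leq \sum_i w_i = \delta W$ then controls both pieces of $\Ghat_{FF}^{-1} s$: the diagonal contribution $D^{-1}s$ has $j^*$-coordinate at most $S_{j^*}/(kW) \leq \delta / k$, and the rank-one correction is likewise bounded by $\delta / k$ (since $\mathbf{1}^T D^{-1} s \leq \delta/k$ while the Sherman--Morrison denominator is at least $1/k$). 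Combining gives $(\Ghat_{FF}^{-1} s)_{j^*} \leq 2\delta / k \leq 2\delta$ as required.
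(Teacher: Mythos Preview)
Your proof is correct. A brief comparison with the paper's argument:

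For the Loewner inequality $\schur{\Ghat}{U}\preceq G$, your variational argument (extend by setting every $u_j$ to the value at $u_1$) is cleaner than what appears in the paper. The paper writes out the block decomposition of $\LL_{\Ghat}$ and computes the Schur complement symbolically, showing that it tends to $\LL_G$ as $\delta\to 0$; the Loewner bound itself is not isolated as a separate step there (it can be recovered by combining that limit with monotonicity of the Schur complement in $W$, but this is not spelled out). Your one--line extension argument is more direct.

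For the edge--weight lower bound, both you and the paper end up applying Sherman--Morrison to the same matrix (the paper writes it as $\DD_2 + W(k\II_{k-1}-\JJ_{k-1})$, which is your $\diag(kW+S_j)-W\mathbf{1}\mathbf{1}^{T}$). The paper then bounds every entry of the inverse from below by $(1-2\delta)$ times its $\delta\to 0$ limit and propagates that entry--wise inequality through the product with $\CC_2$ and $W\mathbf{1}$. Your route is tidier: the row--sum identity $\Ghat_{FF}\mathbf{1}=W\mathbf{1}+s$ collapses the edge weight to $w_i\bigl(1-(\Ghat_{FF}^{-1}s)_{j^{*}}\bigr)$, so only a single scalar needs to be estimated, and your bound $(\Ghat_{FF}^{-1}s)_{j^{*}}\le 2\delta/k$ is in fact sharper than the $2\delta$ required. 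One small remark: in justifying the rank--one term you should also note that $W(D^{-1}\mathbf{1})_{j^{*}}\le 1/k$, not only the two facts you listed, but the arithmetic you indicate is correct.
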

\begin{proof}
We will examine the Laplacian matrices of $G$ and $\Ghat$.
Define $w_{tot} = \sum_{i} w_{i}$, so $W = w_{tot} / \delta$.
Let $\bb$ be the vector of weights $w_{1}, \dots , w_{n}$,
  and let $\BB$ be the diagonal matrix of $\bb$, so that 
  so that 
\[
\LL_{G} = \begin{pmatrix}
\BB  & -\bb \\
-\bb^{T} & w_{tot}
\end{pmatrix}.
\]
Similarly, let $\CC$ be the adjacency matrix between 
  $v_{1}, \dots , v_{n}$ and $u_{1}, \dots , u_{k}$,
  and let $\DD$ be the diagonal matrix whose
  $j$th entry is the sum of the $w_{i}$ for which
  $v_{i}$ is connected to $u_{j}$.
Then,
\[
\LL_{\Ghat} = 
\begin{pmatrix}
\BB & -\CC \\
-\CC^{T} & \DD + W (k \II_{k} - \JJ_{k}), 
\end{pmatrix}
\]
where $\JJ_{k}$ is the $k \times k$ all ones matrix and $k \II_{k} -\JJ_{k}$
  is the Laplacian of the complete graph on $k$ vertices.

To express the Schur complement,
 let $\DD_{2}$ be the submatrix of $\DD$ obtained by excluding its
  first row and column,
 let $\CC_{2}$ be the submatrix of $\CC$ excluding its first column,
 and let $\cc_{1}$ be the first column of $\CC$.
Let $\cc_{2} = \CC_{2} \bvec{1}$, so $\bb = \cc_{1} + \cc_{2}$.
We then have that $  \schur{\LL_{\Ghat}}{U}$ equals
\begin{align*}
& \begin{pmatrix}
\BB & -\cc_{1} \\
-\cc_{1}^{T} & \DD (1,1) 
\end{pmatrix}
-
\begin{pmatrix}
-\CC_{2}^{T} \\
- W  \bvec{1}^{T}
\end{pmatrix}
\left( \DD_{2} + W (k \II_{k-1} - \JJ_{k-1}) \right)^{-1}
\begin{pmatrix}
-\CC_{2} &
- W  \bvec{1}
\end{pmatrix}
\\
& =
\begin{pmatrix}
\BB & -\cc_{1} \\
-\cc_{1}^{T} & \DD (1,1) 
\end{pmatrix}
-
\begin{pmatrix}
\CC_{2}^{T} \\
W  \bvec{1}^{T}
\end{pmatrix}
\left( \DD_{2} + W (k \II_{k-1} - \JJ_{k-1}) \right)^{-1}
\begin{pmatrix}
\CC_{2} &
W  \bvec{1}
\end{pmatrix}
\end{align*}
To understand this expression, 
  we will show that it approaches $\LL_{G}$ as $\delta $ goes to zero.
We first note that
\[
  (k \II_{k-1} - \JJ_{k-1})^{-1} = \frac{1}{k} (\II_{k-1} + \JJ_{k-1}),
\]
and so
\[
\CC_{2}^{T}  (k \II_{k-1} - \JJ_{k-1})^{-1}  \bvec{1} = \CC_{2}^{T} \bvec{1} = \cc_{2}^{T}.
\]
So, the last row and column of the Schur complement agrees with $\LL_{G}$ as $\delta$ goes to zero.
On the other hand, the upper-left block becomes
\[
\BB -  \CC_{2}^{T} \left( \DD_{2} + W (k \II_{k-1} - \JJ_{k-1}) \right)^{-1} \CC_{2},
\]
which goes to $\BB$ as $\delta $ goes to zero.

To bound the discrepancy in terms of $\delta $, we recall that $\JJ = \bvec{1} \bvec{1}^{T}$,
  and so we can use the Sherman-Morrison-Woodbury formula to 
  compute
\[
\left( \DD_{2} + W (k \II_{k-1} - \JJ_{k-1}) \right)^{-1}
=
\left( \DD_{2} + W k \II_{k-1} \right)^{-1}
+
\frac{\left( \DD_{2} + W k \II_{k-1} \right)^{-1} W \JJ_{k-1} \left( \DD_{2} + W k \II_{k-1} \right)^{-1}}
{1 - W \bvec{1}^{T} \left( \DD_{2} + W k \II_{k-1} \right)^{-1} \bvec{1}}.
\]

Note that $\DD_{2} + W k \II_{k-1}$ is a diagonal matrix.
As all entries of $\DD_{2}$ are less than $\delta W$, 
  every diagonal entry of this matrix is at least $(W k (1+\delta))^{-1}$.
So, we have the entry-wise inequality
\[
\frac{\left( \DD_{2} + W k \II_{k-1} \right)^{-1} W \JJ_{k-1} \left( \DD_{2} + W k \II_{k-1} \right)^{-1}}
{1 - W \bvec{1}^{T} \left( \DD_{2} + W k \II_{k-1} \right)^{-1} \bvec{1}}
\geq 
\frac{
W \JJ_{k-1}/  (W k (1+\delta) )^{2}
}{1/k}
=
\frac{1}{W k (1+\delta)^{2}} \JJ_{k-1}.
\]
This tells us that, entry-wise,
\[
\left( \DD_{2} + W (k \II_{k-1} - \JJ_{k-1}) \right)^{-1}
\geq 
(1 - 2 \delta)  \frac{1}{W k}
\left( \II_{k-1} + \JJ_{k-1}\right)
=
(1 - 2 \delta) 
(W ( k \II_{k-1} - \JJ_{k-1} ))^{-1}.
\]
The claimed bound on the entries in row and column corresponding to $u_{1}$ of the Schur complement 
  now follows
  from the fact that they are obtained by multiplying this matrix inverse on either side by
  $\CC_{2}$ and $W \bvec{1}$:
  as these are non-negative matrices, the entry-wise inequality propogates to the product.
\end{proof}

The following theorem states the approximation we obtain if we split all the vertices of high degree
  and connect the clones of each vertex by expanders.

\begin{theorem}
\label{thm:degreeReduction}
For any graph $G=(V,E)$ with $n$ vertices, 
  $\varepsilon > 0$ and $t > 1/\varepsilon^{2}$,
there is a graph $\widetilde{G}=(V\cup S,\widetilde{E})$
 of maximum degree $O (t)$
  such that 
\begin{equation}\label{eqn:thm:degreeReduction}
G \approx_{\varepsilon} \schur{\widetilde{G}}{S},
\end{equation}
 $\sizeof{S} = O\left(n/(\varepsilon^{2}t)\right)$,
and $\sizeof{\widetilde{E}} \leq O(n/\varepsilon^{2} + n/(\varepsilon^{4} t))$.
\end{theorem}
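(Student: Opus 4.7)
\medskip
\noindent\textbf{Proof plan.} The plan is to combine two operations: first sparsify $G$ to reduce the total number of edges to $O(n/\varepsilon^2)$, and then break up any vertex whose degree still exceeds $t$ into smaller-degree copies connected by a sparse expander. Concretely, I would first apply Theorem~\ref{thm:BSS} to obtain a graph $G'$ with at most $O(n/\varepsilon^2)$ edges satisfying $G' \approx_{\varepsilon/4} G$. Each such $G'$ has average degree $O(1/\varepsilon^2)$, but can still contain vertices of arbitrarily large degree. It is these vertices that we must split.

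For the splitting step, for each vertex $v \in V(G')$ with (weighted or combinatorial) degree $d_v > t$, introduce $k_v := \lceil d_v/t \rceil$ copies $u^v_1, \dots, u^v_{k_v}$, identify $u^v_1$ with $v$, and put $u^v_2,\dots,u^v_{k_v}$ into $S$. Partition the edges incident to $v$ in $G'$ into $k_v$ groups of size at most $t$ and attach group $i$ to the copy $u^v_i$. Then connect the $k_v$ copies by a constant-degree weighted expander on $k_v$ vertices that is an $\varepsilon/4$-approximation of the weighted complete graph whose every edge has weight $W_v = \Theta(\varepsilon^{-1} d_v)$; such an expander with $O(k_v/\varepsilon^2)$ edges exists by Theorem~\ref{thm:BSS} applied to the complete graph Laplacian. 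The resulting graph is $\widetilde{G}$. Counting is routine: $|S| = \sum_{v:d_v>t}(k_v-1) \leq \sum_v d_v/t \leq 2|E(G')|/t = O(n/(\varepsilon^2 t))$; the number of original edges kept is $O(n/\varepsilon^2)$; and the number of expander edges added is $O(\sum_v k_v/\varepsilon^2) = O(n/(\varepsilon^4 t))$. Each copy $u^v_i$ is incident to at most $t$ original edges and $O(1/\varepsilon^2) = O(t)$ expander edges, so the max degree is $O(t)$.

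The spectral approximation~\eqref{eqn:thm:degreeReduction} is established in two stages. Let $H$ be the auxiliary graph obtained by the same construction except that each vertex $v$'s $k_v$ copies are connected by the full \emph{complete} graph of weight $W_v$ on each edge rather than by an expander. By Lemma~\ref{lem:splitStar} applied at each split vertex with $\delta = \varepsilon/8$, the Schur complement $\schur{H}{S}$ produces exactly the graph $G'$ with every edge weight scaled by a factor in $[1-2\delta, 1]$, hence $\schur{H}{S} \approx_{\varepsilon/4} G'$. To pass from $H$ to $\widetilde{G}$, replace each vertex's complete-graph gadget by its $\varepsilon/4$-expander sparsifier; since these gadgets are disjoint as edge sets and each replacement is a spectral $\varepsilon/4$-approximation, summing yields $\widetilde{G} \approx_{\varepsilon/4} H$. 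Applying Fact~\ref{fact:schurLoewner} (twice, to both directions of the Loewner inequalities defining $\Approx{\varepsilon/4}$) gives $\schur{\widetilde{G}}{S} \approx_{\varepsilon/4} \schur{H}{S}$. Composing with Fact~\ref{frac:orderComposition} gives $\schur{\widetilde{G}}{S} \approx_{\varepsilon/2} G' \approx_{\varepsilon/4} G$, hence $\schur{\widetilde{G}}{S}\approx_\varepsilon G$.

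\medskip
\noindent\textbf{Main obstacle.} The delicate point is justifying Lemma~\ref{lem:splitStar} when several vertices are split simultaneously, since the ``leaves'' of one gadget may themselves be split and placed in other gadgets. The cleanest way around this is to observe that the gadget Laplacians at distinct vertices $v$ touch only through the original bipartite edges between the copies (which are untouched by the Schur complement over $S$), so the multi-vertex Schur complement factors as a composition of per-vertex Schur complements—each of which Lemma~\ref{lem:splitStar} handles. Alternatively, one can give a direct energy-minimization argument: for any voltage assignment on $V$, the harmonic extension to $S$ makes the copies of each $v$ nearly equipotential (because $W_v$ is large compared to $d_v$), and the dissipated energy matches the energy in $G'$ up to $(1\pm \varepsilon/2)$. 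Either route compresses into the same approximation bookkeeping above.
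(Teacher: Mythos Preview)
Your approach is essentially the paper's own proof: sparsify via Theorem~\ref{thm:BSS}, split each high-degree vertex into $\lceil d_v/t\rceil$ copies joined by a heavy clique (Lemma~\ref{lem:splitStar}), then replace each clique by an expander. The counting and the two-stage spectral bookkeeping (clique version $H$, then expander version $\widetilde G$, then pass through Schur complements via Fact~\ref{fact:schurLoewner}) match the paper exactly, and your ``main obstacle'' paragraph correctly identifies and resolves the multi-vertex issue in the same way the paper does, by noting that an edge can lose a $(1-2\delta)$ factor at each of its (at most two) split endpoints.

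There is one genuine slip: you justify the existence of a \emph{constant-degree} expander approximating the weighted complete graph by appealing to Theorem~\ref{thm:BSS}. That theorem only bounds the total number of edges, not the maximum degree, so it does not give you the ``$O(1/\varepsilon^2)$ expander edges per copy'' that you later use when bounding the maximum degree of $\widetilde G$. You need an expander with a per-vertex degree bound of $O(1/\varepsilon^2)$; the paper invokes Lemma~\ref{lem:existExpanders} (bounded-degree Ramanujan graphs, ultimately from \cite{IF4}) for precisely this purpose, and only when $k_v>t$ so that the complete graph itself would violate the degree bound. With that correction in place your argument goes through unchanged.
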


\begin{proof}
We first sparsify $G$ using Theorem~\ref{thm:BSS},
obtaining $\Ghat $ with $O(n / \epsilon^{2})$ edges
such that $\Ghat  \approx_{\varepsilon / 3} G$.

Let $U$ be the set of vertices in $\Ghat$ of degree more than $t$.
We will split each vertex in $U$ into many vertices.
For each $u \in U$, let $d_{u}$ be its degree in $\Ghat$.
We split $u$ into $\ceil{d_{u} / t}$ vertices, one of which we identify
  with the original vertex $u$, and the rest of which we put in $S$.
We then partition the edges that were attached to $u$ among these
  $\ceil{d_{u} / t}$ vertices, so that each is now attached to at most $t$ 
  of these edges.
We then place a complete graph between all of the
  vertices derived from $u$ in which every edge has weight equal to
  the sum of the weights of edges attached to $u$, times $12/\varepsilon$.
That is, we apply the construction of Lemma~\ref{lem:splitStar} with
  $\delta = \varepsilon / 3$.
Call the resulting graph $G'$.

If $\ceil{d_{u} / t} > t$, we replace that complete graph by
  a weighted expander of degree $O (1/\varepsilon^{2})$ that is an $\varepsilon/3$
  approximation of this weighted complete graph, as guaranteed to exist
  by Lemma \ref{lem:existExpanders}.
The resulting graph is $\Gtil$.

To show that \eqref{eqn:thm:degreeReduction} holds, 
  we first show that
\[
  \Ghat \approx_{\varepsilon/3} \schur{G'}{S}.
\]
Lemma \ref{lem:splitStar} tells us that
  $\schur{G'}{S} \pleq   \Ghat.$
It also tells us that the graph looks
  like $\Ghat$ except that
  it can have some extra edges and that the edges attached to vertices we split can
  have a slightly lower weight.
If an edge is attached to just one of the split vertices, its weight can be lower by a factor
  of $2 \delta  = \varepsilon / 6$.
However, some edges could be attached to two of the split vertices, in which case they could
  have weight that is lower by a factor of $\varepsilon/3$.
This implies that
  $(1-\varepsilon/3) \Ghat \pleq  \schur{G'}{S}$.
To prove \eqref{eqn:thm:degreeReduction}, we now combine this with the factors
  of $\varepsilon/3$ that we loose by sparsifying at the start
  and by replacing with expanders at the end.

It is clear that every vertex in $\Gtil$ has degree at most $t + O (1/\epsilon^{2})$.
To bound the number of edges in $\Gtil$, we observe that the sum of the degrees
  of vertices that are split is at most $O (n / \varepsilon^{2})$, and so the number
  of extra vertices in $S$ is at most $O (n / \varepsilon^{2} t)$.
Our process of adding expanders at the end can create at most $O (1 / \varepsilon^{2})$ new edges
  for each of these vertices,
  giving a total of at most $O (n / \varepsilon^{4} t)$ new edges.
\end{proof}

\begin{remark}\label{rem:thm:degreeReduction}
We do not presently know how to implement the exact construction from the above theorem in polynomial time,
  because it relies on the nonconstructive proof of the existance of expanders from \cite{IF4}.
One can transform this into a polynomial time construction by instead using the explicit constructions
  of Ramanujan graphs \cite{Margulis88,LPS} as described in Lemma \ref{lem:explicitExpanders}.
This would, however, add the requirement $t > 1/\epsilon^{6}$ to Theorem~\ref{thm:degreeReduction}.
While this would make Theorem~\ref{thm:degreeReduction} less appealing, it does not
 alter the statement of Theorem \ref{thm:lowerDepth}.
\end{remark}

It remains to incorporate this degree reduction routine into
 the solver construction.
 Since our goal is to upper-bound the degree by $O(\log^{c}{n})$ for some constant $c$, we
 can pick $t$ in Theorem~\ref{thm:degreeReduction}
 so that $\epsilon^2 t \leq \log^{O (1)}n$.
This leads to a negligible increase in vertex count at each step.
So we can use a construction similar to Theorem~\ref{thm:UDU}
to obtain the lower depth solver algorithm.

\begin{theorem}\label{thm:lowerDepth}
For every $n$-dimensional SDDM matrix $\MM$ there a linear operator
$\ZZ$ such that
\[
\ZZ \approx_{2} \MM^{-1}
\]
and matrix-vector multiplications in $\ZZ$ can be done in
 linear work and $O(\log{n} \log^{2} \log{n})$ depth.
 Furthermore, this operator can be obtained via a diagonal
$\DD$, an upper triangular matrix $\UU$ with $O(n)$  non-zero entries 
  and a set of vertices $\widehat{V}$ such that
\[
\MM\approx_{2}\schur{\UU^{T}\DD\UU}{\widehat{V}}.
\]
\end{theorem}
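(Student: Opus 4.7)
The plan is to modify the vertex sparsifier chain construction of Theorem \ref{thm:UDU} by interleaving it with the degree-reduction operation of Theorem \ref{thm:degreeReduction}, using an \emph{adaptive} degree budget at each level. The key observation is that in Theorem \ref{thm:UDU} each row of the block $\UU_{F_i C_i} = \ZZ^{(k_i)}_{F_i F_i} \MM^{(i)}_{F_i C_i}$ already has only $(O(i^4))^{k_i+1}$ nonzeros, but the columns can be heavy because nothing bounds the degree of the $C_i$-vertices in $\MM^{(i)}$. As a result, products by $\MM^{(i)}_{C_i F_i}$ take $O(\log n)$ depth rather than $O(\log \log n)$. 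Forcing every intermediate $\MM^{(i)}$ to have max degree on the order of the natural average $(i+1)^4$ caps this depth without inflating the already-tight row size of $\UU$.

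Concretely, I would use $\alpha_i = 4$ and $\varepsilon_i = 1/(2(i+2)^2)$ as in Theorem \ref{thm:UDU}, and at each level: (i) pick a low-degree $\alpha$-strongly diagonally dominant subset $F_i$ via Lemma \ref{lem:subsetLowDeg}; (ii) sparsify $\schur{\MM^{(i)}}{F_i}$ via Theorem \ref{thm:BSS}; and (iii) apply Theorem \ref{thm:degreeReduction} with $t_i = \Theta((i+2)^4)$ (taken large enough to satisfy the $t > 1/\varepsilon^6$ constraint of Remark \ref{rem:thm:degreeReduction}), bringing the max degree down to $t_i$ and introducing an auxiliary set $S_i$ of size $O(n_i/i^4)$ into $\widehat{V}$. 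Since $\sum_i |S_i| = O(n)$, the total vertex count of the extended graph stays linear. Running \textsc{Decompose} (Figure \ref{fig:UDU}) on this chain then produces $\UU$ and $\DD$; Lemma \ref{lem:uduApprox} composed with the Schur-complement approximations supplied at each degree-reduction step (and transported through the outer Schur complements via Fact \ref{fact:schurLoewner}) yields $\schur{\UU^T \DD \UU}{\widehat{V}} \approx_2 \MM$.

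For the depth, since every $\MM^{(i)}$ now has max degree $O(i^4)$, each matrix-vector product by $\MM^{(i)}_{C_i F_i}$ or by a submatrix used in forming $\ZZ^{(k_i)}_{F_i F_i}$ takes $O(\log(i^4)) = O(\log i)$ depth; the $O(k_i) = O(\log i)$ such products per level push the per-level depth to $O(\log^2 i) = O(\log^2 \log n)$, and with $O(\log n)$ levels we get total depth $O(\log n \log^2 \log n)$. For the work, each row of $\UU_{F_i C_i}$ has at most $(O(i^4))^{k_i+1} = \exp(O(\log^2 i))$ nonzeros, so the total nonzero count of $\UU$ is bounded by $\sum_i n_i \exp(O(\log^2 i))$. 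This is precisely the sum evaluated at the end of the proof of Theorem \ref{thm:UDU}, which converges to $O(n)$ because the geometric decay of $n_i$ dominates $\exp(O(\log^2 i))$.

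The main subtlety is the adaptive choice of $t_i$: taking $t_i$ to be a uniform $\log^{O(1)} n$ would inflate the rows of early-level $\UU$-blocks by a factor of $(\log n)^{O(\log i)}$ and push the total nonzero count well above $O(n)$, while omitting degree reduction altogether leaves the $C_i$-side multiplications at $O(\log n)$ depth. Matching $t_i$ to the natural sparsification density $(i+2)^4$ is what balances both bounds simultaneously. Aside from this calibration, the remaining arguments (error composition across the $\varepsilon_i$'s, geometric decay of the vertex counts, linear-work back-substitution for a sparse block upper-triangular $\UU$) are direct extensions of what appears in Theorem \ref{thm:UDU} and Lemma \ref{lem:apply_chain}.
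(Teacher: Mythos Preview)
Your plan is essentially the paper's own proof: the same parameters $\alpha_i=4$, $\epsilon_i=1/(2(i+2)^2)$, the same interleaving of Lemma~\ref{lem:subsetLowDeg} with Theorem~\ref{thm:degreeReduction} at $t_i=\Theta(\epsilon_i^{-2})=\Theta((i+2)^4)$, and the same depth and work accounting. Two points where your write-up slips are worth flagging.

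First, the size of the auxiliary sets: with $t_i=\Theta(\epsilon_i^{-2})$ Theorem~\ref{thm:degreeReduction} gives $|S_i|=O(n_i/(\epsilon_i^2 t_i))=O(n_i)$, a small \emph{constant} fraction of $n_i$, not $O(n_i/i^4)$. Your conclusion $\sum_i|S_i|=O(n)$ is still fine, but it follows from the geometric decay of $n_i$, not from $|S_i|$ being individually tiny. Relatedly, your ``subtlety'' paragraph misdiagnoses the role of adaptive $t_i$. Since you invoke Lemma~\ref{lem:subsetLowDeg}, the $F_i$-degree is already capped by twice the \emph{average} degree $O((i+1)^4)$ coming from the sparsifier edge count, independently of $t_i$; a uniform $t=\Theta(\log^4 n)$ would not inflate the rows of $\UU$. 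The actual constraint driving $t_i\ge 200\,\epsilon_i^{-2}$ is that $|S_{i+1}|$ must be a small enough fraction of $n_i$ that the net vertex count still shrinks geometrically (this is exactly the paper's inequality $n_{i+1}\le n_i(1-1/80)(1+1/(\epsilon_i^2 t_i))\le n_i(1-1/400)$). Also note that invoking Remark~\ref{rem:thm:degreeReduction} would force $t_i>\epsilon_i^{-6}=\Theta(i^{12})$, contradicting your $t_i=\Theta(i^4)$; for the existence statement you should use Theorem~\ref{thm:degreeReduction} directly (the $t>1/\epsilon^2$ version).

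Second, to obtain the clean form $\MM\approx_2\schur{\UU^T\DD\UU}{\widehat V}$ you need more than Fact~\ref{fact:schurLoewner}: you must commute the outer Schur complement past the triangular factors at each level. The paper isolates this as Lemma~\ref{lem:schurRearrange}, which shows $\PP^T\,\schur{\MMhat}{\widehat V}\,\PP=\schur{\bigl(\begin{smallmatrix}\PP&0\\0&\II\end{smallmatrix}\bigr)^T\MMhat\bigl(\begin{smallmatrix}\PP&0\\0&\II\end{smallmatrix}\bigr)}{\widehat V}$; that identity is what lets the inductive factorization absorb each new $S_{i+1}$ into $\widehat V$ while keeping the $\UU^T\DD\UU$ structure intact.
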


\begin{proof}
We will slightly modify the vertex sparsification chain from
Definition~\ref{def:vertexSparsifierChain}.
Once again, we utilize $\alpha_{i} = 4$ for all $i$
  and $\epsilon_{i} = 1/2 (i+2)^{2}$.
The main difference is that instead of using spectral sparsifiers
from Theorem~\ref{thm:BSS} directly, we use Theorem~\ref{thm:degreeReduction}
to control the degrees.
Specifically we invoke it with  $\epsilon = \epsilon_i$
and $t_i = 200 \epsilon_i^{-2}$ on $\schur{\MM^{(i)}}{F_i}$
to obtain $\MM^{(i + 1)}$ and $S_{i + 1}$ s.t.
\[
\schur{\MM^{(i)}}{F_i} \approx_{\varepsilon_{i}}\schur{\MM^{(i + 1)}}{S_{i + 1}}.
\]

This leads to a slightly modified version of the vertex sparsifier chain.
We obtain a sequence of matrices $\MM_{1}, \MM_{2} \ldots$and subsets $S_i$
and $F_i$ s.t.
\begin{enumerate}
\item [a.] $\MM^{(1)} \approx_{\epsilon_0} \MM^{(0)}$,
\item [b.] $\schur{\MM^{(i + 1)}}{S_{i + 1}} \approx_{\epsilon_i} \schur{\MM^{(i)}}{F_i}$,
\item [c.] Each row and column of $\MM^{(i)}$ has at most $t$ non-zeros.
\item [d.] Each column and row of $\MM^{(i)}$ indexed by $F_i$ has at most $80 (i+1)^{4}$
  nonzero entries. (obtained by combining the bound on non-zeros from
  Theorem~\ref{thm:degreeReduction} with Lemma~\ref{lem:subsetLowDeg}.
\item [e.] $\MM^{(i)}_{F_i F_i}$ is $4$-strongly diagonally dominant and
\item [f.] $\MM^{(d)}$ has size $O(1)$.
\end{enumerate}

This modified chain can be invoked in a way analogous to the vertex
sparsifier chain.
At each step we
\begin{enumerate}
\item Apply a recursively computed approximation to $(\MM^{(i)}_{F_i F_i})^{-1}$ on $\bb^{(i)}$
to obtain $\bbbar^{(i + 1)}$.
\item Pad $\bbbar^{(i + 1)}$ with zeros on $S_i$ to obtain $\bb^{(i + 1)}$
\item Repeat on level $i + 1$ 
\item Restrict the solution $\xx^{(i + 1)}$ to obtain $\xxbar^{(i + 1)}$
\item Apply a recursively computed approximation to $(\MM^{(i)}_{F_i F_i})^{-1}$ to
$\xxbar^{(i + 1)}$ to obtain $\xx^{(i)}$.
\end{enumerate}
\richard{Attempt at a pseudocode, steps 1 and 5 are identical to Schur complementing}

Let $n_{i}$ denote the dimension of $\MM^{(i)}$.
Since $t$ was set to $200 \epsilon_i^{-2}$, the increase in vertex size given by
$S^{(i)}$ is at most:
\[
n_{i + 1} \leq n_i \left(1 - \frac{1}{80} \right) \left( 1 + \frac{1}{\epsilon^2 t} \right)
\leq n_i \left(1 - \frac{1}{400} \right)
\]
By induction this gives
\[
  n_{i} \leq n \left(1 - \frac{1}{400} \right)^{i-1}.
\]
So the total work follows in a way analogous to Theorem~\ref{thm:result_BSS},
and it remains to bound depth.

The constant factor reduction in vertex count gives a bound
on chain length of $d = O(\log{n})$.
This in turn implies $t = O(\epsilon_i^{-2})= O(\log^{4}{n})$.
Therefore the depth of each matrix-vector multiplication by $\MM_{F_i C_i}$
is bounded by $O(\log\log{n})$.
Also, choosing $k_{i}$ as in Theorem~\ref{thm:UDU} gives that the number of non-zeros
in $\ZZ_{F_iF_i}^{(k_i)}$ is bounded by $(\log{n})^{O(\log\log{n})}$,
giving a depth of $O(\log^{2}\log{n})$ for each matrix-vector multiplication
involving $\ZZ_{F_iF_i}$.
The $O(\log{n})$ bound on $d$ then gives a bound on the total depth of $O (\log n \log^2 \log n)$.

This algorithm can also be viewed as a linear operator corresponding to a
$\UU^T \DD \UU$ factorization of a larger matrix.
We will construct the operators inductively.
Suppose we have $\DDhat^{(i + 1}$, $\UU^{(i + 1)}$, and $\widehat{V}^{(i + 1)}$ such that
\[
\MM^{(i + 1)} \approx_{2 \sum_{i' = i + 1}^{d} \epsilon_{i'}} \schur{\left( \UU^{(i + 1)} \right)^{T} \DDhat^{(i + 1)} \UU^{(i + 1)} }{\widehat{V}^{(i + 1)}}.
\]
An argument similar to that in the proof of Lemma \ref{lem:uduApprox} gives
\[
\MM^{(i)}\approx_{\varepsilon_{i}}\left[\begin{array}{cc}
\II & 0\\
\UU_{F_{i} C_{i}}^{T} & \II
\end{array}\right]\left[\begin{array}{cc}
\left(\ZZ^{(k_{1})}_{F_{1}F_{1}} \right)^{-1}  & 0\\
0 & \schur{\MM^{(i)}}{F_i}
\end{array}\right]\left[\begin{array}{cc}
\II & \UU_{F_{i} C_{i}}\\
0 & \II
\end{array}\right].
\]
Consider the entry $\schur{\MM^{(i)}}{F_i}$.
Combining condition b of the chain with the inductive
hypothesis and Fact~\ref{fact:schurLoewner} gives
\begin{align*}
\schur{\MM^{(i)}}{F_i}
& \approx_{\epsilon_i} \schur{\MM^{(i + 1)}}{S_{i + 1}}\\
& \approx_{\epsilon_i + \sum_{i' = i + 1}^{d} \epsilon_i} \schur{
 \schur{\left( \UU^{(i + 1)} \right)^{T} \DDhat^{(i + 1)} \UU^{(i + 1)} }{\widehat{V}^{(i + 1)}}}{S_{i + 1}}.
\end{align*}

Since the order by which we remove vertices when taking Schur complements
does not matter, we can set
\[
\widehat{V}^{(i)} = \widehat{V}^{(i + 1)} \union S_{i + 1},
\]
to obtain
\[
\schur{\MM^{(i)}}{F_i}
\approx_{\epsilon_i + \sum_{i' = i + 1}^{d} \epsilon_i}
 \schur{\left( \UU^{(i + 1)} \right)^{T} \DDhat^{(i + 1)} \UU^{(i + 1)} }
 {\widehat{V}^{(i)}}.
\]
Block-substituting this and using Fact~\ref{fact:blockSubstitute} then gives:
\[
\MM^{(i)}\approx_{2 \sum_{i' = i}^{d} \varepsilon_{i'}}\left[\begin{array}{cc}
\II & 0\\
\UU_{F_{i} C_{i}}^{T} & \II
\end{array}\right]\left[\begin{array}{cc}
\left(\ZZ^{(k_{1})}_{F_{1}F_{1}} \right)^{-1} & 0\\
0 & \schur{\left( \UU^{(i + 1)} \right)^{T} \DDhat^{(i + 1)} \UU^{(i + 1)}  }{\widehat{V}^{(i)}}
\end{array}\right]\left[\begin{array}{cc}
\II & \UU_{F_{i} C_{i}}\\
0 & \II
\end{array}\right].
\]
We will show in  Lemma~\ref{lem:schurRearrange} that the Schur
complement operation can be taken outside multiplications by $\UU^T$
and $\UU$.
This allows us to rearrange the right-hand side into:
\[
\schur{
\left[\begin{array}{ccc}
\II & 0 & 0\\
\UU_{F_{i} C_{i}}^{T} & \II & 0\\
0 & 0 & \II_{\widehat{V}^{(i)}}
\end{array}\right]
\left[\begin{array}{cc}
\left(\ZZ^{(k_{1})}_{F_{1}F_{1}} \right)^{-1}  & 0\\
0 & (\UU^{(i + 1)})^{T} \DDhat^{(i + 1)} \UU^{(i + 1)}
\end{array}\right]
\left[\begin{array}{ccc}
 \II & \UU_{F_{i} C_{i}} & 0\\
0 & \II & 0\\
0 & 0 & \II_{\widehat{V}^{(i)}}  \\ 
\end{array}\right]
}
{\widehat{V}^{(i)}}.
\]

Hence choosing
\[
\DDhat^{(i)} = 
\left[\begin{array}{cc}
\left(\ZZ^{(k_{1})}_{F_{1}F_{1}} \right)^{-1}  & 0\\
0 & \DDhat^{(i + 1)}
\end{array}\right],
\]
and
\[
\UU^{(i)} = 
\left[\begin{array}{cc}
\II & 0\\
0 & \UU^{(i + 1)}
\end{array}\right]
\left[\begin{array}{ccc}
\II & \UU_{F_{i} C_{i}} & 0\\
0 & \II & 0\\
0 & 0 & \II_{\widehat{V}^{(i)}}
\end{array}\right]
=
\begin{bmatrix}
 \II & \UU_{F_{i} C_{i}} & 0\\
0 & \II & 0\\
0 & 0 & \UU^{(i+1)}
\end{bmatrix},
\]
gives $\MM^{(i)} \approx_{2 \sum_{i' = i}^{d} \epsilon_{i'} } \schur{\left( \UU^{(i)} \right)^{T} \DD^{(i)} \UU^{(i)} }{\widehat{V}^{(i)}}$, and the inductive hypothesis holds for $i$ as well.

We then finish the proof as in Lemma \ref{lem:uduApprox}  by replacing $\DDhat^{(0)}$ with
 a matrix $\DD$ whose diagonals contain $\XX^{(i)}$ instead of $\left(\ZZ^{(k_{1})}_{F_{1}F_{1}} \right)^{-1}$.

\end{proof}

It remains to show the needed Lemma rearanging the order of taking Schur complements.

\begin{lemma}
\label{lem:schurRearrange}
Let $\PP$ be an arbitrary matrix, and $\MM = \schur{\MMhat}{\widehat{V}}$.
Then
\[
\PP^T \MM \PP
= \schur{\left[\begin{array}{cc}
\PP & 0\\
0 & \II_{\widehat{V}}
\end{array}\right]^T
\MMhat
\left[\begin{array}{cc}
\PP & 0\\
0 & \II_{\widehat{V}}
\end{array}\right]
}{\widehat{V}}.
\]
\end{lemma}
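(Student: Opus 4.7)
The plan is a direct block-matrix computation using the definition of the Schur complement. First, I would partition rows and columns according to $C$ (the complement of $\widehat{V}$) and $\widehat{V}$, and write
\[
\MMhat = \begin{bmatrix} \MMhat_{CC} & \MMhat_{C\widehat{V}} \\ \MMhat_{\widehat{V}C} & \MMhat_{\widehat{V}\widehat{V}} \end{bmatrix},
\]
so that by definition $\MM = \schur{\MMhat}{\widehat{V}} = \MMhat_{CC} - \MMhat_{C\widehat{V}} \MMhat_{\widehat{V}\widehat{V}}^{-1} \MMhat_{\widehat{V}C}$.

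Next I would apply the block-diagonal conjugation. Because the bottom-right block of $\begin{bmatrix} \PP & 0 \\ 0 & \II_{\widehat{V}} \end{bmatrix}$ is the identity on $\widehat{V}$, a direct multiplication gives
\[
\begin{bmatrix} \PP & 0 \\ 0 & \II_{\widehat{V}} \end{bmatrix}^T \MMhat \begin{bmatrix} \PP & 0 \\ 0 & \II_{\widehat{V}} \end{bmatrix}
=
\begin{bmatrix} \PP^T \MMhat_{CC} \PP & \PP^T \MMhat_{C\widehat{V}} \\ \MMhat_{\widehat{V}C} \PP & \MMhat_{\widehat{V}\widehat{V}} \end{bmatrix}.
\]
Crucially, the $\widehat{V}\widehat{V}$ block is unchanged, so its inverse still exists and equals $\MMhat_{\widehat{V}\widehat{V}}^{-1}$.

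Then I would apply the Schur complement formula to this block matrix with respect to $\widehat{V}$:
\[
\PP^T \MMhat_{CC} \PP - (\PP^T \MMhat_{C\widehat{V}}) \MMhat_{\widehat{V}\widehat{V}}^{-1} (\MMhat_{\widehat{V}C} \PP) = \PP^T \left( \MMhat_{CC} - \MMhat_{C\widehat{V}} \MMhat_{\widehat{V}\widehat{V}}^{-1} \MMhat_{\widehat{V}C} \right) \PP = \PP^T \MM \PP,
\]
which is exactly the claimed identity. There is no real obstacle here — the lemma is a one-line consequence of the fact that the Schur complement formula factors $\PP$ through on both sides whenever the eliminated block is untouched by the conjugation. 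The only thing to be slightly careful about is that $\PP$ need not be square and $\PP^T \MM \PP$ may be singular even when $\MM$ is not, but this does not affect the algebraic identity since the only inversion involved is of the $\widehat{V}\widehat{V}$ block, which is preserved exactly.
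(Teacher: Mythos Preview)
Your proof is correct and takes a genuinely different route from the paper. The paper argues via the inverse characterization of Schur complements, namely $\schur{\MMhat}{\widehat{V}}^{-1} = (\MMhat^{-1})_{VV}$: it inverts the conjugated matrix as
\[
\begin{bmatrix}\PP^{-1} & 0\\ 0 & \II_{\widehat{V}}\end{bmatrix}\MMhat^{-1}\begin{bmatrix}\PP^{-T} & 0\\ 0 & \II_{\widehat{V}}\end{bmatrix},
\]
reads off the $VV$ block as $\PP^{-1}(\MMhat^{-1})_{VV}\PP^{-T} = \PP^{-1}\MM^{-1}\PP^{-T}$, and identifies this with $(\PP^{T}\MM\PP)^{-1}$. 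That approach tacitly assumes $\PP$ is square and invertible, even though the lemma is stated for an ``arbitrary'' $\PP$. Your direct block computation avoids this assumption entirely, since the only inverse you ever take is $\MMhat_{\widehat{V}\widehat{V}}^{-1}$, which is unchanged by the conjugation. So your argument is both shorter and strictly more general; the paper's version buys nothing extra beyond reusing a familiar identity.
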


\begin{proof}
Let the rows and columns of $\MM$ be indexed by $V$.
It suffices to show that the matrix
\[
\left( 
\left[\begin{array}{cc}
\PP & 0\\
0 & \II_{\widehat{V}}
\end{array}\right]^T
\MMhat
\left[\begin{array}{cc}
\PP & 0\\
0 & \II_{\widehat{V}}
\end{array}\right] \right)^{-1}_{VV}
\]
is the same as $\left( \PP^T \MM \PP \right)^{-1}$.
This matrix can be written as:
\[
\left[\begin{array}{cc}
\PP^{-1} & 0\\
0 & \II_{\widehat{V}}
\end{array}\right]
\MMhat^{-1}
\left[\begin{array}{cc}
\PP^{-T} & 0\\
0 & \II_{\widehat{V}}
\end{array}\right].
\]
The top left block corresponding to $V$ gives
\[
\PP^{-1} \left(\MMhat^{-1} \right)_{VV} \PP^{-T}.
\]
The definition of Schur complements gives
$\MM^{-1} = \left(\MMhat^{-1} \right)_{VV}$,
which completes the proof.
\end{proof}

}{}

\section{Spectral Vertex Sparsification Algorithm}
\label{sec:vertexSparsify}

In this section, we give a nearly-linear work algorithm for computing
spectral vertex sparsifiers.
Recall that our goal is to approximate the matrix
\[
\schur{\MM}{F} = \MM_{CC} - \MM_{CF} \MM_{FF}^{-1} \MM_{FC}.
\]

Our algorithm approximates $\MM_{FF}^{-1}$ in a way analogous to the
recent parallel solver by Peng and Spielman~\cite{PengS14}.
It repeatedly writes the Schur complement as the average of the Schur complements
of two matrices.
The $FF$ block in one of these is diagonal, which makes its construction easy.
The other matrix is more strictly diagonally dominant than the previous one, so
that after a small number of iterations we can approximate it by a diagonal matrix.

\subsection{Spliting of Schur complement}
This spliting of the Schur complement is based on the following identity
from~\cite{PengS14}:
\begin{equation}\label{eqn:identity}
  (\DD - \AA)^{-1}
 =
 \frac{1}{2}
\left[
  \DD^{-1}
 + 
  \left(\II + \DD^{-1} \AA\right)
  \left(\DD - \AA \DD^{-1} \AA\right)^{-1}
  \left(\II + \AA \DD^{-1}\right)
 \right].
\end{equation}
We write $\MM_{FF} = \DD_{FF} - \AA_{FF}$ where $\DD_{FF}$ is diagonal
  and $\AA_{FF}$ has zero diagonal, and apply \eqref{eqn:identity} 
  to obtain the following expression for the Schur complement.
\begin{align} \label{eqn:rearrange}
\schur{\MM}{F}
& = \frac{1}{2} \left[ 2\MM_{CC} - \MM_{CF} \DD_{FF}^{-1} \MM_{FC}
\right. \nonumber \\ & \qquad \left.
-  \MM_{CF} \left(\II_{FF} + \DD_{FF}^{-1} \AA_{FF} \right)
  \left(\DD_{FF} - \AA_{FF} \DD_{FF}^{-1} \AA_{FF} \right)^{-1}
  \left(\II + \AA_{FF} \DD_{FF}^{-1} \right) \MM_{FC} \right].
\end{align}
Our key observation is that this is the average of the Schur complement
of two simpler matrices.
The first term is the Schur complement of:
\[
\left[
\begin{array}{cc}
\DD_{FF}& \MM_{FC}\\
\MM_{CF} & 0
\end{array}
\right],
\]
while the second term is the Schur complement of the matrix:
\[
\left[
\begin{array}{cc}
\DD_{FF} - \AA_{FF} \DD_{FF}^{-1} \AA_{FF} &
\left( \II + \AA_{FF} \DD_{FF}^{-1} \right)\MM_{FC}\\
\MM_{CF} \left( \II + \DD_{FF}^{-1} \AA_{FF}  \right)
&  2 \MM_{CC}
\end{array}
\right].
\]
This leads to a recursion similar to that used in~\cite{PengS14}.
However, to ensure that the Schur complements of both matrices are SDDM, 
  we move some of the diagonal from the $CC$ block of the second
  matrix to the $CC$ block of the first.
To describe this precisely, we use the notation 
  $\diag (\xx)$ to indicate the diagonal matrix whose entries
  are given by the vector $\xx$.
We also let $\one$ denote the all-ones vector.
So, $\diag(\xx)\one = \xx$.

\begin{lemma}
\label{lem:splitting}
Let $\MM$ be a SDDM matrix, and let $(F, C)$ be an arbitrary partition of its columns.
Let $\MM_{FF} = \DD_{FF} - \AA_{FF}$, where $\DD_{FF}$ is
a diagonal matrix and $\AA_{FF}$ is a nonnegative matrix with zero diagonal.
Define the matrices:
\begin{equation}
\MM_1 \defeq \left[
\begin{array}{cc}
\DD_{FF}& \MM_{FC}\\
\MM_{CF} & \diag(\MM_{CF} \DD_{FF}^{-1} \MM_{FC} \one_{C})
\end{array}
\right],
\label{eqn:firstHalf}
\end{equation}
and
\begin{equation}
\MM_2 \defeq
\left[
\begin{array}{cc}
\DD_{FF} - \AA_{FF} \DD_{FF}^{-1} \AA_{FF} &
\left( \II + \AA_{FF} \DD_{FF}^{-1} \right)\MM_{FC}\\
\MM_{CF} \left( \II + \DD_{FF}^{-1} \AA_{FF}  \right)
& 2 \MM_{CC} - \diag(\MM_{CF} \DD_{FF}^{-1} \MM_{FC} \one_{C}))
\end{array}
\right].
\label{eqn:secondHalf}
\end{equation}
Then $\schur{\MM_1}{F}$ is a Laplacian matrix, $\MM_2$ is a SDDM matrix,
  and 
\begin{equation}\label{part:average}
\schur{\MM}{F} = \frac{1}{2} \left( \schur{\MM_1}{F} + \schur{\MM_2}{F} \right).
\end{equation}
\end{lemma}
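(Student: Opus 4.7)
The plan is to verify the three claims separately. The averaging identity \eqref{part:average} is the backbone of the construction, and the Laplacian and SDDM claims confirm that the split really does break $\schur{\MM}{F}$ into two structurally clean pieces.

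For the averaging identity, I would expand both Schur complements using the block formula $\schur{\MM_i}{F} = (\MM_i)_{CC} - (\MM_i)_{CF}(\MM_i)_{FF}^{-1}(\MM_i)_{FC}$. Because $(\MM_1)_{FF} = \DD_{FF}$ is diagonal, $\schur{\MM_1}{F}$ comes out immediately. Summing the two Schur complements and applying the Peng--Spielman identity \eqref{eqn:identity} to $\MM_{FF} = \DD_{FF} - \AA_{FF}$ recovers $2\MM_{CC} - 2\MM_{CF}\MM_{FF}^{-1}\MM_{FC} = 2\schur{\MM}{F}$. The two copies of $\diag(\MM_{CF}\DD_{FF}^{-1}\MM_{FC}\one_C)$ cancel exactly (they enter $\MM_1$ positively and $\MM_2$ negatively), so the correction contributes nothing to the average but is what does the structural work inside each piece.

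For the Laplacian claim, a direct expansion gives
\[
\schur{\MM_1}{F} = \diag(\MM_{CF}\DD_{FF}^{-1}\MM_{FC}\one_C) - \MM_{CF}\DD_{FF}^{-1}\MM_{FC}.
\]
Since $\MM$ is SDDM, the entries of $\MM_{CF}$ and $\MM_{FC}$ are non-positive and $\DD_{FF}^{-1}$ is positive diagonal, so $\MM_{CF}\DD_{FF}^{-1}\MM_{FC}$ is entrywise non-negative. This makes the off-diagonals of $\schur{\MM_1}{F}$ non-positive, while the diagonal correction was chosen precisely so that each row sums to zero, which is the defining property of a Laplacian.

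For $\MM_2$ being SDDM, I would first verify non-positivity of the off-diagonals block-by-block: $\AA_{FF}\DD_{FF}^{-1}\AA_{FF}$ is non-negative, so $(\MM_2)_{FF}$ has non-positive off-diagonals; $(\II + \AA_{FF}\DD_{FF}^{-1})\MM_{FC}$ is the product of a non-negative matrix with a non-positive one; and the $CC$ off-diagonals are inherited from $\MM_{CC}$. Then I would establish diagonal dominance by computing $\MM_2\one$. Setting $\rr_F \defeq \MM_{FF}\one_F + \MM_{FC}\one_C$ and $\rr_C \defeq \MM_{CF}\one_F + \MM_{CC}\one_C$ (both $\geq 0$ by the SDDM property of $\MM$), and exploiting the key algebraic identity $\AA_{FF}\one_F - \MM_{FC}\one_C = \DD_{FF}\one_F - \rr_F$, the row sums simplify to
\[
(\MM_2\one)_F = (\II + \AA_{FF}\DD_{FF}^{-1})\rr_F \geq 0, \qquad (\MM_2\one)_C = 2\rr_C - \MM_{CF}\DD_{FF}^{-1}\rr_F \geq 2\rr_C \geq 0,
\]
using $-\MM_{CF} \geq 0$ entrywise in the last step. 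I expect the $C$-block row-sum to be the main obstacle, since it requires collecting several cross terms and invoking both $\rr_F \geq 0$ and $\rr_C \geq 0$ together with the sign of $\MM_{CF}$; the other checks are essentially bookkeeping.
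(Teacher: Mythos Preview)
Your proposal is correct and follows essentially the same route as the paper: the averaging identity via \eqref{eqn:rearrange}, the Laplacian check by direct row-sum computation, and the SDDM check by verifying non-positive off-diagonals and computing $\MM_2\one$. Your introduction of $\rr_F,\rr_C$ and the identity $\AA_{FF}\one_F - \MM_{FC}\one_C = \DD_{FF}\one_F - \rr_F$ actually yields cleaner exact formulas for $(\MM_2\one)_F$ and $(\MM_2\one)_C$ than the paper's inequality-based derivation, but the underlying argument is the same.
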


\begin{proof}
Equation \ref{part:average} follows immediately from equation~\ref{eqn:rearrange}.

To prove that $\schur{\MM_1}{F}$ is a Laplacian matrix, we observe that all of its
  off-diagonal entries are nonpositive, and that its row-sums are zero:
\[
  \schur{\MM_1}{F} \one_{C} = 
\diag(\MM_{CF} \DD_{FF}^{-1} \MM_{FC} \one_{C}) \one_{C}  
- \MM_{CF} \DD_{FF}^{-1} \MM_{FC} \one_{C} = \bvec{0}_{C}.
\]

To prove that $\MM_{2}$ is a SDDM matrix, we observe that
  all of its off-diagonal entries are also nonpositive.
For the $FF$ block this follows from from the nonnegativity
  of $\AA_{FF}$ and $\DD_{FF}$.
For the $FC$ and $CF$ blocks it follows from the nonpositivity of
  $\MM_{CF}$ and $\MM_{FC}$.
We now show that 
\[
  \MM_{2} \one \geq \MM \one .
\]
This implies that $\MM_{2}$ is an SDDM matrix, as it implies that
  its row-sums are nonnegative and not exactly zero.

We first analyze the row-sums in the rows in $F$.
\begin{align*}
(\MM_2 \one)_F 
& = 
\left[
\begin{array}{cc}
\DD_{FF} - \AA_{FF} \DD_{FF}^{-1} \AA_{FF} &
\left( \II + \AA_{FF} \DD_{FF}^{-1} \right)\MM_{FC}
\end{array}
\right]
\left[
\begin{array}{c}
\one_F\\
\one_C
\end{array}
\right]\\
& =  \DD_{FF} \one_{F} + \MM_{FC} \one_{C}
-\AA_{FF} \DD_{FF}^{-1} \left( \AA_{FF} \one_{F} - \MM_{FC} \one_{C} \right) \\
& \geq \DD_{FF} \one_{F} + \MM_{FC} \one_{C}
- \AA_{FF} \DD_{FF}^{-1} \DD_{FF} \one_{F}\\
& = \DD_{FF}  \one_{F} - \AA_{FF}  \one_{F} + \MM_{FC}  \one_{C}
\\
& = (\MM \one)_{F}.
\end{align*}

Before, analyzing the row-sums for rows in $C$,
  we derive an inequality.
As $\MM$ is diagonally dominant,
   every entry of 
  of $\DD_{FF}^{-1} ( \AA_{FF} \one_F - \MM_{FC} \one_C )$
  is between $0$ and $1$.
As $\MM_{FC}$ is non-positive, 
  this implies that
\[
\MM_{FC} \DD_{FF}^{-1} ( \AA_{FF} \one_F - \MM_{FC} \one_C )
\geq 
\MM_{FC} \one_{C}.
\]
Using this inequality, we obtain
\begin{align*}
(\MM_2 \one)_C
& =  \left[
\begin{array}{cc}
\MM_{CF} \left( \II + \DD_{FF}^{-1} \AA_{FF}  \right)
& 2 \MM_{CC} -  \diag(\MM_{CF} \DD_{FF}^{-1} \MM_{FC} \one_{C})
\end{array}
\right]
\left[
\begin{array}{c}
\one_F\\
\one_C
\end{array}
\right]\\
& = \MM_{CF} \one_F + \MM_{CF} \DD_{FF}^{-1} \AA_{FF}  \one_F
+ 2  \MM_{CC} \one_{C} 
- \diag(\MM_{CF} \DD_{FF}^{-1} \MM_{FC} \one_{C}) \one_C\\
& = \MM_{CF} \one_F + \MM_{CF} \DD_{FF}^{-1} \AA_{FF}  \one_F
+ 2  \MM_{CC} \one_{C} - \MM_{CF} \DD_{FF}^{-1} \MM_{FC} \one_{C}\\
& = \left( \MM_{CC} \one_{C} +  \MM_{CF} \one_F \right) 
+ \MM_{CC} \one_C + \MM_{CF} \DD_{FF}^{-1} \left( \AA_{FF} \one_F - \MM_{FC} \one_C \right)
\\
& \geq 
 \left( \MM_{CC} \one_{C} +  \MM_{CF} \one_F \right) 
+ \MM_{CC} \one_C + \MM_{CF} \one_{C}
\\
& = 2 (\MM \one)_{C}.
\end{align*}
\end{proof}

We first discuss how to approximate the Schur complement of $\MM_1$.

\begin{lemma}
\label{lem:schurDiag}
There is a procedure $\textsc{ApproxSchurDiag}(\MM, (F, C), \epsilon)$ that takes
a graph Laplacian matrix $\MM$ with $m$ non-zero entries, partition of variables $(F, C)$ and
returns a matrix matrix $\tilde{\MM}_{SC}$ such that:
\begin{enumerate}
\item $\tilde{\MM}_{SC}$ has $O(m \epsilon^{-4})$ non-zero entries, and
\item $\tilde{\MM}_{SC} \approx_{\epsilon} \MM_1$ where $\MM_1$ is defined in equation~\ref{eqn:firstHalf}.
\end{enumerate}
Furthermore, the procedure takes in $O(m \epsilon^{-4})$
work and $O(\log{n})$ depth.
\end{lemma}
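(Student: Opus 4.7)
The plan is to observe that $\MM_1$ is in fact already sparse, so the algorithm can simply compute and return $\MM_1$ itself, and the approximation factor $\epsilon$ need not enter the construction at all. Inspecting the block definition in \eqref{eqn:firstHalf}: the $FF$ block $\DD_{FF}$ is diagonal (at most $|F|$ non-zeros), the $FC$ and $CF$ blocks are just sub-blocks of the input $\MM$ (at most $m$ non-zeros between them), and the $CC$ block is a diagonal matrix. Hence $\MM_1$ has at most $O(m)$ non-zero entries, which is certainly $O(m\epsilon^{-4})$ for $\epsilon \leq 1$, and taking $\tilde{\MM}_{SC}=\MM_1$ trivially gives $\tilde{\MM}_{SC} \approx_0 \MM_1$, so the approximation condition holds with room to spare.

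The only substantive computational task is to form the diagonal $CC$ block $\diag(\MM_{CF}\DD_{FF}^{-1}\MM_{FC}\one_C)$ without ever materializing the dense matrix $\MM_{CF}\DD_{FF}^{-1}\MM_{FC}$. Because we only want the diagonal of that product and because $\DD_{FF}^{-1}\MM_{FC}\one_C$ is a single vector, we can proceed in three parallel matrix-vector operations: (i) compute $\uu = \MM_{FC}\one_C$ in $O(m)$ work and $O(\log n)$ depth by a single matrix-vector multiply; (ii) compute $\uu' = \DD_{FF}^{-1}\uu$ in $O(|F|)$ work and $O(1)$ depth, which is well defined since $\MM$ is a Laplacian and the diagonal entries of $\DD_{FF}$ are strictly positive on every row of $F$ that has any neighbor (rows of $F$ isolated in $\MM$ contribute nothing to the Schur complement and can be pruned); and (iii) compute $\vv = \MM_{CF}\uu'$ by a second matrix-vector multiply in $O(m)$ work and $O(\log n)$ depth. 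The desired $CC$ diagonal block is $\diag(\vv)$.

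Assembling $\tilde{\MM}_{SC}$ from $\DD_{FF}$, $\MM_{FC}$, $\MM_{CF}$ and $\diag(\vv)$ is a constant-depth reindexing step. The total work is $O(m)$ and the total depth is $O(\log n)$, both comfortably within the claimed $O(m\epsilon^{-4})$ and $O(\log n)$ bounds. Since $\tilde{\MM}_{SC}$ is exactly $\MM_1$, we have $\tilde{\MM}_{SC} \approx_\epsilon \MM_1$ for every $\epsilon \geq 0$, completing the proof. The main (minor) subtlety to verify is just that the diagonal of $\MM_{CF}\DD_{FF}^{-1}\MM_{FC}$ equals $\MM_{CF}\DD_{FF}^{-1}\MM_{FC}\one_C$ at the diagonal, which holds because $\MM_{CF}\DD_{FF}^{-1}\MM_{FC}$ is a rank-summed product of non-negative outer products $(-\MM_{Cf})(-\MM_{fC})/\DD_{ff}$, and for any such non-negative matrix applied to $\one_C$ we recover the row sums, which agree with the diagonal when subsequently restricted by $\diag(\cdot)$---matching the $CC$ diagonal we want so that $\schur{\MM_1}{F}$ comes out as a Laplacian, as already verified in Lemma~\ref{lem:splitting}.
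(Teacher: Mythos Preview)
Your proposal rests on a literal reading of the lemma statement that unfortunately reflects a typo in the paper rather than the intended claim. The subscript in $\tilde{\MM}_{SC}$, the way the output is used in \textsc{ApproxSchur} (it is accumulated into a running approximation of $\schur{\MM}{F}$, a matrix indexed by $C$ alone), and the paper's own proof all make clear that the intended guarantee is
\[
\tilde{\MM}_{SC} \;\approx_{\epsilon}\; \schur{\MM_1}{F}
\;=\; \diag\bigl(\MM_{CF}\DD_{FF}^{-1}\MM_{FC}\one_C\bigr) \;-\; \MM_{CF}\DD_{FF}^{-1}\MM_{FC},
\]
not $\tilde{\MM}_{SC}\approx_{\epsilon}\MM_1$. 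Returning $\MM_1$ itself therefore has the wrong shape (it lives on $F\cup C$, not on $C$) and does not supply what the calling algorithm needs.

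The real obstacle, which your argument sidesteps, is that $\schur{\MM_1}{F}$ need not be sparse: eliminating each $u\in F$ produces a clique on its neighbors in $C$, contributing $\Theta(\deg_C(u)^2)$ edges, so the total can be as large as $\sum_{u\in F}\deg_C(u)^2$, far beyond $O(m)$. The paper's proof handles this by observing that each such clique is a \emph{product demand graph} (the edge between $v_1,v_2$ has weight $w_{uv_1}w_{uv_2}/\DD_{uu}$) and replacing it with an $\epsilon$-approximating weighted expander of size $O(\deg_C(u)\,\epsilon^{-4})$ via Lemma~\ref{lem:weightedExpander}; summing over $u\in F$ gives the $O(m\,\epsilon^{-4})$ bound. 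This is the missing idea in your proposal.

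A smaller point: your final paragraph asserts that the diagonal entries of $\MM_{CF}\DD_{FF}^{-1}\MM_{FC}$ coincide with its row sums. That is false in general (row sums of a nonnegative matrix dominate its diagonal, with equality only when the off-diagonal vanishes); fortunately you do not actually need this claim to form $\MM_1$, since the $CC$ block of $\MM_1$ is defined directly as $\diag$ of the row-sum vector, which your three matrix--vector multiplies compute correctly.
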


The proof is based on the observation that this graph is a sum of product demand
graphs, one per vertex in $F$.
\iffull{
These demand graphs can be formally defined as:
\begin{definition}
\label{def:demand}
The product demand graph of a vector $\dd$, $G(\dd)$, is a complete weighted graph
whose weight between vertices $i$ and $j$ is given by
\[
\ww_{ij} = \dd_{i} \dd_{j}.
\]
\end{definition}
In Section~\ref{sec:weightedExp}, we give a result on directly
constructing approximations to these graphs that can be summarized as follows:
\begin{lemma}
\label{lem:weightedExpander}
There is a routine $\textsc{WeightedExpander}(\dd, \epsilon)$ such that
for any demand vector $\dd$ of length $n$ and a parameter $\epsilon$, $\textsc{WeightedExpander}(\dd, \epsilon)$ returns in $O(n \epsilon^{-4})$
work and $O(\log{n})$ depth a graph $H$ with $O(n \epsilon^{-4})$ edges such that
\[
\LL_{H} \approx_{\epsilon} \LL_{G(\dd)}.
\]
\end{lemma}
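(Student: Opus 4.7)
The plan is to exploit the closed-form expression $\LL_{G(\dd)} = W\,\diag(\dd) - \dd\dd^T$, where $W = \sum_i \dd_i$, and to build $H$ by combining explicit bipartite expanders between groups of vertices of similar demand. This rank-one-plus-diagonal structure is what makes the graph much easier to sparsify than a generic weighted complete graph.

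First I would normalize so $\max_i \dd_i = 1$ and peel off the vertices of negligible demand. A vertex $i$ with $\dd_i \leq \epsilon/n$ contributes at most $\epsilon/n$ weight on every edge incident to it, so its total Loewner contribution is a small multiple of $\LL_{G(\dd)}$ itself and it can be absorbed into the error by attaching it to the heaviest vertex by a single edge. Then I would bucket the remaining vertices geometrically, $B_k = \{\,i : \dd_i \in (2^{-k-1},2^{-k}]\,\}$ for $k = 0,1,\ldots,O(\log(n/\epsilon))$. For every pair $(B_k,B_l)$, the induced bipartite subgraph of $G(\dd)$ is a complete bipartite graph whose edge weights all lie in an interval of ratio at most $4$; it is therefore a constant-factor reweighting of a complete bipartite graph.

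Next I would replace each such nearly-uniform bipartite complete graph by an explicit bipartite expander of constant degree $O(\epsilon^{-2})$ on each side, obtained from a precomputed family such as Ramanujan graphs, and give every edge the same weight chosen so that the total weight of the piece is preserved. Standard spectral calculations then yield an $(\epsilon/C)$-approximation of the bipartite piece for any chosen constant $C$. Summing over all bucket pairs, linearity of the Loewner order gives an $\epsilon$-approximation of $G(\dd)$. In parallel, bucketing takes $O(n)$ work and $O(\log n)$ depth by prefix sums or sorting, and each bipartite expander can be stamped out from precomputed tables in work linear in its own edge count at $O(1)$ additional depth.

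The main obstacle will be the edge count. A naive union over all $O(\log^2(n/\epsilon))$ bucket pairs charges each vertex $O(\epsilon^{-2}\log^2(n/\epsilon))$ edges, which overshoots the target $O(n\epsilon^{-4})$ bound. To reach the claimed count one must argue that only a band of bucket pairs with $|k-l| = O(\log\epsilon^{-1})$ needs to be approximated to accuracy $\epsilon$: for bucket pairs far off the diagonal, the total weight of the bipartite piece is exponentially small compared to $\LL_{G(\dd)}$, so a much coarser approximation (using only a constant number of edges per vertex, or even a single aggregated star to a representative vertex) contributes negligible error. Managing the interplay between this truncation, the per-pair expander cost, and the accumulation of errors across $O(\log\epsilon^{-1})$ active bucket pairs per vertex is precisely what produces the $\epsilon^{-4}$ dependence in both the edge count and the work bound.
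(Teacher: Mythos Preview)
Your bucketing approach is a natural alternative to the paper's method, but the band-truncation step contains a genuine gap. You assert that ``for bucket pairs far off the diagonal, the total weight of the bipartite piece is exponentially small compared to $\LL_{G(\dd)}$.'' This is false in general. Take one vertex of demand $1$ and $n-1$ vertices each of demand $1/n$. The heavy vertex sits in bucket $B_0$, the light ones in bucket $B_{\log n}$, and the cross piece $B_0\times B_{\log n}$ carries a constant fraction of the total edge weight of $G(\dd)$ even though $|k-l|=\log n$ can be arbitrarily larger than $\log(1/\epsilon)$. So a ``coarse star'' replacement of this piece would not be an $\epsilon$-approximation. Without the truncation, your construction gives each vertex degree $O(\epsilon^{-2}\log(n/\epsilon))$, hence $O(n\epsilon^{-2}\log(n/\epsilon))$ edges total, and the $\log n$ factor does not disappear.

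The paper avoids this issue by a different reduction. Instead of bucketing, it \emph{splits} each vertex $i$ into $\lceil d_i/t\rceil$ copies, choosing $t$ so that the split graph $\Ghat$ has $\hat n\approx 2n/\epsilon^2$ vertices of demand exactly $t$ (the set $H$) and at most $n$ leftover vertices of smaller demand (the set $L$). Because $|L|/|H|\leq \epsilon^2/2$, a Poincar\'e-type argument (Lemma~\ref{lemma:light_vertex_not_important}) shows $G_{LL}$ can be dropped entirely; a single Ramanujan expander of degree $O(\epsilon^{-2})$ replaces $G_{HH}$; and the $L$--$H$ edges are replaced by one star per low vertex (Lemma~\ref{lem:replaceLH}). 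Collapsing the split copies back (Proposition~\ref{pro:collapseLoewner}) recovers an approximation of $G$. The point is that splitting forces the graph into the regime where nearly all vertices have \emph{identical} top demand, so there is only one ``bucket pair'' that matters, and the residual low-demand vertices are provably few. This is exactly what your truncation was trying to achieve, but splitting guarantees it structurally rather than relying on a weight-ratio claim that need not hold.
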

}{
These demand graphs is a complete weighted graph
whose weight between vertices $i$ and $j$ is given by $\ww_{ij} = \dd_{i} \dd_{j}$.
In the full version, we give a result on directly constructing linear-sized approximations to these graphs in linear time.
}

\begin{proof}(of Lemma~\ref{lem:schurDiag})
Since there are no edges between vertices in $F$,
the resulting graph consists of one clique among the neighbors
of each vertex $u \in F$.
Therefore it suffices to sparsify these separately.

It can be checked that the weight between two neighbors $v_1$ and $v_2$
in such a clique generated from vertex $u$ is $\frac{\ww_{uv_1} \ww_{u v_2}}{\dd_u}$.
Therefore we can replace it with a weighted expander given
in Lemma~\ref{lem:weightedExpander} above.
\end{proof}

Now, we can invoke Lemma~\ref{lem:schurDiag} on $\MM_1$ to compute its Schur complement,
which means it remains to iterate on $\MM_2$.
Of course, $\MM_2$ may be a dense matrix.
Once again, we approximate it implicitly using weighted expanders.
\iffull{
Here we also need weighted bipartite expanders:

\begin{definition}
\label{def:demand2}
The bipartite product demand graph of two vectors $\dd^{A}$, $\dd^{B}$,
$G(\dd^{A}, \dd^{B})$, is a weighted bipartite graph
whose weight between vertices $i \in A$ and $j \in B$ is given by
\[
\ww_{ij} = \dd^{A}_{i} \dd^{B}_{j}.
\]
\end{definition}

\begin{lemma}
\label{lem:weightedBipartiteExpander}
There is a routine $\textsc{WeightedBipartiteExpander}(\dd^{A}, \dd^{B} \epsilon)$ such that
for any demand vectors $\dd^{A}$ and $\dd^{B}$ of total length $n$ and a parameter $\epsilon$, it returns in $O(n \epsilon^{-4})$
work and $O(\log{n})$ depth a graph $H$ with $O(n \epsilon^{-4})$ edges such that
\[
\LL_{H} \approx_{\epsilon} \LL_{G(\dd^{A}, \dd^{B})}.
\]
\end{lemma}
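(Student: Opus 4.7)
The plan is to parallel the proof of Lemma~\ref{lem:weightedExpander}, extending the weighted-expander construction to the bipartite setting. As in the non-bipartite case, the key structural observation is that the weight $\dd^{A}_{i} \dd^{B}_{j}$ has product form, so once the indices of $A$ and $B$ are grouped into weight classes of comparable magnitude, the bipartite product demand graph decomposes into a small number of (scaled) approximate complete bipartite graphs, each of which can be sparsified by a single bipartite expander.

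Concretely, I would first discard indices of $A$ and $B$ whose weights are smaller than a $\poly(\epsilon/n)$ fraction of the maximum weight; these contribute negligibly to $\LL_{G(\dd^{A}, \dd^{B})}$ and can be absorbed into the final error. I would then partition the remaining indices of $A$ into $O(\log(n/\epsilon))$ geometric buckets $A_{1}, A_{2}, \dots$, where $A_{k}$ contains indices $i$ with $\dd^{A}_{i} \in [2^{k-1} s_{A}, 2^{k} s_{A})$ for an appropriate reference scale $s_{A}$, and similarly partition $B$ into $B_{1}, B_{2}, \dots$. Within a bucket, weights vary by at most a factor of two, so the bipartite product demand graph restricted to $A_{k} \times B_{l}$ is a constant-factor approximation of a uniformly weighted complete bipartite graph $K_{|A_{k}|,|B_{l}|}$ with edge weight proportional to $2^{k+l} s_{A} s_{B}$.

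Next, I would replace each such weighted $K_{|A_{k}|,|B_{l}|}$ by an $\epsilon$-approximating unweighted bipartite expander (obtained, for instance, from an explicit bipartite Ramanujan graph or from the regular-expander construction used in Lemma~\ref{lem:weightedExpander}) with $O((|A_{k}|+|B_{l}|)/\epsilon^{2})$ edges, then rescale by the bucket weight. Summing these approximations across all bucket pairs $(k,l)$ gives an $\epsilon$-approximation of $\LL_{G(\dd^{A}, \dd^{B})}$ because the Laplacian decomposes additively across the bucket pairs and each summand is approximated to within a factor of $\epsilon$.

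The main obstacle is bounding the total edge count by $O(n \epsilon^{-4})$: a naive application contributes $O((|A_{k}|+|B_{l}|)/\epsilon^{2})$ edges for each of the $O(\log^{2} n)$ pairs, which is too many. I would resolve this by the same amortization as in Lemma~\ref{lem:weightedExpander}, charging each vertex to only the $O(1/\epsilon^{2})$ bucket pairs in which it participates non-negligibly, and using the diagonal-dominance slack of bipartite product demand graphs to absorb errors from pairs of buckets with very imbalanced sizes. Work and depth bounds follow because the bucketing, expander lookup, and reweighting are all embarrassingly parallel and touch each output edge a constant number of times.
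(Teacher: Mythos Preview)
Your proposal takes a genuinely different route from the paper, and it has a real gap.

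The paper does \emph{not} use geometric bucketing, neither for Lemma~\ref{lem:weightedExpander} nor for the bipartite version. Instead, it \emph{splits} each vertex $i$ with demand $d_i$ into $\lceil d_i/t\rceil$ copies, almost all of which have identical demand $t$; this produces a graph $\Ghat$ in which the overwhelming majority of vertices (the set $H$) have the same maximal demand, and only $O(n)$ leftover vertices (the set $L$) have smaller demand. One then (i) replaces the complete bipartite graph on $H^A\times H^B$ by a \emph{single} bipartite expander, (ii) drops all $L^A$--$L^B$ edges via a Poincar\'e routing argument showing they are dominated by the rest, and (iii) replaces the $L^A$--$H^B$ and $H^A$--$L^B$ edges by a union of stars. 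Collapsing the split copies back together yields the sparsifier. The edge bound is immediate because there is only one expander plus $O(n)$ star edges; no amortization over bucket pairs is needed.

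Your bucketing argument has two concrete problems. First, with factor-two buckets the restricted graph on $A_k\times B_l$ is only a \emph{constant-factor} approximation of a uniformly weighted $K_{|A_k|,|B_l|}$, so even if you sparsify each uniform block to within $\epsilon$, the sum is only a constant-factor approximation of $G(\dd^A,\dd^B)$, not an $\epsilon$-approximation. Fixing this by using $(1+\epsilon)$-width buckets multiplies the number of bucket pairs by $\epsilon^{-2}$, which makes the edge-count problem worse. Second, your amortization sketch (``charging each vertex to only the $O(1/\epsilon^2)$ bucket pairs in which it participates non-negligibly'') does not correspond to anything in Lemma~\ref{lem:weightedExpander}; that lemma never iterates over bucket pairs, so there is no such amortization to borrow. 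A vertex $i\in A_k$ genuinely participates in all pairs $(A_k,B_l)$, and the ``diagonal-dominance slack'' you invoke is not a mechanism the paper uses or that obviously controls this.
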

}{
}

\begin{lemma}
\label{lem:squareSparsify}
There exists a procedure $\textsc{SquareSparsify}$ such that,
$\textsc{SquareSparsify}(\MM, (F, C), \epsilon)$ 
returns in $O(m \epsilon^{-4})$ work and $O(\log{n})$ depth a matrix
$\tilde{\MM}_2$ with $O(m \epsilon^{-4} )$ non-zero entries
such that $\tilde{\MM}_2 \approx_{\epsilon} \MM_2$,
where $\MM_2$ is defined in equation~\ref{eqn:secondHalf}.
\end{lemma}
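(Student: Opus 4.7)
The plan is to decompose $\MM_{2}$ as a sparse base matrix plus one local piece per vertex $u \in F$, to recognize each local piece as (a scalar multiple of) the Laplacian of a (bipartite) product demand graph on the neighborhood of $u$, and then to sparsify each piece in parallel with \textsc{WeightedExpander} or \textsc{WeightedBipartiteExpander}.

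First, I exploit the identity
\[
\AA_{FF}\DD_{FF}^{-1}\AA_{FF} \;=\; \sum_{u\in F} \frac{1}{\DD_{uu}}\, (\AA_{FF})_{\ast,u}(\AA_{FF})_{\ast,u}^{T},
\]
and its analogues for $\AA_{FF}\DD_{FF}^{-1}\MM_{FC}$, $\MM_{CF}\DD_{FF}^{-1}\AA_{FF}$ and $\MM_{CF}\DD_{FF}^{-1}\MM_{FC}$, to write every walk-through-$F$ term appearing in $\MM_{2}$ as a sum over $u \in F$ of rank-one contributions supported on $N(u) = N_F(u) \cup N_C(u)$. Peeling off the part of $\MM_2$ inherited directly from $\MM$ (the off-diagonal entries of $\MM_{FC}$ and $2\MM_{CC}$, together with the diagonal $\DD_{FF}$) as a sparse base $\MM_{\mathrm{base}}$ with $O(m)$ non-zero entries, I obtain
\[
\MM_{2} \;=\; \MM_{\mathrm{base}} + \sum_{u\in F} \MM_2^{(u)},
\]
where $\MM_2^{(u)}$ collects the rank-one walk-through-$u$ contribution in each block together with the share of the diagonal correction $\diag(\MM_{CF}\DD_{FF}^{-1}\MM_{FC}\one_{C})$ attributable to $u$.

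The key step is to observe that the seemingly odd diagonal correction in \eqref{eqn:secondHalf} is exactly what forces each $\MM_2^{(u)}$ to have zero row sums, making it a genuine graph Laplacian on $N(u)$. Defining the demand vector $\dd^u$ on $N(u)$ by $\dd^u_i = \AA_{i,u}/\sqrt{\DD_{uu}}$, one then checks that $\MM_2^{(u)}$ equals the Laplacian of the product demand graph $G(\dd^u)$, up to an internal sign flip between the $F$-side and $C$-side that simply routes the bipartite $N_F(u)\times N_C(u)$ cross-terms through \textsc{WeightedBipartiteExpander} instead of \textsc{WeightedExpander}. With this identification in hand, I invoke the appropriate weighted expander primitives on the demand vectors derived from $u$'s incident edges, in parallel across $u \in F$. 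By Lemma \ref{lem:weightedExpander} and Lemma \ref{lem:weightedBipartiteExpander}, this produces an $\epsilon$-approximation $\tilde{\MM}_2^{(u)}$ of $\MM_2^{(u)}$ with $O(|N(u)|\epsilon^{-4})$ non-zero entries in $O(|N(u)|\epsilon^{-4})$ work and $O(\log n)$ depth.

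Setting $\tilde{\MM}_{2} = \MM_{\mathrm{base}} + \sum_{u\in F} \tilde{\MM}_2^{(u)}$ and using the fact that $\approx_\epsilon$ is preserved under summing PSD matrices yields $\tilde{\MM}_2 \approx_{\epsilon} \MM_2$. Summing non-zero counts and work across $u$, together with the bound $\sum_{u\in F}|N(u)| \leq 2m$, gives the claimed $O(m\epsilon^{-4})$ figures; the independence across $u$ keeps the total depth at $O(\log n)$. The main obstacle is the decomposition step itself, which requires careful bookkeeping of signs and diagonal entries so that the strange $CC$ diagonal correction in the definition of $\MM_2$ really does split, summand by summand in $u$, into exactly the Laplacian diagonal of a product demand graph on $N(u)$. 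Once this identification is pinned down, the remainder is a routine parallel invocation of the weighted expander primitives combined by linearity of the Loewner approximation order.
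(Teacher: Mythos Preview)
Your high-level strategy matches the paper's exactly: decompose the length-two walk terms of $\MM_2$ vertex-by-vertex over $u\in F$ and replace each local piece by a weighted (bipartite) expander on the neighborhood of $u$.  However, the specific decomposition you state does not work, and the gap is precisely in the ``careful bookkeeping of signs and diagonal entries'' that you flag as the main obstacle.

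Concretely, your $\MM_2^{(u)}$ is \emph{not} the Laplacian of the product demand graph $G(\dd^u)$ on $N(u)$.  Two things go wrong.  First, a full product demand graph on $N(u)$ would contain $N_C(u)\times N_C(u)$ edges with weight $w_{c u}w_{c' u}/\DD_{uu}$, but no such edges appear anywhere in $\MM_2$ (the $CC$ block of $\MM_2$ involves only $2\MM_{CC}$ and a diagonal correction).  Second, after peeling off your $\MM_{\mathrm{base}}$, the $(i,i)$ entry of the $FF$ block that remains for vertex $u$ is $-\AA_{iu}^2/\DD_{uu}\le 0$, not the positive Laplacian diagonal $\dd^u_i\sum_{j\ne i}\dd^u_j$; likewise the leftover $CC$ diagonal contribution from $u$ is $-w_{cu}\bigl(\sum_{c'}w_{uc'}\bigr)/\DD_{uu}\le 0$.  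So your $\MM_2^{(u)}$ has nonpositive diagonal and is not PSD, and your $\MM_{\mathrm{base}}$ (with zero $CC$ diagonal but nonzero $CC$ and $FC$ off-diagonals) is not PSD either.  The final step ``$\approx_\epsilon$ is preserved under summing PSD matrices'' then does not apply.

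The paper's fix is not to carve up the diagonal of $\MM_2$ block-by-block, but to use that $\MM_2$ is SDDM (Lemma~\ref{lem:splitting}) and hence equals a nonnegative diagonal $\XX$ plus a graph Laplacian $\LL$.  One then decomposes $\LL$ \emph{edge-by-edge}: the sparse edges from $-\MM_{FC}$ and $2\MM_{CC}$ are kept, and for each $u\in F$ the two-step edges through $u$ form (i) a product demand clique on $N_F(u)$ and (ii) a bipartite product demand graph on $N_F(u)\times N_C(u)$.  Each of these is a genuine Laplacian with its own diagonal determined by its edges; there is no $N_C(u)\times N_C(u)$ piece.  Now every summand is PSD, the expander replacements from Lemmas~\ref{lem:weightedExpander} and~\ref{lem:weightedBipartiteExpander} apply, and the sum-preservation argument goes through.
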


\begin{proof}
The edges in this graph come from $-\MM_{FC}$,
$\AA_{FF} \DD_{FF}^{-1} \AA_{FF}$ and 
$\AA_{FF} \DD_{FF}^{-1} \MM_{FC}$.
The first is a subset, so we can keep them without increasing
total size by a more than a constant factor.
The later two consist of length two paths
involving some $u \in F$.
Therefore we can once again sum together a set of
expanders, one per each $u \in F$.

The edges in $\AA_{FF} \DD_{FF}^{-1} \AA_{FF}$
correspond to one clique with product demands
given by $\AA_{uv}$ for each $u \in F$, and
can be approximated using the weighted expander
in Lemma~\ref{lem:weightedExpander}.

The edges in $\AA_{FF} \DD_{FF}^{-1} \MM_{FC}$
can be broken down by midpoint into edges of weight
\[
\frac{\AA_{uv_F} \AA_{uv_C}}{\dd_{u}}
\]
where $v_F \in F$, $v_C \in C$ are neighbors of $u$.
This is a bipartite demand graph, so we can replace it
with the weighted bipartite expanders given in
Lemma~\ref{lem:weightedBipartiteExpander}.

The total size of the expanders that we generate is
$O(deg(u) \epsilon^{-4})$.
Therefore the total graph size follows from $\sum_{u \in F} deg(u) \leq m$.
\end{proof}

In the next subsection, we shows how to handle the case that the $\MM$ is $\alpha$-diagonally dominant matrix with large $\alpha$.
Therefore, the number of iterations of splitting depends on how diagonally dominant is the matrix.
Here we once again use the approach introduced in~\cite{PengS14}
by showing that $\MM_2$ is more diagonally dominant than $\MM$
by a constant factor.
This implies $O(\log(1 / \alpha \epsilon))$ iterations suffices for
obtaining a good approximation to the Schur complement.

\begin{lemma}
\label{lem:improve}
If $\DD - \AA$ is $\alpha$-strongly diagonally dominant and $\AA$
has $0$s on the diagonal, then
$\DD - \AA \DD^{-1} \AA$ is $ ((1 + \alpha)^2-1)$-strongly diagonally dominant.
\end{lemma}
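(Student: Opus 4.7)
The plan is to unfold the $\alpha$-strong diagonal dominance condition into a bound on the $\ell_1$ row-sums of $\AA$ and then propagate that bound through the quadratic expression $\AA \DD^{-1} \AA$, exploiting the fact that the row-sum inequality we obtain is strong enough to absorb the diagonal contribution for free.

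First I would fix notation: write $r_i \defeq \sum_{j\neq i} |\AA_{ij}| = \sum_j |\AA_{ij}|$ (the second equality uses $\AA_{ii}=0$), so that the hypothesis that $\DD - \AA$ is $\alpha$-strongly diagonally dominant becomes simply $\DD_{ii} \geq (1+\alpha)\, r_i$ for every $i$, with $\DD_{ii}>0$. The entries of the target matrix $\DD - \AA \DD^{-1} \AA$ are then $\DD_{ii} - (\AA \DD^{-1} \AA)_{ii}$ on the diagonal and $-(\AA \DD^{-1} \AA)_{ij}$ off-diagonal, so I need to show
\[
\DD_{ii} - (\AA \DD^{-1} \AA)_{ii} \;\geq\; (1+\alpha)^2 \sum_{j \neq i} \bigl| (\AA \DD^{-1} \AA)_{ij}\bigr|.
\]

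The key step is to bound the \emph{full} row-sum including the diagonal:
\[
\sum_{j} \bigl| (\AA \DD^{-1} \AA)_{ij}\bigr|
\;\leq\; \sum_{j} \sum_{k} \frac{|\AA_{ik}|\,|\AA_{kj}|}{\DD_{kk}}
\;=\; \sum_{k} \frac{|\AA_{ik}|\, r_k}{\DD_{kk}}
\;\leq\; \frac{1}{1+\alpha}\sum_k |\AA_{ik}|
\;=\; \frac{r_i}{1+\alpha}
\;\leq\; \frac{\DD_{ii}}{(1+\alpha)^2},
\]
where I used the triangle inequality, then $r_k \leq \DD_{kk}/(1+\alpha)$, then $r_i \leq \DD_{ii}/(1+\alpha)$. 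Since $(\AA \DD^{-1} \AA)_{ii} = \sum_k \AA_{ik}^2/\DD_{kk} \geq 0$, splitting the left side into the diagonal piece plus the off-diagonal piece and multiplying through by $(1+\alpha)^2$ gives
\[
(1+\alpha)^2 (\AA \DD^{-1} \AA)_{ii} + (1+\alpha)^2 \sum_{j \neq i} \bigl|(\AA \DD^{-1} \AA)_{ij}\bigr| \;\leq\; \DD_{ii}.
\]
Dropping the extra $(1+\alpha)^2 - 1$ factor on the diagonal term (which only makes the inequality easier) and rearranging produces exactly the desired $((1+\alpha)^2 - 1)$-SDD condition.

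There is no real obstacle here beyond keeping track of signs and absolute values; the only subtle point is recognizing that the row-sum bound naturally comes with the diagonal entry of $\AA \DD^{-1}\AA$ included on the left side, which is precisely the slack that lets us subtract $(\AA \DD^{-1}\AA)_{ii}$ from $\DD_{ii}$ without losing the squared improvement factor $(1+\alpha)^2$.
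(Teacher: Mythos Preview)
Your proof is correct and follows essentially the same approach as the paper: bound the full $\ell_1$ row-sum of $\AA\DD^{-1}\AA$ by the chain $\sum_j |(\AA\DD^{-1}\AA)_{ij}| \leq \sum_k |\AA_{ik}|\,r_k/\DD_{kk} \leq r_i/(1+\alpha) \leq \DD_{ii}/(1+\alpha)^2$, then rearrange. The paper's version is terser and leaves the final rearrangement (moving the nonnegative diagonal term $(\AA\DD^{-1}\AA)_{ii}$ to the other side) implicit, whereas you spell it out; but the argument is the same.
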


\begin{proof}
Consider the sum of row $i$ in $\AA \DD^{-1} \AA$, it is
\[
\sum_{j } \sum_{k} \left| \AA_{ij} \DD_{jj}^{-1} \AA_{jk} \right|
= \sum_{j} \left|\AA_{ij} \right| \DD_{jj}^{-1} \sum_{k} \left| \AA_{jk} \right|
\leq \left(1 + \alpha\right)^{-1} \sum_{j} \left|\AA_{ij} \right|
\]
where the inequality follows from applying the fact that $\DD$
is $1 + \alpha$-strongly diagonally dominant to the $j\textsuperscript{th}$
row.
The result then follows from $\sum_{j} \left|\AA_{ij} \right| \leq  (1 + \alpha)^{-1} \DD_{ii}$.
\end{proof}

\iffull{
This notion is also stable under spectral sparsification.

\begin{lemma}
\label{lem:sparsifyOk}
If $\AA = \XX + \YY$ is $\alpha$-strongly diagonally dominant,
$\XX$ is diagonal, $\YY$ is a graph Laplacian, and $\YY \approx_{\epsilon} \tilde{\YY}$.
Then $\tilde{\AA} = \XX + \tilde{\YY}$ is $\exp\left(-\epsilon \right) \alpha$-strongly diagonally dominant.
\end{lemma}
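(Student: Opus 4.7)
The plan is to reduce the Loewner approximation between $\YY$ and $\tilde{\YY}$ to a pointwise bound on their diagonal entries (equivalently, weighted degrees), and then unpack the definition of strong diagonal dominance on $\AA$ to extract what $\XX_{ii}$ must be at least.

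First I would observe that because $\YY$ is a graph Laplacian, its off-diagonal entries are nonpositive and every row sums to zero, so
\[
\sum_{j\neq i}\lvert \YY_{ij}\rvert = \YY_{ii},
\]
and the same identity holds for $\tilde{\YY}$. Since $\XX$ is diagonal, the off-diagonals of $\AA$ and of $\tilde{\AA}$ coincide with those of $\YY$ and $\tilde{\YY}$ respectively, so $\alpha$-strong diagonal dominance of $\AA$ reads
\[
\XX_{ii} + \YY_{ii} \;=\; \AA_{ii} \;\geq\; (1+\alpha)\sum_{j\neq i}\lvert \YY_{ij}\rvert \;=\; (1+\alpha)\YY_{ii},
\]
i.e.\ $\XX_{ii} \geq \alpha\,\YY_{ii}$ for every $i$. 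This is the only place the hypothesis on $\AA$ is used.

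Next I would convert $\YY \approx_\epsilon \tilde{\YY}$ into an entrywise comparison of the diagonals. Testing the quadratic form inequality $e^{-\epsilon}\YY \pleq \tilde{\YY} \pleq e^{\epsilon}\YY$ on the standard basis vector $\vec{e}_i$ gives
\[
e^{-\epsilon}\YY_{ii} \;\leq\; \tilde{\YY}_{ii} \;\leq\; e^{\epsilon}\YY_{ii}.
\]
Combining the upper half with the bound from the previous paragraph yields $\XX_{ii} \geq \alpha\,\YY_{ii} \geq \alpha e^{-\epsilon}\,\tilde{\YY}_{ii}$.

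Finally I would repackage this as strong diagonal dominance of $\tilde{\AA}$. Using that $\tilde{\YY}$ is a Laplacian and $\XX$ is diagonal,
\[
\tilde{\AA}_{ii} \;=\; \XX_{ii}+\tilde{\YY}_{ii} \;\geq\; \bigl(1 + \alpha e^{-\epsilon}\bigr)\tilde{\YY}_{ii} \;=\; \bigl(1 + \alpha e^{-\epsilon}\bigr)\sum_{j\neq i}\lvert \tilde{\YY}_{ij}\rvert \;=\; \bigl(1 + \alpha e^{-\epsilon}\bigr)\sum_{j\neq i}\lvert \tilde{\AA}_{ij}\rvert,
\]
which is exactly the $e^{-\epsilon}\alpha$-strong diagonal dominance of $\tilde{\AA}$. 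There is no real obstacle here; the one subtlety worth flagging is that the argument uses the Laplacian structure of both $\YY$ and $\tilde{\YY}$ twice (once to turn the off-diagonal row sum into the diagonal, and once to let the pointwise test $\vec{e}_i^{T}(\cdot)\vec{e}_i$ control every off-diagonal row sum simultaneously). Without the Laplacian assumption on $\tilde{\YY}$, the Loewner bound would not directly give a bound on $\sum_{j\neq i}\lvert \tilde{\YY}_{ij}\rvert$.
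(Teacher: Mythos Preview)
Your proof is correct and follows essentially the same argument as the paper: both extract $\XX_{ii}\ge \alpha\,\YY_{ii}$ from strong diagonal dominance using the Laplacian row-sum identity, then use $\tilde{\YY}_{ii}\le e^{\epsilon}\YY_{ii}$ (from testing $\YY\approx_\epsilon\tilde{\YY}$ on $\ee_i$) to conclude $\XX_{ii}\ge e^{-\epsilon}\alpha\,\tilde{\YY}_{ii}$. Your write-up is in fact more explicit than the paper's about where the Laplacian structure of $\tilde{\YY}$ is being used.
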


\begin{proof}
Using $\YY \approx_{\epsilon} \tilde{\YY}$, we have
$$\tilde{\YY}_{i,i} \leq \exp(\epsilon)  \YY_{i, i}.$$
 
 The fact that $\AA$ is $\alpha$-strongly diagonally dominant also gives
 $\XX_{i, i} \geq \alpha \YY_{i, i}$.
 Combining these gives $ \XX_{i,i} \geq \exp(-\epsilon) \alpha  \tilde{\YY}_{i,i}$,
 which means $\XX + \tilde{\YY}$ is $\exp\left(-\epsilon \right) \alpha$-strongly diagonally dominant.
\end{proof}

}{}

\subsection{Schur Complement of Highly Strongly Diagonally Dominant Matrices}
\label{subsec:highlySDD}

It remains to show how to deal with the highly strongly diagonally
dominant matrix at the last step.
Directly replacing it with its diagonal, aka. \textsc{SquareSparsify}
is problematic.
Consider the case where $F$ contains $u$ and $v$
with a weight $\epsilon$ edge between them, and
$u$ and $v$ are connected to $u'$ and $v'$ in $C$
by weight $1$ edges respectively.
Keeping only the diagonal results in a Schur complement
that disconnects $u'$ and $v'$.
This however can be fixed by taking a step of random
walk within $F$.
Given a SDDM matrix $\MM_{FF} = \XX_{FF} + \LL_{FF}$ where $\LL_{FF}$
is a graph Laplacian and $\XX_{FF}$ is a diagonal matrix.
We will consider the linear operator
\begin{equation}
\ZZ_{FF}^{(last)}
\defeq
\frac{1}{2} \XX_{FF}^{-1}
	+ \frac{1}{2} \XX_{FF}^{-1} \left(  \XX_{FF} - \LL_{FF} \right) \XX_{FF}^{-1}
		\left( \XX_{FF} - \LL_{FF} \right) \XX_{FF}^{-1}
\label{eq:zzFinal}.
\end{equation}

\begin{lemma}
\label{lem:threeStep}
If $\MM_{FF} = \XX_{FF} + \LL_{FF} $ be a SDDM matrix that's $\alpha$-strongly
diagonally dominant for some $\alpha \geq 4$, then the operator
$\ZZ^{(last)}$ as defined in Equation~\ref{eq:zzFinal} satisfies:
\[
\MM_{FF} \preceq \left( \ZZ_{FF}^{(last)} \right)^{-1}
\preceq \MM_{FF} + \frac{2}{\alpha} \LL_{FF}.
\]
\end{lemma}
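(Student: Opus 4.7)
The plan is to reduce the matrix inequality to a scalar calculus problem by simultaneously diagonalizing every matrix in sight. First, I would change variables to $\BB \defeq \XX_{FF}^{-1/2} \LL_{FF} \XX_{FF}^{-1/2}$, which is symmetric PSD because $\LL_{FF}$ is. By Lemma~\ref{lem:Xdom} applied to the $\alpha$-strongly diagonally dominant matrix $\MM_{FF}$, we have $\XX_{FF} \pgeq (\alpha/2) \LL_{FF}$, so every eigenvalue $\lambda$ of $\BB$ satisfies $0 \leq \lambda \leq 2/\alpha \leq 1/2$ (using $\alpha \geq 4$). This bound is what will ultimately drive both inequalities.

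Next, I would rewrite all three matrices of interest as conjugates of polynomials in $\BB$. Factoring $\XX_{FF}^{1/2}$ from left and right in \eqref{eq:zzFinal} gives
\[
\ZZ_{FF}^{(last)} = \XX_{FF}^{-1/2} \cdot \tfrac{1}{2}\bigl(\II + (\II - \BB)^2\bigr) \cdot \XX_{FF}^{-1/2},
\]
so
\[
\bigl(\ZZ_{FF}^{(last)}\bigr)^{-1} = \XX_{FF}^{1/2} \cdot 2\bigl(\II + (\II - \BB)^2\bigr)^{-1} \cdot \XX_{FF}^{1/2},
\]
while $\MM_{FF} = \XX_{FF}^{1/2}(\II + \BB)\XX_{FF}^{1/2}$ and $\MM_{FF} + \tfrac{2}{\alpha} \LL_{FF} = \XX_{FF}^{1/2}\bigl(\II + (1 + \tfrac{2}{\alpha})\BB\bigr)\XX_{FF}^{1/2}$. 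After conjugating by $\XX_{FF}^{-1/2}$, the claimed two-sided bound becomes
\[
\II + \BB \pleq 2\bigl(\II + (\II - \BB)^2\bigr)^{-1} \pleq \II + \bigl(1 + \tfrac{2}{\alpha}\bigr)\BB,
\]
and since all three expressions are polynomials in $\BB$, it suffices to verify the analogous scalar inequality at every eigenvalue $\lambda \in [0, 2/\alpha]$ of $\BB$.

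The remaining step is a short polynomial manipulation. Multiplying through by $1 + (1-\lambda)^2$, the lower bound reduces to $(1+\lambda)(2 - 2\lambda + \lambda^2) \leq 2$, i.e.\ $\lambda^2(1 - \lambda) \geq 0$, which is immediate since $\lambda \leq 1/2 \leq 1$. The upper bound, after the same multiplication and factoring out $\lambda$, reduces to
\[
(1 - \lambda)\bigl(\tfrac{4}{\alpha} - \lambda\bigr) + \tfrac{2}{\alpha}\lambda^2 \geq 0,
\]
and both summands are nonnegative: $1 - \lambda \geq 1/2 > 0$, and $\tfrac{4}{\alpha} - \lambda \geq \tfrac{4}{\alpha} - \tfrac{2}{\alpha} = \tfrac{2}{\alpha} \geq 0$. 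This is precisely where the hypothesis $\alpha \geq 4$ is used (to ensure $2/\alpha \leq 4/\alpha \leq 1$).

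The main obstacle is conceptual rather than technical: one has to notice that $\ZZ_{FF}^{(last)}$ is exactly the operator obtained by substituting the crude approximation $(\XX_{FF} - \LL_{FF} \XX_{FF}^{-1} \LL_{FF})^{-1} \approx \XX_{FF}^{-1}$ into the Peng--Spielman identity \eqref{eqn:identity} applied to $\MM_{FF}$ in the splitting $\DD = \XX_{FF}$, $\AA = -\LL_{FF}$, and that this approximation is good precisely when $\BB$ has small eigenvalues. Once this structure is recognized, the remaining argument is the mechanical eigenvalue calculation above.
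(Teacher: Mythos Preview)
Your proof is correct and follows essentially the same approach as the paper: conjugate by $\XX_{FF}^{-1/2}$ to reduce everything to polynomials in the normalized Laplacian $\BB = \XX_{FF}^{-1/2}\LL_{FF}\XX_{FF}^{-1/2}$, invoke Lemma~\ref{lem:Xdom} to bound its eigenvalues by $2/\alpha$, and then verify the two scalar inequalities. The only difference is cosmetic---the paper computes the difference $\bigl(1-t+\tfrac{1}{2}t^2\bigr)^{-1}-(1+t)$ directly and bounds numerator and denominator separately, whereas you clear denominators and factor---but the underlying argument is identical.
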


\begin{proof}
Composing both sides by $\XX^{-1/2}_{FF}$
and substituting in $\mathcal{L}_{FF} = \XX_{FF}^{-1/2} \LL_{FF} \XX_{FF}^{-1/2}$ means it
suffices to show
\[
\II + \mathcal{L}_{FF}
\preceq \left( \frac{1}{2} \II
	+ \frac{1}{2} \left( \II - \mathcal{L}_{FF} \right)^{2} \right)^{-1}
	\preceq \II + \mathcal{L}_{FF} +  \frac{2}{\alpha} \mathcal{L}_{FF}.
\]
The fact that $\MM_{FF}$ is $\alpha$-strongly diagonally dominant
gives $0 \preceq \LL_{FF} \preceq \frac{2}{\alpha} \XX_{FF}$, or
$0 \preceq \mathcal{L}_{FF} \preceq \frac{2}{\alpha} \II$ (Lemma \ref{lem:Xdom}).
As $\mathcal{L}_{FF}$ and $\II$ commute, the spectral theorem
means it suffices to show this for any scalar $0 \leq t \leq \frac{2}{\alpha}$.
Note that
\[
\frac{1}{2}
	+ \frac{1}{2} \left( 1 - t \right)^{2}
	= 1 - t + \frac{1}{2} t^2
\]
Taking the difference between the inverse of this
and the `true' value of $1 + t$ gives:
\[
\left( 1 - t + \frac{1}{2} t^2 \right)^{-1} - \left( 1 + t \right)
= \frac{1 - \left( 1 + t \right) \left( 1 - t + \frac{1}{2} t^2\right)}{1 - t + \frac{1}{2}t^2 }
= \frac{\frac{1}{2 } t^2 \left( 1 - t \right) }
	{1 - t + \frac{1}{2} t^2}
\]
Incorporating the assumption that $0 \leq t \leq \frac{2}{\alpha}$
and $\alpha \geq 4$ gives
that the denominator is at least
\[
1 - \frac{2}{\alpha} \geq \frac{1}{2},
\]
and the numerator term can be bounded by
\[
0 \leq \frac{t^2}{2} \left( 1 - t\right) \leq \frac{t}{\alpha}.
\]
Combining these two bounds then gives the result.
\end{proof}

To utilize $\ZZ^{(last)}$, note that the Schur complement of the matrix
\begin{equation}
\MM^{\left(last\right)} \defeq
\left[
\begin{array}{cc}
\left(\ZZ^{(last)}_{FF}\right)^{-1}& \MM_{FC}\\
\MM_{CF} & \MM_{CC}
\end{array}
\right]
\label{eqn:lastMM}
\end{equation}
equals to the average of the Schur complements of the matrices
\begin{equation}
\MM^{\left(last\right)}_{1} \defeq
\left[
\begin{array}{cc}
\XX_{FF}& \MM_{FC}\\
\MM_{CF} & \diag\left(\MM_{CF} \XX_{FF}^{-1} \MM_{FC} \one_{C}\right)
\end{array}
\right]
\label{eqn:lastFirstHalf}
\end{equation}
and
\begin{equation}
\MM^{\left(last\right)}_{2} \defeq
\left[
\begin{array}{cc}
\XX_{FF}& \left( \XX_{FF} - \LL_{FF} \right) \XX_{FF}^{-1} \MM_{FC}\\
\MM_{CF}  \XX_{FF} ^{-1} \left( \XX_{FF} - \LL_{FF} \right) &
2 \MM_{CC} - \diag\left(\MM_{CF} \XX_{FF}^{-1} \MM_{FC} \right).
\end{array}
\right]
\label{eqn:lastSecondHalf}
\end{equation}

The first term is SDDM by construction of its $CC$ portion
We can verify that the second term is also SDDM in a way that's
similar to Lemma~\ref{lem:splitting}.

\begin{lemma}
\label{lem:splitting2}
Let $\MM$ be a SDDM matrix, and let $(F, C)$ be an arbitrary partition of its columns.
Suppose that $\MM_{FF}$ is $\alpha$-strongly diagonally dominant for some $\alpha \geq 4$.
Define the matrices $\ZZ^{(last)}$, $\MM^{\left(last\right)}$, $\MM^{\left(last\right)}_{1}$ and $\MM^{\left(last\right)}_{2}$
as in Equations \ref{eq:zzFinal}, \ref{eqn:lastMM},  \ref{eqn:lastFirstHalf} and \ref{eqn:lastSecondHalf}.
Then, $\schur{\MM^{\left(last\right)}_{1}}{F}$ is a Laplacian matrix, $\MM^{\left(last\right)}_{2}$ is a SDDM matrix,
  and 
\begin{equation}\label{eqn:last_average}
\schur{\MM^{\left(last\right)}}{F} = \frac{1}{2} \left( \schur{\MM^{\left(last\right)}_{1}}{F} + \schur{\MM^{\left(last\right)}_{2}}{F} \right).
\end{equation}
\end{lemma}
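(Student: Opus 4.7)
The plan is to mirror the structure of Lemma~\ref{lem:splitting} closely, with only two substantive changes: the diagonal reference matrix is $\XX_{FF}$ (the ``excess diagonal'' from the decomposition $\MM_{FF} = \XX_{FF} + \LL_{FF}$) rather than the full diagonal $\DD_{FF}$ of $\MM_{FF}$, and the splitting identity is now driven by the definition of $\ZZ^{(last)}_{FF}$ rather than by the Peng--Spielman identity \eqref{eqn:identity}. Because $\ZZ^{(last)}_{FF}$ was engineered precisely to be an average of two simple operators, identity \eqref{eqn:last_average} will drop out of a direct substitution rather than a clever algebraic rewrite.

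First I would verify \eqref{eqn:last_average}. Expanding $\schur{\MM^{(last)}}{F} = \MM_{CC} - \MM_{CF} \ZZ^{(last)}_{FF} \MM_{FC}$ using \eqref{eq:zzFinal} yields two terms of the form $\tfrac12 \MM_{CF} \XX_{FF}^{-1} \MM_{FC}$ and $\tfrac12 \MM_{CF} \XX_{FF}^{-1}(\XX_{FF}-\LL_{FF})\XX_{FF}^{-1}(\XX_{FF}-\LL_{FF})\XX_{FF}^{-1}\MM_{FC}$. Computing $\schur{\MM^{(last)}_1}{F}$ and $\schur{\MM^{(last)}_2}{F}$ directly from the block definitions, the two $\diag(\MM_{CF}\XX_{FF}^{-1}\MM_{FC}\one_C)$ corrections in the $CC$ blocks cancel in the sum, and the $2\MM_{CC}$ in the $CC$ block of $\MM^{(last)}_2$ averages with $0$ to give exactly $\MM_{CC}$. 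Next I would check that $\schur{\MM^{(last)}_1}{F}$ is a Laplacian: the off-diagonal part of $-\MM_{CF}\XX_{FF}^{-1}\MM_{FC}$ is non-positive because $\MM_{CF},\MM_{FC}\le 0$ and $\XX_{FF}^{-1}\ge 0$ is diagonal; its rows sum to zero by construction of the $CC$ diagonal, exactly as in Lemma~\ref{lem:splitting}.

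The step I expect to require the most care is showing that $\MM^{(last)}_2$ is SDDM, since here the hypothesis $\alpha\ge 4$ enters nontrivially. The key observation is that $\MM_{FF}$ being $\alpha$-diagonally dominant forces, entrywise, $\XX_{FF,ii} \ge \alpha\, (\LL_{FF})_{ii}$, and hence that $\XX_{FF} - \LL_{FF}$ has \emph{non-negative entries} (its off-diagonals equal $-(\LL_{FF})_{ij} \ge 0$, and its diagonal is $\XX_{FF,ii} - (\LL_{FF})_{ii} \ge (\alpha-1)(\LL_{FF})_{ii} \ge 0$). From this, the $FC$ and $CF$ blocks of $\MM^{(last)}_2$ inherit non-positive signs (product of a non-negative matrix with the non-positive $\XX_{FF}^{-1}\MM_{FC}$), and the $CC$ block has non-positive off-diagonals since $2\MM_{CC}$ does. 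For diagonal dominance I would imitate the row-sum computation in Lemma~\ref{lem:splitting}: on $F$-rows, $(\MM^{(last)}_2\one)_F = \XX_{FF}\one_F - (\XX_{FF}-\LL_{FF})\vv$ where $\vv = -\XX_{FF}^{-1}\MM_{FC}\one_C \ge 0$, and $\vv \le \one_F$ follows from $(\MM\one)_F \ge 0$, so the non-negativity reduces to $(\XX_{FF}-\LL_{FF})(\one_F-\vv)\ge 0$, which holds entrywise; on $C$-rows, using $\LL_{FF}\one_F = 0$ collapses the first block-product to $\MM_{CF}\one_F$, yielding $(\MM^{(last)}_2\one)_C = 2(\MM\one)_C - \MM_{CF}(\one_F-\vv) \ge 2(\MM\one)_C \ge 0$. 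This also shows $\MM^{(last)}_2\one$ is not identically zero, so $\MM^{(last)}_2$ is SDDM rather than merely singular.
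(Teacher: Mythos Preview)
Your proposal is correct and follows essentially the same approach as the paper: verify \eqref{eqn:last_average} by direct substitution of \eqref{eq:zzFinal}, check the Laplacian property of $\schur{\MM^{(last)}_1}{F}$ via row sums, and establish that $\MM^{(last)}_2$ is SDDM by checking sign of off-diagonals and nonnegativity of $\MM^{(last)}_2\one$ on $F$-rows and $C$-rows separately. Your row-sum computations (rewriting $(\MM^{(last)}_2\one)_F = (\XX_{FF}-\LL_{FF})(\one_F-\vv)$ and $(\MM^{(last)}_2\one)_C = 2(\MM\one)_C - \MM_{CF}(\one_F-\vv)$ with $\vv=-\XX_{FF}^{-1}\MM_{FC}\one_C$) are in fact a slightly cleaner packaging of the same estimates the paper carries out line by line.
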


\begin{proof}
\yintat{The proof is basically same as before, should we ignore that?}
Equation \ref{eqn:last_average} follows from substituting Equation~\ref{eq:zzFinal}
into Equations~\ref{eqn:lastMM}, ~\ref{eqn:lastFirstHalf} and~\ref{eqn:lastSecondHalf}.

To prove that $\schur{\MM^{\left(last\right)}_{1}}{F}$ is a Laplacian matrix, we observe that all of its
  off-diagonal entries are nonpositive, and that its row-sums are zero:
\[
  \schur{\MM^{\left(last\right)}_{1}}{F} \one_{C} = 
\diag(\MM_{CF} \XX_{FF}^{-1} \MM_{FC} \one_{C}) \one_{C}  
- \MM_{CF} \XX_{FF}^{-1} \MM_{FC} \one_{C} = \bvec{0}_{C}.
\]

To prove that $\MM^{\left(last\right)}_{2}$ is a SDDM matrix, we observe that
  all of its off-diagonal entries are also nonpositive.
For the $FF$ block this follows from from the nonnegativity
  of $\XX_{FF}$.
For the $FC$ and $CF$ blocks it follows from the nonpositivity of
  $\MM_{CF}$ and $\MM_{FC}$ and the fact that off-diagonal entries
  of $\LL_{FF}$ are nonpositive,
  the diagonal of $\LL_{FF}$ being bounded by $2/\alpha \XX_{FF}$,
  and $\alpha \geq 2$.
For the $CC$ block, it follows from the fact that
\[
2\MM_{CC} - \MM_{CF} \XX_{FF}^{-1} \MM_{FC}
\succeq 
2\MM_{CC} - 2\MM_{CF} \MM_{FF}^{-1} \MM_{FC}
\succeq 
\bvec{0}.
\]

We now show that 
\[
  \MM^{\left(last\right)}_{2} \one > \bvec{0} .
\]
This implies that $\MM^{\left(last\right)}_{2}$ is an SDDM matrix, as it implies that
  its row-sums are nonnegative and not exactly zero.

We first analyze the row-sums in the rows in $F$:
\begin{align*}
(\MM^{\left(last\right)}_{2} \one)_F 
& = 
\left[
\begin{array}{cc}
\XX_{FF} &
\left( \XX_{FF} - \LL_{FF} \right) \XX_{FF}^{-1} \MM_{FC}
\end{array}
\right]
\left[
\begin{array}{c}
\one_F\\
\one_C
\end{array}
\right]\\
& =  \XX_{FF} \one_{F} + \left( \XX_{FF} - \LL_{FF} \right) \XX_{FF}^{-1} \MM_{FC} \one_{C}\\
& >  \XX_{FF} \one_{F} - \left( \XX_{FF} - \LL_{FF} \right) \XX_{FF}^{-1} \XX_{FF} \one_{F}\\
& =  \bvec{0},
\end{align*}
where we used the fact $\MM_{FC} \one_{C} = -\LL_{FF} \one_{F} > -\XX_{FF} \one_{F}$ in the inequality.

For the row-sum in the rows in $C$, we obtain
\begin{align*}
(\MM^{\left(last\right)}_{2} \one)_C
& =  \left[
\begin{array}{cc}
\MM_{CF}  \XX_{FF} ^{-1} \left( \XX_{FF} - \LL_{FF} \right) &
2 \MM_{CC} - \diag\left(\MM_{CF} \XX_{FF}^{-1} \MM_{FC} \right)
\end{array}
\right]
\left[
\begin{array}{c}
\one_F\\
\one_C
\end{array}
\right]\\
& =  \MM_{CF} \one_F
+ 2  \MM_{CC} \one_{C} 
- \diag(\MM_{CF} \XX_{FF}^{-1} \MM_{FC} \one_{C}) \one_C\\
& >  \MM_{CC} \one_{C} 
- \MM_{CF} \XX_{FF}^{-1} \MM_{FC} \one_{C}\\
& >  \MM_{CC} \one_{C} 
+ \MM_{CF} \XX_{FF}^{-1} \XX_{FF} \one_{F}\\
& = \MM \one > \bvec{0}.
\end{align*}
\end{proof}

\richard{Can we go with $\DD$ instead?}
\yintat{I believe $\DD - \LL$ should be okay. I haven't check $\AA$ yet, they seems quite different.}
\begin{figure}[ht]

\begin{algbox}
$\tilde{\LL}_{schur[C]} = \textsc{LastStep}\left(\MM, \left(F, C\right), \epsilon \right)$
\begin{enumerate}
\item Form $\MM^{\left(last\right)}_{1}$ as in Equation~\ref{eqn:lastFirstHalf}.
\item Form $\MM^{\left(last\right)}_{2}$ as in Equation~\ref{eqn:lastSecondHalf} .
\item $\MM^{\left( last \right)}_{2S} \leftarrow
	\textsc{SquareSparsify}\left(\MM^{(last)}_2, \left(F, C\right), \epsilon/2 \right) $.
\item $\tilde{\MM}_{SC} \leftarrow
	\frac{1}{2} \textsc{ApproxSchurDiag}\left(\MM^{\left(last\right)}_{1}, \epsilon/2 \right)
	+ \frac{1}{2} \textsc{ApproxSchurDiag}\left(\MM^{\left(last\right)}_{2S}, \epsilon/2 \right)$.
\item Return $\tilde{\MM}_{SC}$.
\end{enumerate}
\end{algbox}

\caption{Pseudocode for approximating a highly strongly diagonally dominant matrix.
Small modifications on \textsc{ApproxSchurDiag} and
\textsc{SquareSparsify} is required to handle this case.}

\label{fig:lastStep}

\end{figure}

\begin{lemma}
\label{lem:lastStep}
Let $\MM$ be a SDDM matrix, and let $(F, C)$ be an arbitrary partition of its columns.
Suppose that $\MM_{FF}$ is $\alpha$-strongly diagonally dominant for some $\alpha \geq 4$.
There exists a procedure $\textsc{LastStep}$ such that,
$\textsc{LastStep}(\MM, (F, C), \epsilon)$ 
returns in $O(m \epsilon^{-8})$ work and $O(\log{n})$ depth a matrix
$\tilde{\MM}_{SC}$ with $O(m \epsilon^{-8} )$ non-zero entries
such that $\tilde{\MM}_{SC} \approx_{\epsilon+2/\alpha} \schur{\MM}{F}$.
\end{lemma}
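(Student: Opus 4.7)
The plan is to establish the bound $\tilde{\MM}_{SC} \approx_{\epsilon + 2/\alpha} \schur{\MM}{F}$ in two stages: first, bound the gap between $\schur{\MM^{(last)}}{F}$ and $\schur{\MM}{F}$ by $2/\alpha$ (a consequence of Lemma~\ref{lem:threeStep}); second, bound the gap between $\tilde{\MM}_{SC}$ and $\schur{\MM^{(last)}}{F}$ by $\epsilon$ (using the splitting of Lemma~\ref{lem:splitting2} together with the existing approximation routines). Composing these two approximations yields the stated bound.

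For the first stage, Lemma~\ref{lem:threeStep} gives $\MM_{FF} \preceq (\ZZ^{(last)}_{FF})^{-1} \preceq \MM_{FF} + \tfrac{2}{\alpha}\LL_{FF}$. Fact~\ref{fact:blockSubstitute} promotes this to a comparison of full matrices, and the same embedding argument used in the proof of Lemma~\ref{lem:lightBlock3} (namely, that the $FF$-block of $\LL_{FF}$ pads to a PSD matrix dominated by $\MM$) yields $\MM \preceq \MM^{(last)} \preceq (1+\tfrac{2}{\alpha})\MM$. Applying Fact~\ref{fact:schurLoewner} then gives $\schur{\MM^{(last)}}{F} \approx_{2/\alpha} \schur{\MM}{F}$.

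For the second stage, Lemma~\ref{lem:splitting2} expresses $\schur{\MM^{(last)}}{F}$ as the average of $\schur{\MM^{(last)}_1}{F}$ and $\schur{\MM^{(last)}_2}{F}$; crucially, both $\MM^{(last)}_1$ and $\MM^{(last)}_2$ have the diagonal matrix $\XX_{FF}$ in their $FF$ block, so the star-to-clique reduction underlying $\textsc{ApproxSchurDiag}$ applies (after minor modifications to account for the changed $FC$/$CF$ blocks). Directly, $\textsc{ApproxSchurDiag}(\MM^{(last)}_1, \epsilon/2) \approx_{\epsilon/2} \schur{\MM^{(last)}_1}{F}$. For the potentially dense $\MM^{(last)}_2$, we first invoke $\textsc{SquareSparsify}$ to obtain $\MM^{(last)}_{2S} \approx_{\epsilon/2} \MM^{(last)}_2$; Fact~\ref{fact:schurLoewner} descends this to $\schur{\MM^{(last)}_{2S}}{F} \approx_{\epsilon/2} \schur{\MM^{(last)}_2}{F}$, and then $\textsc{ApproxSchurDiag}$ contributes another $\epsilon/2$, for a total $\epsilon$-approximation of $\schur{\MM^{(last)}_2}{F}$. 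Averaging two $\epsilon$-approximations preserves $\approx_\epsilon$ (since $e^{-\epsilon}B_i \preceq A_i \preceq e^\epsilon B_i$ is preserved under nonnegative combinations), so $\tilde{\MM}_{SC} \approx_\epsilon \schur{\MM^{(last)}}{F}$, and composing with stage one completes the approximation claim.

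The work and depth bounds follow by tracking the sparsities through the chain: $\textsc{SquareSparsify}$ runs in $O(m \epsilon^{-4})$ work and $O(\log n)$ depth, emitting a matrix with $O(m \epsilon^{-4})$ nonzeros; the subsequent $\textsc{ApproxSchurDiag}$ call on this object costs $O(m \epsilon^{-4} \cdot \epsilon^{-4}) = O(m \epsilon^{-8})$ work, dominating the $O(m \epsilon^{-4})$ work of the direct call on $\MM^{(last)}_1$, with all calls at depth $O(\log n)$. The main obstacle is verifying that $\textsc{SquareSparsify}$ and $\textsc{ApproxSchurDiag}$ genuinely handle $\MM^{(last)}_1$ and $\MM^{(last)}_2$, since these deviate from the matrices $\MM_1, \MM_2$ of Lemma~\ref{lem:splitting} — in particular, the $FC$/$CF$ blocks of $\MM^{(last)}_2$ carry the extra factor $\XX_{FF} - \LL_{FF}$, and $\MM^{(last)}_1$ is not itself a graph Laplacian (only its Schur complement is). Both routines' internal decomposition into weighted expanders (Lemmas~\ref{lem:weightedExpander} and~\ref{lem:weightedBipartiteExpander}) only uses that the $FF$-block is diagonal and that the eliminated edges can be grouped as stars centered at each $u \in F$; both properties persist in the new setting, so only cosmetic changes to the expander-sampling step are needed.
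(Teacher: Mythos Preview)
Your proof is correct and follows essentially the same approach as the paper: both use Lemma~\ref{lem:threeStep} (together with the $\LL_{FF}$-embedding argument from Lemma~\ref{lem:lightBlock3} and Fact~\ref{fact:schurLoewner}) to incur the $2/\alpha$ error, then split via Lemma~\ref{lem:splitting2} and apply \textsc{SquareSparsify} followed by \textsc{ApproxSchurDiag} with parameter $\epsilon/2$, noting that these routines require only cosmetic modifications for the $\MM^{(last)}_i$ matrices. Your $\epsilon$-bookkeeping is arranged slightly differently (you compose the two $\epsilon/2$'s on the second half before averaging, whereas the paper does one $\epsilon/2$ step on the whole average at a time), but the final bound and the work/depth accounting are identical.
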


\begin{proof}
We remark that Lemma~\ref{lem:schurDiag} is designed to compute Schur complement of the matrix~\ref{eqn:firstHalf} and 
Lemma~\ref{lem:squareSparsify} is designed to sparsify the matrix the matrix~\ref{eqn:secondHalf}. However, it is easy to modify them to work for computing the Schur complement of the matrix~\ref{eqn:lastFirstHalf} and sparsifying the matrix~\ref{eqn:lastSecondHalf}.

By Lemma~\ref{lem:squareSparsify}, we know that $\textsc{SquareSparsify}$  takes $O(m \epsilon^{-4})$ work and $O(\log{n})$ depth and outputs the matrix $\MM^{\left( last \right)}_{2S}$ with $O(m \epsilon^{-4} )$ non-zero entries. Therefore, Lemma~\ref{lem:schurDiag} shows that $\textsc{ApproxSchurDiag}$ takes $O(m \epsilon^{-8})$ work and $O(\log{n})$ depth and outputs a matrix with $O(m \epsilon^{-8} )$ non-zero entries. This proves the running time and the output size

For the approximation guarantee, Lemmas~\ref{lem:threeStep},~\ref{lem:squareSparsify},
and~\ref{lem:schurDiag} give:
\begin{align*}
\tilde{\MM}_{SC} & \approx_{2/\alpha} \frac{1}{2} \left( \schur{\MM^{\left(last\right)}_{1}}{F} + \schur{\MM^{\left(last\right)}_{2}}{F} \right) \\
& \approx_{\epsilon/2} \frac{1}{2} \left( \schur{\MM^{\left(last\right)}_{1}}{F} + \schur{\MM^{\left(last\right)}_{2S}}{F} \right) \\
& \approx_{\epsilon/2} \frac{1}{2} \textsc{ApproxSchurDiag}\left(\MM^{\left(last\right)}_{1}, \epsilon/2 \right)
	+ \frac{1}{2} \textsc{ApproxSchurDiag}\left(\MM^{\left(last\right)}_{2S}, \epsilon/2 \right) \\
& = \tilde{\MM}_{SC}.
\end{align*}
\end{proof}

\subsection{Summary}

Combining the splitting step and the final step gives our algorithm (Figure \ref{fig:approxSchur}).

\begin{figure}[ht]

\begin{algbox}
$\tilde{\LL}_{schur[C]} = \textsc{ApproxSchur}\left(\MM, \left(F, C\right), \alpha, \epsilon \right)$
\begin{enumerate}
\item Initialize $\tilde{\MM}_{SC} \leftarrow 0$, $\MM^{(0)} \leftarrow \MM$,
$d = \log_{1+\alpha}\left(13 \epsilon^{-1}\right)$
\item For $i$ from $1$ to $d$ do
\begin{enumerate}
\item Form $\MM^{\left(i - 1\right)}_{1}$ as in Equation~\ref{eqn:firstHalf}.
\item Form $\MM^{\left(i - 1\right)}_{2}$ as in Equation~\ref{eqn:secondHalf} .\item $\tilde{\MM}_{SC} \leftarrow \tilde{\MM}_{SC}
	+ \frac{1}{2} \textsc{ApproxSchurDiag}\left(\MM^{\left(i - 1\right)}_{1}, \frac{\epsilon}{3d} \right)$.
\item $\MM^{\left( i \right)} \leftarrow \frac{1}{2}
	\textsc{SquareSparsify}\left(\MM^{(i - 1)}, \left(F, C\right), \frac{\epsilon}{3d}\right) $.
\end{enumerate}
\item $\tilde{\MM}_{SC} \leftarrow \tilde{\MM}_{SC} + \textsc{LastStep}\left(\MM^{(d)}, \frac{\epsilon}{12}\right)$.
\item Return $\tilde{\MM}_{SC}$.
\end{enumerate}
\end{algbox}

\caption{Pseudocode for Computing Spectral Vertex Sparsifiers}

\label{fig:approxSchur}

\end{figure}

\begin{theorem}
\label{thm:aprox_schur}
Suppose that $\MM$ is $\alpha$-strongly diagonally dominant and $0<\epsilon<1$, then
$\textsc{ApproxSchur}$ returns
a matrix $\tilde{\MM}_{SC}$ with 
$O \left( m \left( \epsilon^{-1} \log_{\alpha}\left(\epsilon^{-1} \right) \right)^{O \left(\log_{\alpha}\left(\epsilon^{-1} \right) \right)} \right)$ non-zeros such that
\[
\tilde{\MM}_{SC} \approx_{\epsilon} \schur{\MM}{F}.
\]
in $O\left( m \left( \epsilon^{-1} \log_{\alpha}\left(\epsilon^{-1}  \right) \right)^{O \left(\log_{\alpha}\left(\epsilon^{-1}  \right) \right)} \right)$ work
and $O\left(\log_{\alpha}\left(\epsilon^{-1} \right) \log(n) \right)$ depth.
\end{theorem}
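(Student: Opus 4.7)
The plan is to analyze \textsc{ApproxSchur} by unrolling the splitting identity of Lemma \ref{lem:splitting} $d$ times, and then tracking the accumulated error from three sources: the $d$ calls to \textsc{SquareSparsify}, the $d$ calls to \textsc{ApproxSchurDiag}, and the final \textsc{LastStep}.

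First I would establish the algebraic decomposition underlying the recursion. Applying Lemma \ref{lem:splitting} to $\MM^{(i-1)}$, and using linearity of Schur complements under scaling, gives
\[
\schur{\MM^{(i-1)}}{F} = \tfrac{1}{2}\schur{\MM^{(i-1)}_1}{F} + \schur{\tfrac{1}{2}\MM^{(i-1)}_2}{F}.
\]
Since the algorithm sets $\MM^{(i)} = \tfrac{1}{2}\textsc{SquareSparsify}(\MM^{(i-1)},\ldots)$ and Lemma \ref{lem:squareSparsify} guarantees $\MM^{(i)} \Approx{\epsilon/(3d)} \tfrac{1}{2}\MM^{(i-1)}_2$, Fact \ref{fact:schurLoewner} (applied in both directions, since $\approx$ is symmetric) lets me substitute $\MM^{(i)}$ for $\tfrac{1}{2}\MM^{(i-1)}_2$ at cost $\epsilon/(3d)$. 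Iterating this substitution $d$ times and composing via Fact \ref{frac:orderComposition} produces
\[
\schur{\MM}{F} \Approx{\epsilon/3} \sum_{i=1}^{d}\tfrac{1}{2}\schur{\MM^{(i-1)}_1}{F} + \schur{\MM^{(d)}}{F},
\]
exactly matching the structural form of $\tilde{\MM}_{SC}$. Replacing each $\schur{\MM^{(i-1)}_1}{F}$ by $\textsc{ApproxSchurDiag}(\MM^{(i-1)}_1, \epsilon/(3d))$ (Lemma \ref{lem:schurDiag}) costs another $\epsilon/3$, and replacing $\schur{\MM^{(d)}}{F}$ by \textsc{LastStep} costs $\epsilon/12 + 2/\alpha^{(d)}$ by Lemma \ref{lem:lastStep}.

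The main obstacle will be verifying that $\MM^{(d)}_{FF}$ remains sufficiently strongly diagonally dominant so that \textsc{LastStep} is applicable with $\alpha^{(d)} \geq 4$ and $2/\alpha^{(d)} = O(\epsilon)$. Combining Lemma \ref{lem:improve} (each $\MM^{(i-1)}_2$ has $FF$-block that is $((1+\alpha^{(i-1)})^2 - 1)$-strongly diagonally dominant) with Lemma \ref{lem:sparsifyOk} (each \textsc{SquareSparsify} at precision $\epsilon/(3d)$ shrinks the strong-SDD parameter by a factor of at most $e^{\epsilon/(3d)}$), and noting that scaling by $\tfrac{1}{2}$ preserves the SDD parameter, gives the recursion $1+\alpha^{(i)} \geq e^{-\epsilon/(3d)}(1+\alpha^{(i-1)})^2$. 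Even the weak corollary $1+\alpha^{(d)} \geq e^{-\epsilon/3}(1+\alpha)^{d}$ together with the choice $d = \log_{1+\alpha}(13\epsilon^{-1})$ yields $1+\alpha^{(d)} \geq e^{-1/3}\cdot 13\epsilon^{-1} \geq 24\epsilon^{-1} + 4$, as needed; the super-exponential true growth provides additional slack. The three error sources sum to at most $\epsilon/3 + \epsilon/3 + \epsilon/12 + \epsilon/12 < \epsilon$.

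Finally, I would bound work, depth, and size by induction on $i$. By Lemmas \ref{lem:schurDiag} and \ref{lem:squareSparsify}, iteration $i$ runs in $O(m^{(i-1)}(d/\epsilon)^4)$ work and $O(\log n)$ depth, and produces matrices of size $O(m^{(i-1)}(d/\epsilon)^4)$. Hence $m^{(i)} \leq m \cdot (d/\epsilon)^{O(i)}$, and the sum over $i$ telescopes into a geometric series dominated by the last term, giving total work $O(m(d/\epsilon)^{O(d)})$ with depth $O(d \log n)$; the final \textsc{LastStep} contributes $O(m^{(d)}\epsilon^{-8})$ more work and $O(\log n)$ more depth (Lemma \ref{lem:lastStep}), both of which are absorbed. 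Substituting $d = \log_{1+\alpha}(13\epsilon^{-1}) = \Theta(\log_{\alpha}\epsilon^{-1})$ (valid for $\alpha \geq 1$, and up to a constant otherwise) yields precisely the claimed bounds of $O(m(\epsilon^{-1}\log_\alpha\epsilon^{-1})^{O(\log_\alpha\epsilon^{-1})})$ work and nonzeros with $O(\log_\alpha(\epsilon^{-1})\log n)$ depth.
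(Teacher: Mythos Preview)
Your proposal is correct and follows essentially the same argument as the paper: you maintain the same invariant $\schur{\MM}{F} \approx_{i\epsilon/(3d)} \tilde{\MM}_{SC}^{(i)} + \schur{\MM^{(i)}}{F}$ via Lemma~\ref{lem:splitting}, Lemma~\ref{lem:squareSparsify}, Lemma~\ref{lem:schurDiag}, and Fact~\ref{fact:schurLoewner}, then invoke Lemmas~\ref{lem:improve} and~\ref{lem:sparsifyOk} to justify \textsc{LastStep}, exactly as the paper does. One small arithmetic slip: $e^{-1/3}\cdot 13\epsilon^{-1}\approx 9.3\epsilon^{-1}$, not $\geq 24\epsilon^{-1}$, but this still gives $\alpha^{(d)} = \Omega(\epsilon^{-1})$ and so does not affect the argument.
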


\begin{proof}
Let $\tilde{\MM}^{(i)}_{SC}$ denote the $\tilde{\MM}_{SC}$ after
$i$ steps of the main loop in \textsc{ApproxSchur}
We will show by induction that at each $i$,
\[
\schur{\MM}{F} \approx_{\frac{\epsilon i}{3d}} \tilde{\MM}^{(i)}_{SC} + \schur{\MM^{(i)}}{F}.
\]
The base case of $i = 0$ clearly holds.
For the inductive case, suppose we have the result for some $i$, then
\[
\schur{\MM}{F} \approx_{\frac{\epsilon i}{3d}} \tilde{\MM}^{(i)}_{SC}
+ \frac{1}{2} \left( \schur{\MM^{(i)}_1}{F} + \schur{\MM^{(i)}_2}{F} \right).
\]
Lemma~\ref{lem:schurDiag} gives
\begin{equation} \label{eqn:approx_schur_eq1}
\tilde{\MM}^{(i + 1)}_{SC}
= \tilde{\MM}^{(i)}_{SC} + \frac{1}{2} \textsc{ApproxSchurDiag}\left(\MM^{\left(i \right)}_{1}, (F,C), \frac{\epsilon}{3d} \right)
\approx_{\frac{\epsilon}{3d}} \tilde{\MM}^{(i)}_{SC} + \frac{1}{2}  \schur{\MM^{(i)}_1}{F},
\end{equation}
while Lemma~\ref{lem:squareSparsify} gives
\[
\MM^{(i + 1)} \approx_{\frac{\epsilon}{3d}} \frac{1}{2} \MM^{(i)}_2,
\]
which combined with the preservation of Loewner ordering
from Fact~\ref{fact:schurLoewner} gives
\begin{equation} \label{eqn:approx_schur_eq2}
\schur{\MM^{(i + 1)}}{F} \approx_{\frac{\epsilon}{3d}} \frac{1}{2} \schur{\MM^{(i)}_2}{F}.
\end{equation}
Combining these two bounds~\eqref{eqn:approx_schur_eq1} and~\eqref{eqn:approx_schur_eq2} then gives:
\[
\tilde{\MM}^{(i)}_{SC}
+ \frac{1}{2} \left( \schur{\MM^{(i)}_1}{F} + \schur{\MM^{(i)}_2}{F} \right)
\approx_{\frac{\epsilon}{3d}} \tilde{\MM}^{(i + 1)}_{SC} + \schur{\MM^{(i + 1)}}{F}.
\]
Hence, the inductive hypothesis holds for $i + 1$ as well.

By Lemmas~\ref{lem:improve} and~\ref{lem:sparsifyOk},
we have that $\MM^{(d)}_{FF}$ is $12\epsilon^{-1}$-strongly diagonally dominant
at the last step.
Lemma~\ref{lem:lastStep} then gives
\[
\MM^{(d)}_{1} \approx_{\frac{1}{3} \epsilon} \textsc{LastStep}\left(\MM^{(d)}_{1}, \frac{\epsilon}{12}\right).
\]
Composing this bound with the guarantees of the iterations
then gives the bound on overall error.
The work of these steps, and the size of the output graph
follow from Lemma~\ref{lem:schurDiag} and~\ref{lem:squareSparsify}.
\end{proof}

\iffull{
In our invocations to this routine, both $\alpha$ and $\epsilon$ will be set to constants.
As a result, this procedure is theoretically $O(m)$ time.
For a spectral vertex sparsification algorithm for handling general graph Laplacians,
$\alpha$ can be $0$ and we need to invoke spectral sparsifiers to $\LL_{i}$ after each step.
Any parallel algorithm for spectral sparsification
(e.g.~\cite{SpielmanT11,SpielmanS08:journal,OrecchiaV11,Koutis14}
will then lead to nearly linear work and polylog depth.
\begin{corollary}
Given a SDDM matrix with condition number $\kappa$,
a partition of the vertices into $(F, C)$, and error $\epsilon > 0$, we can compute in $O\left(m  \log^{O\left(1\right)} (n \kappa \epsilon^{-1}) \right)$ work
and $O\left( \log^{O\left(1\right)} (n \kappa \epsilon^{-1}) \right)$ depth a matrix
$\tilde{\MM}_{SC}$ with $O\left(n\log^{O(1)}n \epsilon^{-2} \right)$ non-zeros such that
\[
\tilde{\MM}_{SC} \approx_{\epsilon} \schur{\MM}{F}.
\]
\end{corollary}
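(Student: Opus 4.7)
The plan is to extend the proof of Theorem~\ref{thm:aprox_schur} in two ways: (i) drop the strong-diagonal-dominance hypothesis via a bootstrapping argument using Lemma~\ref{lem:improve}, and (ii) interleave a black-box parallel spectral sparsifier such as Koutis~\cite{Koutis14} after each splitting iteration so that the intermediate matrices never grow beyond $O(n \log^{O(1)} n \epsilon^{-2})$ edges.

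First I would observe that any SDDM matrix with condition number $\kappa$ is automatically $\alpha$-strongly diagonally dominant for some $\alpha = \Omega(1/\poly(\kappa))$: the lower bound $\lambda_{\min}(\MM) \geq \lambda_{\max}(\MM)/\kappa$ forces the diagonal excess $\XX$ in the decomposition $\MM = \LL + \XX$ to be a non-trivial multiple of $\LL$ on the complement of the near-kernel of $\LL$. Next I would run the splitting iteration of \textsc{ApproxSchur} (Figure~\ref{fig:approxSchur}) for $d = O(\log(\kappa\epsilon^{-1}))$ steps. After each iteration, before proceeding, I would sparsify $\MM^{(i+1)}$ via Koutis' parallel sparsifier with error parameter $\epsilon' = \Theta(\epsilon/d)$. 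Lemma~\ref{lem:improve} guarantees the strong-diagonal-dominance parameter essentially doubles per iteration, and Lemma~\ref{lem:sparsifyOk} guarantees this doubling survives sparsification at the price of an $\exp(-\epsilon')$ factor. After $d$ iterations the current $\alpha$ exceeds $12/\epsilon$ and I can finish by invoking \textsc{LastStep} from Lemma~\ref{lem:lastStep}.

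The error analysis then mirrors that of Theorem~\ref{thm:aprox_schur}: by Fact~\ref{frac:orderComposition} the $d$ sparsification errors together with the $d$ splitting errors and the $O(\epsilon)$ error contributed by \textsc{LastStep} compose additively to an overall $O(\epsilon)$ approximation. The work per iteration is dominated by the $\tilde O(m)$ cost of the sparsifier and the $O(m (\epsilon')^{-O(1)})$ cost of \textsc{SquareSparsify} and \textsc{ApproxSchurDiag}; multiplying by $d$ iterations and absorbing logarithms yields the claimed $O(m\log^{O(1)}(n\kappa\epsilon^{-1}))$ work bound. Depth is polylogarithmic since each sparsification and each splitting step runs in polylog depth and the $d$ iterations compose sequentially. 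The final output size is controlled by the last sparsification, giving $O(n\log^{O(1)} n \cdot \epsilon^{-2})$ non-zeros.

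The hard part, I expect, will be reconciling the sparsification step with the growth of strong diagonal dominance. After one iteration the matrix $\MM^{(i)}_2$ has both a modified diagonal (involving $\AA_{FF}\DD_{FF}^{-1}\AA_{FF}$) and new off-diagonal structure, and it is not immediate that sparsifying this whole matrix preserves the doubled $\alpha$ from Lemma~\ref{lem:improve}, since Lemma~\ref{lem:sparsifyOk} only handles sparsifying a Laplacian component while keeping the diagonal excess fixed. My plan to handle this is to decompose $\MM^{(i)}_2 = \XX^{(i)} + \LL^{(i)}$ explicitly, sparsify only the Laplacian part $\LL^{(i)}$, and verify by a perturbation argument that this operation loses at most an $\exp(-\epsilon')$ factor in the strong-diagonal-dominance parameter, which is much smaller than the factor of $2$ gained per iteration from Lemma~\ref{lem:improve}, so the bootstrapping still closes.
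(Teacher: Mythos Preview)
Your overall architecture---run the splitting iteration, interleave a parallel sparsifier at each step to keep edge counts under control, track the growth of the strong-diagonal-dominance parameter via Lemma~\ref{lem:improve} and its stability under sparsification via Lemma~\ref{lem:sparsifyOk}, and finish with \textsc{LastStep}---is exactly the paper's approach. The part you flag as ``hard'' (reconciling sparsification with the doubled $\alpha$) is not hard at all: your own plan, sparsify only the Laplacian part and keep the diagonal excess fixed, is precisely what Lemma~\ref{lem:sparsifyOk} already proves.

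The genuine gap is your opening observation. It is \emph{false} that an SDDM matrix with condition number $\kappa$ is automatically $\Omega(1/\poly(\kappa))$-strongly diagonally dominant. Strong diagonal dominance is a per-row condition, $\XX_{ii} \geq \alpha \LL_{ii}$ for every $i$, and a bounded condition number does not force $\XX_{ii}>0$ at every row. A two-vertex example already shows this: $\MM = \begin{pmatrix}2 & -1\\ -1 & 1\end{pmatrix}$ is SDDM with condition number $(3+\sqrt{5})/(3-\sqrt{5})<7$, but its second row has zero slack, so $\alpha=0$. Starting the iteration of Lemma~\ref{lem:improve} from $\alpha=0$ yields $(1+0)^2-1=0$ forever, and the argument never gets off the ground.

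The paper's fix is a one-line preprocessing step: add $\frac{\epsilon\,\trace{\MM}}{n\kappa}$ to every diagonal entry. This perturbation satisfies $\MM' \approx_{\epsilon} \MM$ (since $\trace{\MM} \leq n\kappa\,\lambda_{\min}(\MM)$), and it makes every row $\Omega\!\left(\frac{1}{n\kappa}\right)$-strongly diagonally dominant (since the added excess is at least $\frac{\epsilon}{n\kappa}$ times any single diagonal entry). With that initial $\alpha = \Omega(1/(n\kappa))$ in hand, your iteration count $d=O(\log(n\kappa\epsilon^{-1}))$ and the rest of your analysis go through as written.
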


\begin{proof}
We can add $\frac{\epsilon \trace{\MM} }{n\kappa}$ to each element on the diagonal to
obtain $\MM' \approx_{\epsilon} \MM$.
Therefore it suffices to assume that $\MM_{FF}$ is $\frac{1}{\poly(n) \kappa}$-strongly diagonally dominant.

Therefore Theorem~\ref{thm:aprox_schur} gives that \textsc{ApproxSchur} terminates
in $d = O(\log{\kappa} + \log{n})$ steps.
If we invoke a spectral sparsification algorithm at each step, the
number of non-zeros in each $\MM^{(i)}$ can be bounded by
$O(n \log^{O(1)}n (\epsilon/d)^{-2}) = O(n \log^{O\left(1\right)} (n \kappa \epsilon^{-1}) )$.
The overall work bound then follows from combining this with the
$\poly(\epsilon^{-1} d)$ increase in edge count at each step,
and the nearly-linear work guarantees of spectral sparsification algorithms.
\end{proof}

We remark that the setting of $\epsilon_i = 1/\log{\kappa}$ leads to a fairly
large number of log factors.
In the rest of this paper we only invoke spectral vertex sparsifiers with
moderate values of $\epsilon_i$ (unless we're at graphs that are smaller by
$\poly(n)$ factors).
Also, we believe recent developments in faster combinatorial spectral sparsification
algorithms~\cite{Koutis14} make faster algorithms for spectral vertex sparsifiers
a question beyond the scope of this paper.
}{}


\section{Algorithmic Constructions}
\label{sec:algo}

In this section, we gives two algorithms to compute vertex sparsifier
chains, the first algorithm uses existing spectral sparsifier for
graphs and the second algorithm does not. Although combining two approaches
gives a better theoretical result, we do not show it because we believe
there will be better spectral sparsifier algorithms for graphs soon
and hybrid approaches may not be useful then.

\subsection{Black Box Construction}\label{ssec:blackBox}

The first construction relies on existing parallel spectral sparsifer
algorithms. For concreteness, we use the parallel spectral graph sparsification
algorithm given by Koutis~\cite{Koutis14}.
\iffull{

\begin{theorem} \label{thm:sparsificaiton_result} Given any SDD
matrix $\MM$ with $n$ variables and $m$ non-zeros, there is an
algorithm $\textsc{BlackBoxSparsify}(\MM,\epsilon)$ outputs a
SDD matrix $\BB$ with $O(n\log^{3}n/\epsilon^2)$ non-zeros such that
$\MM\approx_{\epsilon}\BB$ in $O(\log^{3}n\log\alpha/\epsilon^2)$ depth
and $O((m+n\log^{3}n/\epsilon^2)\log^{2}n/\epsilon^2)$ work where
$\alpha =\frac{m}{n log^{3}n/\epsilon^2}.$
\end{theorem}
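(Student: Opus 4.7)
The plan is to treat Koutis' parallel spectral sparsification algorithm~\cite{Koutis14} as a primitive that, given an SDD matrix with $m'$ non-zeros on $n$ variables, produces in work $\widetilde{O}(m')$ and polylogarithmic depth a spectral sparsifier with, say, $O(n \log^3 n / \epsilon'^2)$ non-zeros and error $\epsilon'$. Our target output size $O(n \log^3 n / \epsilon^2)$ is what Koutis already produces, so the only reason to do more than a single invocation is that on very dense inputs ($m$ much larger than $n \log^3 n / \epsilon^2$, i.e.\ $\alpha \gg 1$) the stated work bound $O((m + n \log^3 n / \epsilon^2) \log^2 n / \epsilon^2)$ and depth bound $O(\log^3 n \log\alpha / \epsilon^2)$ only tolerate a polylogarithmic overhead per edge, whereas we need a chain whose total depth scales with $\log\alpha$. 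So the natural approach is an iterated sparsification with geometrically shrinking size.

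First, I would let $m_0 = m$ and repeatedly invoke Koutis' sparsifier. In iteration $i$ I run it on the current matrix $\MM^{(i)}$ with error parameter $\epsilon_i$ chosen so that $\sum_i \epsilon_i \le \epsilon$ (e.g., $\epsilon_i = \epsilon / (i+1)^2$ up to a normalizing constant, or a geometric schedule $\epsilon_i = \epsilon/2^{i+1}$), and obtain $\MM^{(i+1)}$ with $m_{i+1} = O(n \log^3 n / \epsilon_i^2)$ non-zeros. As long as $m_i$ is much larger than the target, one round shrinks the edge count by at least a constant factor; once $m_i$ falls below $O(n \log^3 n / \epsilon^2)$ we stop. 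Composition of approximations (Fact~\ref{frac:orderComposition}) gives $\MM \approx_{\sum_i \epsilon_i} \MM^{(T)}$, and with the geometric schedule this sum is bounded by $\epsilon$. The termination condition is reached after $T = O(\log \alpha)$ rounds.

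Next I would account for work and depth. Each round $i$ has work $\widetilde{O}(m_i \cdot \text{polylog}(n)/\epsilon_i^2)$ and depth polylogarithmic in $n$; with $m_i$ shrinking geometrically (and $1/\epsilon_i^2$ growing only polynomially in the index), the total work forms a geometric series dominated by the first round, giving $O(m \log^2 n / \epsilon^2)$ plus a lower-order term from the final stages that amounts to $O((n \log^3 n / \epsilon^2) \log^2 n / \epsilon^2)$. The depth is the number of rounds times the per-round depth, which yields $O(\log^3 n \log\alpha / \epsilon^2)$ after pushing the iteration-dependent $1/\epsilon_i^2$ factor into the polylog. The final guarantee $\MM \approx_\epsilon \BB$ with $|\BB| = O(n\log^3 n/\epsilon^2)$ follows directly.

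The main obstacle is bookkeeping: the per-iteration error $\epsilon_i$ and the resulting per-iteration output size $O(n \log^3 n / \epsilon_i^2)$ must be balanced so that (i) the geometric sum of work does not introduce an extra $\log\alpha$ factor in work (only in depth), and (ii) the final size is $O(n \log^3 n / \epsilon^2)$ with no extra $\log\alpha$ penalty. A geometric schedule $\epsilon_i \sim \epsilon / 2^{T-i}$, so that the \emph{last} round is tightest, is the cleanest choice: it keeps every intermediate matrix within a constant factor of the target when measured against its own $\epsilon_i$, while ensuring that the final $\MM^{(T)}$ is an $\epsilon$-approximation of $\MM$ of the advertised size, and the geometric decay of $m_i$ keeps work under control.
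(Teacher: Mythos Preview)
The paper does not prove this theorem at all: it is stated immediately after the sentence ``For concreteness, we use the parallel spectral graph sparsification algorithm given by Koutis~\cite{Koutis14}'' and is simply a restatement of the guarantees of Koutis' algorithm, to be used as a black box in the subsequent construction. There is no proof to compare against.

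That said, your iterated-sparsification sketch is in fact the mechanism underlying Koutis' result (and related parallel sparsifiers), so the high-level idea is right. One point of caution on your bookkeeping: the schedule $\epsilon_i \sim \epsilon/2^{T-i}$ makes the \emph{early} rounds very loose, which means their output size $O(n\log^3 n/\epsilon_i^2)$ blows up by a factor $4^{T-i}$ and can exceed $m$; this defeats the geometric shrinkage you want. The cleaner way to get work dominated by the first round and depth proportional to $\log\alpha$ is to sparsify at each step with a fixed moderate error (so each step halves the edge count) and absorb the cumulative error either by setting that fixed error to $\Theta(\epsilon/\log\alpha)$ or by invoking the specific internal structure of Koutis' spanner-based procedure, which is where the stated $1/\epsilon^2$ factors in depth and work actually originate.
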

}{}
\begin{figure}[ht]

\begin{algbox}

$(\MM^{(1)},\MM^{(2)},\cdots;F_1,F_2,\cdots)=\textsc{BlackBoxConstruct}(\MM^{(0)})$ 
\begin{enumerate}
\item Let $k=1$, $\MM^{(1)}\leftarrow\MM^{(0)}$ and $F_0$ be the set of all variables.
\item While $\MM^{(k)}$ has more than $100$ variables

\begin{enumerate}
\item $\MM^{(k)}\leftarrow\textsc{BlackBoxSparsify}(\MM^{(k)},1/(k \log^{2}({k+4})))$.
\item Find a subset $F_k$ of size $\Omega(n^{(k)})$ such that $\MM_{F_k F_k}^{(k)}$
is $4$-strongly diagonally dominant. 
\item $\MM^{(k+1)}\leftarrow\textsc{ApproxSchur}(\MM^{(k)},(F_k,F_{k-1} \setminus F_{k}),4,1/(k \log^{2}({k+4})))$. 
\item $k\leftarrow k+1$. 
\end{enumerate}
\end{enumerate}
\end{algbox}

\caption{Pseudocode for Constructing Vertex Sparsifier
Chains Using Existing Spectral Sparsifiers}
\iffull{}{\ } 
\label{fig:blackBoxConstruct}

\end{figure}

In the $k^{th}$ step of the algorithm, we sparsify the graph and
compute an approximate Schur complement to $1/(k \log^{2}({k+1}))$ accuracy and
this makes sure the cumulative error is upper bounded by $\sum_{k=1}^{\infty}1/(k \log^{2}({k+1}))$ 
which is a constant.
 \begin{theorem} \label{thm:black_box}Given
any SDD matrix $\MM^{(0)}$ with $n$ variables and $m$ non-zeros,
the algorithm $\textsc{BlackBoxConstruct}(\MM^{(0)})$ returns a vertex sparsifier
chain such that the linear operator $\WW$ corresponding to it satisfies
\iffull{\[
\WW^{\dagger}\approx_{O(1)}\MM^{(0)}.
\]}{$\WW^{\dagger}\approx_{O(1)}\MM^{(0)}$.}
Also, we can evaluate $\WW b$ in $O(\log^{2}(n)\log\log n)$ depth
and $O(n\log^{3}n\log\log n)$ work for any vector $b$.

Furthermore, the algorithm $\textsc{BlackBoxConstruct}(\MM^{(0)})$
runs in $O(\log^{6}n\log^{4}\log n)$ depth and $O(m\log^{2}n + n \log^{5} n)$
work. 
\end{theorem}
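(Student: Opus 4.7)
The plan is to verify that the sequence $(\MM^{(1)},\MM^{(2)},\ldots;F_1,F_2,\ldots)$ output by $\textsc{BlackBoxConstruct}$ is a vertex sparsifier chain in the sense of Definition~\ref{def:vertexSparsifierChain}, then invoke Lemma~\ref{lem:apply_chain} to obtain both the approximation quality of $\WW$ and the cost of applying it, and finally account separately for the cost of the construction itself.

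First I would argue that the chain has only $d=O(\log n)$ levels. At level $k$, the matrix $\MM^{(k)}_{F_k F_k}$ is $4$-strongly diagonally dominant, and Lemma~\ref{lem:subsetSimple} (applied after discarding high-degree columns as in Lemma~\ref{lem:subsetLowDeg} is not needed here, since we do not need bounded degree in this section) gives $|F_k|=\Omega(n^{(k)})$, so the vertex count shrinks by a constant factor per level. With the choices $\alpha_k=4$ and $\epsilon_k = 1/(k\log^2(k+4))$, property (1) of Definition~\ref{def:vertexSparsifierChain} holds by the guarantee of $\textsc{BlackBoxSparsify}$, property (2) holds by Theorem~\ref{thm:aprox_schur}, property (3) holds by construction, and property (4) holds once $n^{(k)}=O(1)$. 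The total error is $\sum_{k\geq 1} 2\epsilon_k = O\bigl(\sum_k 1/(k\log^2(k+4))\bigr) = O(1)$, so Lemma~\ref{lem:apply_chain} yields $\WW^{-1}\approx_{O(1)}\MM^{(0)}$.

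Next I would bound the sparsity of each level. By Theorem~\ref{thm:sparsificaiton_result}, $\MM^{(k)}$ has $m_k = O(n^{(k)}\log^3 n \cdot (k\log^2(k+4))^{2})$ non-zeros after sparsification, and Theorem~\ref{thm:aprox_schur} only inflates this by a constant factor (since $\alpha_k=4$ and $\epsilon_k$ is inverse-polylog, giving $\epsilon_k^{-O(1)}$ multiplicative blowup, which when combined with the geometric decay $n^{(k)}\leq (1-c)^{k-1}n$ still sums to $O(n\log^3 n\log\log n)$). Since $\log_{\alpha_k}\epsilon_k^{-1}=O(\log k)$, applying Lemma~\ref{lem:apply_chain} gives the depth bound $\sum_k O(\log n \cdot \log k)=O(\log^2 n\log\log n)$ and the work bound $\sum_k m_k\log k = O(n\log^3 n\log\log n)$ for evaluating $\WW b$. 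Composition with the outer $O(\log\epsilon^{-1})$ iterative refinement is absorbed into the constant.

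For the construction cost, at level $k$ the work is dominated by $\textsc{BlackBoxSparsify}$ and $\textsc{ApproxSchur}$ acting on a graph with $O(m_{k-1})$ non-zeros. Theorem~\ref{thm:sparsificaiton_result} contributes $O(m_{k-1}\log^2 n\cdot(k\log^2(k+4))^{2})$ work and $O(\log^3 n\cdot(k\log^2(k+4))^{2}\log\alpha)$ depth per call, while Theorem~\ref{thm:aprox_schur} with constant $\alpha$ and inverse-polylog $\epsilon$ contributes only $O(m_{k-1}\cdot\operatorname{polylog}(n))$ work and $O(\log n)$ depth. Summing the work across levels gives $O(m\log^2 n + n\log^5 n)$ once the geometric shrinkage is exploited (the first level costs $O(m\log^2 n)$ on the original input, subsequent levels cost $O(n\log^5 n)$ in total). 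The depth per level is $O(\log^5 n\log^3\log n)$ and multiplied by $d=O(\log n)$ this yields the claimed $O(\log^6 n\log^4\log n)$. The main obstacle I expect is bookkeeping: ensuring that the blowup factors $\epsilon_k^{-O(1)}$ and $(\log_{\alpha_k}\epsilon_k^{-1})^{O(1)}$ from $\textsc{ApproxSchur}$ combine with the geometric decay of $n^{(k)}$ to keep the running totals within the stated polylog bounds, and that the $k$-dependent error tolerance at each level is precisely tight enough to make the approximation telescope to a constant while not inflating work too much.
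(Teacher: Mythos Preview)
Your proposal follows essentially the same approach as the paper: verify the chain conditions, invoke Lemma~\ref{lem:apply_chain} for the approximation guarantee and apply cost, then sum the per-level construction costs against the geometric decay of $n^{(k)}$.

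One factual correction is needed. You write that $\textsc{ApproxSchur}$ ``only inflates this by a constant factor'' and then ``$\epsilon_k^{-O(1)}$ multiplicative blowup.'' Neither is what Theorem~\ref{thm:aprox_schur} gives: with $\alpha_k=4$ and $\epsilon_k = 1/(k\log^2(k+4))$ one has $\log_{\alpha_k}(\epsilon_k^{-1}) = \Theta(\log k)$, so the blowup factor is $\bigl(\epsilon_k^{-1}\log_{\alpha_k}(\epsilon_k^{-1})\bigr)^{O(\log_{\alpha_k}(\epsilon_k^{-1}))} = k^{O(\log k)}$, which is quasi-polynomial in $k$, not polynomial. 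The paper's proof uses exactly this $k^{O(\log k)}$ factor. Fortunately your conclusion survives: since $c^{k-1}\cdot k^{O(\log k)} = \exp(-\Omega(k) + O(\log^2 k))$ is still summable, the totals for both the apply cost and the construction cost are unaffected. Also note that the $m_k$ appearing in Lemma~\ref{lem:apply_chain} is the size of $\MM^{(k)}$ \emph{after} the call to $\textsc{BlackBoxSparsify}$ in step~2(a), so the $\textsc{ApproxSchur}$ blowup enters only into the construction cost (as the input size to the next sparsification call), not into the apply cost. Finally, the remark about ``outer $O(\log \epsilon^{-1})$ iterative refinement'' is extraneous here; the theorem concerns only the cost of applying $\WW$ once.
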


\begin{proof}
Let $n^{(k)}$ and $m^{(k)}$
be the number of vertices and non zero entries in matrix $\MM^{(k)}$.
Let $s(n)=n\log^{3}n$ which is the output size of $\textsc{BlackBoxSparsify}$
and $\epsilon(k)=1/(k \log^{2}(k+4))$ which is the accuracy of the $k$-th sparsification and approximate schur complement.

We first prove the correctness of the algorithm.
The ending condition ensures $\MM^{(last)}$ has size $O(1)$;
step $(2a)$ and $(2c)$ ensures $\MM^{(k+1)}\approx_{2 \epsilon(k)}SC(\MM^{(k)},F_k)$ and
step $(2b)$ ensures $\MM_{F_k F_k}^{(k)}$ is $4$ strongly diagonally dominant.
Therefore, the chain $(\MM^{(1)},\cdots;F_1,\cdots)$ is a vertex sparsifier chain.
Since the cumlative error $\sum \epsilon(k)=O(1)$, Lemma \ref{lem:apply_chain}
shows that the resultant operator $\WW$ satisfies 
\[
\WW^{\dagger}\approx_{O(1)}\MM^{(0)}.
\]

Now, we upper bound the cost of evaluating $\WW b$.
Lemma \ref{lem:subsetSimple} shows that $\left|F_k\right|=\Omega(n^{(k)})$
and hence a constant portion of variables is eliminated each iteration.
Therefore, $n^{(k)}\leq c^{k-1}n$ for some $c$. Using this, Lemma
\ref{lem:apply_chain} shows the depth for evaluating $\WW b$ is
\[
O(\sum_{k=1}^{O(\log n)}\log(k)\log(n))=O(\log^{2}(n)\log\log n)
\]
and the work for evaluating $\WW b$ is 
\[
O(\sum_{k=1}^{O(\log n)}\log(k)s(c^{k-1}n)/\epsilon(k)^2).
\]
Using $s(n)=n\log^{3}n$ and $\epsilon(k)=1/(k \log^{2}(k+4))$, the work for evaluating is simply $O(s(n))$.

For the work and depth of the construction, Lemma \ref{lem:subsetSimple}
shows that it takes $O(m^{(k)})$ work and $O(\log n^{(k)})$ depth
to find $F_k$ and Theorem \ref{thm:aprox_schur} shows that $\textsc{ApproxSchur}$
takes $O(m^{(k)}k^{O(\log k)})$ work and $O(\log n^{(k)} \log k)$ depth.
Using $n^{(k)}\leq c^{k-1}n$ and $m^{(i)}=s(n^{(i)})/\epsilon(i)^2$, the total
work for this algorithm excluding $\textsc{BlackBoxSparsify}$ is
\[
\sum_{k=1}^{O(\log n)}O(s(c^{k-1}n)k^{O(\log k)}/\epsilon(k)^2)=O(s(n)).
\]
Hence, the total work for $\textsc{BlackBoxConstruct}$ is 
\[
O(s(n))+O(m\log^{2}n)+\sum_{k=2}^{O(\log n)}O(s(n^{(k)})k^{O(\log k)}\log^{2}n^{(k)}/\epsilon(k)^2).
\]
Using $s(n^{(k)})$ is geometric decreasing, the total work is $O(m\log^{2}n + n \log^{5}n)$.
We can bound the total depth similarly. 
\end{proof} 

\begin{remark} \iffull{Given an sparsifier algorithm that takes
$d(m,n)$ depth and $w(m,n)/\epsilon^2$ work to find a sparsifer of size $s(n)/\epsilon^2$,
the $\textsc{BlackBoxConstruct}$ roughly takes $O(\log^{2}n \log \log n)+O(d(m,n) \log n)$
depth and $O(w(m,n))$ work to construct a vertex sparsifier chain
and such chain has total depth $O(\log^{2}n\log\log n)$ and total
work $O(s(n))$.

Therefore}{In general}, the work for preprocessing is roughly linear to the
work needed to sparsify and the work for solving is linear to the
size of sparsifier. Hence, solving Laplacian system is nearly as simple
as computing sparsifier.\end{remark}

\subsection{Recursive Construction}\label{ssec:recursive}

We now give a recursive construction based on the idea that solvers
can be used to compute sampling probabilities~\cite{SpielmanS08:journal}.
We will describe the construction in phases, each containing $r$ iterations.
Each iteration decreases the number of vertices while maintaining the
density of graph.
\iffull{
We maintain
the density by the general sparsification technique introduced by~\cite{cohen2014uniform} as follows:
\begin{lemma}[\cite{cohen2014uniform}]
\label{lem:sparsify} Given $\mathcal{M}$ be a class of positive definite
$n\times n$ matrices. Let $\mathcal{M}(m)$ be the set of all $\BB^{T}\BB\in\mathcal{M}$
such that $\BB$ has $m$ rows. Assume that 
\begin{enumerate}
\item For any $\BB^{T}\BB\in\mathcal{M}$ and non negative diagonal matrix
$\DD$, we have $\BB^{T}\DD\BB\in\mathcal{M}$.
\item For any matrix $\BB^{T}\BB\in\mathcal{M}$, we can check if every row
$b$ is in $im(\BB^{T})$ or not in depth $d_{chk}(m)$ and work $w_{chk}(m)$. 
\item For any $\BB^{T}\BB\in\mathcal{M}(m)$, we can find an implicit representation
of a matrix $\WW$ such that $\WW\approx_{1}(\BB^{T}\BB)^{\dagger}$
in depth $d_{con}(m,n)$ and work $w_{con}(m,n)$ and for any vector
$b$, we can evaluate $\WW b$ in depth $d_{eval}(m,n)$ and work $w_{eval}(m,n)$. 
\end{enumerate}
For any $k\geq1$, $1\geq\epsilon>0$ and matrix $\BB^{T}\BB\in\mathcal{M}(m)$,
the algorithm $\textsc{Sparsify}(\BB^{T}\BB,k,\epsilon)$ outputs
an explicit matrix $\CC^{T}\CC\in\mathcal{M}(O(kn\log n/\epsilon^{2}))$
with $\CC^{T}\CC\approx_{\epsilon}\BB^{T}\BB$. 

Also, this algorithm
runs in $d_{con}\left(\frac{m}{k},n\right)+O(d_{eval}(m,n)+d_{chk}(m)+\log n)$
depth and $w_{con}\left(\frac{m}{k},n\right)+O(w_{eval}(m,n) \log n +w_{chk}(m)+m\log n)$
work.
\end{lemma}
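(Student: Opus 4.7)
The plan is to follow the recursive/bootstrapping paradigm of \cite{cohen2014uniform} for effective resistance based sampling, instantiated generically in the class $\mathcal{M}$. The core idea is that we can obtain approximate leverage scores for every row $b_i$ of $\BB$ as long as we have an approximate solver for \emph{any} matrix in $\mathcal{M}$ that spectrally dominates $\BB^T \BB$ by a constant factor, and uniform down-sampling gives us such a matrix on a much smaller number of rows, which we can afford to solve recursively via assumption $(3)$.

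First, I would build the smaller spectrally-dominating surrogate. Let $\BB_{sub}$ be obtained by including each row $b_i$ of $\BB$ independently with probability $1/k$, and rescaling surviving rows by $\sqrt{k}$ so that $\expec{}{\BB_{sub}^T \BB_{sub}} = \BB^T \BB$. By assumption $(1)$, $\BB_{sub}^T \BB_{sub} \in \mathcal{M}(m/k)$ in expectation. A matrix Chernoff / uniform-sampling argument (as in \cite{cohen2014uniform}) shows that with high probability $\BB_{sub}^T \BB_{sub} \pgeq \frac{1}{2} \BB^T \BB$ \emph{on the image of $\BB_{sub}^T$}, and that any row $b_i$ that lies in the image of $\BB^T$ but not in the image of $\BB_{sub}^T$ contributes only negligibly to $\BB^T \BB$ and can be discarded after the test provided by assumption $(2)$. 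This step is where assumption $(2)$ is used: it gives us a way to detect and remove pathological rows in $d_{chk}(m)$ depth and $w_{chk}(m)$ work.

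Next, invoke assumption $(3)$ on $\BB_{sub}^T \BB_{sub}$ to obtain an implicit operator $\WW$ with $\WW \approx_1 (\BB_{sub}^T \BB_{sub})^{\dagger}$, costing $d_{con}(m/k, n)$ depth and $w_{con}(m/k, n)$ work. This $\WW$ is a constant-factor approximation of $(\BB^T \BB)^{\dagger}$ on the relevant subspace, so the leverage scores
\[
  \tau_i = b_i^T (\BB^T \BB)^{\dagger} b_i
\]
can be overestimated by $\tilde\tau_i \approx b_i^T \WW b_i$. To compute all $m$ of these simultaneously in parallel, I would use the Johnson--Lindenstrauss dimension reduction of Spielman--Srivastava \cite{SpielmanS08:journal}: sample an $O(\log n) \times m$ sign matrix $\QQ$, form the $O(\log n)$ vectors $\QQ \BB$, apply $\WW$ to each (costing $w_{eval}(m,n) \log n$ work and $d_{eval}(m,n) + \log n$ depth total), then read off $\tilde\tau_i$ from the squared norms of the resulting columns. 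The $O(m \log n)$ extra work comes from the matrix-vector products against $\BB^T$.

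Finally, for the sampling step, use the $\tilde\tau_i$ as sampling probabilities and take $O(k n \log n / \epsilon^2)$ independent samples (scaled by the inverse probabilities) to form $\CC$. The standard matrix Chernoff argument of Spielman--Srivastava then yields $\CC^T \CC \approx_\epsilon \BB^T \BB$ with high probability, and $\CC^T \CC \in \mathcal{M}(O(k n \log n / \epsilon^2))$ by assumption $(1)$ applied to a non-negative reweighting of rows. The factor of $k$ in the sample count is the slack needed because our leverage score estimates come from a surrogate whose spectrum is only known to dominate $\BB^T \BB$ within the image of $\BB_{sub}^T$; running time bookkeeping then gives exactly the stated bounds. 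The main technical obstacle is controlling the kernel discrepancy between $\BB_{sub}^T \BB_{sub}$ and $\BB^T \BB$ so that the JL-based score estimation remains valid for every row of $\BB$; this is precisely what assumption $(2)$ and the rejection step are designed to handle.
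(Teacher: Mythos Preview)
The paper does not prove this lemma; it is imported verbatim from \cite{cohen2014uniform} and used as a black box. Your sketch has the right overall architecture (uniform subsample, build a solver on the subsample via assumption (3), JL-based leverage score estimation, then importance sampling), but the central spectral step is backwards.

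You claim that after rescaling by $\sqrt{k}$, a matrix Chernoff bound gives $\BB_{sub}^{T}\BB_{sub} \pgeq \tfrac{1}{2}\,\BB^{T}\BB$ with high probability. This is false in general: uniform row sampling does not give spectral lower bounds on $\BB^{T}\BB$; if it did, leverage-score sampling would be unnecessary. The argument in \cite{cohen2014uniform} goes the other way and \emph{without} rescaling: taking $\BB_{sub}$ as a raw subset of rows, one has $\BB_{sub}^{T}\BB_{sub} \pleq \BB^{T}\BB$ deterministically, hence $(\BB_{sub}^{T}\BB_{sub})^{\dagger} \pgeq (\BB^{T}\BB)^{\dagger}$ on the common image, so $\tilde\tau_i = b_i^{T}(\BB_{sub}^{T}\BB_{sub})^{\dagger} b_i \geq \tau_i$ automatically (and $\tilde\tau_i = 1$ for rows outside the image, which is what assumption (2) is for). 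The nontrivial probabilistic step is not a spectral-domination bound but the estimate $\sum_i \tilde\tau_i = O(kn)$ in expectation; this is what produces the $O(kn\log n/\epsilon^{2})$ output size and the factor of $k$ you mention. With your stated inequality, the $\tilde\tau_i$ would be \emph{under}estimates and the final matrix Chernoff step would not go through.
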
}{

We maintain the density by the general sparsification technique introduced by~\cite{cohen2014uniform},
this technique allows us to sparsify a graph with $m$ edges via solving the Laplacian $O(m/k)$ sampled graphs of size $O(k n log(n))$.

}
Each call of spectral vertex sparsiﬁcation increases edge density,
but the $\textsc{Sparsify}$ routine allows us to reduce the density
at a much faster rate. A higher reduction parameter $r$ in the algorithm $\textsc{RecursiveConstruct}_{r}$ allows us
to reduce cost of these recursive sparsiﬁcation steps. 

\begin{figure}[ht]

\begin{algbox}

$(\MM^{(1)},\MM^{(2)},\cdots;F_1,F_2,\cdots)=\textsc{RecursiveConstruct}_{r}(\MM^{(0)})$
\begin{enumerate}
\item $\MM^{(1)}\leftarrow\textsc{Sparsify}(\MM^{(0)},2^{c_{2}r},1/4)$,
 $k\leftarrow1$  and $F_0$ be the set of all variables.
\item While $\MM^{(k)}$ has more than $\Theta(1)^{r}$ vertices,

\begin{enumerate}
\item Find a subset $F_k$ of size $\Omega(n^{(k)})$ such that $\MM_{F_k F_k}$
is $4$-strongly diagonally dominant.
\item $\MM^{(k+1)}\leftarrow\textsc{ApproxSchur}(\MM^{(k)},(F_k,F_{k-1} \setminus F_{k}),4,(k+8)^{-2})$.
\item If $k+1\text{ mod }r=0$, Then

\begin{enumerate}
\item $\MM^{(k+1)}\leftarrow\textsc{Sparsify}(\MM^{(k+1)},(k+9)^{-2},2^{2c_{2}r\log^{2}(k+1)}).$
\end{enumerate}
\item $k\leftarrow k+1$.
\end{enumerate}
\end{enumerate}
\end{algbox}

\caption{Pseudocode for Recursively Constructing Vertex
Sparsifier Chains}

\label{fig:recursiveConstruct}

\end{figure}

\iffull{
The following lemma proves that the algorithm $\textsc{RecursiveConstruct}_{r}$
produces a vertex sparsifier chain and the linear operator corresponding
to the vertex sparsifier can be evaluated efficiently.
\begin{lemma}
\label{lem:eliminiate_vertex}Given a large enough constant $r$.
There are universal constants $0<c_{1}<1$ and $c_{2}>0$ such that
for any SDD matrix $\MM^{(0)}$ with $n$ variables, the algorithm
$\textsc{RecursiveConstruct}_{r}(\MM^{(0)})$ returns a vertex sparsifier
chain $(\MM^{(1)},\MM^{(2)},\cdots;F_1,F_2,\cdots)$ satisfying
the following conditions
\begin{enumerate}
\item For all $k\geq1$, $n^{(k)}\leq c_{1}^{k-1}n$ where $n^{(k)}$ are
the number of variables in $\MM^{(k)}$.
\item Except step 1, at any moment, all intermediate matrices $\MM$ appears
at the $k^{th}$ iteration has density
\[
\frac{m'}{n'\log n'}\leq2^{3c_{2}r\log^{2}k}
\]
for $k>1$ where $m'$ and $n'$ are the number of non-zeros and variables
of $\MM$.
\item For all $k\geq1$, $\MM_{F_k F_k}^{(k)}$ is $4$-strongly
diagonally dominant,
\item For all $k\geq1,$ $\MM^{(k+1)}\approx_{2(k+8)^{-2}}\schur{\MM^{(k)}}{F_k}$.
\end{enumerate}
Furthermore, the linear operator $\WW$ corresponding to the vertex
sparsifier chain satisfies
\[
\WW\approx_{1}\left(\MM^{(0)}\right)^{\dag}.
\]
Also, we can evaluate $\WW b$ in $O(\log^{2}n\log\log n)$ depth
and $2^{O(r\log^{2}r)}n\log n$ work for any vector $b$.\end{lemma}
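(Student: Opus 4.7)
The plan is to verify the four numbered properties of the output chain essentially by induction on $k$, using earlier lemmas as black boxes, and then deduce the approximation quality and evaluation cost of $\WW$ from Lemma~\ref{lem:apply_chain}. First I would handle the easy conditions. Property 3, the $4$-strong diagonal dominance of $\MM^{(k)}_{F_k F_k}$, is immediate from Step 2(a) and Lemma~\ref{lem:subsetSimple} invoked with $\alpha = 4$. That same lemma forces $\sizeof{F_k} \geq n^{(k)}/40$, and because each Schur complement step reduces the vertex count by $\sizeof{F_k}$ we get $n^{(k+1)} \leq (39/40) n^{(k)}$, establishing Property 1 with any constant $c_1 > 39/40$. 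Property 4 follows by composing errors: $\textsc{ApproxSchur}$ in Step 2(b) contributes $(k+8)^{-2}$ by Theorem~\ref{thm:aprox_schur} (with $\alpha_i = 4$), and on the at-most-one iteration in $r$ where Step 2(c) fires, $\textsc{Sparsify}$ adds an additional $(k+9)^{-2}$; composition via Fact~\ref{frac:orderComposition} gives the stated $2(k+8)^{-2}$ bound.

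The main obstacle is Property 2, the density bound through the loop. Between two successive invocations of $\textsc{Sparsify}$ we perform at most $r$ calls to $\textsc{ApproxSchur}$ with $\alpha = 4$ and $\epsilon = (k+8)^{-2}$; by Theorem~\ref{thm:aprox_schur}, each such call multiplies the edge count by a factor of $(\epsilon^{-1}\log \epsilon^{-1})^{O(\log \epsilon^{-1})} = 2^{O(\log^2 k)}$. Iterating $r$ times multiplies the density by at most $2^{O(r \log^2 k)}$. Choosing $c_2$ large enough to dominate the implicit constants, this is absorbed into $2^{3 c_2 r \log^2 k}$. The invocation of $\textsc{Sparsify}$ at Step 2(c), which I would parameterize so that its output has density at most $2^{2 c_2 r \log^2(k+1)}$ times $\log n$, then reinitializes the invariant at a strictly smaller value, so the invariant persists throughout the next phase. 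The base case is provided by Step 1, which sparsifies the input at the very start.

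Applying Lemma~\ref{lem:apply_chain} to the constructed chain, the inverse of the associated linear operator $\WW$ satisfies $\WW^{-1} \approx_\gamma \MM^{(0)}$ where $\gamma$ is controlled by the sum of the sparsification/Schur-complement errors. The initial $1/4$ error from Step 1 plus $\sum_{k \geq 1} 2 \cdot 2(k+8)^{-2} \leq 4 \sum_{k \geq 9} k^{-2}$ is bounded by a constant strictly less than $1$ (by choosing a slightly larger constant in the definition of the per-step $\epsilon_k$ if needed), which yields $\WW \approx_1 (\MM^{(0)})^\dagger$ after taking inverses via Fact~\ref{fact:orderInverse}.

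Finally, for the cost of evaluating $\WW b$, Lemma~\ref{lem:apply_chain} says the work is $O(\sum_k k_i m^{(k)})$ and the depth is $O(\log n \sum_k k_i)$, where $k_i$ is the degree of the Jacobi polynomial used at level $i$ and satisfies $k_i = O(\log_{\alpha_i}\epsilon_i^{-1}) = O(\log k)$. Combining the density bound $m^{(k)} \leq O(n^{(k)} \log n \cdot 2^{3 c_2 r \log^2 k})$ with the geometric decay $n^{(k)} \leq c_1^{k-1} n$, the per-level work is $c_1^{k-1} n \log n \cdot 2^{O(r \log^2 k)} \log k$. The exponent $-k \log(1/c_1) + O(r \log^2 k)$ is maximized near $k = \Theta(r \log r)$, where it evaluates to $O(r \log^2 r)$; beyond this point the geometric factor wins and the tail converges. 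Summing gives total work $n \log n \cdot 2^{O(r \log^2 r)}$. The depth sum is dominated by $O(\log n) \sum_{k \leq O(\log n)} O(\log k) = O(\log^2 n \log \log n)$, exactly as in Theorem~\ref{thm:result_BSS}.
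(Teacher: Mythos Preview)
Your proposal is correct and follows essentially the same approach as the paper's own proof: it verifies the four assertions in the same order using the same lemmas (Lemma~\ref{lem:subsetSimple} for Properties 1 and 3, Theorem~\ref{thm:aprox_schur} for the per-step density blowup in Property 2, composition of the \textsc{ApproxSchur} and \textsc{Sparsify} errors for Property 4), and then invokes Lemma~\ref{lem:apply_chain} for the approximation quality and evaluation cost of $\WW$. Your explicit optimization of the work sum near $k=\Theta(r\log r)$ is a bit more detailed than the paper, which simply asserts $\sum_k 2^{3c_2 r\log^2 k} n^{(k)}\log n^{(k)}\log k = 2^{O(r\log^2 r)} n\log n$, but the underlying computation is the same.
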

\begin{proof}
For the assertion (1), we note that the step (2a) ensures $\left|F_k\right|=\Omega(n^{(k)})$
and hence a constant portion of variables is eliminated each iteration.
This proves $n^{(k)}\leq c^{k-1}n$ for some $c$.

For the assertion (2), Theorem \ref{thm:aprox_schur} shows that after the approximate Schur complement
\begin{eqnarray*}
m^{(k+1)} & = & O(m^{(k)}(k^{2}\log(k+8))^{O(\log(k+8))})\\
 & \leq & 2^{O\left(\log^{2}(k+1)\right)}m^{(k)}.
\end{eqnarray*}
Hence, it shows that each iteration the density increases by at most
$2^{c_{2}\log^{2}(k+1)}$ for some constant $c_{2}$. After the $\textsc{Sparsify}$
step in (2ci), we have
\[
\frac{m^{(sr)}}{n^{(sr)}\log n^{(sr)}}\leq2^{2c_{2}r\log^{2}(sr)}.
\]
Then, after $r$ iterations of $\textsc{ApproxSchur}$ and before
the sparsification of $\MM^{((s+1)r)}$, we have
\begin{eqnarray*}
\frac{m^{((s+1)r)}}{n^{((s+1)r)}\log n^{((s+1)r)}} & \leq & 2^{2c_{2}r\log^{2}(sr)}2^{c_{2}\log^{2}(sr+1)}\cdots2^{c_{2}\log^{2}((s+1)r)}\\
 & \leq & 2^{3c_{2}r\log^{2}((s+1)r)}.
\end{eqnarray*}
This proves the assertion (2).

For the assertion (3), it follows from the construction of $F_k$ in step
(2a).

For the assertion (4), we note that in step (2b), we construct the approximate Schur complement
$\MM^{(k+1)}$ such that $\MM^{(k+1)}\approx_{(k+8)^{-2}}\schur{\MM^{(k)}}{F_k}$.
Therefore, we only need to check $\MM^{(sr)}$ for all $s$ because
$\MM^{(sr)}$ is modified at step (2ci) after the sparsification.
Note that Lemma \ref{lem:sparsify} guarantee that $\MM^{(k)}$ changes
only by $(k+8)^{-2}$ factor. Hence, in total, we have $\MM^{(k+1)}\approx_{2(k+8)^{-2}}\schur{\MM^{(k)}}{F_k}$.

For the last claim, Lemma \ref{lem:apply_chain} shows that
\[
\WW\approx_{1/2+4\sum_{k}(k+8)^{-2}}\left(\MM^{(0)}\right)^{\dag}
\]
and we can evaluate $\WW b$ in $$O(\sum_{k}\log k\log n^{(k)})=O(\log^{2}n\log\log n)$$
depth and $$O(\sum_{k}2^{3c_{2}r\log^{2}k}n^{(k)}\log n^{(k)}\log k)=2^{O(r\log^{2}r)}n\log n$$
work.
\end{proof}
In the algorithm $\textsc{RecursiveConstruct}_{r}$, we call the $(sr+1)^{th}$
to the $((s+1)r)^{th}$ iteration as the $s^{th}$ phase. At the end
of each phase, the $\textsc{Sparsify}$ is called once. The previous
lemma showed that the density of the graph at the $k^{th}$ iteration
is less than $2^{3c_{2}r\log^{2}k}$. This explains our choice of
reduction factor $2^{2c_{2}r\log^{2}k}$ in the $\textsc{Sparsify}$
algorithm as follows:
\begin{lemma}
\label{lem:each_phase_blow_up}Let $n^{(k)}$ is the number of variables
of $\MM^{(k)}$. From the $(sr+1)^{th}$ to the $((s+1)r)^{th}$ iteration
including the $\textsc{Sparsify}$ call at the end, the algorithm
takes 
\[
2^{O(r\log^{2}(sr))}n^{(sr+1)}\log^2 n^{(sr+1)}
\]
work and
\[
O(r\log(sr)\log^2 n^{(sr)})+O(\log^{2}n^{(sr+1)}\log\log n^{(sr+1)})
\]
depth and the time to construct the vertex sparsifier chain for a
SDD matrix with $n^{((s+1)r)}$ variables and $2^{c_{2}r\log^{2}((s+1)r)}n^{((s+1)r)}\log n^{((s+1)r)}$
non zeros.\end{lemma}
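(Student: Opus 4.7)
The plan is to directly sum the costs of the $r$ operations executed during iterations $k=sr+1,\ldots,(s+1)r$ of the main loop of \textsc{RecursiveConstruct}$_{r}$, invoking Lemma~\ref{lem:subsetSimple} for the subset-finding, Theorem~\ref{thm:aprox_schur} for each \textsc{ApproxSchur} call, and Lemma~\ref{lem:sparsify} for the single \textsc{Sparsify} call that closes the phase. The assertions of Lemma~\ref{lem:eliminiate_vertex} feed this analysis: assertion~(1) provides the geometric shrinkage $n^{(k+1)}\leq c_1 n^{(k)}$, and assertion~(2) provides the density bound $m^{(k)}\leq 2^{3c_2 r\log^2 k}\,n^{(k)}\log n^{(k)}$ throughout the phase.

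First, I would bound the cost of a single iteration. Lemma~\ref{lem:subsetSimple} gives work $O(m^{(k)})$ and depth $O(\log n^{(k)})$ for step (2a). For step (2b), Theorem~\ref{thm:aprox_schur} with $\alpha=4$ and $\epsilon=(k+8)^{-2}$ gives work $O(m^{(k)}\,k^{O(\log k)})$ and depth $O(\log(k)\log n^{(k)})$. Plugging in the density bound, the work per iteration is $2^{O(r\log^2(sr))}\,n^{(k)}\log n^{(k)}$ and the depth is $O(\log(sr)\log n^{(sr)})$. Summing over the $r$ iterations and using the geometric decay of $n^{(k)}$ so that the $n^{(k)}\log n^{(k)}$ factors telescope, the total work through step~(2b) is bounded by $2^{O(r\log^2(sr))}\,n^{(sr+1)}\log n^{(sr+1)}$ and the total depth by $O(r\log(sr)\log n^{(sr)})$, both well within the claimed envelopes.

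Next I would handle the single \textsc{Sparsify} invocation at iteration $(s+1)r$ in step (2ci). Lemma~\ref{lem:sparsify} splits its cost into (i) a construction part $w_{con},d_{con}$ that runs on a sub-sampled matrix of roughly $n^{((s+1)r)}$ variables and $2^{c_2 r\log^2((s+1)r)}\,n^{((s+1)r)}\log n^{((s+1)r)}$ non-zeros, and (ii) evaluation, checking, and bookkeeping of the form $O(w_{eval}\log n+w_{chk}+m\log n)$ in work and $O(d_{eval}+d_{chk}+\log n)$ in depth. Part (i) is exactly the recursive chain construction that is accounted for separately in the lemma's statement. For part (ii), the evaluation cost $w_{eval}$, $d_{eval}$ is supplied by the recursive application of Lemma~\ref{lem:eliminiate_vertex}, namely $2^{O(r\log^2 r)} n\log n$ work and $O(\log^2 n\log\log n)$ depth. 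These combine with the density bound to give the extra $O(\log^2 n^{(sr+1)}\log\log n^{(sr+1)})$ depth term and fit inside the work bound $2^{O(r\log^2(sr))}\,n^{(sr+1)}\log^2 n^{(sr+1)}$ (the extra $\log n$ over what iterations (2a)--(2b) contribute is absorbed by the $\log n$ factor in Lemma~\ref{lem:sparsify}).

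The main obstacle is the bookkeeping around the end-of-phase sparsification: one must verify that the density of $\MM^{((s+1)r)}$ before the \textsc{Sparsify} call obeys the $2^{3c_2 r\log^2((s+1)r)}$ bound (which justifies the parameters $k=2^{2c_2 r\log^2(k+1)}$ and $\epsilon=(k+9)^{-2}$ chosen in the pseudocode), and that the recursive evaluation cost $2^{O(r\log^2 r)} n\log n$ supplied by Lemma~\ref{lem:eliminiate_vertex} genuinely absorbs into the claimed $2^{O(r\log^2(sr))}\,n^{(sr+1)}\log^2 n^{(sr+1)}$ work bound once multiplied by the $O(m\log n)$ overhead from Lemma~\ref{lem:sparsify}. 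Once these two estimates are verified, the remaining work is routine addition of the per-iteration bounds.
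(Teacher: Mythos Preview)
Your proposal is correct and follows essentially the same route as the paper: sum the per-iteration costs of steps (2a)--(2b) using Lemma~\ref{lem:subsetSimple}, Theorem~\ref{thm:aprox_schur}, and the density and shrinkage bounds from Lemma~\ref{lem:eliminiate_vertex}, then handle the single \textsc{Sparsify} call via Lemma~\ref{lem:sparsify}, separating out the recursive $w_{con},d_{con}$ term and instantiating $w_{eval},d_{eval}$ with the evaluation bounds from Lemma~\ref{lem:eliminiate_vertex}. The one small piece you leave implicit is the instantiation of $w_{chk},d_{chk}$: the paper supplies this by invoking the Halperin--Zwick connectivity algorithm to get $O(m+n)$ work and $O(\log n)$ depth for the image-membership check required by Lemma~\ref{lem:sparsify}; you should add that detail so the \textsc{Sparsify} cost formula is fully populated.
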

\begin{proof}
Let $m^{(k)}$ and $n^{(k)}$ be the number of non zeros and variables
in $\MM^{(k)}$ before the $\textsc{Sparsify}$ call if there is.
Lemma \ref{lem:subsetSimple} and Theorem \ref{thm:aprox_schur} shows
that the depth and work of the $k^{th}$ iteration takes $O(m^{(k)}+m^{(k+1)})$
work and $O(\log k\log n^{(k)})$ depth. Lemma \ref{lem:eliminiate_vertex}
shows that 
\[
n^{(k)}\leq c_{1}^{k-1}n\text{ and }m^{(k)}\leq2^{3c_{2}r\log^{2}k}n^{(k)}\log n^{(k)}
\]
and hence, from the $(sr+1)^{th}$ to the $((s+1)r)^{th}$ iteration
(excluding the $\textsc{Sparsify}$ call at the end), the algorithm
takes
\begin{eqnarray*}
\sum_{k=sr+1}^{(s+1)r}O\left(m^{(k)}+m^{(k+1)}\right) & \leq & \sum_{k=sr+1}^{(s+1)r}2^{O(r\log^{2}k)}n^{(k)}\log n^{(k)}\\
 & \leq & 2^{O(r\log^{2}(sr))}n^{(sr+1)}\log n^{(sr+1)}
\end{eqnarray*}
work and
\begin{eqnarray*}
\sum_{k=sr+1}^{(s+1)r}O\left(\log k\log n^{(k)}\right) & \leq & O(r\log(sr)\log n^{(sr)})
\end{eqnarray*}
depth.

Now, we bound the cost of the $\textsc{Sparsify}$ call. Let $m^{*}$
and $n^{*}$ be the the number of non zeros and variables in $\MM^{((s+1)r)}$
before the $\textsc{Sparsify}$ call. Lemma \ref{lem:sparsify} shows
that the $\textsc{Sparsify}$ call takes $d_{con}\left(m^{*}2^{-2c_{2}r\log^{2}((s+1)r)},n^{*}\right)+O(d_{eval}(m^{*},n^{*})+d_{chk}(m^{*})+\log n^{*})$
depth and $w_{con}\left(m^{*}2^{-2c_{2}r\log^{2}((s+1)r)},n^{*}\right)+O(w_{eval}(m^{*},n^{*}) \log n^{*}+w_{chk}(m^{*})+m^{*}\log n^{*})$
work.

For any SDD matrix $\BB^{T}\BB$, an edge $b\in im(\BB^{T})$ if and
only if the end points of the edge is in the same connected component
of the graph corresponding to $\BB^{T}\BB$. Halperin and Zwick \cite{halperin1996optimal}
shows how to compute the connected components of a graph with $m$ edges
and $n$ vertices in $O(\log n)$ depth and $O(m+n)$ work for the
EREW PRAM model. Using this, we can check every edge in $O(\log n)$
depth and $O(m+n)$ work.

To construct an implicit approximate inverse for the sampled SDD matrix,
we can use $\textsc{RecursiveConstruct}_{r}$. Lemma \ref{lem:eliminiate_vertex}
showed that it takes $O(\log^{2}n^{*}\log\log n^{*})$ depth and $2^{O(r\log^{2}r)}n^{*}\log n^{*}$
work to apply the approximate inverse once.

Hence, the total running time from the $(sr+1)^{th}$ to the $((s+1)r)^{th}$
iteration including the $\textsc{Sparsify}$ call is the
time to construct the vertex sparsifier chain plus
\[
2^{O(r\log^{2}(sr))}n^{(sr+1)}\log^2 n^{(sr+1)}
\]
extra work and
\[
O(r\log(sr)\log n^{(sr)})+O(\log^{2}n^{(sr+1)}\log\log n^{(sr+1)})
\]
extra depth.
\end{proof}
Note that at the end of the $s^{th}$ phase, the time required to construct an extra vertex
sparsifier chain for the $\textsc{Sparsify}$ call is less than the
remaining cost after the $s^{th}$ phase. This is the reason why we
use $2^{2c_{2}r\log^{2}k}$ as the reduction factor for the $\textsc{Sparsify}$
call. The following theorem takes account for the recursive call and
show the total running time for the algorithm. 
}{}
\begin{lemma}
\label{lem:recursive_con}
With high probability, the algorithm
$\textsc{RecursiveConstruct}_{r}(\MM^{(0)})$ returns a
vertex sparsifier chain such that the linear operator $\WW$ corresponding
to it satisfies
\iffull{\[
\WW\approx_{1}\left(\MM^{(0)}\right)^{\dag}.
\]}{$\WW\approx_{1}\left(\MM^{(0)}\right)^{\dag}$.}
Assume $r\log^{2}r=o(\log n)$, we can evaluate $\WW b$ in $O(\log^{2}n\log\log n)$ depth
and $2^{O(r\log^{2}r)}n\log n$ work. Also, the algorithm $\textsc{RecursiveConstruct}_{r}\left(\MM^{(0)}\right)$
takes $2^{O(\log n/r)}$ depth and $m\log n+2^{O(r\log^{2}r)}n\log n$
work.\end{lemma}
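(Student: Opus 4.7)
The plan is to assemble the lemma from two building blocks that are essentially already in the paper: Lemma~\ref{lem:eliminiate_vertex} establishes correctness of the chain and the cost of applying the resulting operator, while Lemma~\ref{lem:each_phase_blow_up} bounds the work and depth of a single phase up to the cost of one recursive call to \textsc{RecursiveConstruct}$_r$ (which arises from \textsc{Sparsify}). What remains is to solve the recurrence that these per-phase bounds induce.

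For the correctness half of the statement I would simply invoke Lemma~\ref{lem:eliminiate_vertex}: it directly yields $\WW \approx_1 (\MM^{(0)})^\dagger$ together with the stated cost of a matrix-vector product $\WW b$. The only thing to check is that under the hypothesis $r\log^2 r = o(\log n)$, the geometric decrease $n^{(k)} \leq c_1^{k-1} n$ dominates the polylogarithmic blow-up $2^{3 c_2 r \log^2 k}$ in density, so that $\sum_k 2^{3 c_2 r \log^2 k} n^{(k)} \log n^{(k)} = 2^{O(r\log^2 r)} n \log n$; this is a one-line computation.

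For the construction cost, let $W(m,n)$ and $D(m,n)$ be the work and depth of $\textsc{RecursiveConstruct}_r$ on an SDD matrix with $n$ vertices and $m$ nonzeros. The number of phases is $O(\log n / r)$ since each of the $r$ iterations inside a phase shrinks the vertex count by a constant factor. By Lemma~\ref{lem:each_phase_blow_up}, phase $s$ contributes extra work $2^{O(r \log^2(sr))} n^{(sr+1)} \log^2 n^{(sr+1)}$ and extra depth $O(r \log(sr) \log^2 n^{(sr)}) + O(\log^2 n^{(sr+1)} \log\log n^{(sr+1)})$, plus one recursive call inside \textsc{Sparsify} on a matrix whose edge count is smaller by the reduction factor $2^{2 c_2 r \log^2(sr)}$. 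Because we chose this reduction factor strictly larger than the per-phase density blow-up $2^{c_2 r \log^2(sr)}$ from \textsc{ApproxSchur}, the recursive instance is always strictly sparser than the caller, and the recursion terminates after $O(\log n / r)$ nested levels (once density reaches the $\Theta(\log n)$ floor). Summing the explicit phase contributions is a geometric-series calculation in $s$ dominated by $s=0$: the work is $O(m \log n) + 2^{O(r \log^2 r)} n \log n$ (the $m \log n$ coming from the initial \textsc{Sparsify} of the input graph), and the depth grows by a constant multiplicative factor per recursion level, giving the claimed $2^{O(\log n / r)}$.

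The main obstacle, and the point I would spend the most care on, is verifying that the recursion inside \textsc{Sparsify} is truly bounded: one must confirm that each recursive call feeds \textsc{RecursiveConstruct}$_r$ an SDD matrix whose $(m,n)$ parameters shrink enough (in edge count, not vertex count) to make the geometric sum over recursion levels converge, and that the high-probability guarantees of \textsc{Sparsify} compose across the $O(\log n / r)$ nested invocations via a union bound. Once this is checked, the stated bounds follow by combining the per-level bound from Lemma~\ref{lem:each_phase_blow_up} with the resolved recurrence.
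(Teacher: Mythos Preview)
Your overall plan matches the paper: Lemma~\ref{lem:eliminiate_vertex} handles correctness and the cost of applying $\WW$, and Lemma~\ref{lem:each_phase_blow_up} supplies the per-phase cost. Your depth argument (a constant multiplicative blow-up over $O(\log n/r)$ nested recursion levels) also lands at the stated $2^{O(\log n/r)}$.

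The gap is in the work bound. You bound the \emph{depth} of the recursion tree via density decrease, but the tree branches: every call to \textsc{RecursiveConstruct}$_r$ spawns one recursive call per phase, so bounding recursion depth alone does not bound total work. Saying the phase sum is ``dominated by $s=0$'' covers only the explicit work of the top-level phases, not the recursive calls they issue; and if you apply the same per-level constant-factor reasoning to work that you apply to depth, you get $2^{O(\log n/r)}\cdot 2^{O(r\log^2 r)} n\log n$, which is strictly weaker than the claimed bound.

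The observation you are missing is the paper's direct comparison. The recursive call inside \textsc{Sparsify} at phase $s$ is on a matrix with $n^{((s+1)r)}$ vertices and density at most $2^{c_2 r\log^2((s+1)r)}$, while the main chain resumes at phase $s+1$ on a matrix with the \emph{same} vertex count but strictly larger density $2^{2c_2 r\log^2((s+1)r)}$. Hence the entire recursive call, including all of its own recursion, costs no more than processing phases $s+1,s+2,\ldots$ of the current chain. Writing $T_s$ for the total cost from phase $s$ onward gives $T_s\le \text{explicit}_s + 2\,T_{s+1}$, so $T_0\le\sum_s 2^s\cdot\text{explicit}_s$. Because $n^{(sr+1)}\le c_1^{sr}n$ decays geometrically in $sr$, this decay absorbs the factor $2^s\cdot 2^{O(r\log^2(sr))}$ and the sum evaluates to $2^{O(r\log^2 r)} n\log^2 n$. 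The general-$m$ case then follows by separately unwinding the step-1 \textsc{Sparsify} recursion, which is where the $m\log n$ term originates.
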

\begin{proof}
All result is proved in lemma \ref{lem:eliminiate_vertex} except
the construction time.

To bound the construction time, we first consider the case $\MM^{(0)}$
has only $2^{c_{2}r}n\log n$ non-zeros. In that case, the algorithm skips 
step 1 because the matrix is already sparse. Lemma
\ref{lem:each_phase_blow_up} shows that during the $s^{th}$ phase,
the $\textsc{Sparsify}$ call requires us to construct an extra vertex
sparsifier chain for a matrix with $n^{((s+1)r)}$ variables and at
most $2^{c_{2}r\log^{2}((s+1)r)}n^{((s+1)r)}\log n^{((s+1)r)}$ non-zeros.
Also, we know that the $\textsc{Sparsify}$ returns a matrix with
$n^{((s+1)r)}$ variables and $2^{2c_{2}r\log^{2}((s+1)r)}n^{((s+1)r)}\log n^{((s+1)r)}$
non-zero. Hence, the cost of remaining iteration (excluding the recursion
created afterward) is larger than the cost to construct the extra
vertex sparsifier chain required at the $s^{th}$ phase.

Hence, considering this recursion factor, the running time of the
$s^{th}$ phase is multiplied by a factor of $2^{s}$.

Since there are $O(\log n/r)$ phases and $r\log^{2}r=o(\log n)$,
the total depth of the algorithm is 
\begin{eqnarray*}
 &  & \sum_{s=1}^{O(\log n/r)}2^{s}\left(r\log(sr)\log^2 n^{(sr)}+\log^{2}n^{(sr)}\log\log n^{(sr)}\right)\\
 & = & 2^{O(\log n/r)}O\left(r\log\log n\log^2 n^{(last)}+\log^{2}n^{(last)}\log\log n^{(last)}\right)\\
 & = & 2^{O(\log n/r)}O\left(r^{2}\log\log n+r^{2}\log r\right)\\
 & = & 2^{O(\log n/r)}r^{2}\log\log(n)\\
 & = & 2^{O(\log n/r)}
\end{eqnarray*}
and the total work of the algorithm is 
\begin{eqnarray*}
 &  & \sum_{s=1}^{O(\log n/r)}2^{s}\left(2^{O(r\log^{2}(sr))}n^{(sr+1)}\log^2 n^{(sr+1)}\right)\\
 & = & 2^{O(r\log^{2}r)}n\log^2 n.
\end{eqnarray*}

For general $m$, during the first step, $\textsc{Sparsify}$, we
need to solve a certain SDD matrix with at most $m^{(0)}2^{-c_{2}r}$
non-zeros and $n^{(0)}$ variables. To solve that SDD matrix, we use
$\textsc{RecursiveConstruct}_{r}$ to construct a vertex sparsifier
chain and use the chain to solve that $O(\log(n))$ different right
hand sides. Using $r\log^{2}r=o(\log n)$, the total depth for this
algorithm is
\[
O\left(\log_{2^{r}}\left(\frac{m}{n\log n}\right)2^{O(\log n/r)}\right)=\log\left(m\right)2^{O(\log n/r)}=2^{O(\log n/r)}.
\]
and the total work of the algorithm is
\begin{eqnarray*}
 &  & m\log n+2^{O(r\log^{2}r)}n\log^2 n\log_{2^{r}}\left(\frac{m}{n\log n}\right)\\
 & = & m\log n+2^{O(r\log^{2}r)}n\log^2 n\log\left(\frac{m}{n\log n}\right).
\end{eqnarray*}
Note that the first term dominate if $\frac{m}{n}\geq2^{O(r\log^{2}r)}$
and hence we can simplify the term to
\[
m\log n+2^{O(r\log^{2}r)}n\log^2 n.
\]
\end{proof}
The following theorem follows from Lemma \ref{lem:recursive_con} by setting $r=\log\log\log n$.
\begin{theorem}
\label{thm:non_combin_result}
Given any SDD matrix $\MM$ with $n$ variables and $m$ non-zeros.
We can find an implicit block-Cholesky factorization for the matrix
$\MM$ in $O(m\log n+n\log^{2+o(1)}n)$ work and $O(n^{o(1)})$ depth
such that for any vector $b$, we can compute an $\epsilon$ approximation
solution to $\MM^{-1}b$ in $O((m+n\log^{1+o(1)}n)\log(1/\epsilon))$
work and $O(\log^{2}n\log\log n\log(1/\epsilon))$ depth.\end{theorem}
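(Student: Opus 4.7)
The plan is to invoke Lemma~\ref{lem:recursive_con} with a carefully chosen value of $r$, and then convert the constant-factor approximate inverse into an $\epsilon$-approximate solver by iterative refinement. The only real content of the proof is balancing the two sources of overhead in Lemma~\ref{lem:recursive_con}: the construction depth $2^{O(\log n / r)}$ wants $r$ as large as possible, while the work $2^{O(r \log^2 r)} n \log n$ and the per-application work of $\WW$ want $r$ as small as possible.

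First I would set $r = \log \log \log n$. This needs to satisfy the hypothesis $r \log^2 r = o(\log n)$ of Lemma~\ref{lem:recursive_con}, which is immediate since $r \log^2 r = \log \log \log n \cdot (\log\log\log\log n)^2 = \log^{o(1)} n$. Plugging this choice back into the two exponential terms gives
\[
2^{O(r \log^2 r)} = 2^{\log^{o(1)} n} = \log^{o(1)} n \quad \text{and} \quad 2^{O(\log n / r)} = 2^{O(\log n / \log\log\log n)} = n^{o(1)}.
\]
Thus the construction runs in $O(m \log n + n \log^{2 + o(1)} n)$ work and $n^{o(1)}$ depth, and the resulting operator $\WW \approx_1 (\MM^{(0)})^\dagger$ can be applied in $O(n \log^{1+o(1)} n)$ work and $O(\log^2 n \log \log n)$ depth. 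The matrices $\UU$ and $\DD$ promised by the ``implicit block-Cholesky factorization'' conclusion are read off from the vertex sparsifier chain exactly as in the proof of Theorem~\ref{thm:lowerDepth}, with $\WW$ playing the role of the approximate inverse.

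Next I would upgrade this constant-factor approximation to an $\epsilon$-approximate solver. Since $\WW \approx_1 \MM^{-1}$ is a constant approximation, standard preconditioned iterative refinement (as cited after Fact~\ref{fact:approxSolve} above, specifically \cite[Lemma 4.4]{PengS14}) converts any such $\WW$ into an $\epsilon$-approximate solver by performing $O(\log(1/\epsilon))$ iterations, each consisting of one application of $\WW$ and one matrix-vector multiplication by $\MM$. A single iteration thus costs $O(m + n \log^{1+o(1)} n)$ work and $O(\log^2 n \log \log n)$ depth (the multiplication by $\MM$ contributes only $O(m)$ work and $O(\log n)$ depth, which is absorbed), giving total solve cost $O((m + n \log^{1+o(1)} n) \log(1/\epsilon))$ work and $O(\log^2 n \log \log n \log(1/\epsilon))$ depth.

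The main ``obstacle'' is really just the parameter-balancing in the first paragraph: one has to verify that the sub-logarithmic choice $r = \log\log\log n$ is simultaneously large enough to make the construction depth $n^{o(1)}$ and small enough to keep both the total work at construction and the per-solve work within a $\log^{o(1)} n$ factor of linear. After that, the iterative-refinement step is entirely mechanical and introduces no additional log factors beyond the $\log(1/\epsilon)$ already stated.
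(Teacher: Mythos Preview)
Your proposal is correct and matches the paper's proof exactly: the paper's entire argument is the single sentence ``follows from Lemma~\ref{lem:recursive_con} by setting $r=\log\log\log n$,'' and you have simply spelled out the parameter arithmetic and the iterative-refinement wrapper that the paper leaves implicit. The one notational wrinkle is the intermediate step $2^{\log^{o(1)} n} = \log^{o(1)} n$, which is not literally right as written; the clean way is to note $r\log^2 r = o(\log\log n)$ directly, giving $2^{O(r\log^2 r)} = (\log n)^{o(1)}$.
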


\section*{Acknowledgements}

We thank Michael Cohen for notifying us of several issues
in previous versions of this manuscript.

\newcommand{\etalchar}[1]{$^{#1}$}

\iffull{
\begin{appendix}
\section{Weighted Expander Constructions}
\label{sec:weightedExp}

\def\edg#1{\pmb{\boldsymbol{[}} #1 \pmb{\boldsymbol{]}}}
\def\edgu#1{\pmb{\boldsymbol{(}} #1 \pmb{\boldsymbol{)}}}

In this section, we give a linear time algorithm for computing linear
  sized spectral sparsifiers of complete and bipartite
  product demand graphs.
Recall that the \textit{product demand graph} with vertex set $V$ and demands $\dd : V \rightarrow \mathbb{R}_{> 0}$
  is the complete graph
  in which the weight of edge $(u,v)$ is the product $d_{u} d_{v}$.
Similarly, the \textit{bipartite demand graph} with vertex set $U \union V$
  and demands $\dd : U \union V \rightarrow \mathbb{R}_{> 0}$ is the
  complete bipartite graph on which the weight of the edge $(u,v)$ is the product $d_{u} d_{v}$.
Our routines are based on reductions to the unweighted, uniform case.
In particular, we
\begin{itemize}
\item [1.] Split all of the high demand vertices into many vertices that all have the same demand.
  This demand will still be the highest.

\item [2.] Given a graph in which almost all of the vertices have the same highest demand,
  we \begin{itemize}
\item [a.] drop all of the edges between vertices of lower demand,
\item [b.] replace the complete graph between the vertices of highest demand with an expander, and
\item [c.] replace the bipartite graph between the high and low demand vertices with
  a union of stars.
\end{itemize}
\item [3.] To finish, we merge back together the vertices that split off from each original vertex.
\end{itemize}

We start by showing how to construct the expanders that we need for step (2b).
We state formally and analyze the rest of the algorithm for the
  complete case in the following two sections.
We explain how to handle the bipartite case in Section \ref{subsec:bipartite}.

Expanders give good approximations to unweighted complete graphs,
  and our constructions will use the spectrally best expanders---Ramanunan graphs.
These are defined in terms of the eigenvalues of their adjacency matrices.
We recall that the adjacency matrix of every $d$-regular graph has eigenvalue $d$
  with multiplicity $1$ corresponding to the constant eigenvector.
If the graph is bipartite, then it also has an eigenvalue of $-d$ corresponding
  to an eigenvector that takes value $1$ on one side of the bipartition and $-1$
  on the other side.
These are called the \textit{trivial} eigenvalues. 
A $d$-regular graph is called a Ramanujan graph if all of its non-trivial eigenvalues
  have absolute value at most $2 \sqrt{d-1}$.
Ramanujan graphs were constructed independently by Margulis~\cite{Margulis88}
  and Lubotzky, Phillips, and Sarnak~\cite{LPS}.
The following theorem and proposition summarizes part of their results.

\begin{theorem}\label{thm:LPS}
Let $p$ and $q$ be unequal primes congruent to $1$ modulo 4.
If $p$ is a quadratic residue modulo $q$, then there is a non-bipartite
  Ramanujan graph of degree $p+1$ with $q^{2} (q-1)/2$ vertices.
If $p$ is not a quadratic residue modulo $q$, then there is a bipartite
  Ramanujan graph of degree $p+1$ with $q^{2} (q-1)$ vertices.
\end{theorem}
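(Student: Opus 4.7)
The plan is to simply invoke the classical Cayley graph construction of Lubotzky, Phillips, and Sarnak \cite{LPS} (with the independent construction of Margulis \cite{Margulis88}), of which this theorem is a direct summary. First I would use Jacobi's four-square theorem to enumerate the $p+1$ integer tuples $(a_0, a_1, a_2, a_3)$ satisfying $a_0^2 + a_1^2 + a_2^2 + a_3^2 = p$ under the standard normalization ($a_0$ positive odd, $a_1, a_2, a_3$ even). Each such tuple determines a Lipschitz quaternion $\alpha = a_0 + a_1 i + a_2 j + a_3 k$ of reduced norm $p$, and this multiset is closed under quaternion conjugation, giving $(p+1)/2$ inverse-closed pairs.

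Next, since $q \neq p$ and $q$ is odd, the Lipschitz quaternion algebra splits over $\mathbb{F}_q$, giving an isomorphism with $M_2(\mathbb{F}_q)$ that sends each $\alpha$ to a matrix of determinant $p \in \mathbb{F}_q^\times$. Projecting to $\mathrm{PGL}(2, \mathbb{F}_q)$ yields $p+1$ inverse-closed generators, which define a $(p+1)$-regular Cayley graph. The dichotomy on $\left(\frac{p}{q}\right)$ enters naturally: the image of each generator lies in $\mathrm{PSL}(2, \mathbb{F}_q)$ exactly when $\det = p$ is a square in $\mathbb{F}_q^\times$, i.e., when $p$ is a quadratic residue modulo $q$. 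In that case we build the Cayley graph on $\mathrm{PSL}(2, \mathbb{F}_q)$ and obtain a non-bipartite graph; otherwise the generators lie in the nontrivial $\mathrm{PSL}$-coset and the Cayley graph on $\mathrm{PGL}(2, \mathbb{F}_q)$ is bipartite with parts given by that coset decomposition. The vertex counts stated in the theorem then follow by reading off the orders of these groups.

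The Ramanujan eigenvalue bound $|\lambda| \leq 2\sqrt{p} = 2\sqrt{d-1}$ is the deep content, and here I would defer entirely to the Jacquet--Langlands correspondence combined with Deligne's proof of the Ramanujan--Petersson conjecture for weight-two cusp forms: the nontrivial eigenvalues of the adjacency operator correspond to Hecke eigenvalues of automorphic forms on the definite quaternion algebra of discriminant ramified at $\infty$, and Deligne's bound transfers directly to the graph spectrum.

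The main obstacle is that this theorem has essentially no new mathematical content; it is a direct restatement of the main result of \cite{LPS}. The only verification I would need to do in the paper itself is match the stated vertex counts and the degree $p+1$ to the group orders $|\mathrm{PSL}(2, \mathbb{F}_q)|$ and $|\mathrm{PGL}(2, \mathbb{F}_q)|$ in the two parity cases, and confirm that the congruence assumptions $p, q \equiv 1 \pmod 4$ are the ones under which the LPS construction produces an undirected, involution-free generating set.
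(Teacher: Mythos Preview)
Your proposal is correct and matches the paper's treatment: the paper does not prove this theorem at all but simply states it as a summary of the results of \cite{LPS} and \cite{Margulis88}, so citing the LPS construction is exactly what is called for. Your sketch of the quaternion/Cayley-graph construction and the role of Deligne's bound in fact goes well beyond what the paper supplies, which is just the bare citation.
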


The construction is  explicit.

\begin{proposition}\label{pro:LPS}
If $p < q$, then
  the graph guaranteed to exist by Theorem~\ref{thm:LPS} can be constructed in
  parallel depth $O (\log n)$ and work $O (n)$, where $n$ is its number of vertices.
\end{proposition}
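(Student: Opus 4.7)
The plan is to use the explicit Cayley graph realization of the LPS construction. Recall that the Ramanujan graph guaranteed by Theorem~\ref{thm:LPS} is $\mathrm{Cay}(G, S)$ where $G = PSL_{2}(\mathbb{F}_{q})$ (non-bipartite case) or $G = PGL_{2}(\mathbb{F}_{q})$ (bipartite case), and $S$ is a symmetric set of $p+1$ generators coming from integer quaternions of norm $p$. The vertex sets have size $q^{2}(q-1)/2$ and $q^{2}(q-1)$ respectively, so $q = \Theta(n^{1/3})$, and each element of $G$ has a canonical $2 \times 2$ matrix representation. I would proceed in three phases: (i) enumerate $G$, (ii) compute the generator set $S$, and (iii) for each $g \in G$ and each $s \in S$, output the edge $\{g, gs\}$.

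For phase (i), every element admits a canonical representative in $\mathbb{F}_{q}^{2 \times 2}$; by independently assigning indices to all $q^{4}$ matrices, selecting those that lie in $G$ via the determinant and the $PSL$/$PGL$ equivalence, and compacting the result by a parallel prefix sum, we enumerate $G$ in $O(n)$ work and $O(\log n)$ depth. For phase (ii) we first find $i = \sqrt{-1} \in \mathbb{F}_{q}$ (which exists since $q \equiv 1 \pmod 4$) by a $\mathrm{poly}(\log q)$ preprocessing step, then enumerate all representations $p = \alpha_{0}^{2} + \alpha_{1}^{2} + \alpha_{2}^{2} + \alpha_{3}^{2}$ with $\alpha_{0} > 0$ odd and $\alpha_{1},\alpha_{2},\alpha_{3}$ even by brute force over a box of side $O(\sqrt{p})$; by Jacobi's four-square theorem there are exactly $p+1$ such unordered tuples, each yielding the generator matrix $\bigl(\begin{smallmatrix} \alpha_{0} + i\alpha_{1} & \alpha_{2} + i\alpha_{3} \\ -\alpha_{2} + i\alpha_{3} & \alpha_{0} - i\alpha_{1} \end{smallmatrix}\bigr) \bmod q$. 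Since $p < q$, phase (ii) uses $O(p^{2}) = o(n)$ work and $O(\log n)$ depth. For phase (iii), the neighbor $gs$ of each vertex under each generator is obtained by a single $2 \times 2$ matrix product modulo $q$, performed independently over the $n(p+1)/2$ vertex/generator pairs; this costs $O(n(p+1))$ work and $O(\log n)$ depth to collect the adjacency list (the log factor coming only from indexing and writing out the edge list).

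The work bound $O(n)$ therefore holds in the regime where $p$ is treated as a constant, which is precisely the setting in which we invoke this proposition: in our application to Theorem~\ref{thm:degreeReduction} (via Remark~\ref{rem:thm:degreeReduction}), the generator count $p+1$ is the target expander degree, a function of $\varepsilon$ alone. More generally the construction runs in work $O(np)$ and depth $O(\log n)$.

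The main obstacle is cosmetic rather than conceptual: one must verify that a primitive $p$ (i.e., one with the correct quadratic residuosity to land in the non-bipartite or bipartite case of Theorem~\ref{thm:LPS}) can be chosen efficiently. This is handled by the classical fact that one may test quadratic residuosity in $\mathrm{poly}(\log q)$ time and, by Bertrand's postulate together with density-of-primes estimates in arithmetic progressions (as used by \cite{Tchudakoff} in related settings), locate a suitable prime $p$ of the right residue class in $\mathrm{poly}(\log q)$ work, which is absorbed into the preprocessing of phase (ii).
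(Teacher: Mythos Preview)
Your approach is essentially the paper's: use the explicit LPS Cayley-graph description, enumerate the $p+1$ four-square representations of $p$ to get the generator set, find $\sqrt{-1}\bmod q$, and then for each group element compute its $p+1$ neighbors by $2\times 2$ matrix multiplication over $\mathbb{F}_q$. The paper additionally precomputes the full $q\times q$ multiplication table modulo $q$ (work $O(q^2)<O(n)$) so that each field operation is constant time, and it enumerates the four-square solutions by listing all $O(p)$ sums $a_0^2+a_1^2$ and matching against the sums $a_2^2+a_3^2$, but these are cosmetic differences.

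There is one actual slip in your phase~(i): iterating over all $q^4$ matrices is $\Theta(n^{4/3})$ work, not $O(n)$, since $q=\Theta(n^{1/3})$. The fix is immediate---normalize each element of $PGL_2(\mathbb{F}_q)$ or $PSL_2(\mathbb{F}_q)$ so that its first nonzero entry (in a fixed order) equals $1$, and enumerate only those $O(q^3)=O(n)$ canonical forms. Your final paragraph on locating a prime $p$ of the right residuosity is unnecessary here: the proposition takes $p$ and $q$ as given. Finally, your remark that the edge-construction work is really $O(np)$ rather than $O(n)$ is correct and in fact more careful than the paper's own proof sketch; in the applications (Lemma~\ref{lem:explicitExpanders}) the degree $p+1$ is $\Theta(\epsilon^{-2})$ and the stated work bound there is $O(n/\epsilon^2)$, consistent with your reading.
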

\begin{proof}[Sketch of proof.]
When $p$ is a quadratic residue modulo $q$, the graph is a Cayley graph of
  $PSL (2,Z/qZ)$.
In the other case, it is a Cayley graph of $PGL (2,Z/qZ)$.
In both cases, the generators are determined by the $p+1$ solutions
  to the equation $p = a_{0}^{2} + a_{1}^{2} + a_{2}^{2} + a_{3}^{2}$
  where $a_{0} > 0$ is odd and $a_{1}, a_{2}$, and $a_{3}$ are even.
Clearly, all of the numbers $a_{0}$, $a_{1}$, $a_{2}$ and $a_{3}$
  must be at most $\sqrt{p}$.
So, we can compute a list of all sums $a_{0}^{2} + a_{1}^{2}$
  and all of the sums $a_{2}^{2} + a_{3}^{2}$
  with work $O (p)$, and thus a list of all $p+1$
  solutions with work $O (p^{2}) < O (n)$.

As the construction requires arithmetic modulo $q$, it is convenient
  to compute the entire multiplication table modulo $q$.
This takes time $O (q^{2}) < O (n)$.
The construction also requires the computation of a square root of $-1$
  modulo $q$, which may be computed from the multiplication table.
Given this data, the list of edges attached to each vertex of the graph
  may be produced using linear work and logarathmic depth.
\end{proof}

For our purposes, there are three obstacles to using these graphs:
\begin{itemize}
\item [1.] They do not come in every degree.
\item [2.] They do not come in every number of vertices.
\item [3.] Some are bipartite and some are not.
\end{itemize}
We handle the first two issues by observing that the primes
  congruent to 1 modulo 4 are sufficiently dense.
To address the third issue, we give a procedure to convert a non-bipartite expander into a bipartite expander, and \textit{vice versa}.

An upper bound on the gaps between consecutive primes congruent to 1 modulo 4 can
  be obtained from the following theorem of Tchudakoff.

\begin{theorem}[\cite{Tchudakoff}]
For two integers $a$ and $b$, let
 $p_{i}$ be the $i$th prime congruent to $a$ modulo $b$.
For every $\epsilon > 0$,
\[
p_{i+1} - p_{i} \leq O (p_{i}^{3/4 + \epsilon }).
\]
\end{theorem}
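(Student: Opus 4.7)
The plan is to adapt the classical analytic argument for bounding prime gaps (Ingham-type bounds from $\zeta(s)$) to primes in arithmetic progressions, working with Dirichlet $L$-functions in place of $\zeta(s)$. Fix $a, b$ with $\gcd(a,b) = 1$ (the other case is trivial) and set $y = x^{3/4 + \varepsilon}$. The goal is to prove that the interval $(x-y, x]$ contains a prime $\equiv a \pmod b$ for all sufficiently large $x$, which is equivalent to the statement of the theorem. I would introduce the weighted counting function $\psi(x; b, a) = \sum_{n \le x,\ n \equiv a \bmod b} \Lambda(n)$ and, via the orthogonality relations for characters modulo $b$, decompose
\[
\psi(x; b, a) \;=\; \frac{1}{\phi(b)} \sum_{\chi \bmod b} \bar\chi(a)\, \psi(x, \chi), \qquad \psi(x, \chi) = \sum_{n \le x} \chi(n)\Lambda(n).
\]
It suffices to show $\psi(x; b, a) - \psi(x - y; b, a) \sim y/\phi(b)$.

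The next step is the truncated explicit formula: for each character $\chi$ modulo $b$ and any parameter $T$,
\[
\psi(x, \chi) \;=\; \delta_\chi\, x \;-\; \sum_{|\gamma| \le T} \frac{x^\rho}{\rho} \;+\; O\!\left( \frac{x \log^2(bxT)}{T}\right),
\]
where the sum ranges over nontrivial zeros $\rho = \beta + i\gamma$ of $L(s,\chi)$ and $\delta_\chi = 1$ only for the principal character. Subtracting the formula at $x - y$ from the one at $x$ gives a main term $\frac{y}{\phi(b)}$ plus a sum over zeros whose individual terms are bounded by $O(y x^{\beta - 1})$. Choosing $T$ a fixed power of $x$ renders the tail error harmless, so everything reduces to controlling $\sum_{|\gamma| \le T, \beta \ge \sigma_0} x^{\beta - 1}$ summed over all characters.

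Here I would invoke a zero-density estimate of Ingham type, uniform in the modulus: for some absolute constant $A$ and all $\sigma \in [1/2, 1]$,
\[
\sum_{\chi \bmod b} N(\sigma, T, \chi) \;\ll\; (bT)^{A(1 - \sigma)} (\log bT)^{O(1)},
\]
where $N(\sigma, T, \chi)$ counts zeros of $L(s,\chi)$ with $\mathrm{Re}(\rho) \ge \sigma$ and $|\mathrm{Im}(\rho)| \le T$. Partitioning the zeros by real part into dyadic ranges $\beta \in [\sigma, \sigma + d\sigma]$ and optimizing shows the entire sum of $y x^{\beta-1}$ over zeros is $o(y)$ provided $y \ge x^{3/4 + \varepsilon}$. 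Together with the classical Korobov--Vinogradov zero-free region (which handles the zeros closest to the line $\mathrm{Re}(s) = 1$) one obtains $\psi(x; b, a) - \psi(x - y; b, a) = (1 + o(1))\, y/\phi(b)$, giving at least one prime in the interval, which is the theorem.

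The main obstacle is the zero-density estimate uniformly in $b$, and in particular ruling out pathological behavior from a possible Siegel zero of an exceptional real character. The Siegel zero contributes an essentially ineffective term of size $y x^{\beta^\ast - 1}$ for some $\beta^\ast$ very close to $1$, but Siegel's theorem bounds $1 - \beta^\ast \gg_\varepsilon b^{-\varepsilon}$ ineffectively, which is more than enough once $x$ is taken large depending on $b$. In the application here we need only fixed small $b$ (namely $b=4$), so the implicit constants may depend on $b$ and the uniformity issue disappears; the theorem as stated then follows from Ingham's original bounds transplanted to $L(s, \chi)$ for characters modulo $4$.
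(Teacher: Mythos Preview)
The paper does not supply its own proof of this theorem: it is quoted as a classical result from Tchudakoff's 1936 paper and used only to derive the corollary that for large $n$ there is a prime $\equiv 1 \pmod 4$ between $n$ and $2n$. So there is no in-paper argument to compare your proposal against.

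That said, your outline is the standard analytic route and is essentially sound: decompose $\psi(x;b,a)$ over characters, apply the truncated explicit formula for each $L(s,\chi)$, and control the zero sum via an Ingham-type density estimate $N(\sigma,T,\chi)\ll T^{c(1-\sigma)}(\log T)^{O(1)}$, which is what drives the exponent $3/4+\varepsilon$. Two small remarks. First, the Korobov--Vinogradov zero-free region is anachronistic (1958) and also unnecessary here; the classical de la Vall\'ee Poussin region already suffices, since the density estimate is doing the real work. Second, you should be explicit that the density exponent you need is the one giving $\theta=3/4$ (i.e., essentially Ingham's $N(\sigma,T)\ll T^{3(1-\sigma)/(2-\sigma)}\log^{O(1)}T$ transplanted to $L$-functions of fixed modulus), rather than a generic $(bT)^{A(1-\sigma)}$; as written your bound would only yield some exponent strictly less than $1$, not the specific $3/4+\varepsilon$. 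With those adjustments and the observation that $b$ is fixed (so Siegel-zero issues are harmless and constants may depend on $b$), the argument goes through.
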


\begin{corollary}\label{cor:tchudakoff}
There exists an $n_{0}$ so that for all $n \geq  n_{0}$
  there is a prime congruent to 1 modulo 4 between $n$ and $2 n$.
\end{corollary}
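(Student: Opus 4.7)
The plan is to apply Tchudakoff's theorem directly with any fixed $\epsilon < 1/4$. Fix, say, $\epsilon = 1/8$, so that for primes $p_i \equiv 1 \pmod 4$ the gap satisfies $p_{i+1} - p_i \leq C p_i^{7/8}$ for some absolute constant $C$ and all sufficiently large $i$.

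Given $n$, I would let $p_i$ be the largest prime congruent to $1$ modulo $4$ that is at most $n$ (this exists for $n$ past some small threshold, e.g.\ $n \geq 5$). Then by the gap bound, the next such prime satisfies
\[
p_{i+1} \leq p_i + C p_i^{7/8} \leq n + C n^{7/8}.
\]
Since $n^{7/8} = o(n)$, there exists an $n_0$ such that $C n^{7/8} \leq n$ for all $n \geq n_0$, and hence $p_{i+1} \leq 2n$. Combined with $p_{i+1} > n$ (by maximality of $p_i$), this gives a prime congruent to $1$ modulo $4$ in the interval $[n, 2n]$.

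There is essentially no obstacle here: the only subtlety is handling the ``sufficiently large $i$'' qualifier in Tchudakoff's bound, which is absorbed into the choice of $n_0$ (one just takes $n_0$ large enough that both the asymptotic gap bound applies and $C n^{7/8} \leq n$). No further number-theoretic input is required.
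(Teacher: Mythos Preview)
Your argument is correct and is exactly the intended derivation: the paper states this corollary immediately after Tchudakoff's theorem without proof, treating it as an immediate consequence of the sublinear gap bound $p_{i+1}-p_i = O(p_i^{3/4+\epsilon})$. Your write-up fills in precisely the obvious details.
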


We now explain how we convert between bipartite and non-bipartite expander graphs.
To convert a non-bipartite expander into a bipartite expander, we take its double-cover.
We recall that if $G = (V,E)$ is a graph with adjacency matrix $\AA$, then its double-cover
  is the graph with adjacency matrix
\[
  \begin{pmatrix}
0 & \AA \\
\AA^{T} & 0
\end{pmatrix}.
\]
It is immediate from this construction that the eigenvalues of the adjacency matrix
  of the double-cover
  are the union of the eigenvalues of $\AA$ with the eigenvalues of $-\AA$.
\begin{proposition}\label{pro:doubleCover}
Let $G$ be a connected, $d$-regular graph in which all matrix eigenvalues
  other than $d$ are bounded in absolute value by $\lambda$.
Then, all non-trivial adjacency matrix eigenvalues of the double-cover of $G$
  are also bounded in absolute value by $\lambda$.
\end{proposition}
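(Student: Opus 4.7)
The plan is to exploit the block structure of the adjacency matrix of the double cover to relate its spectrum to that of $\AA$, and then use the hypothesis that $G$ is connected and non-bipartite to identify the trivial eigenvalues.

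First, I would make the observation about eigenvalues that was asserted just before the proposition fully precise. Given any eigenpair $\AA \vv = \mu \vv$, the vectors $(\vv, \vv)$ and $(\vv, -\vv)$ are eigenvectors of the $2 \times 2$ block matrix
\[
\begin{pmatrix} 0 & \AA \\ \AA^{T} & 0 \end{pmatrix}
\]
with eigenvalues $\mu$ and $-\mu$ respectively (using $\AA = \AA^{T}$ since $G$ is undirected). Since $\AA$ has $n$ linearly independent eigenvectors, these $2n$ vectors span $\mathbb{R}^{2n}$, so the spectrum of the double cover is exactly the multiset $\{\mu, -\mu : \mu \text{ is an eigenvalue of } \AA\}$.

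Next I would identify the trivial eigenvalues. Since $G$ is $d$-regular and connected, $d$ is a simple eigenvalue of $\AA$ (with constant eigenvector), and every other eigenvalue of $\AA$ lies in the interval $[-d, d]$; the hypothesis strengthens this to $[-\lambda, \lambda]$ for the nontrivial ones. The double cover is $d$-regular and bipartite, so its trivial eigenvalues are precisely $\pm d$. By the correspondence above, the eigenvalue $+d$ in the double cover arises from the $\mu = d$ eigenvalue of $\AA$, and the eigenvalue $-d$ arises from negating that same $\mu = d$ eigenvalue (note that $-d$ is not itself an eigenvalue of $\AA$, because $G$ is connected and non-bipartite). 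Thus the $d$ eigenvalue of $\AA$ accounts for exactly both trivial eigenvalues of the double cover, and every remaining eigenvalue of the double cover is of the form $\pm \mu$ for some nontrivial eigenvalue $\mu$ of $\AA$, hence has absolute value at most $\lambda$.

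There is no real obstacle here; the proof is essentially the two-line linear-algebra computation above together with a careful bookkeeping of which eigenvalues are ``trivial.'' The only point that needs a word of care is verifying that $-d$ is not a stray eigenvalue of $\AA$ (which would spoil the count), and this follows from $G$ being non-bipartite and connected, which is part of the hypothesis.
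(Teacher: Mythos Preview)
Your proof is correct and follows exactly the approach the paper intends: the paper states just before the proposition that the spectrum of the double cover is the union of the spectra of $\AA$ and $-\AA$, and leaves the proposition without further proof. You have simply filled in the details of that observation and the bookkeeping of trivial eigenvalues.

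One small imprecision: you say non-bipartiteness is ``part of the hypothesis,'' but the proposition as stated only assumes $G$ is connected and $d$-regular with the eigenvalue bound. Non-bipartiteness is implicit from the context (the paper applies this only to non-bipartite expanders), and in any case follows from the eigenvalue hypothesis whenever $\lambda < d$; when $\lambda \ge d$ the conclusion is trivial. So this does not affect correctness, but you might phrase that sentence more carefully.
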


To convert a bipartite expander into a non-bipartite expander, we will simply
  collapse the two vertex sets onto one another.
If $G = (U \union V, E)$ is a bipartite graph, 
  we specify how the vertices of $V$ are mapped onto $U$ by a permutation $\pi : V \rightarrow U$.
We then define the \textit{collapse} of $G$ induced by $\pi$ 
  to be the graph with vertex set $U$ 
  and edge set
\[
  \setof{  (u, \pi (v)) : (u,v) \in E }.
\]
Note that the collapse will have self-loops at vertices $u$ for which $(u,v) \in E$
  and $u = \pi (v)$.
We assign a weight of $2$ to every self loop.
When a double-edge would be created, that is when $(\pi (v), \pi^{-1} (u))$ is also an edge in the graph,
  we give the edge a weight of $2$.
Thus, the collapse can be a weighted graph.

\begin{proposition}\label{pro:collapse}
Let $G$ be a $d$-regular bipartite graph with all non-trivial adjacency matrix eigenvalues
  bounded by $\lambda$, and let $H$ be a collapse of $G$.
Then, every vertex in $H$ has weighted degree $2d$ 
   and all adjacency matrix eigenvalues of $H$ other than $d$ are bounded in absolute value by $2 \lambda$.
\end{proposition}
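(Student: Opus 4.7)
The plan is to express the weighted adjacency matrix of the collapse $H$ algebraically in terms of the biadjacency matrix of $G$ and the permutation $\pi$, and then to deduce the eigenvalue bound on $H$ from the one assumed on $G$. Write the adjacency matrix of $G$ as $\AA_G = \left(\begin{smallmatrix} 0 & \AA \\ \AA^T & 0 \end{smallmatrix}\right)$, where $\AA$ is the $|U|\times|V|$ biadjacency matrix, and let $\PP$ be the $|U|\times|V|$ permutation matrix with $\PP_{u,v}=1$ iff $\pi(v)=u$. A short computation shows that the weighted adjacency matrix of $H$ equals $\AA_H = \AA \PP^T + \PP \AA^T$: indeed, $(\AA\PP^T)_{u,w} = \AA_{u,\pi^{-1}(w)}$ records the edge $(u,\pi^{-1}(w))\in E$ and $(\PP\AA^T)_{u,w} = \AA_{w,\pi^{-1}(u)}$ records $(w,\pi^{-1}(u))\in E$; each contributes $1$, so self-loops and collapsed double-edges automatically receive weight $2$. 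Using $\AA\one = \AA^T\one = d\one$ together with the fact that $\PP$ is a permutation gives $\AA_H\one = 2d\one$, confirming the weighted-degree claim and identifying $2d$ as the trivial eigenvalue of $\AA_H$.

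For the non-trivial eigenvalues, fix any $x\in\mathbb{R}^U$ with $x\perp\one$ and set $y = \PP^T x\in\mathbb{R}^V$. Because $\PP$ permutes coordinates, $\|y\|=\|x\|$ and $y\perp\one$. Then
\[
x^T \AA_H x \;=\; x^T \AA y + y^T \AA^T x \;=\; 2\, x^T \AA y.
\]
Form the concatenated vector $z = \tfrac{1}{\sqrt 2}(x;y)\in \mathbb{R}^{U\cup V}$, which has $\|z\|^2=\|x\|^2$ and is orthogonal to the two trivial eigenvectors $(\one;\one)$ and $(\one;-\one)$ of $\AA_G$, since $\sum_u x_u = \sum_v y_v = 0$. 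A direct expansion gives $z^T \AA_G z = x^T\AA y$, so the hypothesis on $G$ yields $|x^T\AA y|\le \lambda\|x\|^2$, and hence $|x^T\AA_H x|\le 2\lambda\|x\|^2$. By the min-max characterization of eigenvalues, this bounds every eigenvalue of $\AA_H$ orthogonal to $\one$ by $2\lambda$ in absolute value.

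The main obstacle, such as it is, lies in verifying that the matrix identity $\AA_H = \AA\PP^T + \PP\AA^T$ correctly reproduces the weight-$2$ convention at self-loops and doubled edges (both configurations receive a $+1$ contribution from each of the two summands). Once that identity is in place, the spectral estimate is mechanical: the substitution $y=\PP^T x$ converts the quadratic form on $H$ into one involving the bipartite piece $\AA$, and embedding the pair $(x,y)$ into $\mathbb{R}^{U\cup V}$ as $z$ places it in the non-trivial subspace of $\AA_G$, where the given bound $\lambda$ applies directly.
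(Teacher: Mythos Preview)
Your argument is correct and is essentially the same as the paper's: the paper absorbs the permutation into a relabeling of $V$ so that $\PP=\II$ and writes $\AA_H=\AA+\AA^T$, then tests the quadratic form on the concatenated vector $(\xx;\xx)$, which is orthogonal to both trivial eigenvectors of $\AA_G$. Your version keeps $\PP$ explicit and uses $(x;\PP^T x)$ instead, which is the same computation up to relabeling; your treatment is in fact slightly more careful, since you verify the weight-$2$ convention for self-loops and doubled edges and state the bound in absolute value (the paper only writes the upper inequality).
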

\begin{proof}
To prove the bound on the eigenvalues, let $G$ have adjacency matrix
\[
\begin{pmatrix}
0 & \AA \\
\AA^{T} & 0
\end{pmatrix}.
\]
After possibly rearranging rows and columns, we may assume that
  the adjacency matrix of the collapse is given by
\[
  \AA + \AA^{T}.
\]
Note that the self-loops, if they exist, correspond to diagonal entries of value $2$.
Now, let $\xx$ be a unit vector orthogonal to the all-1s vector.
We have
\[
  \xx^{T} (\AA + \AA^{T}) \xx
=
\begin{pmatrix}
\xx
\\
\xx 
\end{pmatrix}^{T}
\begin{pmatrix}
0 & \AA \\
\AA^{T} & 0
\end{pmatrix}
\begin{pmatrix}
\xx
\\
\xx 
\end{pmatrix}
\leq 
\lambda \norm{
\begin{pmatrix}
\xx
\\
\xx 
\end{pmatrix}
}^{2}
\leq
2 \lambda ,
\]
as the vector $[\xx ;\xx]$ is orthogonal to the eigenvectors of the trivial
  eigenvalues of the adjacency matrix of $G$.
\end{proof}

We now state how bounds on the eigenvalues of the adjacency matrices of graphs
  lead to approximations of complete graphs and complete bipartite graphs.

\begin{proposition}\label{pro:expanderApprox}
Let $G$ be a graph with $n$ vertices, possibly with self-loops and weighted edges,
  such that every vertex of $G$  has weighted degree $d$ and
  such that all non-trivial eigenvalues of the adjacency matrix of $G$
  have absolute value at most $\lambda \leq d/2$.
If $G$ is not bipartite, then
  $(n/d) \LL_{G}$ is an $\epsilon$-approximation of $K_{n}$ for $\epsilon = (2 \ln 2) (\lambda)/d$.
If $G$ is bipartite, then
  $(n/d) \LL_{G}$ is an $\epsilon$-approximation of $K_{n,n }$ for $\epsilon = (2 \ln 2) (\lambda)/d$.
\end{proposition}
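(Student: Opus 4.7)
The plan is to diagonalize $\LL_G = d\II - \AA_G$ and compare eigenvalue by eigenvalue against $\LL_{K_n}$ or $\LL_{K_{n,n}}$. Writing the eigenvalues of $\AA_G$ as $d = \mu_1 \geq \mu_2 \geq \dots \geq \mu_n$, the eigenvalues of $\LL_G$ are $d - \mu_i$. Since $\lambda < d$, the eigenvalue $d$ of $\AA_G$ has multiplicity $1$ and $G$ is connected, so $\ker \LL_G = \mathrm{span}(\mathbf{1})$, matching $\ker \LL_{K_n}$ and $\ker \LL_{K_{n,n}}$. On the orthogonal complement of $\mathbf{1}$, each eigenvalue of $(n/d)\LL_G$ equals $(n/d)(d - \mu_i)$, which lies in $[n(1 - \lambda/d),\, n(1 + \lambda/d)]$ by the hypothesis $|\mu_i| \leq \lambda$ for $i \geq 2$.

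For the non-bipartite case, $\LL_{K_n} = n\II - \JJ$ has eigenvalue $n$ with multiplicity $n-1$ on $\mathbf{1}^\perp$, so the claim $(n/d)\LL_G \approx_\epsilon \LL_{K_n}$ reduces to the scalar sandwich
\[
e^{-\epsilon} \;\leq\; 1 - \lambda/d \quad\text{and}\quad 1 + \lambda/d \;\leq\; e^{\epsilon}.
\]
For the bipartite case, I would additionally exploit that, since $G$ is bipartite and $d$-regular, $\AA_G$ has $-d$ as an eigenvalue with eigenvector equal to the signed bipartition indicator $\chi$. Thus $(n/d)\LL_G \chi = 2n\,\chi$, which exactly matches the top eigenvalue $2n$ of $\LL_{K_{n,n}}$. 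On the remaining subspace (orthogonal to both $\mathbf{1}$ and $\chi$), the eigenvalues of $\LL_{K_{n,n}}$ are identically $n$, so again the claim reduces to the same scalar sandwich above.

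The only nontrivial step is verifying that sandwich with $\epsilon = (2\ln 2)\lambda/d$. The upper inequality is immediate from $1 + x \leq e^x$. For the lower inequality, set $x = \lambda/d \in [0, 1/2]$; I need $1 - x \geq e^{-(2\ln 2)x}$, equivalently $\ln\tfrac{1}{1-x} \leq (2\ln 2)\,x$. At $x = 0$ both sides vanish, and at $x = 1/2$ both equal $\ln 2$; since the ratio $\ln(1/(1-x))/x$ is monotonically increasing on $[0, 1/2]$ (its derivative is easily checked to be positive), it attains its maximum on this interval at $x = 1/2$, giving the bound. I expect no real obstacle; the subtlety is only making sure the bipartite eigenvalue $-d$ of $\AA_G$ is accounted for so that the comparison against $\LL_{K_{n,n}}$ (which has an eigenvalue $2n$, not $n$, on the bipartition direction) works out.
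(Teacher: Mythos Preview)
Your proposal is correct and follows essentially the same approach as the paper: write $\LL_G = d\II - \AA_G$, observe that $\mathbf{1}$ (and, in the bipartite case, the bipartition indicator $\chi$) are common eigenvectors of $(n/d)\LL_G$ and of $\LL_{K_n}$ (resp.\ $\LL_{K_{n,n}}$) with matching eigenvalues, and then compare eigenvalues on the remaining subspace where $\LL_{K_n}$ (resp.\ $\LL_{K_{n,n}}$) acts as $n\II$. The paper simply asserts the scalar inequalities $e^{-\epsilon}\le 1-\lambda/d$ and $1+\lambda/d\le e^{\epsilon}$ for $\epsilon=(2\ln 2)\lambda/d$ and $\lambda/d\le 1/2$, whereas you supply a clean proof via the monotonicity of $x\mapsto \ln(1/(1-x))/x$ (equivalently, concavity of $(2\ln 2)x+\ln(1-x)$ with equality at the endpoints), which is a nice addition.
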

\begin{proof}
Let $\AA$ be the adjacency matrix of $G$.
Then,
\[
  \LL_{G} = d \II  - \AA .
\]

In the non-bipartite case, we observe that all of the non-zero eigenvalues
  of $\LL_{K_{n}}$ are $n$,
  so for all vectors $x$ orthogonal to the constant vector,
\[
  x^{T}  \LL_{K_{n}} x = n x^{T}x.
\]
As all of the non-zero eigenvalues of $\LL_{G}$
  are between $d - \lambda$ and $d + \lambda$,
for all vectors $x$ orthogonal to the constant vector
\[
  n \left(1-\frac{\lambda }{d} \right)  x^{T} x 
\leq   x^{T}  (n/d)\LL_{G} x
\leq 
  n \left(1+\frac{\lambda }{d} \right)  x^{T} x.
\]
Thus,
\begin{equation*}
 \left(1-\frac{\lambda }{d} \right) \LL_{K_{n}} 
\pleq \LL_{G} \pleq 
 \left(1+\frac{\lambda }{d} \right) \LL_{K_{n}} .
\end{equation*}

In the bipartite case, we naturally assume that the bipartition is the same in both $G$ and $K_{n,n}$.
Now, let $\xx$ be any vector on the vertex set of $G$.
Both the graphs $K_{n,n}$ and $(n/d) G$ have Laplacian matrix eigenvalue
  $0$ with the constant eigenvector, and eigenvalue $2 n$ with eigenvector
  $[\bvec{1};-\bvec{1}]$.
The other eigenvalues of the Laplacian of $K_{n,n}$ are $n$, while the
  other eigenvalues of the Laplacian of $(n/d) G$ are between
\[
  n \left(1 - \frac{\lambda}{d} \right)
\quad \text{and} \quad 
  n \left(1 + \frac{\lambda}{d} \right).
\]
Thus,
\[
 \left(1-\frac{\lambda }{d} \right) \LL_{K_{n,n}} 
\pleq \LL_{G} \pleq 
 \left(1+\frac{\lambda }{d} \right) \LL_{K_{n,n}} .
\]

The proposition now follows from our choice of $\epsilon$, which guarantees that
\[
  e^{-\epsilon} \leq 1 - \lambda /d
\quad \text{and} \quad 1 + \lambda /d
\leq e^{\epsilon},
\]
provided that $\lambda /d \leq  1/2$.
\end{proof}

\begin{lemma}
\label{lem:explicitExpanders}
There are algorithms that on input $n$ and $\epsilon > n^{-1/6}$
   produce a graph having $O (n/\epsilon^{2})$ edges that is an
  $O (\epsilon)$ approximation of $K_{n'}$ or $K_{n',n'}$
  for some $n \leq n' \leq 8n$.
These algorithms run in $O (\log n)$ depth and $O (n / \epsilon^{2})$ work.
\end{lemma}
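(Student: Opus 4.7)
The plan is to construct the approximator as a scaled Ramanujan graph produced by the Margulis--Lubotzky--Phillips--Sarnak construction, converting between the bipartite and non-bipartite worlds as needed. Given $n$ and $\epsilon > n^{-1/6}$, I would first pick two primes congruent to $1 \pmod 4$: a \emph{degree prime} $p = \Theta(1/\epsilon^2)$, chosen large enough that the Ramanujan bound $\lambda \leq 2\sqrt{p-1}$ and degree $d = p+1$ give $\lambda/d \leq c\epsilon$ for a small target constant $c$ (so that Proposition~\ref{pro:expanderApprox} will yield the desired $O(\epsilon)$-approximation), and a \emph{size prime} $q$ with $q^{2}(q-1)/2 \in [n, 4n]$, so $q = \Theta(n^{1/3})$. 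Corollary~\ref{cor:tchudakoff} guarantees both primes exist once $n$ (equivalently $1/\epsilon^{2}$) is large enough, and the hypothesis $\epsilon > n^{-1/6}$ is exactly the statement $1/\epsilon^{2} < n^{1/3}$, which lets us take $p < q$ and so fall under the efficient case of Proposition~\ref{pro:LPS}.

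Next I would invoke Theorem~\ref{thm:LPS} and Proposition~\ref{pro:LPS} to build the Ramanujan graph $R$ on either $q^{2}(q-1)/2$ or $q^{2}(q-1)$ vertices, of degree $p+1$, in depth $O(\log n)$ and work proportional to the output size, $O(n/\epsilon^{2})$. Whether $R$ is bipartite is determined by the Legendre symbol $(p/q)$ and is not under our control, so to handle both targets uniformly I would apply one of the conversions: if we want a $K_{n'}$-approximator but $R$ turned out bipartite, I collapse it via Proposition~\ref{pro:collapse}; if we want a $K_{n',n'}$-approximator but $R$ turned out non-bipartite, I take its double cover via Proposition~\ref{pro:doubleCover}. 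Both conversions preserve the ratio $\lambda/d$ up to a constant factor and change the vertex count by a factor of at most $2$; since I initially aimed for the range $[n, 4n]$, the final $n'$ is guaranteed to lie in $[n, 8n]$.

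With the (possibly converted) graph $G$ in hand, I would apply Proposition~\ref{pro:expanderApprox} to conclude that $(n'/d)\LL_{G}$ is an $O(\lambda/d) = O(\epsilon)$-approximation of $K_{n'}$ or $K_{n',n'}$, as appropriate. The edge count is $O(n' d) = O(n/\epsilon^{2})$, matching the target, and the work and depth are dominated by the construction itself, inherited directly from Proposition~\ref{pro:LPS} plus the trivial $O(\log n)$-depth overhead of the conversion step.

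The main obstacle is really a bookkeeping exercise across the four $(\text{constructed parity}, \text{target parity})$ combinations, together with verifying that the Ramanujan eigenvalue bound survives each conversion at the cost of no more than a constant factor in the approximation parameter. A secondary nuisance is the regime where $n$ is below the unspecified threshold $n_{0}$ of Corollary~\ref{cor:tchudakoff}: only constantly many values of $n$ arise there, so each can be discharged by a hard-coded expander of the correct size, absorbed into the $O(\cdot)$ constants.
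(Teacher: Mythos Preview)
Your proposal is correct and follows essentially the same approach as the paper: choose primes $p \equiv q \equiv 1 \pmod 4$ with $p = \Theta(\epsilon^{-2})$ and $q = \Theta(n^{1/3})$ via Corollary~\ref{cor:tchudakoff}, build the LPS Ramanujan graph via Theorem~\ref{thm:LPS} and Proposition~\ref{pro:LPS}, convert between bipartite and non-bipartite using the double cover (Proposition~\ref{pro:doubleCover}) or the collapse (Proposition~\ref{pro:collapse}) as needed, and finish with Proposition~\ref{pro:expanderApprox}. Your explicit identification of the hypothesis $\epsilon > n^{-1/6}$ as exactly what forces $p < q$ (so that Proposition~\ref{pro:LPS} applies) is spot on; the paper leaves this implicit.
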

\begin{proof}
We first consider the problem of constructing an approximation of $K_{n', n'}$.
By Corollary~\ref{cor:tchudakoff} there
  is a constant $n_{0}$ so that if $n > n_{0}$, then
  there is a prime $q$ that is equivalent to
  $1$ modulo $4$ so that $q^{2} (q-1)$ is between   and $n$ and $8 n$.
Let $q$ be such a prime and let $n' = q^{2} (q-1)$.
Similarly, for $\epsilon$ sufficiently small, there is a prime $p$
  equivalent to $1$ modulo $4$ that is between
  $\epsilon^{-2}/2$ and $\epsilon^{-2}$.
Our algorithm should construct the corresponding Ramanujan graph, as described
  in Theorem~\ref{thm:LPS} and Proposition~\ref{pro:LPS}.
If the graph is bipartite, then Proposition~\ref{pro:expanderApprox} tells us
  that it provides the desired approximation of $K_{n',n'}$.
If the graph is not biparite, then we form its double cover to obtain
   a bipartite graph and use Proposition~\ref{pro:doubleCover} 
  and Proposition~\ref{pro:expanderApprox} to see that it provides the desired
  approximation of $K_{n',n'}$.

The non-bipartite case is similar, except that we require a prime $q$
  so that $q^{2} (q-1)/2$ is between $n$ and $8 n$, and we use
  a collapse to convert a bipartite expander to a non-bipartite one,
  as analyzed in Proposition~\ref{pro:collapse}.
\end{proof}

In Section \ref{sec:depth}, we just need to know that there exist
  graphs of low degree that are good approximations of 
  complete graphs.
We may obtain them from the recent theorem of Marcus, Spielman and Srivastava
  that there exist bipartite Ramanujan graphs of every degree and number of vertices
  \cite{IF4}.

\begin{lemma}
\label{lem:existExpanders}
For every integer $n$ and even integer $d$, 
  there is a weighted graph on $n$ vertices of degree at most
  $d$  that is a $4 / \sqrt{d} $ approximation
  of $K_{n}$.
\end{lemma}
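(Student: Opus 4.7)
The plan is to pipe the Marcus--Srivastava--Spielman theorem [IF4] through the collapse construction from Proposition \ref{pro:collapse}. First, I would use [IF4] to obtain a bipartite Ramanujan graph $G$ of degree $d/2$ on vertex set $U \cup V$ with $|U| = |V| = n$; this exists precisely because [IF4] allows arbitrary degree and arbitrary side-size. By the Ramanujan property, every non-trivial eigenvalue of the adjacency matrix of $G$ has absolute value at most $2\sqrt{d/2 - 1}$.

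Next, fix any bijection $\pi \colon V \to U$ and let $H$ be the collapse of $G$ under $\pi$, as defined immediately before Proposition \ref{pro:collapse}. The resulting $H$ is a weighted graph on $n$ vertices (with possible self-loops) in which every vertex has combinatorial degree at most $d$, and by Proposition \ref{pro:collapse} it is uniform weighted-degree $d$ with every non-trivial adjacency eigenvalue bounded in absolute value by
\[
2 \cdot 2\sqrt{d/2 - 1} \;=\; 4\sqrt{d/2 - 1} \;\le\; 2\sqrt{2d}.
\]

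Finally, feed $H$ into the non-bipartite branch of Proposition \ref{pro:expanderApprox} with $\lambda \le 2\sqrt{2d}$. That proposition then yields, after rescaling $H$ by $n/d$ (which does not alter the vertex degrees), that $(n/d)\LL_H$ is an $\varepsilon$-approximation of $\LL_{K_n}$ with
\[
\varepsilon \;\le\; (2\ln 2)\cdot \frac{2\sqrt{2d}}{d} \;=\; \frac{4\sqrt{2}\,\ln 2}{\sqrt{d}} \;<\; \frac{4}{\sqrt{d}},
\]
which is the claimed bound.

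The main technical wrinkle is the hypothesis $\lambda \le d/2$ needed to invoke Proposition \ref{pro:expanderApprox}; the inequality $4\sqrt{d/2 - 1} \le d/2$ only holds once $d$ exceeds a small absolute constant. For the finitely many smaller even $d$ the target approximation factor $4/\sqrt{d}$ is large enough that a direct case-by-case construction suffices (for instance, taking $K_n$ itself in the range $d \ge n - 1$, or directly checking a complete bipartite double cover). This is pure bookkeeping and none of the conceptual content of the argument depends on those cases.
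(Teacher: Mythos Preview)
Your proposal is correct and follows essentially the same route as the paper: invoke [IF4] to get a bipartite Ramanujan graph of degree $d/2$ on $2n$ vertices, collapse it via Proposition~\ref{pro:collapse}, and read off the approximation from Proposition~\ref{pro:expanderApprox}. Your write-up is in fact slightly more careful than the paper's own proof, which contains a typo (it writes $K_{n,n}$ where $K_n$ is intended) and silently ignores the $\lambda \le d/2$ hypothesis of Proposition~\ref{pro:expanderApprox} that you flag for small $d$.
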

\begin{proof}
The main theorem of \cite{IF4} tells us that there is a bipartite Ramanujan
  graph on $2n$ vertices of degree $k$ for every $k \leq n$.
By Propositions \ref{pro:collapse} and \ref{pro:expanderApprox},
  a collapse of this graph
  is a weighted graph of degree at most $2k$
  that is a $(4 \ln 2)/\sqrt{k}$ approximation of $K_{n,n}$.
The result now follows by setting $d = 2k$.
\end{proof}

\subsection{Sparsifying Complete Product Demand Graphs}
\label{subsec:complete}

Our algorithm for sparsifying complete product demand graphs begins by
  splitting the vertices of highest demands into many vertices.
By \textit{splitting} a vertex, we mean replacing it by many
  vertices whose demands sum to its original demand.
In this way, we obtain a larger product demand graph.
We observe that we can obtain a sparsifier of the original graph by
  sparsifying the larger graph, and then collapsing back together
  the vertices that were split.

\begin{proposition}\label{pro:splitProduct}
Let $G$ be a product demand graph with vertex set 
  $\setof{1, \dots ,n}$ 
  and demands $\dd$,
  and let $\Ghat = (\Vhat, \Ehat )$ be a product demand graph with
  demands $\ddhat$.
If there is a partition of $\Vhat$ into sets $S_{1}, \dots , S_{n}$  
  so that for all $i \in V$, $\sum_{j \in S_{i}} \hat{d}_{j} = d_{i}$,
  then $\Ghat$ is a \textit{splitting} of $G$ and there is a matrix
  $\MM$ so that
\[
  \LL_{G} = \MM \LL_{\Ghat} \MM^{T}.
\]
\end{proposition}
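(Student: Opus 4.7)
The plan is to exhibit an explicit $\MM$ and verify the identity by a quadratic form calculation. Define $\MM \in \mathbb{R}^{V \times \Vhat}$ to be the indicator matrix of the partition: set $\MM_{ij} = 1$ if $j \in S_i$ and $0$ otherwise. Then $\MM^{T}$ is the ``lift'' operator that takes a function $x \in \mathbb{R}^{V}$ and produces the function on $\Vhat$ that is constant on each block, namely $(\MM^{T} x)_j = x_i$ whenever $j \in S_i$.

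With this choice, I would verify $\LL_G = \MM \LL_{\Ghat} \MM^{T}$ by comparing quadratic forms. For any $x \in \mathbb{R}^V$,
\[
x^{T} \MM \LL_{\Ghat} \MM^{T} x \;=\; (\MM^{T} x)^{T} \LL_{\Ghat} (\MM^{T} x) \;=\; \sum_{\{j,k\} \in \Ehat} \hat{d}_j \hat{d}_k \bigl( (\MM^{T} x)_j - (\MM^{T} x)_k \bigr)^{2}.
\]
The key observation is that $(\MM^{T} x)_j = (\MM^{T} x)_k$ whenever $j$ and $k$ lie in the same block $S_i$, so those edges contribute zero. For the surviving pairs with $j \in S_i$ and $k \in S_{i'}$ with $i \neq i'$, the difference equals $x_i - x_{i'}$, and grouping by $(i,i')$ factorizes the inner sum:
\[
\sum_{i < i'} (x_i - x_{i'})^{2} \!\!\sum_{j \in S_i,\, k \in S_{i'}} \!\! \hat{d}_j \hat{d}_k \;=\; \sum_{i < i'} (x_i - x_{i'})^{2} \Bigl(\sum_{j \in S_i} \hat{d}_j \Bigr)\Bigl(\sum_{k \in S_{i'}} \hat{d}_k \Bigr) \;=\; \sum_{i < i'} d_i d_{i'} (x_i - x_{i'})^{2},
\]
using the partition hypothesis $\sum_{j \in S_i} \hat{d}_j = d_i$. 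Since the right-hand side is exactly $x^{T} \LL_G x$ for the product demand graph $G$ with weights $d_i d_{i'}$, the identity $\LL_G = \MM \LL_{\Ghat} \MM^{T}$ follows because the two symmetric matrices agree on all quadratic forms.

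There is no real obstacle here: the content lies entirely in choosing the correct merging matrix $\MM$, and then the product-demand structure of both graphs makes the edge weights factorize compatibly with the partition condition. The only thing to double-check is the accounting of self-pairs (when $S_i$ contains multiple vertices, the edges of $\Ghat$ internal to $S_i$ vanish automatically from the quadratic form) and that the diagonal entries match, which is automatic once the off-diagonal edge-weight identity is established for Laplacians with zero row sums.
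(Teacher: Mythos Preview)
Your proof is correct and uses exactly the same matrix $\MM$ as the paper: the $0/1$ indicator matrix of the partition, with $\MM_{ij}=1$ iff $j\in S_i$. The paper's own proof consists solely of stating this $\MM$ and leaves the verification to the reader, whereas you have written out the quadratic-form check in full; there is no difference in approach.
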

\begin{proof}
The $(i,j)$ entry of matrix $\MM$ is $1$ if and only if $j \in S_{i}$.
Otherwise, it is zero.
\end{proof}

We now show that we can sparsify $G$ by sparsifying $\Ghat$.
 
\begin{proposition}
\label{pro:collapseLoewner}
Let $\Ghat_{1}$ and $\Ghat_{2}$ be graphs on the same vertex set $\Vhat$ such
  that $\Ghat _{1}\approx_{\epsilon}\Ghat _{2}$ for some $\epsilon$.
Let $S_{1}, \dots , S_{n}$ be a partition of $\Vhat$, and let $G_{1}$
  and $G_{2}$ be the graphs obtained by collapsing together all the
  vertices in each set $S_{i}$ and eliminating any self loops that are
  created.
Then
\[
G_{1}\approx_{\epsilon}G_{2}.
\]
\end{proposition}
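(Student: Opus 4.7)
The plan is to observe that collapsing together the vertices in each $S_i$ is realized by a fixed linear operator $\MM$ acting by congruence on Laplacians, independent of the particular graph on $\Vhat$, and then invoke Fact~\ref{fact:orderCAC}.

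More concretely, I would first introduce the $n\times |\Vhat|$ indicator matrix $\MM$ with $\MM_{ij} = 1$ if $j \in S_i$ and $0$ otherwise, exactly as in the proof of Proposition~\ref{pro:splitProduct}. Then I would verify the identity
\[
\LL_{G_k} \;=\; \MM\, \LL_{\Ghat_k}\, \MM^{T}
\]
for $k = 1,2$. This is a short direct check on entries: the off-diagonal entry $(i,j)$ of $\MM \LL_{\Ghat_k} \MM^T$ is the negated total edge weight between $S_i$ and $S_j$ in $\Ghat_k$, which is precisely the negated $(i,j)$ edge weight in $G_k$, and the diagonal entry $(i,i)$ is the total weight of edges from $S_i$ to $\Vhat \setminus S_i$ in $\Ghat_k$, since the intra-$S_i$ contributions cancel between the diagonal sums and the off-diagonal sums; these are the self-loops that were dropped and hence contribute nothing to $\LL_{G_k}$.

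Given the identity, the hypothesis $\Ghat_1 \approx_\epsilon \Ghat_2$ means
\[
e^{-\epsilon} \LL_{\Ghat_2} \;\pleq\; \LL_{\Ghat_1} \;\pleq\; e^{\epsilon} \LL_{\Ghat_2}.
\]
Applying Fact~\ref{fact:orderCAC} with $C = \MM$ immediately yields
\[
e^{-\epsilon} \MM \LL_{\Ghat_2} \MM^T \;\pleq\; \MM \LL_{\Ghat_1} \MM^T \;\pleq\; e^{\epsilon} \MM \LL_{\Ghat_2} \MM^T,
\]
which is exactly $\LL_{G_1} \approx_\epsilon \LL_{G_2}$.

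No step should be a real obstacle here; the only thing that needs a moment of care is the cancellation that shows collapsed self-loops contribute nothing to $\LL_{G_k}$, so that the single matrix $\MM$ can be used uniformly for both $\Ghat_1$ and $\Ghat_2$. Once that is verified, the Loewner order is preserved under congruence and the statement is immediate.
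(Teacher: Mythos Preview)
Your proposal is correct and follows essentially the same approach as the paper: introduce the indicator matrix $\MM$ from Proposition~\ref{pro:splitProduct}, note that $\LL_{G_k} = \MM \LL_{\Ghat_k} \MM^{T}$ for $k=1,2$, and apply Fact~\ref{fact:orderCAC}. Your write-up is in fact more detailed than the paper's, since you spell out the entrywise check and the self-loop cancellation that the paper leaves implicit.
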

\begin{proof}
Let $\MM$ be the matrix introduced in Proposition \ref{pro:splitProduct}.
Then,
\[
  \LL_{G_{1}} = \MM \LL_{\Ghat_{1}} \MM^{T} \quad \text{and} \quad 
  \LL_{G_{2}} = \MM \LL_{\Ghat_{2}} \MM^{T}.
\]
The proof now follows from   Fact \ref{fact:orderCAC}.
\end{proof}

For distinct vertices $i$ and $j$, we let $\edgu{i,j}$ denote the graph with an edge of weight $1$ between vertex $i$ and vertex $j$.
If $i = j$, we let $\edgu{i,j}$ be the empty graph.
With this notation, we can express the product demand graph 
  as
\[
  \sum_{i < j} d_{i} d_{j} \edgu{i,j}
=
  \frac{1}{2} \sum_{i,j \in V} d_{i} d_{j}\edgu{i,j}.
\]

This notation also allows us to precisely express our algorithm for sparsifying
  product demand graphs.

\begin{algbox}
$G'=\textsc{WeightedExpander}(\dd,\epsilon)$ 
\begin{enumerate}

\item Let $\nhat$ be the least integer greater than
  $2 n / \epsilon^{2}$ such that the algorithm described in Lemma \ref{lem:explicitExpanders}
  produces an $\epsilon$-approximation of $K_{\nhat}$.

\item Let $t = \frac{\sum_{k}d_{k}}{\nhat}$.

\item Create a new product demand graph $\Ghat$ with demand vector $\hat{\dd}$
 by 
splitting each vertex $i$ into a set of $\ceil{d_{i}/t}$ vertices, $S_i$:
\begin{enumerate}
\item $\floor{d_{i}/t}$ vertices with demand $t$.
\item one vertex with demand $d_{i} - t \floor{d_{i}/t}$.
\end{enumerate} 

\item Let $H$ be a set of $\nhat$ vertices in $\Ghat$ with demand $t$,
  and let $L$ contain the other vertices.  Set $k = \sizeof{L}$.



\item 
Partition $H$ arbitrarily into sets $V_{1}, \dots , V_{k}$, so that
  $\sizeof{V_{i}} \geq \floor{\nhat / k}$ for all $1 \leq i \leq k$.
  
\item 
Use the algorithm described in Lemma \ref{lem:explicitExpanders} to
  produce $\tilde{K}_{HH}$, an $\epsilon$-approximation of the complete graph on $H$.
Set 
\[
\Gtil = t^2 \tilde{K}_{HH} + \sum_{l \in L} 
  \frac{\sizeof{H}}{\sizeof{V_{l}}} \sum_{h \in V_{l}} 
     \dhat_{l} \dhat_{h} \edgu{l,h}.
\]

\item Let $G'$ be the graph obtained by collapsing together all vertices
  in each set $S_{i}$.
\end{enumerate}
\end{algbox}



This section and the next are devoted to the analysis of this algorithm.
Given Proposition~\ref{pro:collapseLoewner}, we just need to show that 
  $\Gtil$ is a good approximation to $\Ghat$.

\begin{proposition}\label{pro:numVertsAfterSplit}
The number of vertices in $\Ghat$ is at most $n + \nhat$.
\end{proposition}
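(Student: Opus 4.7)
The plan is a direct one-line calculation. The number of vertices in $\Ghat$ is $\sum_{i=1}^{n} \ceil{d_i / t}$, by construction of the splitting in Step 3 of \textsc{WeightedExpander}. I would first use the elementary bound $\ceil{x} \leq x + 1$ to obtain
\[
\sum_{i=1}^{n} \ceil{d_i / t} \;\leq\; \sum_{i=1}^{n} \left( \frac{d_i}{t} + 1 \right) \;=\; \frac{\sum_{i=1}^{n} d_i}{t} + n.
\]
Then I would substitute the choice $t = (\sum_k d_k)/\nhat$ from Step 2, which immediately gives $(\sum_i d_i)/t = \nhat$, so the total is at most $\nhat + n$, as claimed.

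There is no real obstacle here; the only thing worth noting is that the bound is tight up to rounding precisely because we allow the last ``leftover'' vertex in each $S_i$ (the one with demand $d_i - t\floor{d_i/t}$) to have demand less than $t$, which is exactly what forces the $+1$ per vertex in the ceiling estimate. No properties of product demand graphs, of $\epsilon$, or of the expander construction are needed — only the definition of the split and the defining equation for $t$.
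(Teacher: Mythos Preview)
Your proposal is correct and is essentially identical to the paper's proof: the paper also writes $\sum_{i \in V}\ceil{d_i/t} \leq n + \sum_{i \in V} d_i/t = n + \nhat$, using the same ceiling bound and the definition of $t$.
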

\begin{proof}
The number of vertices in $\Ghat$ is
\[
  \sum_{i \in V} \ceil{d_{i} / t}
\leq
  n +   \sum_{i \in V} d_{i} / t
=
  n + \nhat .
\]
\end{proof}

So, $k \leq n$ and $\nhat \geq 2 k / \epsilon^{2}$.
That is, $\sizeof{H} \geq 2 \sizeof{L} / \epsilon^{2}$.
In the next section, we prove the lemmas that show that for these special product demand graphs  $\Ghat $ in which
  almost all weights are the maximum,
  our algorithm produces a graph $\Gtil$ that is a good approximation of $\Ghat$.

\begin{theorem}
\label{thm:expanderFull}
Let $0 < \epsilon < 1$ and 
  let $G$ be a product demand graph with $n$ vertices and demand vector $\dd$.
Given $\dd$ and $\epsilon$ as input, \textsc{WeightedExpander} produces
  a graph $G'$ with $O (n / \epsilon^{4})$ edges that is 
  an $O (\epsilon)$ approximation of $G$.
Moreover, \textsc{WeightedExpander} runs in $O (\log n)$ depth
  and $O (n / \epsilon^{4})$ work.
\end{theorem}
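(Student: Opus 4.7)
The plan is to reduce the claim to a spectral comparison on the split graph and then break the comparison into three pieces. By Proposition~\ref{pro:collapseLoewner}, since $G$ and $G'$ arise from $\Ghat$ and $\Gtil$ by collapsing the same partition $\{S_i\}$, it suffices to prove $\Gtil \approx_{O(\epsilon)} \Ghat$ on the vertex set $H \cup L$. I would write $\LL_{\Ghat}$ as a sum of three Laplacians: the $H$-$H$ block, which equals $t^{2}\LL_{K_H}$ because every vertex in $H$ has demand exactly $t$; for each $l \in L$, a ``full star'' of total weight $\dhat_l\,t\,|H|$ connecting $l$ to all of $H$; and the $L$-$L$ block, a product demand graph among the low-demand leftover vertices.

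For the $H$-$H$ block, $t^{2}\tilde K_H$ is an $\epsilon$-approximation of $t^{2}\LL_{K_H}$ by Lemma~\ref{lem:explicitExpanders}, and $\nhat$ is chosen precisely to make the explicit expander construction applicable. The $L$-$L$ block has total edge weight at most $\binom{|L|}{2}t^{2} = O(n^{2}t^{2})$, an $O(\epsilon^{4})$ fraction of the $\Theta(|H|^{2}t^{2})$ mass of the $H$-$H$ block (since $|H| \geq \nhat - n \gtrsim n/\epsilon^{2}$ by Proposition~\ref{pro:numVertsAfterSplit}), so it is easily dominated after the $\xx_l$-coordinates are coupled to $H$ through the partial stars.

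The main technical obstacle is approximating each full star by the scaled partial star into $V_l \subseteq H$. The tool I would use is the identity
\[
\xx^{T}\LL_{\mathrm{star}(l, S)}\xx
= |S|\,(\xx_l - \bar{\xx}_S)^{2} + \tfrac{1}{|S|}\,\xx^{T}\LL_{K_S}\xx,
\qquad
\bar{\xx}_S = \tfrac{1}{|S|}\sum_{h\in S}\xx_h,
\]
which rewrites both the full star (with $S = H$) and the partial star (with $S = V_l$) in the same form. The discrepancy then splits into two pieces: a term $|H|\bigl((\xx_l - \bar{\xx}_{V_l})^{2} - (\xx_l - \bar{\xx}_H)^{2}\bigr)$, which by Cauchy--Schwarz is controlled via the sample-mean inequality $(\bar{\xx}_{V_l} - \bar{\xx}_H)^{2} \leq \tfrac{1}{|V_l|\cdot|H|}\xx^{T}\LL_{K_H}\xx$; and a spurious $\LL_{K_{V_l}}$ contribution bounded by $\LL_{K_H}$. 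Both are charged against the dominant $t^{2}\tilde K_H$ block already present in $\Gtil$, and using $|V_l|\geq \lfloor \nhat/k\rfloor$, $\dhat_l < t$, $k \leq n$, and $\nhat \geq 2n/\epsilon^{2}$, each charge comes out at most $O(\epsilon)$ times the $H$-$H$ quadratic form.

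Finally, for the size and cost: $\tilde K_H$ contributes $O(\nhat/\epsilon^{2}) = O(n/\epsilon^{4})$ edges by Lemma~\ref{lem:explicitExpanders}; the partial stars contribute $\sum_l |V_l| = |H| = O(\nhat)$ edges, giving $|E(\Gtil)| = O(n/\epsilon^{4})$, which is preserved by the collapse to $G'$. The splitting of demands, formation of the partition $V_1,\dots,V_k$, and listing of the star edges are embarrassingly parallel in $O(\log n)$ depth and $O(n/\epsilon^{4})$ work, matching the cost of the expander construction from Lemma~\ref{lem:explicitExpanders}.
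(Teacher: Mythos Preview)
Your overall strategy is sound and genuinely different from the paper's. The paper handles the replacement of the full star $G_{LH}$ by the union of partial stars via Poincar\'e (path-routing) inequalities: Lemma~\ref{lemma:light_can_be_merge} reroutes a single edge $\edg{l,h_1}$ to $\edg{l,h_2}$ using the edge $\edg{h_1,h_2}$, and Lemma~\ref{lem:replaceLH} sums these pairwise multiplicative approximations. Because multiplicative bounds of the form $\approx_{\gamma}$ are preserved under arbitrary nonnegative sums, no accumulation of error occurs across the $k$ low-demand vertices. Your centroid/variance identity for stars is a clean alternative and is in some ways more transparent.

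There is, however, a real gap in your accounting. Your sample-mean inequality
\[
(\bar{\xx}_{V_l}-\bar{\xx}_H)^2 \;\le\; \frac{1}{|V_l|\,|H|}\,\xx^{T}\LL_{K_H}\xx
\]
is correct, but it charges each $l\in L$ against the \emph{full} variance on $H$. After Young's inequality, the per-$l$ error you get is indeed $O(\epsilon)\cdot t^{2}\xx^{T}\LL_{K_H}\xx$, but you have $k$ such terms and $k$ can be as large as $n$; summing gives $O(k\epsilon)$ times the $H$--$H$ form, not $O(\epsilon)$. The same problem afflicts the spurious $\LL_{K_{V_l}}$ term if you simply bound $\LL_{K_{V_l}}\preceq \LL_{K_H}$ for each $l$ separately.

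The fix is to use the \emph{localized} Cauchy--Schwarz bound
\[
(\bar{\xx}_{V_l}-\bar{\xx}_H)^2 \;\le\; \frac{1}{|V_l|}\sum_{h\in V_l}(\xx_h-\bar{\xx}_H)^2,
\]
and then exploit that the $V_l$ partition $H$: summing over $l$ replaces $\sum_{h\in V_l}$ by $\sum_{h\in H}$ exactly once, and the coefficient $\dhat_l t |H|/|V_l|\le t^{2}|H|/|V_l|$ contributes a single factor of $\max_l 1/|V_l|\le O(\epsilon^2)$ against $t^{2}\xx^{T}\LL_{K_H}\xx$. With this change (and the analogous use of $\sum_{h\in V_l}(\xx_h-\bar{\xx}_{V_l})^2\le \sum_{h\in V_l}(\xx_h-\bar{\xx}_H)^2$ for the $\LL_{K_{V_l}}$ term) your argument goes through and yields the claimed $O(\epsilon)$ approximation.

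A smaller point: comparing total edge weights of $G_{LL}$ and $G_{HH}$ does not by itself give a spectral bound, since the two graphs live on disjoint vertex sets. What you need is exactly what you gesture at: route $(\xx_{l_1}-\xx_{l_2})^2\le 2(\xx_{l_1}-\bar{\xx}_H)^2+2(\xx_{l_2}-\bar{\xx}_H)^2$ and charge to the centroid terms $\dhat_l t|H|(\xx_l-\bar{\xx}_H)^2$ already present in the full stars; then $\sum_l\dhat_l\le kt$ and $|H|\ge 2k/\epsilon^{2}$ give $G_{LL}\preceq O(\epsilon^{2})\cdot G_{LH}$. This is the content of the paper's Lemma~\ref{lemma:light_vertex_not_important}, just phrased in your language.
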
 
\begin{proof}
The number of vertices in the graph $\Ghat$ will be between
  $n + 2 n / \epsilon^{2}$ and $n + 16 n / \epsilon^{2}$.
So, the algorithm described in Lemma \ref{lem:explicitExpanders}  will take
  $O (\log n)$ depth and $O (n / \epsilon^{4})$ work to produce an
  $\epsilon$ approximation of the complete graph on $\nhat$ vertices.
This dominates the computational cost of the algorithm.

Proposition
  \ref{pro:collapseLoewner} tells us that
  $G'$ approximates $G$ at least as well as $\Gtil$ approximates
  $\Ghat$.
To bound how well $\Gtil$ approximates $\Ghat$,
  we use two lemmas that are stated in the next section.
Lemma \ref{lemma:light_vertex_not_important} shows
  that
\[
  \Ghat_{HH} + \Ghat_{LH} \approx_{O(\epsilon^{2})} \Ghat .
\]
Lemma \ref{lem:replaceLH} shows that 
\[
 \Ghat_{HH} + \Ghat_{LH}
\approx_{4 \epsilon}
\Ghat_{HH} + \sum_{l \in L} 
  \frac{\sizeof{H}}{\sizeof{V_{l}}} \sum_{h \in V_{l}} 
     \dhat_{l} \dhat_{h} \edgu{l,h}.
\]
And, we already know that $t^2 \tilde{K}$ is an $\epsilon$-approximation of
  $\Ghat_{HH}$.
Fact \ref{frac:orderComposition} says that we can combine these three approximations to conclude that
  $\Gtil$ is an $O (\epsilon)$-approximation of $\Ghat$.

\end{proof}

\subsection{Product demand graphs with most weights maximal}
In this section, we consider product demand graphs in which almost all weights are the maximum.
For simplicity, we make a slight change of notation from the previous section.
We drop the hats, we let $n$ be the number of vertices in the product demand graph,
  and we order the demands so that
\[
  d_{1} \leq d_{2} \leq \dots \leq d_{k} \leq d_{k+1} = \dots = d_{n} = 1.
\]

We let $L = \setof{1, \dots , k}$ and $H = \setof{k+1, \dots , n}$
  be the set of low and high demand vertices, respectively.
Let $G$ be the product demand graph corresponding to $\dd$, and let
  $G_{LL}$, $G_{HH}$ and $G_{LH}$ be the subgraphs containing the
  low-low, high-high and low-high edges repsectively.
We now show that little is lost by dropping the edges in $G_{LL}$
  when $k$ is small.

Our analysis will make frequent use of the
following Poincare inequality: 
\begin{lemma} \label{lemma:poincare}Let
$c \edgu{u,v}$ be an edge of weight $c$ and let $P$ be a path from
from $u$ to $v$ 
  consisting of edges of weights $c_{1},c_{2},\cdots,c_{k}$.
Then 
\[
c \edgu{u,v} \preceq c\left(\sum c_{i}^{-1}\right)P.
\]
\end{lemma}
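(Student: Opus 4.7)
The plan is to reduce the Loewner inequality between the two graph Laplacians to a pointwise scalar inequality on quadratic forms and then apply Cauchy--Schwarz. Recall that for any weighted graph $H$ and any vector $\xx$, $\xx^{T} \LL_{H} \xx = \sum_{(i,j)} w_{ij} (x_{i} - x_{j})^{2}$. So it suffices to show that for every $\xx$,
\[
c (x_{u} - x_{v})^{2} \;\leq\; c \left(\sum_{i=1}^{k} c_{i}^{-1}\right) \sum_{i=1}^{k} c_{i} (x_{v_{i-1}} - x_{v_{i}})^{2},
\]
where $u = v_{0}, v_{1}, \ldots, v_{k} = v$ are the vertices traversed by the path $P$. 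After dividing by $c$, what remains is the standard Cauchy--Schwarz step.

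First I would write the telescoping identity $x_{u} - x_{v} = \sum_{i=1}^{k} (x_{v_{i-1}} - x_{v_{i}})$ and insert the weights as $1 = c_{i}^{-1/2} \cdot c_{i}^{1/2}$:
\[
x_{u} - x_{v} \;=\; \sum_{i=1}^{k} c_{i}^{-1/2} \cdot \bigl(c_{i}^{1/2} (x_{v_{i-1}} - x_{v_{i}})\bigr).
\]
Squaring and applying the Cauchy--Schwarz inequality to the pair of vectors $(c_{i}^{-1/2})_{i}$ and $(c_{i}^{1/2}(x_{v_{i-1}} - x_{v_{i}}))_{i}$ yields
\[
(x_{u} - x_{v})^{2} \;\leq\; \left(\sum_{i=1}^{k} c_{i}^{-1}\right) \left(\sum_{i=1}^{k} c_{i} (x_{v_{i-1}} - x_{v_{i}})^{2}\right),
\]
which is exactly the desired scalar inequality.

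Since this inequality holds for every vector $\xx$, it is equivalent to the PSD ordering $\LL_{c \edgu{u,v}} \preceq c\bigl(\sum_{i} c_{i}^{-1}\bigr) \LL_{P}$, giving the claim. There is no real obstacle here: the only subtlety to be careful about is that any coordinates of $\xx$ not appearing among $u, v_{1}, \ldots, v_{k-1}, v$ do not contribute to either side, so the inequality extends trivially from the subspace indexed by the path vertices to the full ambient vertex set, which is what is required for the $\preceq$ relation.
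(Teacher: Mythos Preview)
Your proof is correct and is the standard Cauchy--Schwarz argument for this Poincar\'e-type inequality. The paper itself does not supply a proof of this lemma: it is stated as a known fact and then used repeatedly in the weighted-expander constructions, so there is nothing to compare against beyond noting that your argument is exactly the canonical one.
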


As the weights of the edges we consider in this section are determined
  by the demands of their vertices,
  we introduce the notation
\[
  \edg{i,j} = d_{i} d_{j} \edgu{i,j}.
\]
With this notation, we can express the product demand graph 
  as
\[
  \sum_{i < j} \edg{i,j}
=
  \frac{1}{2} \sum_{i,j \in V} \edg{i,j}.
\]

\begin{lemma} \label{lemma:light_vertex_not_important}
If $\left|L\right|\leq\left|H\right|$,
then 
\[
G_{HH}+G_{LH}\approx_{3\frac{\left|L\right|}{\left|H\right|}} G.
\]
\end{lemma}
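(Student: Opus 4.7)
The plan is to reduce to the one-sided matrix inequality
\[
G_{LL} \;\preceq\; \frac{2|L|}{|H|}\,G_{LH}.
\]
Once this is established, the lemma follows quickly: the bound $G_{HH}+G_{LH} \preceq G$ is immediate from $G_{LL} \succeq 0$, while the display above gives $G \preceq (1+2|L|/|H|)(G_{HH}+G_{LH}) \preceq e^{2|L|/|H|}(G_{HH}+G_{LH})$, comfortably within the claimed factor $3|L|/|H|$.

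To establish the displayed inequality, I would use a multi-path Poincar\'e argument through the high-demand vertices. For each pair $i<j$ in $L$, split the edge $\edg{i,j}$ into $|H|$ equal pieces of weight $d_i d_j/|H|$ and route the $h$th piece along the two-hop path $i\to h\to j$. Because $d_h = 1$ for $h \in H$, this path has edge conductances $d_i$ and $d_j$, so Lemma~\ref{lemma:poincare} yields
\[
\frac{d_i d_j}{|H|}\,\edgu{i,j} \;\preceq\; \frac{d_i+d_j}{|H|}\bigl(\edg{i,h}+\edg{h,j}\bigr).
\]
Summing the $|H|$ such inequalities gives $\edg{i,j} \preceq \frac{d_i+d_j}{|H|} \sum_{h\in H}(\edg{i,h}+\edg{h,j})$.

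Next, I would sum over all pairs $i<j$ in $L$ and regroup by the low endpoint. The coefficient of the star $\sum_{h\in H}\edg{i,h}$ works out to $\sum_{k\neq i}(d_i+d_k) = (|L|-2)d_i + D_L$, where $D_L := \sum_{k\in L} d_k$. Since every $d_k \le 1$ (so $D_L \le |L|$), this coefficient is at most $2|L|$. Applying the standard PSD fact that $\sum_i a_i M_i \preceq (\max_i a_i) \sum_i M_i$ whenever the $M_i$ are PSD, with $M_i := \sum_{h\in H}\edg{i,h}$, converts the sum into $\frac{2|L|}{|H|}\sum_i M_i = \frac{2|L|}{|H|}G_{LH}$, as required.

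The main subtlety is that a naive per-edge Poincar\'e bound, applied independently to each of the $\Theta(|L|^2)$ edges of $G_{LL}$, would produce a factor of order $|L|^2/|H|$, which is too weak. The essential gain is obtained by simultaneously spreading every edge's routing across all of $H$: this way the effective congestion on any single low-high edge grows only linearly in $|L|$ rather than quadratically, and that linear growth is exactly what is needed to match the $3|L|/|H|$ target.
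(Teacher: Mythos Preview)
Your proof is correct, and in fact slightly cleaner than the paper's. The difference is in how the low--low edges are rerouted. You route each $\edg{i,j}$ along two-hop paths $i\to h\to j$ through a single high vertex, averaged over $h\in H$; this yields the clean bound $G_{LL}\preceq \tfrac{2|L|}{|H|}\,G_{LH}$, using only the low--high edges. The paper instead routes along three-hop paths $l_1\to h_1\to h_2\to l_2$, averaged over pairs $(h_1,h_2)\in H^2$, and obtains
\[
G_{LL}\ \preceq\ \frac{3|L|}{|H|}\,G_{LH}\;+\;\frac{3|L|^2}{|H|^2}\,G_{HH},
\]
which uses both $G_{LH}$ and $G_{HH}$ and then invokes $|L|\le |H|$ to absorb the second term. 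Your route is more direct here and even gives a better constant, while the paper's three-hop embedding is the natural template for the bipartite analogue (Lemma~\ref{lemma:light_vertex_not_important2}), where a two-hop path between $l_A$ and $l_B$ would land on the wrong side. Your final commentary about the ``naive $|L|^2/|H|$'' bound is a bit of a red herring---the averaging over $H$ is already baked into your argument---but it does not affect correctness.
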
 
\begin{proof}
The lower bound $G_{HH}+G_{LH}\preceq G_{HH}+G_{LH}+G_{LL}$
follows from $G_{LL}\succeq0$.

Using lemma \ref{lemma:poincare} and the assumptions $d_{l} \leq 1$
  for $l \in L$ and and $d_{h} = 1$ for $h\in H$, we derive for every $l_{1}, l_{2} \in L$,
\begin{align*}
\edg{l_{1}, l_{2}}
& =  \frac{1}{\left|H\right|^{2}}\sum_{h_{1},h_{2}\in H} \edg{l_{1}, l_{2}}\\
\intertext{\text{(by Lemma \ref{lemma:poincare})}}
& \preceq  \frac{1}{\left|H\right|^{2}}
  \sum_{h_{1},h_{2}\in H}d_{l_{1}}d_{l_{2}}
  \left(\frac{1}{d_{l_{1}}d_{h_{1}}}+\frac{1}{d_{h_{1}}d_{h_{2}}}+\frac{1}{d_{h_{2}}d_{l_{2}}}\right)
 \left(\edg{l_{1},h_{1}}+\edg{h_{1},h_{2}}+\edg{h_{2},l_{2}}\right)\\
 & \preceq 
 \frac{3}{\left|H\right|^{2}} \sum_{h_{1},h_{2}\in H}  \left(\edg{l_{1},h_{1}}+\edg{h_{1},h_{2}}+\edg{h_{2},l_{2}}\right)\\
 & = 
 \frac{3}{\left|H\right|}\sum_{h\in H} \left(\edg{l_{1},h}+\edg{l_{2},h}\right)+\frac{6}{\left|H\right|^{2}}G_{HH}.
\end{align*}
So,
\begin{align*}
G_{LL} & =
  \frac{1}{2} \sum_{l_{1}, l_{2} \in L} \edg{l_{1}, l_{2}}
\\
 & \preceq  \frac{1}{2}
  \sum_{l_{1},l_{2}}\left(\frac{3}{\left|H\right|}\sum_{h\in H}
  \left(\edg{l_{1},h}+\edg{l_{2},h}\right)+\frac{6}{\left|H\right|^{2}}G_{HH}\right)\\
 & =  \frac{3\left|L\right|}{\left|H\right|}G_{LH}+\frac{3\left|L\right|^{2}}{\left|H\right|^{2}}G_{HH}.
\end{align*}
The assumption $\sizeof{L} \leq \sizeof{H}$ then allows us to conclude
\[
G_{HH}+G_{LH}+G_{LL}\preceq\left(1+3\frac{\left|L\right|}{\left|H\right|}\right)\left(G_{HH}+G_{LH}\right).
\]
\end{proof} 

Using a similar technique, we will show that the edges between $L$ and $H$
  can be replaced by the union of a small number of stars.
In particular, we will partition the vertices of $H$ into $k$ sets,
  and for each of these sets we will create one star connecting
  the vertices in that set to a corresponding vertex in $L$.

We employ the following consequence of the Poincare inequality in Lemma \ref{lemma:poincare}.

\begin{lemma} \label{lemma:light_can_be_merge}
For any $\epsilon \leq 1$, $l \in L$ and $h_{1}, h_{2} \in H$,
\[
\epsilon \edg{h_{1},l}+ (1/2)\edg{h_{1},h_{2}}
\approx_{4 \sqrt{\epsilon}}
\epsilon \edg{h_{2},l}+ (1/2)\edg{h_{1},h_{2}}.
\]
\end{lemma}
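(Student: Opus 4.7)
The plan is to prove the $\preceq$ direction only, since the $\pgeq$ direction follows by swapping the roles of $h_1$ and $h_2$. So it suffices to exhibit $K \leq e^{4\sqrt{\epsilon}}$ with
\[
\epsilon d_l \edgu{h_{1},l} + (1/2)\edgu{h_{1},h_{2}} \preceq K\left(\epsilon d_l \edgu{h_{2},l} + (1/2)\edgu{h_{1},h_{2}}\right),
\]
using that $\edg{h,l}=d_l\edgu{h,l}$ (because $d_h=1$) and $\edg{h_1,h_2}=\edgu{h_1,h_2}$.

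The idea is to reroute the edge $\edgu{h_1,l}$ via the path $h_1\to h_2\to l$, using the two edges present on the right-hand side. Apply Lemma \ref{lemma:poincare} to the path with weights $c_1$ (on $\edgu{h_{1},h_{2}}$) and $c_2$ (on $\edgu{h_{2},l}$):
\[
\epsilon d_l \edgu{h_{1},l}\ \preceq\ \epsilon d_l(c_1^{-1}+c_2^{-1})\left(c_1\edgu{h_{1},h_{2}}+c_2\edgu{h_{2},l}\right).
\]
Adding $(1/2)\edgu{h_1,h_2}$ on both sides, it suffices to choose $c_1,c_2>0$ so that
\[
\epsilon d_l(c_1^{-1}+c_2^{-1})c_2 \leq K\epsilon d_l \qquad \text{and} \qquad \epsilon d_l(c_1^{-1}+c_2^{-1})c_1 + 1/2 \leq K/2.
\]
Setting $c_2=(K-1)c_1$ makes the first inequality an equality. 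The second becomes $\epsilon d_l\, K/(K-1)\leq (K-1)/2$, equivalently $(K-1)^2 \geq 2K\epsilon d_l$. Since $d_l \leq 1$, this holds for $K=1+3\sqrt{\epsilon}$ when $\epsilon\leq 1$: indeed $(K-1)^2=9\epsilon$ while $2K\epsilon \leq 2(1+3)\epsilon = 8\epsilon$. Finally, $K=1+3\sqrt{\epsilon}\leq e^{3\sqrt{\epsilon}}\leq e^{4\sqrt{\epsilon}}$, which yields the desired approximation constant.

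The argument is entirely straightforward once one sees the right path and the right choice of Poincare weights; there is no real obstacle, just bookkeeping. The only subtlety is picking $c_2/c_1$ proportional to the ratio of the target edge weights on the right-hand side (here $\epsilon d_l\,:\,1/2$), so that the two resulting inequalities can be balanced with the same constant $K$. Because $d_l\leq 1$, the $\sqrt{\epsilon}$ scaling of $K-1$ is forced by the constraint $(K-1)^2\gtrsim \epsilon$, which is why the approximation quality degrades like $\sqrt{\epsilon}$ rather than $\epsilon$.
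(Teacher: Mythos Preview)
Your proof is correct and takes essentially the same approach as the paper: both reroute $\edg{h_1,l}$ along the path $h_1\to h_2\to l$ via the Poincar\'e inequality and then balance the two resulting terms against the right-hand side. The only cosmetic difference is that the paper fixes the path weights to $1/\sqrt{\epsilon}$ and $1$ at the outset (yielding $K=1+4\sqrt{\epsilon}$), whereas you leave $c_1,c_2$ free and optimize, arriving at the slightly sharper $K=1+3\sqrt{\epsilon}$; both fit under the claimed $e^{4\sqrt{\epsilon}}$.
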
 \begin{proof} 
By applying Lemma \ref{lemma:poincare} and
  recalling that $d_{h_{1}} = d_{h_{2}} = 1$ and $d_{l} \leq 1$, we compute
\begin{align*}
\edg{h_{1}, l}
 & \preceq   d_{h_{1}} d_{l} 
  \left(\frac{\sqrt{\epsilon}}{ d_{h_{1}} d_{h_{2}}}+\frac{1}{ d_{h_{2}} d_{l} }\right)
  \left(\frac{1}{\sqrt{\epsilon}}\edg{h_{1},h_{2}}+\edg{h_{2},l}\right)\\
 & \preceq  \frac{1+\sqrt{\epsilon}}{\sqrt{\epsilon}}\edg{h_{1},h_{2}}
    + (1+\sqrt{\epsilon})\edg{h_{2},l}
\\
 & \preceq  (1+\sqrt{\epsilon})\edg{h_{2},l}+\frac{2}{\sqrt{\epsilon}}\edg{h_{1},h_{2}}.
\end{align*}
Multiplying both sides by $\epsilon$ and adding $(1/2) \edg{h_{1}, h_{2}}$ then gives
\begin{align*}
\epsilon \edg{h_{1}, l} + (1/2) \edg{h_{1}, h_{2}}
& \pleq 
(1+\sqrt{\epsilon}) \epsilon \edg{h_{2}, l}
+
(2 \sqrt{\epsilon} + 1/2) \edg{h_{1},h_{2}}
\\
& \pleq
(1 + 4 \sqrt{\epsilon}) \left( \epsilon \edg{h_{2},l}+ (1/2)\edg{h_{1},h_{2}}  \right)
\\
& \pleq
e^{4 \sqrt{\epsilon }} \left( \epsilon \edg{h_{2},l}+ (1/2)\edg{h_{1},h_{2}}  \right).
\end{align*}
By symmetry, we also have
\[
\epsilon \edg{h_{2}, l} + (1/2) \edg{h_{1}, h_{2}}
\pleq 
e^{4 \sqrt{\epsilon }} \left( \epsilon \edg{h_{1},l}+ (1/2)\edg{h_{1},h_{2}}  \right).
\]
\end{proof}

\begin{lemma}\label{lem:replaceLH}
Recall that $L = \setof{1,\dots ,k}$ and let
  $V_{1}, \dots , V_{k}$ be a partition of $H = \setof{k+1, \dots , n}$
  so that $\sizeof{V_{l}} \geq s$ for all $l$.
Then,
\[
G_{HH} + G_{LH} 
\approx_{4 / \sqrt{s}}
G_{HH} + \sum_{l \in L} \frac{\sizeof{H}}{\sizeof{V_{l}}} \sum_{h \in V_{l}} \edg{l,h}.
\]
\end{lemma}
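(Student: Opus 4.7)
The plan is to apply Lemma \ref{lemma:light_can_be_merge} once for every ordered triple $(l,h,h')$ with $l\in L$, $h\in H\setminus V_l$, and $h'\in V_l$, always using parameter $\epsilon = 1/|V_l|$. This yields
\[
\frac{1}{|V_l|}\edg{l,h} + \frac{1}{2}\edg{h,h'} \;\approx_{4/\sqrt{|V_l|}}\; \frac{1}{|V_l|}\edg{l,h'} + \frac{1}{2}\edg{h,h'},
\]
and since $|V_l|\geq s$, every such approximation holds with factor at most $4/\sqrt{s}$. Because Loewner approximations add, summing these individual inequalities over all triples produces a single $\approx_{4/\sqrt{s}}$ relation between two aggregated expressions, which I will simplify by double counting.

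On the left, for each fixed $(l,h)$ with $h\notin V_l$ there are exactly $|V_l|$ triples each contributing $\frac{1}{|V_l|}\edg{l,h}$, so the ``edge-to-low-vertex'' terms collapse to $\sum_l\sum_{h\notin V_l}\edg{l,h} = G_{LH} - \sum_l\sum_{h\in V_l}\edg{l,h}$. On the right, the analogous terms telescope to $\sum_l\tfrac{|H|-|V_l|}{|V_l|}\sum_{h'\in V_l}\edg{l,h'}$. Adding the common term $\sum_l\sum_{h\in V_l}\edg{l,h}$ to both sides (PSD, so the approximation is preserved) completes $G_{LH}$ on the left and assembles the right into $\sum_l\tfrac{|H|}{|V_l|}\sum_{h'\in V_l}\edg{l,h'}$.

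The more delicate accounting is for the bridge term $\tfrac{1}{2}\edg{h,h'}$, which appears identically on both sides of every application. For a fixed unordered pair $\{h,h'\}$ with $h\in V_a$ and $h'\in V_b$: if $a=b$ the pair is never used as a bridge, while if $a\neq b$ it is used by exactly two triples, namely $(b,h,h')$ and $(a,h',h)$, for a total coefficient of $1 = d_h d_{h'}$ on $\edg{h,h'}$ on each side. Hence the bridges assemble into exactly the subgraph $G_{HH}^{\mathrm{diff}}$ of $G_{HH}$ spanning cross-class edges; adding the untouched intra-class remainder $G_{HH}^{\mathrm{same}}$ to both sides reconstitutes the full $G_{HH}$ and yields the claimed bound. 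The one subtlety is the choice $\epsilon = 1/|V_l|$: it is the smallest value (hence the tightest error) for which the per-edge bridge commitment $\tfrac{1}{2}+\tfrac{1}{2}=1$ does not exceed the available weight $d_h d_{h'}=1$; any smaller choice would overcommit bridges in $G_{HH}$.
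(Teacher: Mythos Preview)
Your proof is correct and follows essentially the same strategy as the paper: apply Lemma~\ref{lemma:light_can_be_merge} with $\epsilon = 1/|V_l|$ to each triple $(l,h,h')$, then sum. The only difference is cosmetic: the paper lets $h$ range over \emph{all} of $H$ rather than $H\setminus V_l$. With that choice the bridge terms $\tfrac{1}{2}\edg{h,h'}$ sum directly to $G_{HH}$ (each unordered pair $\{h_1,h_2\}$ is hit once as $(h_1,h_2)$ with $l$ determined by $h_2\in V_l$, and once as $(h_2,h_1)$), and the low-vertex terms on the left sum directly to $G_{LH}$, so no after-the-fact additions of $G_{HH}^{\mathrm{same}}$ or $\sum_l\sum_{h\in V_l}\edg{l,h}$ are needed. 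Your restriction to $h\notin V_l$ forces those two compensating additions but yields the same result.

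One small remark: your closing explanation for the choice $\epsilon = 1/|V_l|$ is not quite right. The bridge contribution is $\tfrac{1}{2}\edg{h,h'}$ per application independently of $\epsilon$, so the ``bridge commitment'' does not constrain $\epsilon$. The reason for taking $\epsilon = 1/|V_l|$ is simply that there are $|V_l|$ choices of $h'$ for each fixed $(l,h)$, so this is the value that makes the left-hand low-vertex contributions aggregate to exactly $\edg{l,h}$.
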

\begin{proof}
Observe that 
\[
G_{LH} = \sum_{l \in L} \sum_{h \in H} \edg{l,h}.
\]
For each $l \in L$, $h_{1} \in H$ and $h_{2} \in V_{l}$ we apply
  Lemma \ref{lemma:light_can_be_merge} to show that
\[
  \frac{1}{\sizeof{V_{l}}} \edg{l, h_{1}}
+ \frac{1}{2} \edg{h_{1}, h_{2}}
\approx_{4 / \sqrt{s}}
  \frac{1}{\sizeof{V_{l}}} \edg{l, h_{2}}
  + \frac{1}{2} \edg{h_{1}, h_{2}}.
\]
Summing this approximation over all $h_{2} \in V_{l}$
  gives
\[
 \edg{l, h_{1}}
+   \sum_{h_{2} \in V_{l}}
  \frac{1}{2} \edg{h_{1}, h_{2}}
\approx_{4 / \sqrt{s}}
  \sum_{h_{2} \in V_{l}}
\left(  \frac{1}{\sizeof{V_{l}}}
   \edg{l, h_{2}}
  + \frac{1}{2} \edg{h_{1}, h_{2}} \right)
.
\]
Summing the left-hand side of this this approximation
  over all $l \in L$ and $h_{1} \in H$ gives
\[
  \sum_{l \in L, h_{1} \in H}  \edg{l, h_{1}}
+ 
   \sum_{h_{2} \in V_{l}}
  \frac{1}{2} \edg{h_{1}, h_{2}}
=
  \sum_{l \in L, h_{1} \in H}  \edg{l, h_{1}}
+
  \frac{1}{2}
  \sum_{h_{1} \in H, l \in L}  
   \sum_{h_{2} \in V_{l}} \edg{h_{1}, h_{2}}
=
  G_{LH} + G_{HH}.
\]
On the other hand, the sum of the right-hand terms gives
\[
G_{HH} + 
  \sum_{l \in L, h_{1} \in H}
  \sum_{h_{2} \in V_{l}}
 \frac{1}{\sizeof{V_{l}}}
   \edg{l, h_{2}}
=
G_{HH} + 
  \sum_{l \in L}
  \sum_{h_{2} \in V_{l}}
 \frac{\sizeof{H}}{\sizeof{V_{l}}}
   \edg{l, h_{2}}.
\]
\end{proof}

\subsection{Weighted Bipartite Expanders}
\label{subsec:bipartite}

This construction extends analogously to bipartite product graphs.
The bipartite product demand graph of vectors $(\dd^{A},\dd^{B})$
is a complete bipartite graph whose weight between vertices $i\in A$
and $j\in B$ is given by $w_{ij}=d_{i}^{A}d_{j}^{B}$. Without
loss of generality, we will assume $d_{1}^{A}\geq d_{2}^{A}\geq\cdots\geq d_{n^{A}}^{A}$
and $ d_{1}^{B}\geq d_{2}^{B}\geq\cdots\geq d_{n^{B}}^{B}$.

As the weights of the edges we consider in this section are determined
  by the demands of their vertices,
  we introduce the notation
\[
  \edg{i,j} = d^{A}_{i} d^{B}_{j} \edgu{i,j}.
\]

Our construction is based on a similar observation that if most vertices
on $A$ side have $d_{i}^{A}$ equaling to $d_{1}^{A}$ and most
vertices on $B$ side have $d_{i}^{B}$ equaling to $d_{1}^{B}$,
then the uniform demand graph on these vertices dominates the graph.

\begin{algbox}
$G'=\textsc{WeightedBipartiteExpander}(\dd^{A},\dd^{B},\epsilon)$ 
\begin{enumerate}

\item Let $n'=\max(n^{A},n^{B})$ and $\nhat$ be the least integer greater than
  $2 n' / \epsilon^{2}$ such that the algorithm described in
  Lemma \ref{lem:explicitExpanders} produces an $\epsilon$-approximation of $K_{\nhat ,\nhat}$.

\item Let $t^{A}=\frac{\sum_{k}d_{k}^{A}}{\nhat}$ and  $t^{B}=\frac{\sum_{k}d_{k}^{B}}{\nhat}$.

\item Create a new bipartite demand graph $\Ghat$ with demands
$\ddhat^{A}$ and $\ddhat^{B}$ follows:
\begin{enumerate}
  \item On the side $A$ of the graph, for each vertex $i$, create a subset $S_i$ consisting of $\ceil{d^{A}_{i}/t^A}$ vertices:
  \begin{enumerate}
  \item   $\floor{d^{A}_{i}/t^A}$ with demand $t^A$.
    \item  one vertex with demand $d^{A}_{i} - t^A \floor{d^{A}_{i}/t^A}$.
  \end{enumerate} 

\item Let $H^{A}$ contain $\hat{n}$ vertices of $A$ of with demand $t^{A}$, and let
   $L^{A}$ contain the rest.
Set $k^{A} = \sizeof{L^{A}}$.

\item Create the side $B$ of the graph with partition $H^{B}, L^{B}$
and demand vector $\ddhat^{B}$ similarly.
  \end{enumerate}

\item Partition $H^{A}$ into sets of size
$\sizeof{V^{A}_{i}} \geq \floor{\nhat / k^{A}}$, one corresponding
to each vertex $l \in L^{A}$.
Partition $V_{B}$ similarly.

\item 
Let $\tilde{K}_{H^{A} H^{B}} $ be a bipartite expander produced
by Lemma~\ref{lem:explicitExpanders} that $\epsilon$-approximates $K_{\hat{n}  \hat{n}}$, identified with the vertices $H^{A}$ and $H^{B}$.
  
 Set 
\[
\Gtil = t^{A} t^{B} \tilde{K} + \sum_{l \in L^{A}} 
  \frac{\sizeof{H^{B}}}{\sizeof{V^{B}_{l}}} \sum_{h \in V^{B}_{l}} 
     \dhat^{A}_{l} \dhat^{B}_{h} \edgu{l,h}
+ \sum_{l \in L^{B}} 
  \frac{\sizeof{H^{A}}}{\sizeof{V^{A}_{l}}} \sum_{h \in V^{A}_{l}} 
     \dhat^{B}_{l} \dhat^{A}_{h} \edgu{l,h}.
\]

\item Let $G'$ be the graph obtained by collapsing together all vertices 
  in each set $S^{A}_{i}$ and $S^{B}_{i}$.
\end{enumerate}
\end{algbox}

Similarly to the nonbipartite case, the Poincare inequality show that
the edges between low demand vertices can be completely omitted if
there are many high demand vertices which allows the demand routes
through high demand vertices.

\begin{lemma} \label{lemma:light_vertex_not_important2}Let $G$
be the bipartite product demand graph of the demand $(\dd_{i}^{A},\dd_{j}^{B})$.
Let $H^{A}$ a subset of vertices on $A$ side with demand higher
than the set of remaining vertices $L^{A}$ on $A$ side.
Define $H^{B},L^{B}$ similarly. Assume that $\sizeof{L^{A}}\leq\sizeof{H^{A}}$
and $\sizeof{L^{B}}\leq\sizeof{H^{B}}$, then 
\[
G_{H^{A}H^{B}}+G_{H^{A}L^{B}}+G_{L^{A}H^{B}}\approx_{3\max\left(\frac{\sizeof{L^{A}}}{\sizeof{H^{A}}},\frac{\sizeof{L^{B}}}{\sizeof{H^{B}}}\right)}G.
\]
\end{lemma}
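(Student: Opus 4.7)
The plan is to prove this as the bipartite analog of Lemma \ref{lemma:light_vertex_not_important}. The lower bound $G_{H^{A}H^{B}}+G_{H^{A}L^{B}}+G_{L^{A}H^{B}} \pleq G$ is immediate since $G_{L^{A}L^{B}} \pgeq 0$, so all the work is in the upper bound, namely showing that $G_{L^{A}L^{B}}$ can be dominated by a small constant times $G_{H^{A}H^{B}}+G_{H^{A}L^{B}}+G_{L^{A}H^{B}}$.

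The idea is to route each low-low edge $\edg{l_A,l_B}$ through a bipartite path of length three. Concretely, for any $l_A \in L^A$, $l_B \in L^B$, $h_A \in H^A$, and $h_B \in H^B$, the path $l_A \to h_B \to h_A \to l_B$ uses one edge from each of $G_{L^{A}H^{B}}$, $G_{H^{A}H^{B}}$, and $G_{H^{A}L^{B}}$. Applying Lemma \ref{lemma:poincare} to this 3-edge path gives
\[
\edg{l_A,l_B} \pleq d^A_{l_A} d^B_{l_B} \left( \frac{1}{d^A_{l_A} d^B_{h_B}} + \frac{1}{d^B_{h_B} d^A_{h_A}} + \frac{1}{d^A_{h_A} d^B_{l_B}} \right) \bigl( \edg{l_A,h_B} + \edg{h_B,h_A} + \edg{h_A,l_B} \bigr).
\]
Because $l_A$ was chosen to have smaller demand than every $h_A \in H^A$ and $l_B$ smaller than every $h_B \in H^B$, each of the three fractions in the prefactor is at most $1$, so the prefactor is bounded by $3$.

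Next I would average this bound uniformly over all pairs $(h_A,h_B) \in H^A \times H^B$. Each of the three term-sums simplifies: the endpoints-to-$h_B$ edges give $|H^A| \sum_{h_B \in H^B} \edg{l_A,h_B}$, the middle high-high edges give exactly $G_{H^{A}H^{B}}$, and the $h_A$-to-endpoint edges give $|H^B| \sum_{h_A \in H^A} \edg{h_A,l_B}$. Summing the resulting bound over all $(l_A,l_B) \in L^A \times L^B$ yields
\[
G_{L^{A}L^{B}} \pleq \frac{3|L^B|}{|H^B|} G_{L^{A}H^{B}} + \frac{3|L^A||L^B|}{|H^A||H^B|} G_{H^{A}H^{B}} + \frac{3|L^A|}{|H^A|} G_{H^{A}L^{B}}.
\]
Setting $\alpha = \max(|L^A|/|H^A|, |L^B|/|H^B|) \leq 1$ and combining terms gives $G_{L^{A}L^{B}} \pleq 3\alpha \bigl(G_{H^{A}H^{B}}+G_{H^{A}L^{B}}+G_{L^{A}H^{B}}\bigr)$, hence $G \pleq (1+3\alpha)\bigl(G_{H^{A}H^{B}}+G_{H^{A}L^{B}}+G_{L^{A}H^{B}}\bigr)$, which is the stated approximation.

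No real obstacles are anticipated; the bipartite Poincare routing is a straightforward adaptation of the nonbipartite case in Lemma \ref{lemma:light_vertex_not_important}. The only small subtlety is ensuring the bipartite path has edges that actually lie in the three retained subgraphs, which is why the path of length three (odd, so endpoints land on opposite sides) is the natural choice, and why each of the three prefactor fractions is well-behaved once one uses the demand orderings on both sides simultaneously.
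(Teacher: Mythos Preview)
Your proposal is correct and follows essentially the same approach as the paper: the paper's proof is only a sketch that says to embed each edge $\edg{l_A,l_B}$ evenly into paths $l_A \to h_B \to h_A \to l_B$ over all $h_A, h_B$ via Lemma~\ref{lemma:poincare} and then account as in Lemma~\ref{lemma:light_vertex_not_important}, which is exactly what you have carried out in detail.
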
 \begin{proof} 
The proof is analogous to Lemma~\ref{lemma:light_vertex_not_important},
but with the upper bound modified for bipartite graphs.

For every edge $l_A, l_B$, we embed it evenly into paths of
the form $l_A, h_B, h_A, l_B$ over all choices of $h_A$ and $h_B$.
The support of this embedding can be calculated using
Lemma~\ref{lemma:poincare}, and the overall accounting
follows in the same manner as Lemma~\ref{lemma:light_vertex_not_important}.
\end{proof}

It remains to show that the edges between low demand and high demand
vertices can be compressed into a few edges.
The proof here is also analogous to Lemma~\ref{lemma:light_can_be_merge}:
we use the Poincare inequality to show that all
demands can routes through high demand vertices.
The structure of the bipartite graph makes it helpful
to further abstract these inequalities via the following
Lemma for four edges.

\begin{lemma} \label{lem:light_can_be_merge_2}Let
$G$ be the bipartite product demand graph of the demand $(\dd_{i}^{A},\dd_{j}^{B})$.
Given $h_{A},l_{A}\in A$ and $h_{B,1},h_{B,2}\in B$. Assume that
$d_{h_{A}}^{A}=d_{h_{B,1}}^{B}=d_{h_{B,2}}^{B}\geq d_{l_{A}}^{A}$.
For any $\epsilon<1$ , we have 
\[
\epsilon\edg{l_{A},h_{B,1}}+\edg{h_{A},h_{B,2}}+\edg{h_{A},h_{B,1}}\approx_{3\sqrt{\epsilon}}\epsilon \edg{l_{A},h_{B,2}}+\edg{h_{A},h_{B,2}}+\edg{h_{A},h_{B,1}}.
\]
\end{lemma}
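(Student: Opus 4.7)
The statement is the natural bipartite generalization of Lemma~\ref{lemma:light_can_be_merge}, so I will imitate the proof there. By symmetry in $h_{B,1}$ and $h_{B,2}$, it suffices to establish only the upper direction
\[
\epsilon\edg{l_A,h_{B,1}}+\edg{h_A,h_{B,2}}+\edg{h_A,h_{B,1}}
\;\pleq\;
(1+3\sqrt{\epsilon})\bigl(\epsilon\edg{l_A,h_{B,2}}+\edg{h_A,h_{B,2}}+\edg{h_A,h_{B,1}}\bigr);
\]
the other direction follows by swapping the roles of $h_{B,1}$ and $h_{B,2}$. Then observe $(1+3\sqrt{\epsilon})\le e^{3\sqrt{\epsilon}}$.

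\noindent\textbf{Key step.} Since the graph is bipartite, to move the $l_A$--$h_{B,1}$ edge to an $l_A$--$h_{B,2}$ edge I must route through an $A$-vertex; the only available one is $h_A$. So I will apply the Poincar\'e inequality (Lemma~\ref{lemma:poincare}) to $\edgu{l_A,h_{B,1}}$ along the length-three path
\[
l_A \;-\; h_{B,2}\;-\; h_A \;-\; h_{B,1},
\]
with the same rescaling trick used in Lemma~\ref{lemma:light_can_be_merge}: I weight the first edge of the path by $\sqrt{\epsilon}$ (so it contributes $\sqrt{\epsilon}\edg{l_A,h_{B,2}}$) and the remaining two by $1$. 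Writing $D=d^A_{h_A}=d^B_{h_{B,1}}=d^B_{h_{B,2}}$ and using $d^A_{l_A}/D\le 1$, the Poincar\'e multiplier evaluates to
\[
d^A_{l_A}d^B_{h_{B,1}}\Bigl(\tfrac{1}{\sqrt{\epsilon}\,d^A_{l_A}D}+\tfrac{1}{D^2}+\tfrac{1}{D^2}\Bigr)
\;=\;\tfrac{1}{\sqrt{\epsilon}}+\tfrac{d^A_{l_A}}{D}+\tfrac{d^A_{l_A}}{D}
\;\le\;\tfrac{1}{\sqrt{\epsilon}}+2,
\]
yielding
\[
\edg{l_A,h_{B,1}}\;\pleq\;(1+2\sqrt{\epsilon})\edg{l_A,h_{B,2}}+\bigl(\tfrac{1}{\sqrt{\epsilon}}+2\bigr)\bigl(\edg{h_A,h_{B,2}}+\edg{h_A,h_{B,1}}\bigr).
\]

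\noindent\textbf{Wrap-up.} Multiplying both sides by $\epsilon$ turns the coefficient of each of the two $h_A$-edges into $\sqrt{\epsilon}+2\epsilon\le 3\sqrt{\epsilon}$ (using $\epsilon<1$), and the coefficient of $\edg{l_A,h_{B,2}}$ becomes $(1+2\sqrt{\epsilon})\epsilon$. Adding $\edg{h_A,h_{B,2}}+\edg{h_A,h_{B,1}}$ to both sides gives exactly the desired $(1+3\sqrt{\epsilon})$-approximation.

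\noindent\textbf{Where the care is.} The only nontrivial choice is the weighting $(\sqrt{\epsilon},1,1)$ of the path edges in Poincar\'e. Any uniform choice would give a blowup that does not vanish with $\epsilon$; the asymmetric choice is exactly what converts the ``overhead'' on the two $h_A$-edges into a $\sqrt{\epsilon}$ perturbation while keeping the coefficient on $\edg{l_A,h_{B,2}}$ within $(1+O(\sqrt{\epsilon}))$. The bipartite setting does not introduce new difficulty once this template is in place, since the three hypothesized equalities of demands make the bookkeeping identical to the non-bipartite case.
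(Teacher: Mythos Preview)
Your proof is correct and essentially identical to the paper's. Both route $\edg{l_A,h_{B,1}}$ along the length-three path $l_A\!-\!h_{B,2}\!-\!h_A\!-\!h_{B,1}$ via the Poincar\'e inequality; your weighting $(\sqrt{\epsilon},1,1)$ and the paper's $(1,1/\sqrt{\epsilon},1/\sqrt{\epsilon})$ differ only by a global scale factor and hence yield the same bound $(1+2\sqrt{\epsilon})\edg{l_A,h_{B,2}}+(\tfrac{1}{\sqrt{\epsilon}}+2)(\edg{h_A,h_{B,2}}+\edg{h_A,h_{B,1}})$, after which the wrap-up and the appeal to symmetry are the same.
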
 \begin{proof} 
Using Lemma $\ref{lemma:poincare}$ and
$d_{h_{A}}^{A}=d_{h_{B,1}}^{B}=d_{h_{B,2}}^{B}\geq d_{l_{A}}^{A}$,
we have 
\begin{align*}
 &  \edg{l_{A},h_{B,1}} \\
& \preceq  d_{l_{A}}^{A} d_{h_{B,1}}^{B}\left(\frac{1}{ d_{l_{A}}^{A} d_{h_{B,2}}^{B}}+\frac{\sqrt{\epsilon}}{d_{h_{A}}^{A} d_{h_{B,2}}^{B}}+\frac{\sqrt{\epsilon}}{ d_{h_{A}}^{A} d_{h_{B,1}}^{B}}\right)\left(\edg{l_{A},h_{B,2}}+\frac{1}{\sqrt{\epsilon}}\edg{h_{A},h_{B,2}}+\frac{1}{\sqrt{\epsilon}}\edg{h_{A},h_{B,1}}\right)\\
 & \preceq  (1+2\sqrt{\epsilon})\edg{l_{A},h_{B,2}}+\frac{1+2\sqrt{\epsilon}}{\sqrt{\epsilon}}\edg{h_{A},h_{B,2}}+\frac{1+2\sqrt{\epsilon}}{\sqrt{\epsilon}}\edg{h_{A},h_{B,1}}.
\end{align*}
Therefore,
\begin{align*}
 &   \epsilon\edg{l_{A},h_{B,1}}+\edg{h_{A},h_{B,2}}+\edg{h_{A},h_{B,1}} \preceq  (1+3\sqrt{\epsilon})\left(\epsilon\edg{l_{A},h_{B,2}}+\edg{h_{A},h_{B,2}}+\edg{h_{A},h_{B,1}}\right).
\end{align*}
The other side is similar due to the symmetry.\end{proof} 

\begin{theorem}
\label{lem:BiExpanderFull}
Let $0 < \epsilon < 1$ and 
  let $G$ be a bipartite demand graph with $n$ vertices and demand vector $(\dd^{A},\dd^{B})$.
 \textsc{WeightedBipartiteExpander} produces
  a graph $G'$ with $O (n / \epsilon^{4})$ edges that is 
  an $O (\epsilon)$ approximation of $G$.
Moreover, \textsc{WeightedBipartiteExpander} runs in $O (\log n)$ depth
  and $O (n / \epsilon^{4})$ work.
\end{theorem}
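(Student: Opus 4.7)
My plan is to mirror the structure of the proof of Theorem~\ref{thm:expanderFull} essentially step by step, replacing each of the three ingredients (splitting, dropping low--low edges, replacing low--high edges with stars, and the expander approximation of the complete graph) with its bipartite analogue that has already been developed in the paper.

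First I would observe that the splitting construction of Proposition~\ref{pro:splitProduct} and the collapse lemma (Proposition~\ref{pro:collapseLoewner}) extend in the obvious way to bipartite product demand graphs: splitting each original vertex on side $A$ into the set $S^{A}_{i}$ (and similarly on side $B$) produces a bipartite product demand graph $\Ghat$ whose Laplacian satisfies $\LL_{G}=\MM\LL_{\Ghat}\MM^{T}$ for the obvious $0/1$ aggregation matrix $\MM$. Consequently it suffices to show that the matrix $\Gtil$ produced in Step~5 is an $O(\epsilon)$-approximation of $\Ghat$, because the final collapse in Step~6 preserves any such approximation via Fact~\ref{fact:orderCAC}. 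By the choice of $t^{A}$ and $t^{B}$, the vertex counts satisfy $|\Ghat|\leq n+2\nhat=O(n/\epsilon^{2})$, which already gives the edge-count and running-time bounds since Lemma~\ref{lem:explicitExpanders} produces a bipartite expander on $\nhat$ vertices with $O(\nhat/\epsilon^{2})=O(n/\epsilon^{4})$ edges in $O(\log n)$ depth and $O(n/\epsilon^{4})$ work, and the star portion of $\Gtil$ contributes only $O(|H^{A}|+|H^{B}|)=O(n/\epsilon^{2})$ additional edges.

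For the approximation guarantee I would compose three approximations. (i) In $\Ghat$ most vertices on each side have the maximum demand: the high sides $H^{A},H^{B}$ have size $\nhat\geq 2n/\epsilon^{2}\geq 2|L^{A}|/\epsilon^{2},\,2|L^{B}|/\epsilon^{2}$. So Lemma~\ref{lemma:light_vertex_not_important2} gives
\[
\Ghat_{H^{A}H^{B}}+\Ghat_{H^{A}L^{B}}+\Ghat_{L^{A}H^{B}}\approx_{O(\epsilon^{2})}\Ghat.
\]
(ii) The two bipartite crossing pieces $\Ghat_{H^{A}L^{B}}$ and $\Ghat_{L^{A}H^{B}}$ are each replaced by a union of stars. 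Here I would prove the bipartite analogue of Lemma~\ref{lem:replaceLH}: for each $l\in L^{B}$, partition $H^{A}$ into the pieces $V^{A}_{l}$ of size at least $s=\lfloor\nhat/k^{A}\rfloor$ and, applying Lemma~\ref{lem:light_can_be_merge_2} to each quadruple $(l,h_{B},h_{A,1},h_{A,2})$ with $h_{A,2}\in V^{A}_{l}$ and then summing, get
\[
\Ghat_{L^{A}H^{B}}+\Ghat_{H^{A}H^{B}}\approx_{O(1/\sqrt{s})}\sum_{l\in L^{B}}\frac{|H^{A}|}{|V^{A}_{l}|}\sum_{h\in V^{A}_{l}}\edg{l,h}+\Ghat_{H^{A}H^{B}},
\]
and symmetrically for the other side. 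Since $s\geq\Omega(1/\epsilon^{2})$, each such approximation costs $O(\epsilon)$. (iii) Finally, $\Ghat_{H^{A}H^{B}}$ is exactly $t^{A}t^{B}$ times the complete bipartite graph $K_{\nhat,\nhat}$, which is $\epsilon$-approximated by $t^{A}t^{B}\tilde K$ by Lemma~\ref{lem:explicitExpanders}. Composing these three approximations via Fact~\ref{frac:orderComposition} yields $\Gtil\approx_{O(\epsilon)}\Ghat$, which after collapsing gives $G'\approx_{O(\epsilon)}G$.

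The only real subtlety, and the step I would be most careful with, is the bipartite analogue of Lemma~\ref{lem:replaceLH}. The pairing in Lemma~\ref{lem:light_can_be_merge_2} involves four vertices rather than three, so when the double-sum embedding is performed one has to verify that the ``backbone'' edges $\edg{h_{A},h_{B,1}}+\edg{h_{A},h_{B,2}}$ used as the route are charged to the single copy of $\Ghat_{H^{A}H^{B}}$ that appears on both sides, rather than blowing up. This is exactly analogous to the way the $(1/2)\edg{h_{1},h_{2}}$ backbone was accounted for in Lemma~\ref{lem:replaceLH}, but one must pick the partition $V^{A}_{l}$ uniformly and rescale by $|H^{A}|/|V^{A}_{l}|$ so that the sum telescopes correctly. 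Everything else is routine bookkeeping, so given this lemma the theorem drops out.
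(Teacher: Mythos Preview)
Your proposal is correct and follows essentially the same approach as the paper: reduce to $\Ghat$ via splitting/collapse, drop the $L^{A}$--$L^{B}$ edges via Lemma~\ref{lemma:light_vertex_not_important2}, replace the two crossing pieces by stars using a bipartite analogue of Lemma~\ref{lem:replaceLH} built on Lemma~\ref{lem:light_can_be_merge_2}, approximate $\Ghat_{H^{A}H^{B}}$ by the bipartite expander from Lemma~\ref{lem:explicitExpanders}, and compose via Fact~\ref{frac:orderComposition}. You even flag the one step the paper leaves implicit (the bipartite version of Lemma~\ref{lem:replaceLH}) and correctly identify the accounting issue there.
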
 

\begin{proof}
The proof is analogous to Theorem~\ref{thm:expanderFull}.
After the splitting, the demands in $H^{A}$ are higher than the demands in $L^{A}$ and so is $H^{B}$ to $L^{B}$.
Therefore, Lemma \ref{lemma:light_vertex_not_important2} shows that
  that

\[
\Ghat_{H^{A}H^{B}}+\Ghat_{H^{A}L^{B}}+\Ghat_{L^{A}H^{B}} \approx_{3 \epsilon^{2}/2} \Ghat .
\]
By a proof analogous to Lemma \ref{lem:replaceLH}, one can use Lemma \ref{lem:light_can_be_merge_2} to show that 
\[
 \Ghat_{H^{A}H^{B}}+\Ghat_{H^{A}L^{B}}+\Ghat_{L^{A}H^{B}}
\approx_{O(\epsilon)}
\Ghat_{H^{A}H^{B}} + \frac{\sizeof{H^{B}}}{\sizeof{V^{B}_{l}}} \sum_{h \in V^{B}_{l}} 
     \dhat^{A}_{l} \dhat^{B}_{h} \edgu{l,h}
+ \sum_{l \in L^{B}} 
  \frac{\sizeof{H^{A}}}{\sizeof{V^{A}_{l}}} \sum_{h \in V^{A}_{l}} 
     \dhat^{B}_{l} \dhat^{A}_{h} \edgu{l,h}.
\]
And, we already know that $t^A t^B \tilde{K}$ is an $\epsilon$-approximation of
  $\Ghat_{H^{A}H^{B}}$.
Fact \ref{frac:orderComposition} says that we can combine these three approximations to conclude that
  $\Gtil$ is an $O (\epsilon)$-approximation of $\Ghat$.
%
 \end{proof} 

\end{appendix}
}{
\newpage

{\Large
This page is intentionally left (almost) blank.

The full version starts after this page.
}

}

\end{document}